\newif\ifarxiv\arxivtrue  % for arXiv which disallows line numbers
\newif\ifseechange\seechangetrue
\newif\ifseepoplchange\seepoplchangetrue
\definecolor{codegreen}{rgb}{0,0.6,0}
\definecolor{codegray}{rgb}{0.5,0.5,0.5}
\definecolor{codepurple}{rgb}{0.58,0,0.82}
\definecolor{backcolour}{rgb}{1,1,1}
\lstdefinestyle{pstyle}{
backgroundcolor=\color{backcolour},   
    commentstyle=\color{codegreen},
%    keywordstyle=\color{black},   \ifpoplseechange 
    keywordstyle=\color{blue},    
    stringstyle=\color{codepurple},
    basicstyle=\ttfamily\footnotesize,
    morekeywords={do,od,fi,then},
    breakatwhitespace=false,
    % morekeywords={MAX, int1},
    xleftmargin=2pt,
	numbers=none,
	% numbers=left,
	numbersep=4pt,
	% frame=lines,
	rulecolor=\color{codepurple},
    breaklines=true,                 
    captionpos=b,                    
    keepspaces=true,
    showspaces=false,                
    showstringspaces=false,
    showtabs=false,                  
    tabsize=2,
	mathescape=true,
	framexleftmargin=0mm,
	escapechar=|,
}
\newcommand\kcode[1]{\text{\lstinline|#1|}}
\newenvironment{theoremApp}[1]{{\noindent\sc Theorem \ref{#1} (restated).} \it}{}
\begin{document}

%\ifarxiv
%\pagestyle{plain} % to get page numbers and eliminate header - must remove in camera ready
%\fi

%%
%% The "title" command has an optional parameter,
%% allowing the author to define a "short title" to be used in page headers.
\ifarxiv
\title{An Algebra of Alignment for Relational Verification (Extended Version)}
\else
\title{An Algebra of Alignment for Relational Verification}
\fi

%%
%% The "author" command and its associated commands are used to define
%% the authors and their affiliations.
%% Of note is the shared affiliation of the first two authors, and the
%% "authornote" and "authornotemark" commands
%% used to denote shared contribution to the research.
%\authornote{Both authors contributed equally to this research.}
%\email{trovato@corporation.com}
%\orcid{1234-5678-9012}
\author{Timos Antonopoulos}
\affiliation{\institution{Yale University}\country{USA}}
\author{Eric Koskinen}
\author{Ton Chanh Le}
\author{Ramana Nagasamudram}
\author{David A. Naumann}
\author{Minh Ngo}
\affiliation{\institution{Stevens Institute of Technology}\country{USA}}
%\authornotemark[1]
%\email{webmaster@marysville-ohio.com}
%\affiliation{%
%  \institution{Institute for Clarity in Documentation}
%  \streetaddress{P.O. Box 1212}
%  \city{Dublin}
%  \state{Ohio}
%  \country{USA}
%  \postcode{43017-6221}
%}

%\renewcommand{\shortauthors}{Trovato and Tobin, et al.}

% Dave's TeX hacking with help from Knuth p357, 359

\def\rightharpoonupfill{$\mathsurround=0pt \mathord- \mkern-6mu
  \cleaders\hbox{$\mkern-2mu \mathord- \mkern-2mu$}\hfill
  \mkern-6mu \mathord\rightharpoonup$}

\def\leftharpoonupfill{$\mathsurround=0pt \mathord\leftharpoonup \mkern-6mu
  \cleaders\hbox{$\mkern-2mu \mathord- \mkern-2mu$}\hfill
  \mkern-6mu \mathord-$}

\def\overleftharpoonup#1{\vbox{\ialign{##\crcr
      \leftharpoonupfill\crcr\noalign{\kern-1pt\nointerlineskip}
      $\hfil\displaystyle{#1}\hfil$\crcr}}}

\def\overrightharpoonup#1{\vbox{\ialign{##\crcr
      \rightharpoonupfill\crcr\noalign{\kern-1pt\nointerlineskip}
      $\hfil\displaystyle{#1}\hfil$\crcr}}}

\def\overleftrightharpoonup#1{\vbox{\ialign{##\crcr
      \leftharpoonupfill\hspace*{-.7em}\rightharpoonupfill\crcr\noalign{\kern-1pt\nointerlineskip}
      $\hfil\displaystyle{#1}\hfil$\crcr}}}
% End Dave's hack

% notes 
\newcommand\timos[1]{{[\color{orange} TA: #1]}}
\newcommand\dave[1]{{[\color{magenta} DN: #1]\marginpar{\color{red}$\checkmark$}}}
\newcommand\eric[1]{{[\color{olive} EK: #1]}}
\newcommand\ramana[1]{{[\color{violet} RN: #1]}}

% \changed{old}{new} 
\ifseechange
\newcommand\changed[2]{\textcolor{red}{#1}\textcolor{blue}{#2}}
\else 
\newcommand\changed[2]{\textcolor{blue}{#2}}
\fi

\newcommand\added[1]{\textcolor{blue}{#1}}

\newcommand\popladded[1]{\poplchanged{}{#1}}
\newcommand\poplremoved[1]{\poplchanged{#1}{}} 

\definecolor{aqua}{rgb}{0.0, 1.0, 1.0}
\definecolor{bleudefrance}{rgb}{0.19, 0.55, 0.91} % maybe
\definecolor{amethyst}{rgb}{0.6, 0.4, 0.8} % no
\definecolor{ceruleanblue}{rgb}{0.16, 0.32, 0.75} % maybe 
\definecolor{deepskyblue}{rgb}{0.0, 0.75, 1.0}
\definecolor{majorelleblue}{rgb}{0.38, 0.31, 0.86}
\definecolor{mediumblue}{rgb}{0.0, 0.0, 0.8} %no 

\definecolor{deepskyblue}{rgb}{0.0, 0.55, 1.0}

\ifseepoplchange
\newcommand\poplchanged[2]{\textcolor{red}{#1}\textcolor{bleudefrance}{#2}}
\else
\newcommand\poplchanged[2]{#2}
\fi

% gray box to highlight defined notations 
\definecolor{light-gray}{gray}{0.88}
%\definecolor{dark-gray}{gray}{0.25}
\newcommand{\graybox}[1]{\colorbox{light-gray}{#1}} 

% defined term well highlighted
\newcommand\dt[1]{\textbf{\emph{#1}}} 

\newcommand\nat{\mathbb{N}} % natural numbers 
\newcommand\kA{\mathbb{A}}
\newcommand\kB{\mathbb{B}}
\newcommand\kdot{\cdot}
\newcommand\ksemi{\kcode{;}}
\newcommand\kstar{*}
\newcommand\kplus{+}
\newcommand\kneg{\neg}
\newcommand\kNeg[1]{\overline{#1}}
\newcommand\kone{1}
\newcommand\kzero{0}
\newcommand\bdots[1]{\ddot{#1}}
\newcommand\bA{\bdots{\mathbb{A}}}
\newcommand\bB{\bdots{\mathbb{B}}}
\newcommand\bdot{\fatsemi} % Dave likes this better than \odot; it's also known as \zcmp 
\newcommand\bplus{\oplus}
\newcommand\bstar{\circledast}
\newcommand\bneg{\ddot{\neg}}
\newcommand\beq{\ddot{=}}
\newcommand\bone{\bdots{1}}
\newcommand\bzero{\bdots{0}}
\newcommand\bR{\mathcal{R}} % relation on state pairs 
\newcommand\bS{\mathcal{S}} % relation on state pairs 
% alert: following are upper case, to avoid conflict with standard latex macros
\newcommand{\Left}[1]{\overleftharpoonup{#1}} % a left embedding of unary action/test
\newcommand{\Right}[1]{\overrightharpoonup{#1}} % a right embedding of unary action/test
\newcommand{\lproj}{\mathord{\swarrow}} % projection operator 
\newcommand{\mproj}{\mathord{\downarrow}}
\newcommand{\rproj}{\mathord{\searrow}}
\newcommand{\proj}[1]{\mathord{\downarrow_{#1}}}

\newcommand{\hav}{\mathbf{hav}} % unary avoc

\newcommand{\fst}{\mathit{fst}} % first projection function 
\newcommand{\snd}{\mathit{snd}} % second projection 
\newcommand{\Dom}{\mathord{\triangleleft}} % domain of a relation 
\newcommand{\Cod}{\mathord{\triangleright}} % codomain of a relation
\newcommand{\sDom}{\mathord{\mbox{\footnotesize$\triangleleft$}}} % domain of a trace
\newcommand{\sCod}{\mathord{\mbox{\footnotesize$\triangleright$}}} % codomain of trace 

\newcommand{\luproj}{\hat{\mathord{\sswarrow}}}
\newcommand{\ruproj}{\mathord{\ssearrow}}
\newcommand{\llproj}{\mathord{\sswarrow}\!\!\!\mathord{\sswarrow}}
\newcommand{\rlproj}{\mathord{\ssearrow}\!\!\!\mathord{\ssearrow}}
\newcommand{\auxlproj}{\mathord{\circ}\!\;\!\!\!\!\mathord{\sswarrow}}

\newcommand\eml[1]{\langle #1 ]} % left embedding
\newcommand\emr[1]{[ #1 \rangle} % right embedding
\newcommand\emb[2]{\langle #1 \,|\, #2 \rangle} % joint embedding 
\newcommand\embRepeat[1]{\emb{#1}{#1}}
\newcommand\Embb[2]{\Big\langle \begin{matrix} #1 \\ #2 \end{matrix} \Big\rangle} % joint embedding 
\newcommand\Emb[2]{\langle #1 \mid #2 \rangle} % \emb with a bit more space 
\newcommand{\eqdef}{\mathrel{ \hat{=} }} % definition
\newcommand\defeq{\eqdef}

\newcommand{\aftqua}{.\:}
\newcommand{\all}[2]{\forall #1 \aftqua #2} % forall quantified formulas
\newcommand{\some}[2]{\exists #1 \aftqua #2} % exists quant

% specs
\newcommand{\specSym}{\leadsto} 
\newcommand{\rspecSym}{\ensuremath{\mathrel{\mbox{\footnotesize$\thickapprox\hspace*{-.4ex}>$}}}}
\newcommand{\aespecSym}{\ensuremath{\mathrel{\mbox{%
\footnotesize$\stackrel{\exists}{\thickapprox\hspace*{-.4ex}>}$}}}}
\newcommand{\bespecSym}{\ensuremath{\mathrel{\mbox{%
\footnotesize$\stackrel{\exists\shortleftarrow}{\thickapprox\hspace*{-.4ex}>}$}}}}

\newcommand{\spec}[2]{#1\specSym #2} 
\newcommand{\rspec}[2]{#1\rspecSym #2} 
\newcommand{\aespec}[2]{#1\aespecSym #2} 
\newcommand{\bespec}[2]{#1\bespecSym #2} 

\newcommand{\proves}{\vdash}

\newcommand{\subst}[3]{{#1}^{#2}_{#3}} % substitution 

% left-right separator in RHL judgments 
\newcommand{\sep}{\mathbin{\mid}} % larger version of | 
\newcommand{\Sep}{\mathbin{\:\mid\:}} % still larger version of | 

% rule names (consistent with mathpartir)
\newcommand{\rn}[1]{\textsc{#1}} 

% code in math font 
\newcommand{\keyw}[1]{\ensuremath{\mathsf{#1}}} 
\newcommand{\skipc}{\keyw{skip}}
\newcommand{\ifc}[3]{\keyw{if}\ {#1}\ \keyw{then}\ {#2}\ \keyw{else}\ {#3}}
\newcommand{\whilec}[2]{\keyw{while}\ {#1}\ \keyw{do}\ {#2}}
%\newcommand{\ifc}[3]{\keyw{if}\ {#1}\ \keyw{then}\ {#2}\ \keyw{else}\ {#3}\ \keyw{fi}}
%\newcommand{\whilec}[2]{\keyw{while}\ {#1}\ \keyw{do}\ {#2}\ \keyw{od}}
%\newcommand{\varblock}[2]{\keyw{var}~ #1 ~\keyw{in}~ #2} %\var{x:T}{body}
%\newcommand{\choice}[2]{#1\mathrel{\sqcup}#2} % 

% hint in proofs with between-lines format 
\newcommand{\hint}[1]{\quad\mbox{#1}} 

% left/right embedding of formulas in RHL, to discuss 
\newcommand{\leftex}[1]{ \raisebox{.25ex}{\relsize{-1}$\langle\hspace*{-2.1pt}[$} #1 \raisebox{.25ex}{\relsize{-1}$\langle\hspace*{-2.0pt}]$} }
\newcommand{\rightex}[1]{ \raisebox{.25ex}{\relsize{-1}$[\hspace*{-2.0pt}\rangle$} #1 \raisebox{.25ex}{\relsize{-1}$]\hspace*{-2.1pt}\rangle$} }
\newcommand{\leftF}[1]{\eml{#1}}
\newcommand{\rightF}[1]{\emr{#1}}

\newcommand{\eqbib}[2]{#1\beq#2}

\newcommand{\imp}{\Rightarrow}
\newcommand{\impby}{\Leftarrow}
\renewcommand{\iff}{\Leftrightarrow}

% for diagrams 
\newcommand{\Bforall}{\mbox{\Large$\forall$\quad}}
\newcommand{\Bexists}{\mbox{\Large\quad$\exists$\quad}}
\newcommand{\BforallX}{\mbox{\Large$\forall$}}
\newcommand{\BexistsX}{\mbox{\Large$\exists$}}
\newcommand{\dRel}{\dEmbed}

% TriKAT 
\newcommand{\triEmb}[2]{\langle #1 \talloblong #2 \rangle} % embed from bi to tri 
\newcommand{\tricom}[3]{\langle #1 \mid #2 \mid #3\rangle} 
\newcommand{\Lproj}{\mathord{\Swarrow}}
\newcommand{\Mproj}{\mathord{\Downarrow}} % middle projection of TriKAT

\newcommand{\eqbi}{\mathrel{\ddot{=}}}

\newcommand\mkEqBi[1]{[\kcode{#1} \eqbi \kcode{#1}]}
\newcommand\etal{\emph{et al.}}

\begin{abstract} % Kent Beck guidelines 
% Context/Importance
Relational verification encompasses information flow security,
regression verification, translation validation for compilers, 
and more.
Effective alignment of the programs and computations to be related
facilitates use of simpler relational invariants 
and relational procedure specs, 
which in turn 
enables automation and modular reasoning.
% Problem
Alignment has been explored in terms of trace pairs, deductive rules of relational Hoare logics (RHL),
and several forms of product automata. 
% Startling sentence
This article shows how a simple extension of Kleene Algebra with Tests (KAT), called BiKAT, 
subsumes prior formulations,
including alignment witnesses for forall-exists properties,
which brings to light new RHL-style rules for such properties.
% Implications
Alignments can be discovered algorithmically or devised
manually but, in either case, their adequacy with respect to the original programs must be proved;
an explicit algebra enables constructive proof by equational reasoning.
Furthermore our approach inherits algorithmic benefits from existing KAT-based techniques and tools, which are applicable to a range of semantic models.
\end{abstract}

% HACK: omit classification/keywords to avoid breaking later formatting,
% since the arxiv version's longer title adds vertical space
\ifarxiv 
\else
\begin{CCSXML}
<ccs2012>
   <concept>
       <concept_id>10003752.10003790.10002990</concept_id>
       <concept_desc>Theory of computation~Logic and verification</concept_desc>
       <concept_significance>500</concept_significance>
       </concept>
   <concept>
       <concept_id>10003752.10010124.10010131.10010132</concept_id>
       <concept_desc>Theory of computation~Algebraic semantics</concept_desc>
       <concept_significance>500</concept_significance>
       </concept>
   <concept>
       <concept_id>10003752.10010124.10010138</concept_id>
       <concept_desc>Theory of computation~Program reasoning</concept_desc>
       <concept_significance>500</concept_significance>
       </concept>
 </ccs2012>
\end{CCSXML}

\ccsdesc[500]{Theory of computation~Logic and verification}
\ccsdesc[500]{Theory of computation~Algebraic semantics}
\ccsdesc[500]{Theory of computation~Program reasoning}

\keywords{relational verification, hyperproperties, program algebra, Kleene algebra with tests}
\fi

\maketitle

\section{Introduction}\label{sec:intro}

A number of important program requirements are not trace properties but can be
defined as 2-properties.  For example, secure information flow says that any two executions 
with the same ``low'' (non-secret) inputs have the same low outputs.   
Continuity says two executions from very close inputs produce very close outputs.
Pairs of executions are also the basis for relations between programs,
such as extensional equivalence, refinement, or simulation.

One way to prove such a relational property is to prove a strong functional property
of each program, and show that the relational property is a consequence.
However, from early work on refinement~\cite{deRdataref,Concur:deRoever}
through explicit studies of relational program logic~\cite{Francez83,Benton:popl04}
to current work on automated relational reasoning (e.g.\ see~\cite{BeckertU18short})
it has been clear that one should align intermediate points in programs and their executions,
in order to decompose the reasoning.
In many cases it is much better to reason in terms of a well chosen alignment, 
which can make it possible to reason using logically simple
relational invariants such as conjunctions of equalities between variables.  
This is especially important for automated reasoning, since restricted fragments such as linear arithmetic facilitate techniques like Horn clause solving to find invariants.
It is also important in the many situations where a functional specification is not available.
Moreover, alignment enables use of relational specifications and summaries for procedures.

For these reasons, alignment appears in various guises such as the same-structure (``diagonal'') proof 
rules of Benton's relational Hoare logic~\cite{Benton:popl04} and program product constructions based on syntax~\cite{BartheCK-FM11,SousaD2016} and on product automata~\cite{BartheCK13,ChurchillP0A19}.
For the moment let us use the term \dt{aligned product} for such constructions.
The proof of a given relational property 
has two parts:
the chosen aligned product satisfies the property, and the aligned product is \dt{adequate}, in the sense that it represents all computations of the program(s) of interest.  
Verification can be automated by searching for possible alignments and checking whether they can be proved to satisfy the property, or by synthesizing an alignment in tandem with attempting to prove the property~\cite{FarzanV2019,ChurchillP0A19,AntonopoulosKL2019,ShemerGSV19,UnnoTK21}.

The current state of the art offers a number of variations on these ideas, in \poplremoved{seemingly} disparate formulations\poplchanged{,
and not all providing complete solutions with rigorous adequacy proofs.}{.}
In this paper we introduce an algebra
for alignment products which makes it possible to account for adequacy by equational reasoning.   Our algebra,
called BiKAT,  is a simple extension of Kleene Algebra with Tests (KAT)~\cite{Kozen1997},
itself an algebra of programs.  Our representation of alignment products is thus immediately amenable to 
various program analysis tools and may serve as a framework for procedures that search for alignments.

In KAT, a partial correctness assertion $\{p\}c\{q\}$ is expressed by an equation:
\poplremoved{one form is} $p;c;\neg q = 0$.  
Now consider
what we call the $\forall\forall$ \dt{relational judgment}, written $c|c':\rspec{R}{S}$,
meaning that: from a pair of states 
related by $R$, terminated executions of $c$ and $c'$ respectively, end in states related by $S$.
The relational judgment can be expressed by the similar equation 
$R;\emb{c}{c'};\neg S = 0$, using the BiKAT form $\emb{c}{c'}$ that represents pairs of executions.
We show that rules of relational Hoare logics can be derived 
in BiKAT, which applies to many semantic models, 
not just the specific semantic models used in prior works such as those cited above.

Not all relational properties have the termination-insensitive $\forall\forall$ (2-safety) form described above.  For programs which may 
exhibit nondeterminacy, typical requirements involve existential quantification.
For example, possibilistic noninterference says that 
given two low-indistinguishable states,
and a terminated execution from the first, 
there is a terminated execution \popladded{from the second,} with low-indistinguishable final 
state~\cite{Sabelfeld:Myers:JSAC,ClarksonSchneiderHyper10}.
As we explain, attempts to formulate such properties in the algebra lead to a three-run equation which we have found somewhat unwieldly.
So, in addition we provide a usable reduction to equations on two-run alignment products.

With this reduction in hand, we discovered that one can derive rules for reasoning about 
two fundamental and ubiquitous kinds of $\forall\exists$ properties: forward and backward simulation.
We derived inference rules for a logic of forward simulation and a logic of backward simulation.
In retrospect, these logics seem natural and it is surprising that (to our knowledge) such rules have
not appeared in the literature.  
As often happens, devising an algebraic description of the models of interest leads to new insights about existing ideas, and also results far beyond the initial motivations.

\paragraph*{Contributions}
\begin{itemize}

\item We define a novel algebra, BiKAT, that can express the 
  alignment of programs in both relational and trace semantic
  models, building in a simple way on the well understood KAT.

\item Using BiKAT, we derive rules of relational Hoare 
  logic for $\forall\forall$ properties, just as Kozen's seminal work derives the rules for
  conventional Hoare logic~\cite{Kozen00}.  

\item Using BiKAT equations, we characterize the $\forall\exists$ forward and backward simulation properties, 
  and use these results to derive new proof rules for these properties. 
  To our knowledge this provides the first deductive system for backward simulation judgments.

\end{itemize}

\paragraph*{Outline}
Sect.~\ref{sec:overview} is an overview of the problem and
shows BiKAT in action.
Sect.~\ref{sec:prelim} briefly reviews KAT.
Sect.~\ref{sec:bikat} defines BiKAT and some standard models.
Sect.~\ref{sec:examples} demonstrates direct use of BiKAT
on illustrative examples of relational reasoning challenges.
Examples are given throughout to show how BiKAT can express and justify alignments used in the literature on relational verification.
Sect.~\ref{sec:RHL} derives rules for the $\forall\forall$ judgment.
Sect.~\ref{sec:beyond} addresses $\forall\exists$ properties:
we develop characterizations in BiKAT and use these to derive proof rules 
for forward and backward simulation. 
We also report on explorations for $\exists\forall$ and $\exists\exists$ properties, 
and sketch a notion of TriKAT that deserves further study. 
Sect.~\ref{sec:discuss} discusses automation,
product programs, expressiveness, 
and future directions, all in the context of related work.
\ifarxiv
\emph{This extended version of the paper adds an Appendix (Sect.~\ref{sec:appendix}) with many proofs and examples.
It also completes the proof of Theorem~\ref{thm:sim:complete},
strengthens that result, and strengthens Theorems~\ref{thm:fsim:rules} and~\ref{thm:bsim:rules}
accordingly.
}
\fi

Accompanying this paper is a Coq formalization that includes the basic definitions and results about BiKATs~\cite{BiKATv}.  
\ifarxiv
\else
There is also an extended version of the paper with an appendix working out many proofs and examples~\cite{BiKATarxiv}.
\fi

\section{Examples of Alignment in BiKAT}\label{sec:overview}

\newcommand\rr{\kcode{r}}
\newcommand\nn{\kcode{n}}

We now introduce our algebra of alignment, through a series of examples that highlight the features of our theory. Recall that Kleene Algebra with Tests (KAT)~\cite{Kozen1997} can be used to represent and reason about the behavior of programs. Atomic program statements can be treated as primitive ``actions'' in the KAT and conditions can be treated as primitive ``tests.'' For example, consider the following program that computes factorial of \kcode{n} and stores it in \rr:
\[
\begin{array}{ll}
\textrm{Code:} & \kcode{C1: i:=n; r:=1; while i!=0 do r:=r*i; i:=i-1 od} \\
\textrm{KAT:} & k_1 : \kcode{i:=n} \kdot \kcode{r:=1} \kdot  ( \kcode{[i$\neq$0]} \kdot \kcode{r:=r*i} \kdot \kcode{i:=i-1})^\kstar
   \kdot \neg \kcode{[i$\neq$0]}
\end{array}
\]

The above KAT expression combines primitives with sequential composition $\kdot$ and Kleene star-iteration $\kstar$. Tests are denoted with square brackets to distinguish from non-test actions 
(in a KAT tests are a subset of actions). 
There are various semantic models of KATs, notably relational models, where actions are state relations, and 
trace models where actions denote sets of traces.
KAT permits one to express properties through terms/equations that involve programs and their specifications.
For example, KAT subsumes Hoare logic: the KAT equation $p \kdot k_1 \kdot \neg q = 0$ expresses validity of the Hoare triple $\{p\} \kcode{C}_1 \{q\}$.

\begin{example}%\label{eg:one}
\emph{Dependency or non-interference.}
Let us now consider a basic $\forall\forall$ property of a single program (2-safety): whether different output results can be observed by varying input values. A basic pre/post 2-safety $\forall\forall$ property, 
written $C1 \mid C1 : [\nn \eqbi \nn] \approx\!> [\rr \eqbi \rr]$, requires that for any two executions of $C1$, if they agree on the initial value of \nn, 
then they will agree on the final value of \rr, regardless of the initial values of the other variables.
Here $[\nn \eqbi \nn]$ means value of \nn\ in the left program is equal to the value of \nn\ in the right program.
Our notation does not require that the given program(s) have
been modified to act on disjoint parts of a single state.

In this paper we will describe an algebra for expressing these and other relational verification problems, as well as the alignments necessary for proving them. We will introduce a relational analog of KAT called \emph{BiKAT}, which allows us to, for example, express this 
2-safety judgment as:
\begin{equation}\label{eq:NIspec}
  [\nn\eqbi \nn] \kdot 
  \emb{ k_1 }{ k_1 }\kdot
  \kneg [\rr\eqbi \rr] = \kzero
  \end{equation}
The connectives above are relational analogs of the KAT connectives:
$\kdot$ is sequential composition, $\kneg$ is negation, and $\kzero$ is the empty relation.
A key feature of BiKAT is to be able to homomorphically \emph{embed} a unary KAT expression $k$ into a relational setting, that acts according to $k$ on one side and acts as the identity on the other side. Embedding $k$ on the ``left'' is denoted $\eml{k}$, on the ``right'' is $\emr{k}$, and we write $\emb{k}{k'}$ to mean $\eml{k}\kdot\emr{k'}$.
This embedding resembles the popular sequential product (a.k.a.\ self-composition~\cite{BartheDArgenioRezk}), but does not require $k$ and $k'$ to act on disjoint variables.
A BiKAT is a (special kind of) KAT and, in addition to embedded unary actions/tests, 
can be equipped with its own relational primitive actions and tests. 
The agreement $[\rr\eqbi \rr]$ above is a primitive ``bitest.''
Embedding and bitests are sequentially composed together in the equation above with $\kdot$ into an overall equation that characterizes validity of the relational judgment, by requiring that post-disagreement on \rr\ is equivalent to the empty set of behaviors.

While the above is merely a BiKAT problem statement of non-interference, the principle benefits of BiKAT arise when we begin to use our equational theory of BiKAT to derive alignments that ease (or make tractable) the task of relational verification. Notice that, thus far, the above embedding would still require one to reason that $k_1$ computes the factorial function, which is more work than truly necessary for non-interference. 
(This issue was highlighted in the seminal work that introduced the term 2-safety~\cite{TerauchiA2005}.)
We can instead exploit
that the left/right embeddings are homomorphisms, and exploit 
the  algebra of the embedded KAT expressions, to find another equation that implies the above one. In Sect.~\ref{sec:bikat} we will discuss the details that lead to the following:
\[\begin{array}{l}
    [\nn \eqbi \nn] \kdot 
\emb{ \kcode{i:=n} }{ \kcode{i:=n} } \kdot
\emb{ \kcode{r:=1} }{ \kcode{r:=1} }\\
\;\;\kdot\left(
\emb{ \kcode{[i!=0]} \kdot \kcode{r:=r*i} \kdot \kcode{i:=i-1} }{ \kcode{[i!=0]} \kdot \kcode{r:=r*i} \kdot \kcode{i:=i-1} } 
\right)^\kstar
\kdot \emb{ \kcode{[i==0]} }{ \kcode{[i==0]} } \kdot
  \kneg [\rr\eqbi\rr] \; = \; \kzero  % \bzero hasn't been introduced yet
\end{array}\]
This equation is derived from (\ref{eq:NIspec}) using BiKAT laws, 
the fact that embedding distributes through the KAT operators, and the fact that left and right embeddings commute: $\eml{k}\kdot\emr{k'}=\emr{k'}\kdot\eml{k}$.
We have now aligned the programs so that their loops iterate in ``lock step''. (We will see more complicated alignments below.)  For this example, this alignment is sufficient because, using a few 
axioms that embody the semantics of primitive tests and assignments
we can introduce relations within the \emph{body} of the BiKAT term, using bitests $[\kcode{i}\eqbi\kcode{i}]$, $[\nn\eqbi\nn]$, $[\rr\eqbi\rr]$ expressing agreement on \kcode{i}, \nn, \rr. Overall this lets us conclude the original property without having to resort to reasoning about the full functional behavior of the programs involved.
\end{example}

\newcommand\Axx{[\kcode{x}\eqbi\kcode{x}]}
\newcommand\Ayy{[\kcode{x}\eqbi\kcode{x}]}
\newcommand\AzTz{[\kcode{z}\eqbi 2\kcode{z}]}
\newcommand\ATyy{[2\kcode{y}\eqbi\kcode{y}]}

\begin{example}\label{eg:two}
\emph{Aligning two different programs.} Consider the programs from~\citet{ShemerGSV19}: %saving space
\[\begin{array}{ll}
D_1: & \kcode{y:=0}\kdot\kcode{z:=2*x}\kdot(\kcode{[z>0]}\kdot\kcode{z--}\kdot\kcode{y:=y+x})^\kstar 
\kdot\kcode{[z<=0]}\\
D_2: & \kcode{y:=0}\kdot\kcode{z:=x}\kdot(\kcode{[z>0]}\kdot\kcode{z--}\kdot\kcode{y:=y+x})^\kstar 
\kdot\kcode{[z<=0]}\kdot\kcode{y:=y*2}\\
\end{array}\]
These programs compute $2\kcode{x}^2$, but in  different ways: $D_1$ iterates $2\kcode{x}$ times, each time adding \kcode{x}  to the result \kcode{y}, whereas $D_2$ iterates only \kcode{x} times, but then multiplies by 2 at the end. This example has a flavor of a compiler optimization, where a program may be transformed to save on loop iterations.
Here too we can formulate the property of equal outputs from equal inputs and, using the laws of BiKAT to rewrite the problem into this aligned form:
\[\begin{array}{l}
    \Axx \kdot 
\emb{ \kcode{y:=0} }{ \kcode{y:=0} } \kdot
\emb{ \kcode{z:=2*x} }{ \kcode{z:=x} }\kdot\AzTz\\
\;\;\kdot\left(
  \AzTz\ATyy 
\emb{ \mathit{body}\kdot\mathit{body} }{ \mathit{body} } 
\right)^\kstar\\
\;\;\kdot \ATyy \kdot \emb{ \kcode{[z==0]} }{ \kcode{[z==0]}\kdot \kcode{y:=y*2} } \kdot
\Ayy
\kdot
  \kneg \Ayy  = \kzero % \bzero hasn't been introduced yet
\end{array}\]
where $\mathit{body} \defeq \kcode{[z>0] z--; y:=y+x}$.
(Throughout this paper, we often use juxtaposition or \kcode{;} to mean the $\kdot$ KAT operator.)
This example uses an alignment where two iterations of the loop in the first program are related to one iteration of the second one, to preserve the relational invariant that \kcode{z} in the left program is twice the value of \kcode{z} in the right program and that \kcode{y} in the right program is twice the value of \kcode{y} in the left program.
To prove this we introduce agreement bitests in the body of the loop and use KAT-based inductive reasoning to show those bitests, including $\Axx$, are indeed invariant. 
With these simple relational bitests, we can prove the equivalence between $D_1$ and $D_2$.
\end{example}

\begin{example}\label{eg:three}
\emph{Mixing program algebra with alignment algebra.} Now consider two programs that iterate over C-style arrays:

\begin{lstlisting}
$L_1$: x := 0; while (x < N * M){ a[x] := f(x); x++;}  
$L_2$: i := 0; while (i < N){ j := 0; while (j < M){ A[i,j] := f(i*M+j); j++; i++;}}
\end{lstlisting}

In $L_1$ there is a one-dimensional array, \kcode{a}, of size \kcode{N*M}, while $L_2$ uses an \kcode{N}-by-\kcode{M} two-dimensional array \kcode{A}.
We prove that the two programs are equivalent modulo this change in data
representation.
This example is used in~\citet{BartheCK13} to argue for alignment based on a control flow graph representation.
A custom rewriting relation, with KAT-like rules, is used in~\citet{BanerjeeNN16} to handle the example
using syntactic RHL rules.

Aligning these programs is challenging because they have fundamentally different shapes to them (single versus nested loops).
To address this example we exploit the fact that the programs can be represented as KAT terms  which themselves can be algebraically manipulated, whilst embedded within the BiKAT.
Our alignment (detailed in Sect.~\ref{sec:examples}) first uses unary KAT laws to transform $L_1$ so that it consists of nested loops. We then align those nested loops and introduce loop alignment invariants including an agreement on indexing $[\kcode{x} \eqbi \kcode{i}\times\kcode{M}+\kcode{j}]$ and an agreement on array values
$[\kcode{a[x]} \eqbi \kcode{A[i,j]}]$. 
In this way, the laws of BiKAT (and underlying KAT laws) allow one to use equational reasoning to simplify the task of deriving useful alignments, and then introduce simple relational invariants that suffice to prove the overall property.

Beyond the above examples, BiKAT can also be used to construct \emph{data-dependent} alignment (e.g.,~\cite{ShemerGSV19}).
A single iteration in one program may need to be aligned with 
a varying number of iterations in a second program, dependent on the data. 
Example~\ref{eg:caWh} will demonstrate BiKAT's capability to support such alignments.
\end{example}

\begin{example}\label{eg:four}
\emph{Expressing $\forall\exists$ properties.}
Consider programs
$E_1: \kcode{x:=any; y:=x}$ and
$E_2: \kcode{t:=any; z:=t+1}$.
These simple programs choose a nondeterministic value (keyword \kcode{any}) and then use that value to make an assignment.  
A common specification pattern is that for any pair of pre-related states, and every execution of $E_1$, there are choices that $E_2$ can make to ensure that the post-states are related. The pattern is useful for program equivalence and 
also for information flow where it is known as \emph{possibilistic non-interference}~\cite{ClarksonSchneiderHyper10}.
For this example, let us use \kcode{true} as the pre-relation and $[\kcode{y} \eqbi \kcode{z}]$ as the post-relation.

In Sect.~\ref{sec:exists} we describe how BiKAT can be used to support $\forall\exists$ properties such as the above.
We again use BiKAT to derive alignments but for $\forall\exists$ we also introduce bitests that intentionally restrict some of the behaviors of the programs. Let $W$ be this ``witness,'' consisting of the aligned programs, along with bitest restrictions.
We reduce the problem to three conditions on $W$ (Theorem~\ref{thm:fsim}):
\begin{itemize}
\item[(WC)] The corresponding $\forall\forall$ property holds of $W$.
\item[(WO)] $W$ over-approximates the behavior of the left program.
\item[(WU)] $W$ under-approximates the behavior of the right program.
\end{itemize}
For the above example, we can align the programs in lock-step, and use the bitest $[\kcode{x-1} \eqbi \kcode{t}]$ so that, after the assignment \kcode{t:=any}, executions are then restricted to those where \kcode{t} is \kcode{x-1}.  Doing so will ensure that the post-relation $[\kcode{y} \eqbi \kcode{z}]$ holds (WC). Moreover, this bitest does not restrict the behaviors of the left program (WO), nor does it add behaviors on the right that are not already allowed by the right program (WU). In Sect.~\ref{sec:forwardBack} we also provide deductive rules for such $\forall\exists$ properties.
\end{example}

\section{Preliminaries}\label{sec:prelim}

A Kleene Algebra with Tests (KAT)~\cite{Kozen1996,Kozen1997}
is a two-sorted structure
$(\kA,\kB,\kplus,\kdot,\kstar,\kneg,\kone,\kzero)$, where
$\kB\subseteq \kA$, such that 
$\kB$ is closed under the operations $\kplus,\kdot,\kneg$ 
and these satisfy the laws of Boolean algebra.
Moreover $(\kA,\kplus,\kdot,\kstar,\kone,\kzero)$ is a Kleene algebra.

We \poplremoved{will} use $a,b,c,\ldots$ for elements of $\kA$ (the ``actions'') and $p,q\ldots$ for elements of $\kB$ (the ``tests'').
We sometimes write $a \kdot b$ as $a;b$ or $ab$ and let it bind tighter than $+$.
The axioms of Kleene algebra are:

\[\begin{array}{rcl@{\hspace{5em}}rcl}
a \kplus (b \kplus c) &=& (a \kplus b) \kplus c 
& 
\kone \kplus a\kdot a^\kstar &=& a^\kstar \\
a \kplus b &=& b \kplus a 
& 
\kone \kplus a^\kstar\kdot a &=& a^\kstar \\
a \kplus \kzero &=& a 
&
b \kplus a\kdot c \leq c &\imp& a^\kstar\kdot b \leq c \\
a \kplus a &=& a 
&
b \kplus c\kdot a \leq c &\imp& b\kdot a^\kstar \leq c \\
a \kdot (b \kdot c) &=& (a \kdot b) \kdot c \\
\kone \kdot a &=& a \\
a \kdot \kone &=& a \\
a \kdot (b \kplus c) &=& a \kdot b \kplus a \kdot c \\
(a \kplus b)\kdot c &=& a \kdot c \kplus b \kdot c \\
\kzero \kdot a &=& \kzero \\
a \kdot \kzero &=& \kzero 
\end{array}
\]
where $\leq$ is the  partial order defined by $a \leq b \iff a \kplus b = b$.
Some useful consequences are the \dt{sliding} law 
$a\kdot (b\kdot a)^* = (a\kdot b)^*\kdot a$
and the \dt{invariance} law 
$p\kdot a \leq p\kdot a\kdot p \imp p\kdot a^* \leq p\kdot a^*\kdot p$
(where $p$ is a test).
The correctness equation $p\kdot c \kdot \neg q = 0$ 
has useful equivalent forms:
$p\kdot c = p\kdot c\kdot q$ and 
$p\kdot c \leq p\kdot c\kdot q$.
Every test $p$ satisfies $p\leq \kone$.

Given some set $\Sigma$, a \dt{relational model} 
is $(\kA,\kB,\kplus,\kdot,\kstar,\kneg,\kone,\kzero)$
where $\kA$ is a set of relations\footnote{We always mean binary relations.}
on $\Sigma$,
where $1$ is the identity relation $id_\Sigma$,
$0$ is the empty relation, 
$\cdot$ is relational composition, $*$ is reflexive-transitive closure, $+$ is union of relations, $\kB$ is a set of \dt{sub-identities}, i.e., subsets of the identity relation $1$,
and $\kneg b$ is the complement $1\setminus b$.
Note that $\leq$ is set inclusion.\footnote{Also, $\kA$ is closed under these operations, but need not be the full set $\wp(\Sigma\times\Sigma)$ of all relations;
and $\kB$ need not include all sub-identities.} 
Relational models can be obtained from big-step or denotational semantics of programs,
with $\Sigma$ the set of states.  

A \dt{full} relational model is one where $\kA$ is all relations on $\Sigma$
and $\kB$ is all sub-identities.
A full relational model has 
a \dt{top} element, which we call \dt{havoc} and write $\hav$.  
Being top means $a\leq \hav$ for all $a$ in $\kA$.

A \dt{trace model}~\cite{Kozen2003,Kozen04} 
$(\kA,\kB,\kplus,\kdot,\kstar,\kneg,\kone,\kzero)$ is given in terms of some set of primitive actions and some set $\Sigma$ of states.
An element of $\kA$ is a set of traces, where a trace is a nonempty alternating sequence 
of states and primitive actions, beginning and ending with a state.  
(The definitions are with respect to a given set of admissible traces, which one can
consider as possible observations for a programming language of interest.
For example, if each primitive action has an associated relation on states,
then traces may be required to be consecutive with respect to those relations.)
An element of $\kB$ is a set of singleton traces (essentially a set of states) and
negation is defined as complement with respect to $\Sigma$.    
The $+$ operation is union.  
For traces $t$ and $u$, the \dt{coalesced catenation} $t\diamond u$ is defined only if the last state of $t$ is the first of $u$, in which case the sequences are catenated but omitting one copy of that state.
Otherwise $t\diamond u$ is undefined.
For $a$ and $b$ in $\kA$, the set $a\cdot b$
is defined to be 
$\{ t \diamond u \mid t\in a \land u \in b \land t\diamond u \mbox{ defined} \}$.
The star operation is given by iterated catenation.
A \dt{full} trace model is one that contains all sets of admissible traces
(so it has a top).

\popladded{A KAT is \dt{*-continuous}~\cite{Kozen1997} provided that for 
all $x,y,z$ the supremum $\sup_{n\in\nat}  (x\kdot y^n \kdot z)$ (with respect to $\leq$)
exists and $x\kdot y^\kstar \kdot z = \sup_{n\in\nat}  (x\kdot y^n \kdot z)$.
Here $y^n$ is the iterate defined by $y^0 = \kone$ and $y^{n+1} = y\kdot y^n$.
Relational and trace models are *-continuous.
}

\pagebreak
\section{BiKAT}\label{sec:bikat}

\subsection{Definitions and Basic Results}

\begin{definition}\label{def:bikat}
A \dt{BiKAT} over the KAT $(\kA,\kB,\kplus,\kdot,\kstar,\kneg,\kone,\kzero)$ is a KAT
$(\bA,\bB,\bplus,\bdot,\bstar,\bneg,\bone,\bzero)$
with
KAT homomorphisms $\eml{\_} : \kA\to\bA$ and $\emr{\_} : \kA\to\bA$, 
which we call the \emph{left} and \emph{right embeddings},
that satisfy \emph{left-right commutativity}: 
\[ (LRC) \qquad \eml{x}\bdot\emr{y}=\emr{y}\bdot\eml{x} \quad\mbox{for all $x,y$ in $\kA$.}\]
We call $\kA$ the \dt{underlying} KAT.
\popladded{A BiKAT is \dt{*-continuous} if both $\kA$ and $\bA$ are.}
%  and $\bA$ the \dt{2d-level KAT} of the BiKAT.    DN: doesn't seem needed
\end{definition}
\popladded{
For $\eml{\_}$ to be a KAT homomorphism means the following:
\begin{itemize} 
\item $\eml{x} \in \bB$ for all $x\in\kB$
\item $\eml{\kzero} = \bzero$, $\eml{\kone} = \bone$, and $\eml{\kneg x} = \kneg\eml{x}$ for all $x\in\kB$
\item 
$\eml{x\kplus y} = \eml{x}\bplus\eml{y}$, 
$\eml{x\kdot y} = \eml{x}\bdot\eml{y}$, 
and $\eml{x^\kstar} = \eml{x}^\bstar$  
for all $x,y$ in $\kA$
\end{itemize}
\emph{Mutatis mutandis} for $\emr{\_}$.
}

An obvious generalization is to consider two different underlying KATs
(e.g., modeling source and target semantics, for compiler correctness).  
It is also possible to generalize to $k$-KATs;
we return to this in Sect.~\ref{sec:trikat}.
But our focus is on 2-trace properties.
Specializing to 2 lets us streamline notation and use convenient ``left/right'' terminology.

Define \graybox{$\emb{\_}{\_}$} by
\[ \emb{a}{b} \eqdef \eml{a}\bdot\emr{b} \]
We call this the two-argument embedding.
Note that $\eml{a} = \emb{a}{1}$ and $\emr{a}=\emb{1}{a}$
because $\emr{1}=\bone$ and $\bone$ is the identity element of $\bdot$.
Another immediate consequence is that $\emb{\_}{\_}$ is a homomorphism 
to $\kA$ from the product KAT $\kA\times\kA$,
with the additional property that $\emb{0}{a}=0=\emb{a}{0}$ for all $a\in \kA$.\footnote{An equivalent 
way to define BiKAT is to take $\emb{\_}{\_}$ as primitive
and define $\eml{a}=\emb{a}{1}$ and $\emr{a}=\emb{1}{a}$.
Then we get left-right commutativity because 
$\eml{a};\emr{b}=
\emb{a}{1};\emb{1}{b}=  
\emb{a;1}{1;b}= 
\emb{1;a}{b;1}= 
\emb{1}{b};\emb{a}{1}=  
\emr{b};\eml{a}$.
For $\eml{\_}$ and $\emr{\_}$ defined this way to be homomorphisms requires 
the property $\emb{0}{a}=0=\emb{a}{0}$
since in $\kA\times\kA$ the values 
$(a,0)$, $(0,0)$, and $(0,a)$ are different in general.
Since left-right commutativity is the key property we care about, we choose a 
formulation that highlights it.  
}
In case $\kA$ is a relational model of KAT,
the elements of $\kA\times\kA$ are pairs of relations on some set $\Sigma$.
By contrast, the relational BiKAT over $\kA$ will
comprise relations on $\Sigma\times\Sigma$ (Sect.~\ref{sec:models}).

Being KAT homomorphisms, the embeddings send tests to tests.  
For tests $p,q$ in $\kA$ we have $\neg\emb{p}{q}=\neg(\eml{p}\bdot\emr{q})=\eml{\neg p}\bplus\emr{\neg q}$
by homomorphism and de Morgan.

We use identifiers $A,B,C\ldots$ for elements of $\bA$ 
and $P,Q,R,S\ldots$ for \dt{bitests}, i.e., elements of $\bB$.
While we use fancy symbols $\bone,\bplus,\bdot,\ldots$ in Def.~\ref{def:bikat}
to distinguish between the BiKAT's operations and those of the underlying KAT,
we usually use the simpler $+$, $\kdot$ or $;$, $\kneg$, etc., for both levels
of a BiKAT since the types can be inferred from context.

The  $\forall\forall$ judgment \graybox{$c\sep c' : \rspec{R}{S}$} can be interpreted 
in any BiKAT, provided $R$ and $S$ are bitests and $c$ and $c'$ are actions in 
the underlying KAT.  It is like the Hoare triple equation:
\begin{equation}\label{eq:allallbikat}
R ; \emb{c}{c'} ; \neg S = 0 
\end{equation}
Defining the judgment this way is justified in terms of the standard models (see Sect.~\ref{sec:models} \popladded{and Theorem~\ref{thm:allall}}).

In applications of KAT, one reasons under hypotheses that specify the interpretation of 
primitive tests and actions.  For many applications it suffices to use hypotheses in the form of Hoare triples.
In applications of BiKAT we can use relational Hoare triple hypotheses to specify the interpretation
of primitive bitests, usually with respect to embedded unary actions.  
The key point is to leverage the use of such hypotheses by rewriting the given verification task
into a conveniently aligned form.  This can be done in ad hoc ways but several general patterns
can be identified.

The following general law is useful for reasoning about two loops
$\whilec{e}{c}$ and $\whilec{e'}{c'}$.
They are aligned lockstep, under a common star, with a remainder to account for whichever may iterate longer.\footnote{This is the most flexible loop alignment 
available in some systems, e.g., Cartesian Hoare logic~\cite{SousaD2016}.}
Here $e,e'$ are tests in the underlying KAT.
\begin{equation}\label{eq:expand}
\begin{array}{l}
\emb{(e;c)^*}{(e';c')^*};\emb{\neg e}{\neg e'} 
 = 
\emb{e;c}{e';c'}^*;(\emb{e;c}{\neg e'}^* + \emb{\neg e}{e';c'}^*) ;\emb{\neg e}{\neg e'} 
\end{array}
\end{equation}
This holds in any BiKAT, i.e., it follows by equational reasoning from Def.~\ref{def:bikat}.
(see appendix of~\citet{BiKATarxiv}).
Note that 
$\emb{(e;c)^*}{(e';c')^*};\emb{\neg e}{\neg e'}
= \emb{(e;c)^*;\neg e}{(e';c')^*;\neg e'}$ by LRC.

The standard models of BiKAT have additional structure that we formalize as follows.

\begin{definition}\label{def:bikatproj}
A \dt{BiKAT with projections} is a BiKAT together with 
total functions $\lproj : \bA\to\kA$ and $\rproj : \bA\to\kA$
to its underlying KAT, 
that satisfy the following for all $a$ in $\kA$ and $A,B$ in $\bA$.
\[
\begin{array}{lll}
\mbox{(Inversion)} &
\lproj \eml{a} = a & \rproj \emr{a} = a \\
\mbox{(Disjointness)} &
a\neq 0 \imp \lproj \emr{a} = \bone &
a\neq 0 \imp \rproj \eml{a} = \bone \\
\mbox{(Disjunctivity)} 
\quad & 
\lproj(A \bplus B) = \lproj A + \lproj B 
\quad
&
\rproj(A \bplus B) = \rproj A + \rproj B
\end{array}
\]
\end{definition}
\poplchanged{Please note that although projection distributes over sum, it is}{Projections are} not required to distribute over sequence or star.
For most of our development, the projections are not needed.  However, they exist 
in the standard models and they do serve a purpose that raises open problems discussed in Sect.~\ref{sec:trikat}.

For BiKATs with projections, we get the following easy consequences.
Only the first condition is expressed using projection, but all of them are proved using projection properties.
\begin{lemma}%\label{lem:bikat}
In any BiKAT with projections, we have 
\begin{center}
{\upshape
\begin{tabular}{ll}
(Unit) & $\lproj\bone = 1 = \rproj \bone$
\\ 
(Separation) & $\eml{a} = \emr{b} \land (a \neq 0 \lor b \neq 0) \imp a = b = 1$
\\
Order Separation) & $\eml{a} \geq \emr{b} \land a\neq 0 \neq b \imp a \geq 1 \land 1 \geq b $
\\ 
(Injectivity) & $\eml{a}=\eml{b}  \imp a=b$ and same for $\emr{\_}$ 
\\
(Order-Injectivity) & $\eml{a}\geq\eml{b}  \imp a\geq b$ and same for $\emr{\_}$ 
\end{tabular}
}
\end{center}
\end{lemma}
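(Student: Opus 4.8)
The plan is to observe that all five facts are immediate order-theoretic consequences of the three projection axioms of Def.~\ref{def:bikatproj} (Inversion, Disjointness, Disjunctivity) together with the fact that $\eml{\_}$ and $\emr{\_}$ are KAT homomorphisms. The uniform strategy is: apply $\lproj$ or $\rproj$ to each hypothesis, push the projection through $\bplus$ using Disjunctivity, and then collapse the embedded terms using Inversion and Disjointness. No star, induction, or $*$-continuity is needed.

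I would dispatch the three ``injective'' facts first. For \emph{Unit}, since $\eml{\_}$ is a homomorphism it sends the KAT unit to the BiKAT unit, so $\bone = \eml{\kone}$, and then $\lproj\bone = \lproj\eml{\kone} = \kone = 1$ by Inversion; symmetrically $\rproj\bone = \rproj\emr{\kone} = 1$. For \emph{Injectivity}, apply $\lproj$ to $\eml{a} = \eml{b}$ and use Inversion on each side to get $a = b$ (and dually for $\emr{\_}$). For \emph{Order-Injectivity}, rewrite $\eml{a} \geq \eml{b}$ as the sum equation $\eml{b} \bplus \eml{a} = \eml{a}$, apply $\lproj$, push through the sum by Disjunctivity, and apply Inversion termwise to obtain $b \kplus a = a$, i.e.\ $a \geq b$.

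Next I would handle the two ``separation'' facts, where the two sides are opposite embeddings. For \emph{Order Separation}, rewrite $\eml{a} \geq \emr{b}$ as $\emr{b} \bplus \eml{a} = \eml{a}$. Applying $\rproj$ and using Disjunctivity gives $\rproj\emr{b} \kplus \rproj\eml{a} = \rproj\eml{a}$; here Inversion gives $\rproj\emr{b} = b$ and Disjointness (using $a \neq 0$) gives $\rproj\eml{a} = 1$, so $b \kplus 1 = 1$, i.e.\ $1 \geq b$. Symmetrically, applying $\lproj$ and using Disjointness with $b \neq 0$ together with Inversion yields $1 \kplus a = a$, i.e.\ $a \geq 1$. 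For \emph{Separation}, assume w.l.o.g.\ $a \neq 0$ (the case $b \neq 0$ is the mirror image). Apply $\rproj$ to $\eml{a} = \emr{b}$: the left side is $1$ by Disjointness ($a \neq 0$) and the right side is $b$ by Inversion, so $b = 1$. Then $\emr{b} = \emr{\kone} = \bone$ by homomorphism, hence $\eml{a} = \bone$, and applying $\lproj$ with the already-established \emph{Unit} fact gives $a = \lproj\bone = 1$; thus $a = b = 1$.

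The computations are all one line, so there is no serious obstacle; the one point that needs care is the asymmetric hypothesis of \emph{Separation}, namely $a \neq 0 \lor b \neq 0$ rather than $a \neq 0 \land b \neq 0$. This is why \emph{Separation} cannot simply be read off from \emph{Order Separation} (which needs both sides nonzero), and why I route through the bootstrap above: deriving $b = 1$ first lets me rewrite $\emr{b}$ as $\bone$ and finish via \emph{Unit}, sidestepping any need to argue separately that $1 \neq 0$.
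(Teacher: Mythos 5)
Your proof is correct and follows exactly the route the paper intends: the paper gives no explicit proof, remarking only that these are ``easy consequences'' established via the projection properties, and your argument does precisely that, using Inversion, Disjointness, and Disjunctivity together with the homomorphism laws $\eml{\kone}=\bone$, $\emr{\kone}=\bone$. Your bootstrap for (Separation) under the asymmetric hypothesis $a\neq 0 \lor b\neq 0$ (deriving $b=1$ first, then rewriting $\emr{b}$ as $\bone$ and finishing via (Unit)) correctly handles the one point where care is needed, since (Order Separation) alone would require both sides nonzero.
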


An attractive aspect of KAT is that implications $hypoth \imp c = d$ are decidable provided the hypotheses
are of the form $b=0$ which includes the KAT encoding of Hoare triples.
On the other hand, implications with general commutativity hypotheses are undecidable~\cite{Kozen1996}.
One proof goes by reduction to the Post correspondence problem, and the proof idea
can be adapted to BiKAT embeddings in such a way that the commutativity conditions 
are expressed by LRC (see appendix of~\citet{BiKATarxiv}).
It should be noted that the original argument is presented for $*$-continuous KATs and as such our proof is over $*$-continuous BiKATs. 
\poplremoved{We remind the reader that *-continuity means that $x\kdot y^\kstar \kdot z = \sup_{n\in\nat}  (x\kdot y^n \kdot z)$ holds for all $x,y,z$~\cite{Kozen1997}.}

\begin{theorem}\label{thm:undecidability}
	It is undecidable whether a given identity holds in all $*$-continuous BiKATs.
\end{theorem}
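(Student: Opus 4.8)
The plan is to reduce from the Post Correspondence Problem (PCP), adapting Kozen's proof that the equational theory of $*$-continuous KATs under commutativity hypotheses is undecidable~\cite{Kozen1996}. In Kozen's reduction, one fixes a PCP instance over some alphabet, introduces two disjoint copies of the alphabet --- one set of primitive actions $G$ encoding the ``top'' words and a second set $H$ encoding the ``bottom'' words --- and writes a single candidate identity $s = t$ together with a finite set of commutativity hypotheses asserting that every action in $G$ commutes with every action in $H$. The content of that reduction is that, in the presence of these hypotheses, $s = t$ holds in all $*$-continuous KATs if and only if the PCP instance has no solution.

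The key observation is that BiKAT's left-right commutativity axiom (LRC) delivers exactly these cross-commutativity facts for free. Concretely, I would take the underlying KAT $\kA$ to be freely generated by the combined alphabet $G \cup H$, embed the ``top'' generators on the left via $\eml{\_}$ and the ``bottom'' generators on the right via $\emr{\_}$, and let $\hat{s}, \hat{t}$ be the images of Kozen's terms $s,t$ under this embedding. Because $\eml{\_}$ and $\emr{\_}$ are KAT homomorphisms, every law used in $\kA$ transports into $\bA$; and for each commutativity hypothesis $gh = hg$ with $g \in G$, $h \in H$, the corresponding BiKAT instance $\eml{g}\bdot\emr{h} = \emr{h}\bdot\eml{g}$ is an instance of LRC, hence a theorem of \emph{every} BiKAT. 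Thus the finite hypothesis set is discharged uniformly, and ``$\hat{s} = \hat{t}$ holds in all $*$-continuous BiKATs'' becomes equivalent to Kozen's implication, which is undecidable.

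Making this precise requires faithfulness in both directions. The soundness direction (no PCP solution $\imp$ $\hat{s}=\hat{t}$ in all $*$-continuous BiKATs) follows by replaying Kozen's derivation step for step, replacing each appeal to a commutativity hypothesis by an appeal to LRC and each underlying-KAT law by its homomorphic image. The harder direction is completeness (a PCP solution $\imp$ $\hat{s}=\hat{t}$ fails in some $*$-continuous BiKAT): here one must exhibit a concrete $*$-continuous BiKAT in which the two sides are genuinely separated, so that no identifications beyond those forced by LRC and the KAT axioms can accidentally validate the equation. I expect this to be the main obstacle, and I would address it by transporting Kozen's separating $*$-continuous KAT model along the embedding --- using a trace or relational BiKAT over the corresponding state set (Sect.~\ref{sec:models}), in which distinct embedded words stay distinct --- and verifying that the model respects LRC, so that the witness to the PCP solution becomes a witness to $\hat{s} \neq \hat{t}$. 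Since $*$-continuity is what lets $\bstar$ be characterized by suprema of finite iterates, which is precisely how ``some sequence of indices'' in a PCP solution is encoded, the argument, like Kozen's, is carried out over $*$-continuous BiKATs, matching the statement.
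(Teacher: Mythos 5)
Your proposal is correct and takes essentially the same route as the paper: the paper also adapts Kozen's PCP reduction so that the cross-alphabet commutativity hypotheses become instances of LRC (one word of each PCP pair embedded on the left, the other on the right), proves the no-solution direction semantically via $*$-continuity (each finite non-solution iterate lies below a ``mismatch'' term), and handles the solution direction by explicitly constructing a $*$-continuous BiKAT --- relations over left/right-marked strings with hand-crafted embeddings and a separating bitest --- which is exactly the model construction you flag as the main obstacle. So both the strategy and your assessment of where the real work lies (one cannot black-box Kozen's separating KAT model, since a BiKAT needs two homomorphisms from a common underlying KAT satisfying LRC) agree with the paper's proof.
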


We do not need decidability in order to derive useful alignments 
for specific programs, and to derive general laws like (\ref{eq:expand}).
Consider a given verification problem $R ; \emb{c}{c'} ; \neg S = 0$.
Alignment reasoning, using LRC and its consequences, transforms it to 
a problem of the form $R ; B ; \neg S = 0$ that is proved using unary and relational Hoare-triple hypotheses.
For that matter, 
having used BiKAT to obtain well-aligned $B$, verification subtasks in $B$ can be solved 
by whatever method you like.

\subsection{Models of BiKAT}\label{sec:models}

Elements of a relational or trace KAT can be seen as sets of 
observations, where an observation may be a proper trace or just a pair (initial,final)
of states.
Elements of a BiKAT are meant to be sets of observation pairs.  
We will describe trace models of BiKAT this way.
For relational BiKATs, however, we use a slightly different representation that facilitates spelling out connections with relational Hoare logics etc.

% FUTURE: may be better to use the other choice for relation models, and uniformly use
% first/last operators so defs look same as trace models.

First, some definitions. 
Let $R$ and $S$ be relations on some set $\Sigma$.  
Define $R\otimes S$ to be the relation on $\Sigma\times\Sigma$
such that $(\sigma,\sigma')(R\otimes S)(\tau,\tau')$ iff $\sigma R\tau$ and $\sigma' S\tau'$.
If $R$ and $S$ are the denotations of two programs, $R\otimes S$ relates 
a pair $(\sigma,\sigma')$ interpreted as initial states to a final pair $(\tau,\tau')$.
(To strictly model the idea of observation pair, one would use $((\sigma,\tau),(\sigma',\tau'))$
instead of $((\sigma,\sigma'),(\tau,\tau'))$.)

%((start left prog,start right prog), (end left prog, end right prog)).

For relation $R$ on $\Sigma$, the relations $\eml{R}$ and $\emr{R}$ on $\Sigma\times\Sigma$
are defined by $\eml{R} = R \otimes id_{\Sigma}$ and $\emr{R} = id_{\Sigma} \otimes R$.
(We write $id$ or $id_\Sigma$ for the identity relation on $\Sigma$.)
In terms of states, we have
\[ 
(\sigma,\sigma') \eml{R} (\tau,\tau') \;\iff\; \sigma R \tau \land \sigma'=\tau' 
\qquad\quad
   (\sigma,\sigma') \emr{R} (\tau,\tau') \;\iff\; \sigma' R \tau' \land \sigma=\tau
\]
For a relation $\bR$ on $\Sigma\times\Sigma$, define relations $\lproj\bR$ and $\rproj\bR$ on $\Sigma$ by
\begin{equation}\label{eq:BiKATproj}
 \begin{array}{l}
\sigma(\lproj\bR)\tau \;\iff\; \some{\sigma',\tau'}{ (\sigma,\sigma')\,\bR\,(\tau,\tau')} 
\qquad\quad
\sigma'(\rproj\bR)\tau' \;\iff\; \some{\sigma,\tau}{ (\sigma,\sigma')\,\bR\,(\tau,\tau')}  
\end{array}
\end{equation}
Equivalently, $\lproj\bR \eqdef \fst^o;\bR;\fst$ and $\rproj\bR \eqdef \snd^o;\bR;\snd$
where
$\fst:\Sigma\times\Sigma\to\Sigma$ and $\snd:\Sigma\times\Sigma\to\Sigma$ are
the projection functions, treated as relations so we can use the converse operation 
(written $^o$ as in~\cite{Freyd:Scedrov}) 		
and relational composition (written $;$).

A \dt{relational BiKAT} over a relational model $(\kA,\kB,+,\cdot,*,\neg,1,0)$ is
$(\bA,\bB,\bplus,\bdot,\bstar,\bneg,\bone,\bzero)$ such that
$\bA$ is a set of relations on $\Sigma\times\Sigma$
and the structure on $\bA$ is a relational model of KAT.
Moreover, both $\eml{R}$ and $\emr{R}$ are in $\bA$, for any $R$ in $\kA$, 
and both $\lproj\bR$ and $\rproj\bR$ are in $\kA$, for any $\bR$ in $\bA$.
Note that a relational BiKAT is a BiKAT with projections.

Given a relation $R$ on states, we sometimes write $\dot{R}$ for the associated 
sub-identity in $\bB$, i.e., $(\sigma,\sigma')\dot{R}(\tau,\tau')$ iff
$\sigma R \sigma' \land \tau=\sigma \land \tau'=\sigma'$.
Please note that $\dot{id_\Sigma}$ is different from $\bone$.

For an example, say a relation $R$ is boundedly nondeterministic if for any $\sigma$ there are finitely many $\tau$ with $\sigma R \tau$.  
Take $\kA$ to be all boundedly nondeterministic relations on $\Sigma$
and $\bA$ to be all boundedly nondeterministic relations on $\Sigma\times\Sigma$.  

A \dt{full} relational BiKAT is one where the BiKAT and its underlying KAT are full relational models. 
The example of boundedly nondeterministic relations is not full;
both the underlying KAT and the BiKAT lack a top.

For programs $c,d$ (interpreted as relations on $\Sigma$) and relations $R,S$ on $\Sigma$ meant to be pre- and post-conditions,
the $\forall\forall$ judgment \graybox{$c\sep d:\rspec{R}{S}$} is pictured
\begin{equation}\label{eq:allall}
\begin{diagram}[h=4ex,w=2.5em]
        & \sigma  &\rMapsto^c &\tau      &                                 & \tau      \\
\Bforall& \dRel^R &           &          &\mbox{\quad\Large$\implies$\quad}& \dRel_S \\
        & \sigma' &\rMapsto^d &\tau'     &                                 & \tau'     \\
\end{diagram}
\end{equation}
This is meant to say 
$\all{\sigma,\sigma',\tau,\tau'}{ \sigma R \sigma' \land \sigma c \tau \land \sigma' d \tau' \imp \tau S \tau'}$. By definitions we have the following.

\begin{lemma}[\popladded{adequacy for relational models}]\label{lem:adequacy}
In a relational BiKAT, for any $c,d$,
the BiKAT term $\emb{c}{d}$ denotes the set of all
pairs of $c,d$ computations.\footnote{To be precise,
the set of all $((\sigma,\sigma'),(\tau,\tau'))$ where
$(\sigma,\tau)\in c$ and $(\sigma',\tau')\in d$.}
\end{lemma}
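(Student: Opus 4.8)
The plan is to prove this set equality purely by unfolding definitions at the level of states, with no appeal to any BiKAT axiom beyond the concrete relational interpretation: in a relational BiKAT the operation $\bdot$ on $\bA$ \emph{is} relational composition on $\Sigma\times\Sigma$, and the embeddings $\eml{\_}$ and $\emr{\_}$ already come equipped with the explicit state-level characterizations displayed just above the statement. So the only thing to check is that composing those two relations reproduces the claimed relation, namely the set of $((\sigma,\sigma'),(\tau,\tau'))$ with $(\sigma,\tau)\in c$ and $(\sigma',\tau')\in d$.

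First I would fix arbitrary states $\sigma,\sigma',\tau,\tau'$ and expand $\emb{c}{d}=\eml{c}\bdot\emr{d}$ using the definition of relational composition on $\Sigma\times\Sigma$: the pair $(\sigma,\sigma')$ is related to $(\tau,\tau')$ by $\emb{c}{d}$ iff there is an intermediate pair $(\rho,\rho')\in\Sigma\times\Sigma$ with $(\sigma,\sigma')\,\eml{c}\,(\rho,\rho')$ and $(\rho,\rho')\,\emr{d}\,(\tau,\tau')$. Substituting the given characterizations, the first conjunct reads $\sigma\,c\,\rho \land \sigma'=\rho'$, and the second reads $\rho'\,d\,\tau' \land \rho=\tau$.

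Next I would eliminate the existential. The equalities $\rho=\tau$ and $\rho'=\sigma'$ force the witness $(\rho,\rho')$ to be exactly $(\tau,\sigma')$, so the existential collapses and the conjunction simplifies to $\sigma\,c\,\tau \land \sigma'\,d\,\tau'$. This is precisely the membership condition for the claimed relation, so the two relations coincide and the lemma follows.

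There is essentially no obstacle here: the statement is a direct computation, and the only point requiring a moment's care is that the intermediate pair is \emph{uniquely} determined, so forming the composite neither drops nor introduces pairs. I would also remark, as a consistency check, that left--right commutativity (LRC) gives the symmetric reading $\emb{c}{d}=\emr{d}\bdot\eml{c}$, which the same unfolding shows denotes the same relation; this is reassuring but not needed for the argument.
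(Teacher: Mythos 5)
Your proof is correct and matches the paper's intent exactly: the paper offers no explicit proof, stating only that the lemma holds ``by definitions,'' and your argument is precisely that unfolding --- expand $\emb{c}{d}=\eml{c}\bdot\emr{d}$ as relational composition on $\Sigma\times\Sigma$, substitute the state-level characterizations of $\eml{c}$ and $\emr{d}$, and observe that the equalities force the intermediate pair to be $(\tau,\sigma')$, collapsing the existential to $\sigma\,c\,\tau \land \sigma'\,d\,\tau'$. Nothing is missing; your observation that the witness is uniquely determined is a nice (if not strictly necessary) bonus, since the backward inclusion only requires exhibiting $(\tau,\sigma')$ as a witness.
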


\popladded{
Now we can confirm that defining the judgment $c\sep d:\rspec{R}{S}$ 
as the BiKAT equation (\ref{eq:allallbikat})
captures the semantic condition of (\ref{eq:allall}).
Moreover, BiKAT equality preserves adequacy.
}

\popladded{
\begin{theorem}[$\forall\forall$ soundness for relational models]\label{thm:allall}
Suppose that $c,d$ are elements of the underlying KAT of a relational BiKAT,
and suppose the sub-identity relations $\dot{R},\dot{S}$ that represent $R,S$ are 
among the BiKAT's tests.   Then
\begin{itemize}{}{}
\item[(a)] The condition (\ref{eq:allall}) holds iff 
$\, \dot{R};\emb{c}{d};\neg\dot{S}= 0$.
\item[(b)] For any $B$, if $\, \dot{R};\emb{c}{d}\leq \dot{R};B$
and $\, \dot{R};B;\neg\dot{S}= 0$ then (\ref{eq:allall}) holds.
\end{itemize}
\end{theorem}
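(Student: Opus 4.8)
The plan is to prove part (a) by unfolding the concrete relational meaning of both sides and reading off the equivalence, and then to obtain part (b) from (a) by a one-line monotonicity argument in the KAT $\bA$.

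For part (a), I would first compute which pairs the single relation $\dot{R};\emb{c}{d};\neg\dot{S}$ on $\Sigma\times\Sigma$ contains, factor by factor. Since $\dot{R}$ is a sub-identity, a step $(\sigma,\sigma')\,\dot{R}\,(x,x')$ forces $x=\sigma$ and $x'=\sigma'$ together with $\sigma R \sigma'$; thus the first factor only filters the source pair and leaves it fixed. By Lemma~\ref{lem:adequacy}, $(\sigma,\sigma')\,\emb{c}{d}\,(\tau,\tau')$ holds iff $\sigma c \tau$ and $\sigma' d \tau'$. Finally, because $\dot{S}$ is a sub-identity and $\neg$ is complementation within the identity, a step $(\tau,\tau')\,\neg\dot{S}\,(y,y')$ forces $y=\tau$, $y'=\tau'$, and $\neg(\tau S \tau')$. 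Composing the three factors, a pair lies in $\dot{R};\emb{c}{d};\neg\dot{S}$ exactly when there exist $\sigma,\sigma',\tau,\tau'$ with $\sigma R \sigma' \land \sigma c \tau \land \sigma' d \tau' \land \neg(\tau S \tau')$. Hence this relation equals $0$ (is empty) iff no such witnessing tuple exists, which is precisely condition~(\ref{eq:allall}); identifying ``the relation is $0$'' with ``there is no witness'' yields the stated biconditional.

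For part (b), I would reason purely algebraically. Right multiplication is monotone for $\leq$: from $a\leq b$, i.e.\ $a+b=b$, the right-distributive law gives $(a;e)+(b;e)=(a+b);e=b;e$, so $a;e\leq b;e$. Applying this to the hypothesis $\dot{R};\emb{c}{d}\leq \dot{R};B$ with $e=\neg\dot{S}$ yields $\dot{R};\emb{c}{d};\neg\dot{S}\leq \dot{R};B;\neg\dot{S}$. The second hypothesis states that the right-hand side is $0$, and since $0$ is the least element ($X+0=X$, so $X\leq 0$ means $X=0$), we conclude $\dot{R};\emb{c}{d};\neg\dot{S}=0$. Invoking part (a) then delivers~(\ref{eq:allall}).

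The only delicate point is the bookkeeping in part (a): one must check that each sub-identity factor pins down the intermediate states rather than introducing genuine intermediate freedom, so that the chain of existentials in the composition collapses exactly onto the four-variable witness appearing in the negation of~(\ref{eq:allall}). Once that unfolding is done carefully, part (b) is immediate and introduces no new difficulty, being just monotonicity of composition together with the fact that $0$ is the bottom of the order.
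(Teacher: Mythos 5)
Your proposal is correct and takes essentially the same approach as the paper: part (a) is a semantic unfolding in the relational model resting on the adequacy lemma (Lemma~\ref{lem:adequacy}), and part (b) is monotonicity of composition plus minimality of $0$, reduced back to (a). The only cosmetic difference is that you characterize membership of $\dot{R};\emb{c}{d};\neg\dot{S}$ directly and read off emptiness, whereas the paper first converts to the equivalent inequality $\dot{R};\emb{c}{d}\leq \dot{R};\emb{c}{d};\dot{S}$ (a standard KAT fact) and then applies adequacy in each direction.
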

}
\begin{proof}
\popladded{
For (a) the proof is by mutual implication.
If (\ref{eq:allall}) holds then 
$\dot{R};\emb{c}{d} \leq \dot{R};\emb{c}{d};\dot{S}$ by definitions.
And
$\dot{R};\emb{c}{d} \leq \dot{R};\emb{c}{d};\dot{S}$
is equivalent to $\dot{R};\emb{c}{d};\neg\dot{S} = 0$ 
as a fact about KATs.
For the converse, assume
$\dot{R};\emb{c}{d} \leq \dot{R};\emb{c}{d};\dot{S}$.
To show (\ref{eq:allall}), for any states $\sigma,\sigma',\tau,\tau'$
that satisfy the antecedent in (\ref{eq:allall}),
by adequacy Lemma~\ref{lem:adequacy} the executions are in 
$\dot{R};\emb{c}{d}$, 
so by assumption they are in 
$\dot{R};\emb{c}{d};\dot{S}$.
So by definitions the post states are related by $S$.
}

\popladded{For Part (b), 
if $\dot{R};\emb{c}{d}\leq \dot{R};B$ then
we have that $\dot{R};B;\neg\dot{S} \leq 0$ implies 
$\dot{R};\emb{c}{d};\neg\dot{S} \leq 0$ by KAT reasoning.
This yields (\ref{eq:allall}) using Part (a).
}
\end{proof}

\popladded{
Part (b) of Theorem~\ref{thm:allall} says that 
adequacy of $B$ for proving 
$c\sep d:\rspec{R}{S}$ is expressed by the 
equation $\dot{R};\emb{c}{d} \leq \dot{R};B$ 
(called ``$R$-adequacy'' in~\citet{NagasamudramN21}).
An adequacy proof can be interwoven with the correctness proof, in a form like
$\dot{R};\emb{c}{d};\neg\dot{S} \leq \ldots \leq \dot{R};B;\neg\dot{S} = \ldots = 0$.
Some of our examples have this form, using equality not $\leq$.
}

A \dt{trace BiKAT} over trace model $(\kA,\kB,\kplus,\kdot,\kstar,\kneg,\kone,\kzero)$ 
is $(\bA,\bB,\bplus,\bdot,\bstar,\bneg,\bone,\bzero)$ 
where elements of $\bA$ are sets of pairs of traces.  
To define $\bdot$, the coalesced catenation $\diamond$ is lifted to trace pairs and used as in ordinary trace models.
That is, $A\bdot B = \{ (t \diamond u, t' \diamond u')  \mid (t,t')\in A, (u,u') \in B, 
              \mbox{$t\diamond u$ and $t'\diamond u'$ defined} \}
$.
Sum and star are union and iterated $\bdot$.
For trace set $a$, the left embedding $\eml{a}$ is $\{(t,\sigma) \mid t\in a \land \sigma\in\Sigma\}$.
For set $A$ of trace pairs, the left projection $\lproj A$ is
$\{ t \mid (t,u)\in A \}$, i.e., $\lproj$ maps $\fst$ over $A$.  
As with relational BiKATs, we require that $\bA$ contains all the images of $\eml{\_}$ and $\emr{\_}$
on $\kA$, and $\kA$ contains the images of $\lproj$ and $\rproj$ on $\bA$.

To see why LRC holds in a trace BiKAT, let us write 
$\sDom t$ for the first state of trace $t$ and $t \sCod$ for the last.
For trace sets $c,c'$, and any 
$t\in c$ and $t'\in c'$,
we have $(t,t')\in \eml{c}\bdot\emr{c'}$ 
iff $(t,\sDom t')\in\eml{c}$ and
$(\sCod t, t')\in\emr{c'}$.
Note that $(t,\sDom t')\diamond (\sCod t, t') = (t,t')$.
Similarly, we have $(t,t')\in \emr{c'}\bdot\eml{c}$.
The upshot is that 
$\eml{c}\bdot\emr{c'}= \emr{c'}\bdot\eml{c}$.

A \dt{full trace BiKAT} has all trace sets (relative to the given set of admissible traces), for the underlying KAT, and all pairs of traces for the BiKAT. Like full relational models, full trace models have a top.

\popladded{
Lemma~\ref{lem:adequacy} and Theorem~\ref{thm:allall} 
can be straightforwardly adapted to trace models.
}

\section{Using BiKAT for $\forall\forall$ Relational Reasoning} 
\label{sec:examples}

Having introduced BiKAT we now demonstrate how it can be used to 
algebraically derive alignments and verify  $\forall\forall$ properties 
of a variety of examples that necessitate different kinds of alignment.

\paragraph{Simple Example.}
The following two programs compute the sum of integers up to some $N$. 
\[\begin{array}{ll}
C_1 \eqdef & \kcode{i:=0}\ksemi\;
             \left( [\kcode{i} \leq \kcode{N}] \ksemi\; \kcode{x:=x+i} \ksemi\; \kcode{i:=i+1}  \right)^\kstar
             \ksemi\; \neg{[\kcode{i} \leq \kcode{N}]} \\

C_2 \eqdef & \kcode{i:=1}\ksemi\;
             \left( [\kcode{i} \leq \kcode{N}] \ksemi\; \kcode{x:=x+i} \ksemi\; \kcode{i:=i+1}  \right)^\kstar
             \ksemi\; \neg{[\kcode{i} \leq \kcode{N}]}
\end{array}\]
To prove the $\forall\forall$ relational judgment
$C_1\sep C_2 : \rspec{\eml{[\kcode{N}\geq 0]} [\kcode{x} \eqbi \kcode{x}] [\kcode{N} \eqbi \kcode{N}]}{[\kcode{x} \eqbi \kcode{x}]}$
without resorting to functional correctness, we start from 
$
\eml{[\kcode{N}\geq 0]} \kdot [\kcode{x} \eqbi \kcode{x}] \kdot [\kcode{N} \eqbi \kcode{N}] \kdot
\emb{C_1}{C_2} \kdot \kneg [\kcode{x} \eqbi \kcode{x}] = 0
$
and then manipulate the sequential composition $\emb{C_1}{C_2}$ into an alignment, where we can directly relate the programs' variables during loop iterations.
In this case, there is a simple alignment: 
$C_1$ does one	
more iteration than $C_2$, so we unroll its loop once before aligning the two loop
bodies in lockstep.  Then, a simple relational loop invariant,
$[\kcode{x} \eqbi \kcode{x}] [\kcode{i} \eqbi \kcode{i}]$, suffices to establish
equivalence.

Let us now see this step-by-step in the algebra of BiKAT.
Unrolling $C_1$'s loop once, we have that $\emb{C_1}{C_2}$ is equal to:
\[ \emb{\kcode{i:=0}\ksemi \left( 1 + C; C^\kstar\right)\ksemi \neg e }
       {\kcode{i:=1}\ksemi C^\kstar\ksemi \neg e}
\]
where $C$ is the program text 
$(e\ksemi\,\kcode{x:=x+i}\ksemi\kcode{i:=i+1})$
and $e$ is $[\kcode{i} \leq \kcode{N}]$.  Distributing, we get
\[ \emb{\kcode{i:=0}\ksemi\neg e + \kcode{i:=0}\ksemi C\ksemi C^\kstar\ksemi \neg e }
       {\kcode{i:=1}\ksemi C^\kstar\ksemi \neg e}
\]
For semantic reasons, $\left(\kcode{i:=0}\ksemi \neg e \right)$ is infeasible.
So we can assume hypothesis $\left(\kcode{i:=0}\ksemi \neg e \right) = 0$ 
which lets us eliminate that term.
For the other term, we calculate
\[\begin{array}{lll}
    & \emb{\kcode{i:=0}\ksemi C\ksemi C^\kstar\ksemi \neg e }
          {\kcode{i:=1}\ksemi C^\kstar\ksemi \neg e} \\
=   & \emb{\kcode{i:=0}}{\kcode{i:=1}}\ksemi
      \eml{C} \ksemi
      \emb{C^\kstar}{C^\kstar}\ksemi
      \emb{\neg e}{\neg e}
    & \mbox{embedding homomorphic} \\
=   & \eml{\kcode{i:=0}}\ksemi \emr{\kcode{i:=1}}\ksemi
      \eml{C} \ksemi \eml{C^\kstar}\ksemi \emr{C^\kstar}\ksemi
      \emb{\neg e}{\neg e}
    & \mbox{def two-argument embedding} \\
=   & \eml{\kcode{i:=0}}\ksemi \eml{C} \ksemi
      \emr{\kcode{i:=1}}\ksemi \eml{C^\kstar}\ksemi \emr{C^\kstar}\ksemi
      \emb{\neg e}{\neg e}
    & \mbox{LRC}\\
=   & \emb{\kcode{i:=0}\ksemi C}{\kcode{i:=1}}\ksemi
      \emb{C^\kstar}{C^\kstar}\ksemi
      \emb{\neg e}{\neg e}
    & \mbox{embedding homomorphic}
  \end{array}\]
Using the expansion lemma~(\ref{eq:expand}), $\emb{C^\kstar}{C^\kstar}$ can be
rewritten into
\[ \left( \emb{[\kcode{i} \leq \kcode{N}] \ksemi\; \kcode{x:=x+i} \ksemi\; \kcode{i:=i+1}}
              {[\kcode{i} \leq \kcode{N}] \ksemi\; \kcode{x:=x+i} \ksemi\; \kcode{i:=i+1}} \right)^\kstar
\]
(Eliding terms that cancel out once we introduce the loop invariant.)
By systematically using LRC along with the
homomorphism property of embeddings, we  align the loop bodies in
lockstep:
\[ \left( \emb{[\kcode{i} \leq \kcode{N}]}{[\kcode{i} \leq \kcode{N}]} \ksemi\;
          \emb{\kcode{x:=x+i}}{\kcode{x:=x+i}}\ksemi\;
          \emb{\kcode{i:=i+1}}{\kcode{i:=i+1}}
   \right)^\kstar
\]
We next add the assumption that $[\kcode{x} \eqbi \kcode{x}]$ initially, add relational loop invariant 
$[\kcode{i} \eqbi \kcode{i}][\kcode{x} \eqbi \kcode{x}]$, and conclude the post-relation $[\kcode{x} \eqbi \kcode{x}]$ must hold beyond the loop.

\paragraph{Double Square (Example~\ref{eg:two}, Sect.~\ref{sec:overview})}
Recall the KAT expressions of the two programs:
$$
\begin{array}{l}
  k_{D_1} \eqdef \kcode{y:=0} \ksemi \kcode{z:=2*x} \ksemi (\kcode{[z>0]} \ksemi \kcode{z:=z-1} \ksemi \kcode{y:=y+x})^\kstar \ksemi \kNeg{\kcode{[z>0]}}\\
  
  k_{D_2} \eqdef \kcode{y:=0} \ksemi \kcode{z:=x} \ksemi (\kcode{[z>0]} \ksemi \kcode{z:=z-1} \ksemi \kcode{y:=y+x})^\kstar \ksemi \kNeg{\kcode{[z>0]}} \ksemi \kcode{y:=2*y}
\end{array}
$$
Note we are using overline as alternate notation for negation. 

In the preceding example, we did a unary unfolding of one iteration on the left,
and then aligned the loops in lockstep.
For this example we choose to align two iterations on the left with one iteration on the right.
We first use KAT laws to rewrite $k_{D_1}$ as:
$$k_{D_1} = \kcode{y:=0} \ksemi \kcode{z:=2*x} \ksemi (\kcode{[z>0]} \ksemi b \ksemi \kcode{[z>0]} \ksemi b)^\kstar \ksemi (\kcode{[z>0]} \ksemi b \kplus 1) \ksemi \kNeg{\kcode{[z>0]}}$$ where $b \eqdef \kcode{z:=z-1} \ksemi \kcode{y:=y+x}$. We then align the two loops and apply the expansion law (\ref{eq:expand}) as follows:
\[\begin{array}[t]{lll}
	& \emb{(\kcode{[z>0]} \ksemi b \ksemi \kcode{[z>0]} \ksemi b)^\kstar}
	            {(\kcode{[z>0]} \ksemi b)^\kstar} &\\
= & \emb{\kcode{[z>0]} \ksemi b \ksemi \kcode{[z>0]} \ksemi b}{\kcode{[z>0]} \ksemi b}^\kstar \ksemi
(\emb{\kcode{[z>0]} \ksemi b \ksemi \kcode{[z>0]} \ksemi b}{\kNeg{\kcode{[z>0]}}}^\kstar 
\kplus \emb{\kNeg{\kcode{[z>0]}}}{\kcode{[z>0]} \ksemi b}^\kstar) & \\
\end{array}\]

Now it is easier to prove the relational loop invariant $\bdots{\mathcal{I}} \eqdef [y \eqbi 2y] \ksemi [z \eqbi 2z]$ by proving that it is preserved across the above expansion loops. In the derivation, we break the proof into sub-proofs over the embeddings of loop bodies:
\[
\bdots{\mathcal{I}} \ksemi 
\emb{(\kcode{[z>0]} \ksemi b \ksemi \kcode{[z>0]} \ksemi b)^\kstar}
{(\kcode{[z>0]} \ksemi b)^\kstar }
\ksemi \kneg \bdots{\mathcal{I}} = \kzero 
\begin{array}{ll}
  
  \Leftarrow &
  \left\{
  \begin{array}{l}
    \bdots{\mathcal{I}} \ksemi 
    \emb{\kcode{[z>0]} \ksemi b \ksemi \kcode{[z>0]} \ksemi b}
    {\kcode{[z>0]} \ksemi b} \ksemi \kneg \bdots{\mathcal{I}} = 0 \,\wedge\\
    \bdots{\mathcal{I}} \ksemi 
    \emb{\kcode{[z>0]} \ksemi b \ksemi \kcode{[z>0]} \ksemi b}{\kNeg{\kcode{[z>0]}}} \ksemi \kneg \bdots{\mathcal{I}} = 0 \,\wedge\\
    \bdots{\mathcal{I}} \ksemi 
    \emb{\kNeg{\kcode{[z>0]}}}{\kcode{[z>0]} \ksemi b} \ksemi \kneg \bdots{\mathcal{I}} = 0\\
  \end{array}\right.\\
\end{array}
\]

\paragraph{Loop Tiling (Example~\ref{eg:three}, Sect.~\ref{sec:overview})}
Here are the two programs $L_1, L_2$ as KAT expressions:
$$
\begin{array}{l}
  k_{L_1} \defeq \kcode{x:=0} \ksemi (\kcode{[x<N*M]} \ksemi \kcode{a[x]:=f(x)} \ksemi \kcode{x:=x+1})^\kstar \ksemi \kNeg{\kcode{x<N*M}} \\
  k_{L_2} \defeq \kcode{i:=0} \ksemi (\kcode{[i<N]} \ksemi \kcode{j:=0} \ksemi (\kcode{[j<M]} \ksemi \kcode{A[i,j]:=f(i*M+j)} \ksemi \kcode{j:=j+1})^\kstar \ksemi \kNeg{ \kcode{[j<M]}} \ksemi
  \kcode{i:=i+1})^\kstar \ksemi \kNeg{\kcode{[i<N]}}
\end{array}
$$
For the alignment, we then transform these KAT expressions into two equivalent KAT expressions which have the same structure.
$$
\begin{array}{l}
  k_{L_1} = \kcode{x:=0} \ksemi (\kcode{[x<N*M]} \ksemi b_1 \ksemi (\kcode{[x<N*M]} \ksemi \kcode{[x\%M!=0]} \ksemi b_1)^\kstar \,\ksemi
  \kNeg{\kcode{[x<N*M]} \ksemi \kcode{[x\%M!=0]}})^\kstar \ksemi \kNeg{\kcode{x<N*M}} \\
  k_{L_2} = \kcode{i:=0} \ksemi (\kcode{[i<N]} \ksemi \kcode{j:=0} \ksemi (\kcode{[j<M]} \ksemi b_2 \kplus \kNeg{\kcode{[j<M]}}) \ksemi (\kcode{[j<M]} \ksemi b_2)^\kstar \ksemi \kNeg{ \kcode{[j<M]}} \,\ksemi
  \kcode{i:=i+1})^\kstar \ksemi \kNeg{\kcode{[i<N]}}
\end{array}
$$
        where $b_1 \defeq \kcode{a[x]:=f(x)} \ksemi \kcode{x:=x+1}$ and $b_2 \defeq \kcode{A[i,j]:=f(i*M+j)} \ksemi \kcode{j:=j+1}$.
        We next prove the judgment $\bdots{P} \ksemi \emb{k_{L_1}}{k_{L_2}} \ksemi \kneg \bdots{Q} = \kzero$ with the pre-relation $\bdots{P} \defeq [\kcode{N}\eqbi\kcode{N}] \ksemi [\kcode{M}\eqbi\kcode{M}] \ksemi  \kcode{[N > 0]} \ksemi \kcode{[M > 0]}$ and the post-relation $\bdots{Q} \defeq [\mathcal{R}(\kcode{N} \times \kcode{M}, \kcode{N}, \kcode{M})]$, where 
$\mathcal{R}(x,i,j)$        
is a predicate\footnote{$\mathcal{R}(x, i, j) \defeq \forall l, r, c.~ 0 \leq l < x \wedge 0 \leq r < i \wedge (r < i - 1 \wedge 0 \leq c < \kcode{M} \vee r = i - 1 \wedge 0 \leq c < j) \wedge l = r \times \kcode{M} + c \imp [\kcode{a}[l] \beq \kcode{A}[r, c]]$} that says every element of the array $\kcode{a}$ up to the index $x$ is equal to its corresponding element of the two-dimensional array $\kcode{A}$ up to the index $i, j$.
To prove the above judgment, we use the relational invariant $\bdots{\mathcal{I}} \defeq \kcode{[i < N]} \ksemi \kcode{[j} \leq \kcode{M]} \ksemi \kcode{[x} \beq \kcode{i} \times \kcode{M + j]} \ksemi [\mathcal{R}(\kcode{x}, \kcode{i}, \kcode{j})]$ for the inner loops and the relational invariant $\bdots{\mathcal{J}} \defeq \kcode{[i} \leq \kcode{N]} \ksemi \kcode{[x} \beq \kcode{i} \times \kcode{M]} \ksemi [\mathcal{R}(\kcode{x}, \kcode{i}, 0)]$ for the outer loops.

\paragraph{Loop Summaries and Procedure Calls.}

\begin{wrapfigure}{r}{2.5in}
\begin{lstlisting}[deletekeywords={len}]
arrayInsert (A, len, h) {
  i := 0;
  while (i<len && A[i]<h) i++;
  len := shift_array(A, i, 1);
  A[i] := h;
  while (i<len) i++;
  return i;
}
\end{lstlisting}
\end{wrapfigure}

\poplchanged{KAT-based systems do not intrinsically support variables but instead reason about programs under hypotheses that axiomatize relevant facts about primitives.}{KAT is a propositional 
theory of imperative control struture.  In KAT-based systems, first order state variables, 
conditions, and assignments can be handled using hypotheses (typically, Hoare triples) that axiomatize their semantics.}

Here too, the KATs embedded in a BiKAT are parametric over alphabets of actions and tests and hypotheses about those tests. 
Consequently BiKAT  inherently supports reasoning at coarser or finer granularies. 
Moreover, this parameterization allows BiKAT to be used in concert with 
other procedures (e.g.~loop summarization, procedure specs/summaries, ghost states, prophecy variables, etc.) which could be applied beforehand and then incorporated into a BiKAT through the primitive actions and hypothesis.  For example, consider the array insertion 
procedure\footnote{\popladded{The second loop sets \kcode{i} to \kcode{len} in a way that avoids a timing channel but we are not modeling timing here.}}~\cite{ShemerGSV19,Goyal2021} shown to the right that will illustrate the use of externally-provided unary procedure summaries.
The goal is to prove that,
$\kcode{arrayInsert}\sep \kcode{arrayInsert} : \rspec{\mkEqBi{len} \mkEqBi{A}}{\mkEqBi{i}}$
i.e., the noninterference property that there is no leak on \kcode{h}.

Contrary to some examples above, this example does not require loop bodies to be lock-step aligned. It does, however, require alignment between intermediate points between the loops. Specifically, the programs must be aligned in three places:  after both have completed their first loop, after both have completed the call to \kcode{shift\_array} (incorporating that method's post-condition), and after both of completed their second loop. That is, the BiKAT alignment:
\lstset{literate={'}{{'}}1}
\[\begin{array}{l}
\mkEqBi{len};
\embRepeat{ \kcode{while(...)i++} };
\mkEqBi{len};
\embRepeat{ [\kcode{i<=len}] };\\
\;\;\;\;
\embRepeat{ \kcode{len := shift\_array(...)} };
\mkEqBi{len+1};\\
\;\;\;\;
\eml{ \kcode{while(i<len)i++}; [\kcode{i=len}] };
\emr{ \kcode{while(i<len)i++}; [\kcode{i=len}] };
\mkEqBi{i}
\end{array}\]
We first use the pre-relation $\mkEqBi{len}$ and the postcondition of the first loop on both the left and right sides, aligning when both sides have completed
to show that $\mkEqBi{len}$ still holds, while
$\kcode{i}\leq \kcode{len}$ (on the left) and
$\kcode{i}'\leq \kcode{len}'$ (on the right).

At this point we have established the alignment necessary for this example. Completing the proof requires small semantic hypotheses (similar to those in earlier examples) and some strategy for establishing $\mkEqBi{len}$ is preserved across the procedure call to \kcode{shift\_array}.
One option is to introduce relational hypothesis 
$\kcode{shift\_array(A,i,j)}\sep \kcode{shift\_array(A,i,j)} : \rspec{\mkEqBi{len}\mkEqBi{j}}{\mkEqBi{len}}$ which ensures agreement on \kcode{len} after the embedded procedures.
Alternatively,  one could employ a unary specifications of the form
$
\{  \}
\kcode{shift\_array(A,i,j)} 
\{ \kcode{len=old(len)+j} \}
$ through the use of KAT hypotheses of the form $p C \neg q = 0$ embedded on the left and the right. These unary post-conditions can be combined with the pre-call $\mkEqBi{len}$ to add a post-call bitest $\mkEqBi{len}$. Here there are some details that would be needed (\emph{e.g.}~ghost variables) to support post-condition tests that relate variables to pre-conditions.
Finally, we use $\mkEqBi{len}$ with the post-conditions of the last loops
that 
$\kcode{i}=\kcode{len}$  on both sides to conclude that $\mkEqBi{i}$.

\section{Relational Hoare logic in BiKAT}\label{sec:RHL}

In this section we show that relational Hoare logic rules can be 
derived in any BiKAT.
Relational logics involve two programs, thus quadruples,
sometimes written $\{ P \} c \sim c' \{ Q \}$ for commands $c,c'$.  
\citet{Benton:popl04} writes $c\sim c' : P \Rightarrow Q$.
We consider inference rules for the $\forall\forall$ judgment 
form $c \sep c' : \rspec{P}{Q}$ introduced in Sect.~\ref{sec:intro}
and expressed in any BiKAT by the equation (\ref{eq:allallbikat}).

\paragraph{Deriving Rules of RHL}

A number of publications have presented variations on relational Hoare logic.
We consider a number of basic rules that can be found in Benton's influential paper~\cite{Benton:popl04}
and in Francez' less known paper~\cite{Francez83},
and a number of subsequent works.
There is not yet a standard set of rules, in part because 
until recently there was no satisfactory notion of completeness~\cite{NaumannISOLA20,NagasamudramN21}.
We consider a number of representative rules in Fig.~\ref{fig:RHLselected}.

In the rules we use suggestive syntax for formulas and programs,
and we will not belabor the distinction between program syntax and its 
standard representation in KAT.
We lift boolean expression $e$ to a relation formula 
$\leftF{e}$ that says $e$ is true in the left state,
so its representation as a test in BiKAT will look the same.

\begin{figure*}[t]
\begin{small}
\begin{mathpar}

\inferrule*[left=dSeq]{
  c\sep c' : \rspec{P}{R} \\
  d\sep d' : \rspec{R}{Q}
}{
  c;d \Sep c';d' : \rspec{P}{Q}
}

\inferrule*[left=dIf]{
  P \imp \eqbib{e}{e'} \\
  c\sep c' : \rspec{P\land \leftF{e}\land\rightF{e'}}{Q} \\
  d\sep d' : \rspec{P\land \neg\leftF{e}\land\neg\rightF{e'}}{Q} 
}{
  \ifc{e}{c}{d} \Sep \ifc{e'}{c'}{d'} : \rspec{P}{Q}
}

\inferrule*[left=dWh]{
  P \imp \eqbib{e}{e'} \\
  c\sep c' : \rspec{P\land \leftF{e}\land\rightF{e'}}{P} 
}{ 
  \whilec{e}{c} \Sep \whilec{e'}{c'} : \rspec{P}{P\land \neg\leftF{e}\land\neg\rightF{e'}}
}

\inferrule*[left=rDisj]{
  c\sep d : \rspec{P}{Q} \\
  c\sep d : \rspec{R}{Q} 
}{
  c\sep d : \rspec{P\lor R}{Q} \\
}

\inferrule*[left=SeqSk]{
  c \sep\skipc : \rspec{P}{R} \\
  d \sep \skipc : \rspec{R}{Q} 
}{
  c;d \sep \skipc : \rspec{P}{Q}
}

\inferrule*[left=rConseq]{
  R\imp P \\
  c\sep d : \rspec{P}{Q} \\
  Q \imp S 
}{
  c\sep d : \rspec{R}{S} \\
}

\end{mathpar}
\end{small}
\vspace*{-3ex}
\caption{Selected inference rules of $\forall\forall$ logic}
\label{fig:RHLselected}
\end{figure*}

We will show that all the rules are sound in any BiKAT. 
Recall that we interpret the judgment $c \sep c' : \rspec{P}{Q}$ 
as the BiKAT equation $P;\emb{c}{c'};\neg Q = 0$.
In the soundness proofs we use the equivalent form $P ; \emb{c}{c'}  \leq P ; \emb{c}{c'} ; Q$.  This 
form is also used in Kozen's work deriving Hoare logic rules in KAT~\cite{Kozen00}.

Several rules in Fig.~\ref{fig:RHLselected}
infer ``diagonal'' judgments relating two same-structured programs,
e.g., \rn{dSeq},  \rn{dIf}, and \rn{dWh}. 
The latter two cater for alignment whereby the same control path is followed,
with a requirement of agreement on conditional tests.
There is a rule \rn{dIf4}, named after the number of its premises, does not require such agreement.

Some of the rules can be derived from others.
Regarding rule \rn{SeqSk}, from its premises one can use \rn{dSeq} to obtain
$c;d \sep \skipc;\skipc : \rspec{P}{Q}$.
But the inference rules provide no way to replace $\skipc;\skipc$ by the 
equivalent $\skipc$.
One of the benefits of working in KAT is free use of such equivalences,
including more interesting ones like  
\poplchanged{$\ifc{e}{c}{c} = c$.}{loop unrolling in the Tiling example.}

Rule \rn{dIf} can be derived from \rn{dIf4} using rule \rn{FalsePre} and \rn{rConseq},
but we prove \rn{dIf} directly.  
The side condition $P \imp \eqbib{e}{e'}$ is most directly expressed 
as $P \leq \eqbib{e}{e'}$.
The left-right equality  $\eqbib{e}{e'}$ is equivalent to $\emb{e}{e'}+\emb{\neg e}{\neg e'}$
so the side condition yields
\begin{equation}\label{eq:dWhSide}
   P; \emb{e}{\neg e'} = 0
\qquad
   P; \emb{\neg e}{e'} = 0
\end{equation}
To prove soundness of \rn{dIf} we calculate:
\[\begin{array}{lll}
  &  P; \emb{e;c+\neg e; d}{e';c'+\neg e';d'} \\
= &  P;\emb{e}{e'};\emb{c}{c'} + P;\emb{e}{\neg e'};\emb{c}{d'} & \mbox{emb homo, distrib} \\
  & + \; P;\emb{\neg e}{e'};\emb{d}{c'} + P;\emb{\neg e}{\neg e'};\emb{d}{d'}  \\

= &  P;\emb{e}{e'};\emb{c}{c'} + P;\emb{\neg e}{\neg e'};\emb{d}{d'} & \mbox{using (\ref{eq:dWhSide})} \\

\leq &  P;\emb{e}{e'};\emb{c}{c'};Q + P;\emb{\neg e}{\neg e'};\emb{d}{d'};Q & \mbox{premises of \rn{dIf}} \\

= & P;\emb{e;c+\neg e; d}{e';c'+\neg e';d'};Q & \mbox{reverse steps}
\end{array}\]

To prove \rn{dWh}, first observe
\[\begin{array}{lll}
  & P ; (\emb{e;c}{\neg e'}^* + \emb{\neg e}{e';c'}^*) \\
= 
& P ;(1+ \emb{e;c}{\neg e'};\emb{e;c}{\neg e'}^*) + 
    P ;(1+ \emb{\neg e}{\neg e';c'};\emb{\neg e}{e';c'}^*) & \hint{distrib, star unfold} \\
= & P + P;\emb{e;c}{\neg e'};\emb{e;c}{\neg e'}^* + 
    P + P;\emb{\neg e}{\neg e';c'};\emb{\neg e}{e';c'}^* & \hint{distrib} \\
=  & P &\hint{using (\ref{eq:dWhSide})}
\end{array}\]
The last step uses that embedding is homomorphic, and the side condition (\ref{eq:dWhSide})
whence $P+0+P+0=P$.
Using the premise 
$P ; \emb{e}{e'}; \emb{c}{c'} \leq P ; \emb{e}{e'};\emb{c}{c'};P $,
we get the conclusion of \rn{dWh} by
\[\begin{array}{lll}
     & P ; \emb{(e;c)^*;\neg e}{(e';c')^*;\neg e'} \\ 
=    & P ; \emb{(e;c)^*}{(e';c')^*}; \emb{\neg e}{\neg e'} & \hint{emb homo} \\
=    & P ; \emb{e;c}{e';c'}^*;(\emb{e;c}{\neg e'}^* + \emb{\neg e}{e';c'}^*) ;\emb{\neg e}{\neg e'}     & \hint{expansion (\ref{eq:expand})} \\
\leq & P ; \emb{e;c}{e';c'}^*;P; (\emb{e;c}{\neg e'}^* + \emb{\neg e}{e';c'}^*) ;\emb{\neg e}{\neg e'} & \hint{premise, invariance}\\
=   & P ; \emb{e;c}{e';c'}^*;P;\emb{\neg e}{\neg e'}  & \hint{observation above} \\
= & P ; \emb{e;c}{e';c'}^*;P;\emb{\neg e}{\neg e'};P 
        & \hint{tests idem, tests commute} \\
=  & P ; \emb{e;c}{e';c'}^*;P; (\emb{e;c}{\neg e'}^* + \emb{\neg e}{e';c'}^*) ;\emb{\neg e}{\neg e'};P & \hint{observation, in reverse} \\
\leq & P ; \emb{e;c}{e';c'}^*; (\emb{e;c}{\neg e'}^* + \emb{\neg e}{e';c'}^*) ;\emb{\neg e}{\neg e'};P & \hint{$P\leq 1$ since $P$ is a test}\\
=    
&  P ; \emb{(e;c)^*;\neg e}{(e';c')^*;\neg e'}; P;\emb{\neg e}{\neg e'} 
&\hint{reverse steps using  (\ref{eq:expand})} 
\end{array}
\]

\begin{theorem}\label{thm:RHL}
The rules of Fig.~\ref{fig:RHLselected} 
%(also appendix Fig.~\ref{fig:RHLadditional})  DN: it's true but not important to emphasize here 
are sound in any BiKAT.  
\end{theorem}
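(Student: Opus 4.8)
The plan is to prove each rule of Figure~\ref{fig:RHLselected} separately, in every case unfolding the judgment $c \sep c' : \rspec{P}{Q}$ to its inequality form $P;\emb{c}{c'} \leq P;\emb{c}{c'};Q$ (equivalent to the defining equation (\ref{eq:allallbikat})) and then deriving the conclusion from the premises by pure equational/inequational reasoning valid in any BiKAT. The two genuinely structural rules, \rn{dIf} and \rn{dWh}, have already been discharged in the calculations above, so what remains is \rn{dSeq}, \rn{SeqSk}, \rn{rDisj}, and \rn{rConseq}.

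The engine behind the sequencing rules is the fact that $\emb{\_}{\_}$ is a homomorphism from the product KAT, which by LRC gives $\emb{c;d}{c';d'}=\emb{c}{c'};\emb{d}{d'}$. For \rn{dSeq} I would then compute $P;\emb{c;d}{c';d'}=P;\emb{c}{c'};\emb{d}{d'}\leq P;\emb{c}{c'};R;\emb{d}{d'}$ by the first premise, then $\leq P;\emb{c}{c'};R;\emb{d}{d'};Q$ by the second premise, and finally $\leq P;\emb{c;d}{c';d'};Q$ using $R\leq 1$; this is exactly the required inequality. Rule \rn{SeqSk} is the special case in which both right-hand commands are $\skipc$, represented by $1$: since $1;1=1$ in the underlying KAT, $\emb{c;d}{\skipc}=\eml{c};\eml{d}=\emb{c}{\skipc};\emb{d}{\skipc}$, and the same chain of inequalities goes through. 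This also confirms the remark above that \rn{SeqSk} almost follows from \rn{dSeq}: the only missing step, collapsing $\skipc;\skipc$ to $\skipc$, is a free KAT equality.

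The two logical rules are lighter. For \rn{rDisj} I would read the disjunction $P\lor R$ as the test sum $P+R$ and distribute: $(P+R);\emb{c}{d}=P;\emb{c}{d}+R;\emb{c}{d}\leq P;\emb{c}{d};Q+R;\emb{c}{d};Q=(P+R);\emb{c}{d};Q$, where the middle step applies both premises. For \rn{rConseq} I would read $R\imp P$ and $Q\imp S$ as $R\leq P$ and $Q\leq S$ on bitests; since $R\leq P$ yields $R=R;P$, I get $R;\emb{c}{d}=R;P;\emb{c}{d}\leq R;P;\emb{c}{d};Q\leq R;\emb{c}{d};Q\leq R;\emb{c}{d};S$, using the middle premise, then $P\leq 1$, then $Q\leq S$.

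The conceptually hardest rule is the loop rule \rn{dWh}, and its proof---relying on the expansion law (\ref{eq:expand}) together with the KAT invariance law to propagate the relational invariant $P$ across lock-step iterations---is precisely the main obstacle that has already been overcome above. For the remaining rules the only residual care is bookkeeping: tracking that bitests satisfy $P\leq 1$ and are idempotent and mutually commuting, that $\skipc$ denotes $1$, that $\lor$ is $+$ and $\imp$ is $\leq$ on bitests, and that every manipulation is a BiKAT identity, so that soundness holds in all models simultaneously rather than in any particular relational or trace model.
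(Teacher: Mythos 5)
Your proposal is correct and follows essentially the same route as the paper: the paper likewise proves \rn{dIf} and \rn{dWh} by the direct calculations preceding the theorem, and discharges \rn{dSeq}, \rn{SeqSk}, \rn{rDisj}, and \rn{rConseq} in its appendix by exactly the chains you give --- unfolding the judgment to $P;\emb{c}{c'} \leq P;\emb{c}{c'};Q$, using the embedding homomorphism/LRC to split $\emb{c;d}{c';d'}$, inserting and then dropping the intermediate bitest via $R\leq 1$, distributing over $+$ for disjunction, and using $R\leq P$, $Q\leq S$ for consequence. Your observation that \rn{SeqSk} reduces to \rn{dSeq} modulo the free KAT equality $\skipc;\skipc=\skipc$ also mirrors the paper's own remark.
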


To cater for reasoning about related loops where data-dependent alignment is needed,
the following rule \rn{caWh} for \emph{conditionally aligned loops} 
has been shown sound \popladded{for specific models} in prior work~\cite{Beringer11,NagasamudramN21,BNNN19a}.
The rule features relations $Q$ (resp.~$R$) as conditions under which an iteration
on one side is aligned with doing nothing on the other side.
\[\inferrule*[left=caWh]{
  c\sep c' : \rspec{P\land \leftF{e}\land\rightF{e'}\land\neg Q\land\neg R}{P} \\
  c\sep\skipc : \rspec{P\land Q\land \leftF{e}}{P} \\
  \skipc\sep c' : \rspec{P\land R\land \rightF{e'}}{P} \\
  P \imp \eqbib{e}{e'} \lor (Q\land\leftF{e}) \lor (R\land\rightF{e'})  
}{ 
  \whilec{e}{c} \Sep \whilec{e'}{c'} : \rspec{P}{P\land \neg\leftF{e}\land\neg\rightF{e'}}
}\]
The rule can be used together with the one-side rules like \rn{SeqSk}.
To prove \rn{caWh} we use this expansion law:
\begin{equation}\label{eq:caWhexpand}
\begin{array}{lcl}
  \eml{e;c}^*;\eml{\neg e} ; \emr{e';c'}^*;\emr{\neg e'}   
  & = & ( Q;\eml{e;c} + R;\emr{e';c'} + \neg Q;\neg R;\emb{e;c}{e';c'} \\
  &&  \qquad + \neg Q;\emb{e;c}{\neg e'} + \neg R;\emb{\neg e}{e';c'} )^* ; \emb{\neg e}{\neg e'}
\end{array}
\end{equation}
\poplchanged{At the time of submission we have not found a derivation 
proving this law in all BiKATs.
We do, however, have a proof that (\ref{eq:caWhexpand}) holds in any 
*-continuous BiKAT, which includes relational nd trace BiKATs.
Rule \rn{caWh} is sound in any BiKAT that satisfies (\ref{eq:caWhexpand}).
(See appendix of extended version~\cite{BiKATarxiv}.)}{%
We do not know whether (\ref{eq:caWhexpand}) holds in all BiKATs,
but it holds in relational BiKATs and trace BiKATs.
\begin{theorem}\label{thm:caWh}
\begin{list}{}{}
\item[(a)]
Law (\ref{eq:caWhexpand}) holds in any *-continuous BiKAT.
\item[(b)] 
Rule \rn{caWh} is sound in any BiKAT that satisfies (\ref{eq:caWhexpand}).
\end{list}
\end{theorem}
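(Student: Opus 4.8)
The plan is to prove the two parts separately: part (a) carries all the weight, and part (b) then follows by a short invariance argument once the expansion law is available. Throughout part (a) write $\alpha = \eml{e;c}$, $\bar\alpha = \eml{\neg e}$, $\beta = \emr{e';c'}$, $\bar\beta = \emr{\neg e'}$, so the left-hand side of (\ref{eq:caWhexpand}) is $\alpha^*;\bar\alpha;\beta^*;\bar\beta$ and the starred generator on the right is $D = Q;\alpha + R;\beta + \neg Q;\neg R;\alpha;\beta + \neg Q;\alpha;\bar\beta + \neg R;\bar\alpha;\beta$ (using $\emb{e;c}{e';c'}=\alpha;\beta$, $\emb{e;c}{\neg e'}=\alpha;\bar\beta$, $\emb{\neg e}{e';c'}=\bar\alpha;\beta$). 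By LRC every left factor commutes with every right factor, and $\bar\alpha,\bar\beta$ are idempotent tests. I would prove $\alpha^*;\bar\alpha;\beta^*;\bar\beta = D^*;\bar\alpha;\bar\beta$ by two inequalities, invoking $*$-continuity to replace each star by the supremum of its finite powers, so that each starred inequality reduces to an inequality for all finite iterates.

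The ($\geq$) direction is routine: by induction on $n$ I show $D^n;\bar\alpha;\bar\beta \leq \alpha^*;\bar\alpha;\beta^*;\bar\beta$, where the inductive step reduces to checking, term by term, that $D;(\alpha^*;\bar\alpha;\beta^*;\bar\beta) \leq \alpha^*;\bar\alpha;\beta^*;\bar\beta$. Each of the five summands gives a one-line calculation: slide the new factor into position with LRC and absorb it using $\alpha;\alpha^*\leq\alpha^*$, $\beta;\beta^*\leq\beta^*$, $\bar\alpha\leq 1$, $\bar\beta\leq 1$, and idempotency of the tests.

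The ($\leq$) direction is the main obstacle, since this is where the guards must be handled and where a purely equational proof seems unavailable. I would show $\alpha^m;\bar\alpha;\beta^k;\bar\beta \leq D^*;\bar\alpha;\bar\beta$ by induction on $m+k$, the base case $\bar\alpha;\bar\beta\leq D^*;\bar\alpha;\bar\beta$ being immediate from $1\leq D^*$. Because $Q+\neg Q=1$ and $R+\neg R=1$, I split the front of the term and peel off one $D$-move, always first using LRC to make the gated action adjacent to its guard: when $m\geq 1$ and $Q$ holds I extract the left-only move $Q;\alpha$; when $m,k\geq 1$ and $\neg Q;\neg R$ I extract the lockstep move $\neg Q;\neg R;\alpha;\beta$; and under $\neg Q;R$ I first commute one $\beta$ to the front past $\alpha^m;\bar\alpha$, so that $R;\beta$ becomes an honest prefix, then discard $\neg Q\leq 1$. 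The delicate cases are $k=0$ (and symmetrically $m=0$): there I use idempotency of $\bar\beta$ to rewrite $\alpha^m;\bar\alpha;\bar\beta = (\alpha;\bar\beta);(\alpha^{m-1};\bar\alpha;\bar\beta)$, which exposes the "left runs while right is finished" move $\neg Q;\alpha;\bar\beta$, and dually $\bar\alpha;\beta^k;\bar\beta=(\bar\alpha;\beta);(\bar\alpha;\beta^{k-1};\bar\beta)$ for $\neg R;\bar\alpha;\beta$. In every case the residual is of the inductive form with strictly smaller $m+k$, so the induction hypothesis together with $D;D^*\leq D^*$ closes the step. The bookkeeping that guarantees a guard is always adjacent to the action it gates is precisely the set of LRC commutations, which is why the law is available in any $*$-continuous BiKAT.

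For (b), assume (\ref{eq:caWhexpand}). Representing $\whilec e c$ as $(e;c)^*;\neg e$, the product of the two loops is exactly the left-hand side of (\ref{eq:caWhexpand}) by LRC, hence equals $D^*;\emb{\neg e}{\neg e'}$. The crux is that $P$ is a $D$-invariant, i.e. $P;D\leq P;D;P$, equivalently $P;D;\neg P = 0$, which I establish summand by summand. The lockstep term $\neg Q;\neg R;\emb{e;c}{e';c'}$ is handled by the first premise, the left-only term $Q;\eml{e;c}$ by the second (with $\skipc=1$, so $\eml c = \emb{c}{\skipc}$), and the right-only term by the third. The two remaining terms vanish under $P$: the side condition $P\imp \eqbib e{e'}\lor(Q\land\leftF e)\lor(R\land\rightF{e'})$ forces $P;\neg Q;\emb{e}{\neg e'}=0$ and $P;\neg R;\emb{\neg e}{e'}=0$, since each disjunct is killed by the accompanying guard or by disjointness of tests (e.g. $\eqbib e{e'}=\emb{e}{e'}+\emb{\neg e}{\neg e'}$ meets $\emb{e}{\neg e'}$ in $0$), whence $P;\neg Q;\emb{e;c}{\neg e'}=0$ and $P;\neg R;\emb{\neg e}{e';c'}=0$ after pushing the embedded action out by LRC. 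With $P;D;\neg P=0$ in hand, the KAT invariance law gives $P;D^*\leq P;D^*;P$. Finally $P;D^*;\emb{\neg e}{\neg e'}\leq P;D^*;P;\emb{\neg e}{\neg e'}$, and since $P$ and $\emb{\neg e}{\neg e'}$ are commuting idempotent tests whose product represents the postcondition $P\land\neg\leftF e\land\neg\rightF{e'}$, this rearranges into the correctness-equation form, which is exactly the conclusion of \rn{caWh}. Note that (b) uses only the ordinary Kleene-algebra invariance law, so no continuity assumption is needed beyond what (\ref{eq:caWhexpand}) already provides.
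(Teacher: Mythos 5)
Your proposal is correct and takes essentially the same route as the paper's own proof: for (a), mutual inclusion where the easy direction is plain (Bi)KAT reasoning and the hard direction is reduced by $*$-continuity to finite iterates $\eml{e;c}^m;\emr{e';c'}^k$ followed by induction on $m+k$ with a guard-based case split, including the same idempotency-plus-LRC trick for the $m=0$ and $k=0$ boundary cases; for (b), the side condition annihilates the two mismatch summands, the three premises make $P$ invariant for the loop generator, and the KAT invariance law plus test commutation and a reverse application of (\ref{eq:caWhexpand}) finish. The differences are cosmetic: the paper dispatches the easy inclusion via a commuting-star expansion lemma rather than term-by-term absorption, and your guard enumeration needs only the evident extra discard (the $Q\land R$, $m=0$ subcase uses the same $R;\emr{e';c'}$ move with $Q\leq 1$ dropped), exactly as the paper's corresponding cases do.
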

}

\begin{example}\label{eg:caWh}
Consider the following, adapted from~\citet{NaumannISOLA20}.
\[\begin{array}{ll}
    P_1: & \kcode{y:=x; z:=24; w:=0; while y>4 do if w\%2=0 then z:=z*y; y:=y-1 fi; w:=w+1 od} \\
    P_2: & \kcode{y:=x; z:=16; w:=0; while y>4 do if w\%3=0 then z:=z*2; y:=y-1 fi; w:=w+1 od}
  \end{array} \]
For $\kcode{x} \geq 4$, $P_1$ computes $\kcode{x}!$ in $\kcode{z}$ and $P_2$,
$2^{\kcode{x}}$ in $\kcode{z}$.  We want to show that $P_1$ majorizes $P_2$,
i.e., $P_1\sep P_2: \rspec{\kcode{x} \eqbi \kcode{x}'}{\kcode{z} >
\kcode{z}'}$ (using primed variables to refer to those in $P_2$).  Notice that
both programs take gratuitous steps, making it difficult to reason by a simple
lockstep alignment of the two loops---the relational loop invariant would
become needlessly complicated.  Verification can be simplified by using the
following data-dependent alignment: if $\eml{\kcode{w\%2}\neq\kcode{0}}$,
perform a left-only iteration; if $\emr{\kcode{w\%3}\neq\kcode{0}}$, perform a
right-only iteration; otherwise, if
$\emb{\kcode{w\%2}=\kcode{0}}{\kcode{w\%3}=\kcode{0}}$, execute the loop
bodies jointly.  Then, $\kcode{y} \eqbi \kcode{y}' \land \kcode{z} >
\kcode{z}' > 0$ is invariant and sufficient to establish the postrelation. 
This reasoning is done using rule \rn{caWh} in~\citet{NagasamudramN21}.
An alternative is to reason in BiKAT with the following alignment:
\[ \begin{array}{ll}
  & \emb{\kcode{y:=x}}{\kcode{y:=x}};
    \emb{\kcode{z:=24}}{\kcode{z:=16}};
    \emb{\kcode{w:=0}}{\kcode{w:=0}}; \\
  & \big( \; \eml{[\kcode{w\%2}\neq\kcode{0}]; \kcode{w:=w+1}}
    
  + \emr{[\kcode{w\%3}\neq\kcode{0}]; \kcode{w:=w+1}} \\

  & + \, \emb{[\kcode{w\%2=0}]}{[\kcode{w\%3=0}]};
    \emb{\kcode{z:=z*y}; \kcode{y:=y-1}; \kcode{w:=w+1}}
    {\kcode{z:=z*2}; \kcode{y:=y-1}; \kcode{w:=w+1}}; \big)^\kstar;

    \emb{\neg [\kcode{y>4}]}{\neg [\kcode{y>4}]}.
  \end{array} \]
It can be derived by starting with
$[\kcode{x}\eqbi\kcode{x}'];\emb{P_1}{P_2}$, using (\ref{eq:caWhexpand})
with $Q:=\eml{\kcode{w\%2}\neq\kcode{0}}$ 
and $R:=\emr{\kcode{w\%2}\neq\kcode{0}}$, 
and then simplifying, relying on the loop invariant which cancels 
the $\neg Q$ and $\neg R$ cases in (\ref{eq:caWhexpand}).
\end{example}

\paragraph{Self-Composition Rule}

Researchers have repeatedly discovered that relational correctness can be encoded 
in (unary) Hoare logic, essentially because a pair of states can be represented by a single state,
e.g., using renamed variables~\cite{Francez83,BartheDArgenioRezk}.
A state relation $P$ can be expressed by a state predicate $\hat{P}$,
and a command $c$ can be renamed to $\hat{c}$ acting on the alternate variables.
Then the judgment $ c \sep d: \rspec{P}{Q}$ is represented by the Hoare triple
$c;\hat{d}:\spec{\hat{P}}{\hat{Q}}$.  (We write $c:\spec{p}{q}$ for $\{p\}c\{q\}$.)
A complete relational Hoare logic thereby comprises 
the single rule,
``from $c;\hat{d}:\spec{\hat{P}}{\hat{Q}}$ infer $c\sep d:\rspec{P}{Q}$'',
together with a complete (unary) Hoare logic.   
In terms of alignment, of course, this is the most degenerate form of reasoning.
The notion of alignment completeness explains the need for other rules,
in terms of alignment of automata; see~\citet{NagasamudramN21}.
\poplremoved{In some sense, BiKAT can be seen as self-composition together 
with the LRC condition which facilitates 
the use of better alignments.}

\section{Beyond 2-Safety: Properties and Logics}\label{sec:beyond}

Many relational requirements can be expressed as instances of the $\forall\forall$ (2-safety) form
depicted in (\ref{eq:allall}), 
or other conditions that must hold for all pairs of behaviors.
Some other frameworks, such as HyperLTL~\cite{ClarksonFKMRS14},
can express properties with other patterns of quantification.
In Sect.~\ref{sec:simu} we consider, two $\forall\exists$ patterns 
based on pre- and post-relations the way (\ref{eq:allall}) is.  
In Sect.~\ref{sec:exists} we consider $\exists\forall$ and $\exists\exists$ properties,
for the sake of systematic exploration.
The discussion focuses on relational models for concreteness,
though the goal is model-independent algebraic formulations.

\subsection{$\forall\exists$ Properties}\label{sec:simu}

For programs that may be nondeterministic, 
the $\forall\forall$ property depicted in (\ref{eq:allall})
is often too strong.
A range of interesting requirements 
such as possibilistic noninterference and data refinement are 
expressed in a $\forall\exists$ form.
We call it \dt{forward simulation} and 
write \graybox{$c\sep d:\aespec{R}{S}$} for the following.
%(where $c,d,R,S$ are relations on some set $\Sigma$).  obvious by now 
\begin{equation}\label{eq:fsim}
\begin{diagram}[h=4ex,w=3em]
        & \sigma  &\rMapsto^c & \tau &         &       &           &\tau\\
\Bforall& \dRel^R &           &      & \Bexists&       &           &\dRel_S\\
        & \sigma' &           &      &         &\sigma'&\rMapsto^d &\tau'\\
\end{diagram}
\end{equation}
Here $\sigma,\sigma',\tau,\tau'$ are states and $c,d,R,S$ are relations, so (\ref{eq:fsim}) says
\[ %begin{equation}\label{eq:fsimFmla}
\all{\sigma,\sigma',\tau}{ \sigma R \sigma' \land \sigma c \tau 
\imp \some{\tau'}{ \sigma' d \tau' \land \tau S \tau'}}
\]
An equivalent ``point-free'' formulation, using relation algebra, is $R^o;c \subseteq d;S^o$.
(Recall \popladded{from Sect.~\ref{sec:models}} that $R^o$ means the converse of $R$.)

For possibilistic noninterference, $c=d$ and the relations express low indistinguishability:
$R$ expresses agreement on low inputs and $S$ on low outputs.
In the case of data refinement, 
$R=S$ and $R$ captures some change of data representation.

For program refinement one sometimes needs the similar $\forall\exists$ property called \dt{backward simulation}, 
written \graybox{$c\sep d:\bespec{R}{S}$}.
\begin{equation}\label{eq:bsim}
\begin{diagram}[h=4ex,w=3em]
        & \sigma  &\rMapsto^c &\tau      &       & \sigma    &            & \\
\Bforall&         &           &\dRel_S &\Bexists& \dRel^R &           & \\
        &         &           &\tau'     &        & \sigma'   &\rMapsto^d & \tau' \\
\end{diagram}
\end{equation}
This is expressed in relation algebra as $c;S \subseteq R;d$,
and pointwise as 
\[ %begin{equation}\label{eq:bsimFmla}
\all{\sigma,\tau,\tau'}{ \sigma c \tau \land \tau S \tau' 
  \imp \some{\sigma'}{ \sigma R \sigma' \land \sigma' d \tau' }}
\] 

\popladded{In Sect.~\ref{sec:biKATsim} we give theorems that 
characterize the forward and backward simulation properties
in terms of existence of BiKAT witnesses.
Using those theorems, we derive (Sect.~\ref{sec:forwardBack}) rules for inferring 
judgments $c\sep d:\aespec{R}{S}$
and $c\sep d:\bespec{R}{S}$.
In Sect.~\ref{sec:trikat} we show how forward and backward simulation can be expressed 
in closed form, by generalizing BiKAT to a kind of 3-KAT.
}

\popladded{As an aside, we note that the simulation properties subsume unary underapproximation.  For unary tests $p,q$, the incorrectness logic~\cite{OHearn2019}
judgment 
``every state in $q$ can be reached by a terminating execution of $c$ from some state in $p$'' 
is equivalent to $\emr{c}:\bespec{\emr{p}}{\emr{q}}$.
The forward approximation condition 
``for every state in $p$ there is a terminating execution of $c$ that ends in $q$'' is equivalent to $\emr{c}:\aespec{\emr{p}}{\emr{q}}$.
}

\subsection{BiKAT Characterizations of Simulation}\label{sec:biKATsim}

Alignment is well known to play a role in verifying $\forall\exists$ properties.  
Example~\ref{eg:four} in Sect.~\ref{sec:overview}  
illustrates that one    
aligns the computations in a convenient way, in order to
winnow out execution pairs in which the second execution makes 
undesirable nondeterministic choices.
Bitests can serve as assume statements for this purpose.

\begin{theorem}[forward witness soundness]\label{thm:fsim}
In a relational BiKAT over a KAT with top, we have forward simulation $c|d:\aespec{R}{S}$ if there is some BiKAT term $W$, 
called \dt{alignment witness}, 
that is \dt{f-valid}, which means:
\[ \begin{array}{lll}
\mbox{(WC)} & \dot{R}; W \leq \dot{R}; W; \dot{S}  &\mbox{(witness $\forall\forall$ correct)} \\
\mbox{(WO)} & \dot{R}; \eml{c} \leq W ; \emr{\hav} &\mbox{(witness overapproximates $c$)} \\
\mbox{(WU)} & \dot{R}; W \leq \emb{\hav}{d}        &\mbox{(witness underapproximates $d$)}
\end{array}\]
\end{theorem}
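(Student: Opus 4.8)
The plan is to argue pointwise in the relational model, which is available since the theorem is stated for a relational BiKAT and all of $W$, $\dot{R}$, $\dot{S}$, $\eml{c}$, $\emr{\hav}$, $\emb{\hav}{d}$ denote concrete relations on $\Sigma\times\Sigma$ whose membership conditions were spelled out in Sect.~\ref{sec:models}. First I would unfold the target judgment $c\sep d:\aespec{R}{S}$ into its pointwise form (\ref{eq:fsim}): fix arbitrary $\sigma,\sigma',\tau$ with $\sigma R\sigma'$ and $\sigma c\tau$, and seek a state $\tau'$ with $\sigma' d\tau'$ and $\tau S\tau'$. The whole proof then amounts to producing this $\tau'$ by chasing the pair $(\sigma,\sigma')$ through the three hypotheses, and the key observation driving the plan is that a \emph{single} existential witness---the right-hand final state produced by a run of $W$---simultaneously discharges the $d$-reachability and the $S$-agreement.

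Concretely, I would note $(\sigma,\sigma')\,\eml{c}\,(\tau,\sigma')$ (since $\sigma c\tau$, with the right component held fixed) and $(\sigma,\sigma')\,\dot{R}\,(\sigma,\sigma')$ (since $\sigma R\sigma'$), so $(\sigma,\sigma')$ reaches $(\tau,\sigma')$ in $\dot{R};\eml{c}$. Applying \textbf{(WO)}, this pair lies in $W;\emr{\hav}$, so there is an intermediate $(\mu,\mu')$ with $(\sigma,\sigma')\,W\,(\mu,\mu')$ and $(\mu,\mu')\,\emr{\hav}\,(\tau,\sigma')$; the semantics of $\emr{\hav}$ forces $\mu=\tau$ while leaving $\mu'$ free, so I obtain some $\mu'$ with $(\sigma,\sigma')\,W\,(\tau,\mu')$. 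Using $(\sigma,\sigma')\,\dot{R}\,(\sigma,\sigma')$ once more, this run already witnesses that $\dot{R};W$ relates $(\sigma,\sigma')$ to $(\tau,\mu')$. Applying \textbf{(WU)} then places the pair in $\emb{\hav}{d}$, which by definition yields $\sigma' d\mu'$, and applying \textbf{(WC)} places the same pair in $\dot{R};W;\dot{S}$, whose trailing sub-identity factor $\dot{S}$ forces $\tau S\mu'$. Taking $\tau':=\mu'$ satisfies both conjuncts required by (\ref{eq:fsim}).

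I expect the main obstacle to be getting the role of the top element $\hav$ exactly right, since the three conditions are arranged asymmetrically. The post-composition with $\emr{\hav}$ in \textbf{(WO)} is precisely what pins the left component of a $W$-run to the $c$-produced value $\tau$ while deliberately discarding the right component, whereas the $\eml{\hav}$ hidden inside $\emb{\hav}{d}$ in \textbf{(WU)} discards the left component so that only the right-hand reachability by $d$ survives. The care required is to verify that the witness $\mu'$ extracted from \textbf{(WO)} is literally the \emph{same} right-hand state that \textbf{(WU)} certifies as $d$-reachable from $\sigma'$ and that \textbf{(WC)} relates to $\tau$ by $S$; this works only because all three conditions are applied to the common run $(\sigma,\sigma')\,W\,(\tau,\mu')$. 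A secondary point to confirm is that no hypothesis constrains the \emph{output} of $W$ by $\dot{R}$, so the unqualified $W$ in \textbf{(WO)} and the $\dot{R}$-prefixed $W$ in \textbf{(WU)} and \textbf{(WC)} are compatible, which holds because $\sigma R\sigma'$ supplies the required $\dot{R}$ prefix throughout.
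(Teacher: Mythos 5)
Your proposal is correct and follows essentially the same route as the paper's proof: a pointwise argument in the relational model that chases the pair $(\sigma,\sigma')$ through (WO) to extract the right-hand final state, then applies (WU) and (WC) to the resulting $W$-run to obtain $d$-reachability and $S$-agreement for that same state. Your write-up merely spells out in more detail the role of $\emr{\hav}$ in pinning the left component and the compatibility of the $\dot{R}$-prefixes, which the paper leaves implicit.
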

\begin{proof}
To prove $c\sep d:\aespec{R}{S}$, suppose $\sigma R \sigma'$ and $\sigma c \tau$.
Thus $(\sigma,\sigma')\dot{R};\eml{c}(\tau,\sigma')$.
By (WO) there is $\tau'$ with $(\sigma,\sigma')W(\tau,\tau')\emr{\hav}(\tau,\sigma')$.
Then by (WU) we have $\sigma' d \tau'$ and by (WC) we have $\tau S \tau'$.
\end{proof}

\medskip
The theorem is not simply a reduction to  $\forall\forall$-logic; it relies essentially on the use of inequalities (WO) and (WU) that are not $\forall\forall$ conditions
of the form (\ref{eq:allall}).
It is no suprise that reasoning about the existential in a $\forall\exists$ property involves finding a witness---that is familiar 
in many settings.  Here a witness comprises an alignment of the programs, with embedded bitests that serve to select
the witnessing executions ---and all the correctness conditions are expressed equationally!

\begin{theorem}[backward witness soundness]\label{thm:bsim}
In a relational BiKAT over a KAT with top,  we have $c\sep d:\bespec{R}{S}$ 
if there is alignment witness  $W$ that is \dt{b-valid}, meaning:
\[\begin{array}{lll}
\mbox{(WCb)} & W;\dot{S} \leq \dot{R};W;\dot{S} &  \mbox{(witness reverse $\forall\forall$ correct)}\\
\mbox{(WOb)} & \eml{c};\dot{S} \leq \emr{\hav};W & \mbox{(witness overapproximates $c$)}\\
\mbox{(WUb)} & W;\dot{S} \leq \emb{\hav}{d} & \mbox{(witness underapproximates $d$)}
\end{array}\]
\end{theorem}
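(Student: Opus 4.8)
The plan is to mirror the forward-simulation argument of Theorem~\ref{thm:fsim}, reading the three $b$-valid inequalities pointwise in the relational BiKAT and chasing a single triple of states through them. Recall that in a relational BiKAT an element relates a pair $(\sigma,\sigma')$ of initial states to a pair $(\tau,\tau')$ of final states, and that by Lemma~\ref{lem:adequacy} the embedded term $\emb{a}{b}$ relates $(\sigma,\sigma')$ to $(\tau,\tau')$ exactly when $\sigma\,a\,\tau$ and $\sigma'\,b\,\tau'$; in particular $\eml{a}$ acts as $a$ on the left and as the identity on the right, $\emr{b}$ dually, and the sub-identity $\dot{S}$ holds at $(\tau,\tau')$ iff $\tau\,S\,\tau'$. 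The only genuinely new ingredient relative to the forward case is that the existential witness $\sigma'$ will be produced on the \emph{right} coordinate, introduced by the top element $\hav$ in $\emr{\hav}$, rather than read off directly.

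Concretely, I would start from arbitrary states with $\sigma\,c\,\tau$ and $\tau\,S\,\tau'$, the antecedent of the backward-simulation obligation $c\sep d:\bespec{R}{S}$. Taking the right coordinate to be $\tau'$, these two facts give $(\sigma,\tau')\,(\eml{c};\dot{S})\,(\tau,\tau')$. Applying (WOb), $\eml{c};\dot{S}\leq\emr{\hav};W$, and using that $\hav$ is top, there is an intermediate right-state $\upsilon'$ with $(\sigma,\upsilon')\,W\,(\tau,\tau')$. Since $\tau\,S\,\tau'$ we may append $\dot{S}$ at $(\tau,\tau')$, obtaining $(\sigma,\upsilon')\,(W;\dot{S})\,(\tau,\tau')$. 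Now (WCb), $W;\dot{S}\leq\dot{R};W;\dot{S}$, prefixes a sub-identity $\dot{R}$ at the source pair $(\sigma,\upsilon')$, which forces $\sigma\,R\,\upsilon'$; and (WUb), $W;\dot{S}\leq\emb{\hav}{d}$, together with adequacy forces $\upsilon'\,d\,\tau'$. Setting $\sigma'\defeq\upsilon'$ then supplies the required witness with $\sigma\,R\,\sigma'$ and $\sigma'\,d\,\tau'$, discharging the existential.

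The main obstacle is bookkeeping rather than depth: one must track exactly which coordinate each embedded operation touches and check that each of the three inequalities can be read off at the specific pairs above. The conceptual crux is the use of the top element: $\emr{\hav}$ in (WOb) is what lets the right coordinate float freely, so that $W$ can pin it to a value $\upsilon'$ that simultaneously satisfies $\sigma\,R\,\upsilon'$ (via WCb) and $\upsilon'\,d\,\tau'$ (via WUb). As in the forward case, this shows the theorem is not a mere reduction to $\forall\forall$ reasoning, since (WOb) and (WUb) are inequalities mentioning $\hav$ and are not of the correctness form $\dot{R};B;\neg\dot{S}=0$.
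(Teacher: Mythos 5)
Your proposal is correct and follows essentially the same route as the paper's proof: given $\sigma\,c\,\tau$ and $\tau\,S\,\tau'$, use (WOb) to obtain an intermediate right-state $\sigma'$ with $(\sigma,\sigma')\,W\,(\tau,\tau')$, then apply (WCb) to get $\sigma\,R\,\sigma'$ and (WUb) to get $\sigma'\,d\,\tau'$. Your version merely spells out the pointwise bookkeeping (and applies WCb and WUb in the opposite order, which is immaterial), so it is a faithful elaboration of the paper's argument.
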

\begin{proof}
To prove $c\sep d:\bespec{R}{S}$, suppose $\sigma c \tau$ and $\tau S \tau'$.
By (WOb) there is $\sigma'$ with $(\sigma,\sigma')W(\tau,\tau')$.
By (WUb) we have $\sigma' d \tau'$.  
By (WCb) we have $\sigma R \sigma'$.
\end{proof}

\begin{theorem}[witness completeness]\label{thm:sim:complete}
%In a full relational BiKAT,
In a relational BiKAT over a KAT with top,
a forward (resp.\ backward) simulation judgment holds 
if and only if it has an f-valid (resp.\ b-valid) alignment  witness.\footnote{\label{fn:sim:complete}In the published 
version of this paper, the theorem requires the model to be full,
and the proof only shows that witnesses exist in the model,
without showing that they are expressible as BiKAT terms.
(Note that the theorem implicitly assumes the pre/post relations are expressible as bitests.)
%To prove completeness, one defines a witness relation;
%restricting to full models ensures that this relation is in the model (as is $\hav$).
For the record here is the original proof of completeness for forward simulation; 
it constructs witness elements that are in some sense minimal.

Suppose $c|d:\aespec{R}{S}$ holds.  
Using suggestive identifiers for bound variables, 
define the predicate
$\mathcal{P}(\sigma,\sigma',\tau) \eqdef \sigma R \sigma' \land \sigma c \tau$
and the set 
$\mathcal{X}(\sigma,\sigma',\tau) \eqdef \{ \tau' \mid \sigma' d \tau' \land \tau S \tau' \}$.
For any $(\sigma,\sigma',\tau)$ that satisfy $\mathcal{P}$ we have
$\mathcal{X}(\sigma,\sigma',\tau) \neq \emptyset$, owing to $c\sep d:\aespec{R}{S}$. 
So define 
$\mathcal{Y}(\sigma,\sigma',\tau)$ to be a chosen element of $\mathcal{X}(\sigma,\sigma',\tau)$ if $\mathcal{P}(\sigma,\sigma',\tau)$, and 
undefined otherwise.
Define 
\( W \eqdef \{((\sigma,\sigma'),(\tau,\tau')) \mid 
                   \mathcal{P}(\sigma,\sigma',\tau) \land 
                   \tau' = \mathcal{Y}(\sigma,\sigma',\tau) \} \).
By fullness, $W$ is in the BiKAT.
We have (WC) because $W = \dot{R} ;  W ; \dot{S}$.
We have (WU) using the definition of $W$. 
We have (WO) also by definitions:
If $(\sigma,\sigma')\dot{R};\eml{c}(\tau,\sigma')$
then $\mathcal{P}(\sigma,\sigma',\tau)$ so let $\tau'=\mathcal{Y}(\sigma,\sigma',\tau)$;
we get $(\sigma,\sigma')W(\tau,\tau')$ so
$(\sigma,\sigma')W;\emr{\hav}(\tau,\sigma')$.
} %footnote
\end{theorem}
\begin{proof}
To prove completeness for forward simulation, suppose $c|d:\aespec{R}{S}$ holds.  
Let witness $W$ be $\dot{R};\emb{c}{d};\dot{S}$.
We show $W$ is f-valid.
We have $\dot{R};W \leq \dot{R};W;\dot{S}$ because
$\dot{R};W 
= 
\dot{R};\dot{R};\emb{c}{d};\dot{S} 
= 
\dot{R};\dot{R};\emb{c}{d};\dot{S};\dot{S}$
by definition of $W$ and idempotence of tests.
To show $\dot{R};\eml{c} \leq W;\emr{\hav}$,
consider any $\sigma, \sigma', \tau, \tau'$.
By definitions, $(\sigma,\sigma') \dot{R};\eml{c}(\tau,\tau')$ is equivalent to 
$\sigma R \sigma' \land \sigma c \tau \land \tau'=\sigma'$.
So by $c|d:\aespec{R}{S}$ there is $\tau''$ with $\sigma' d \tau''$ and $\tau S \tau''$,
whence by definition of $W$ we have  $(\sigma,\sigma') W (\tau,\tau'')$.
Thus $(\sigma,\sigma') W ; \emr{\hav} (\tau,\tau')$ by definitions.
To show $\dot{R};W \leq \emb{\hav}{d}$, consider any $\sigma, \sigma', \tau, \tau'$.
By definitions, $(\sigma,\sigma') \dot{R}; W (\tau,\tau')$ iff
$\sigma R \sigma' \land \sigma c \tau \land \sigma' d \tau' \land \tau S \tau'$.
So $(\sigma,\sigma') \eml{\hav} (\tau,\sigma')$ and $(\tau,\sigma') \emr{d} (\tau,\tau')$,
hence $(\sigma,\sigma') \eml{\hav};\emr{d} (\tau,\tau')$,
i.e., $(\sigma,\sigma') \emb{\hav}{d} (\tau,\tau')$.

The proof for backward simulation is similar, again using 
$\dot{R};\emb{c}{d};\dot{S}$ as witness.
\end{proof}

For clarity we defined the judgment forms $c\sep d:\aespec{R}{S}$ and 
$c\sep d:\bespec{R}{S}$ for relational models,
but it is straightforward to interpret them in trace models.
Essentially the horizontal arrows in (\ref{eq:fsim}) and (\ref{eq:bsim})
are interpreted as sequences of zero or more steps.
Then the soundness Theorems~\ref{thm:fsim} and~\ref{thm:bsim} extend 
to trace models and we get a witness completeness theorem for full trace models,
by an argument like our proof of Theorem~\ref{thm:sim:complete}.

Theorem~\ref{thm:sim:complete} is about existence of a witness that satisfies the 
witness conditions in a model.  
Next we consider finding witnesses for which the conditions 
can be proved equationally.

\subsection{Examples Proving $\forall\exists$ with Witnesses}

Example~\ref{eg:four} in Sect.~\ref{sec:overview}     
considers the programs $E_1: \kcode{x:=any; y:=x}$ and $E_2: \kcode{t:=any; z:=t+1}$.
To establish 
$E_1\sep E_2 : \aespec{\kcode{true}}{\kcode{y} \eqbi \kcode{z}}$
we choose witness  % rephrased to save space 
$W\eqdef\emb{\kcode{x:=any}}{\kcode{t:=any}}; [\kcode{x-1} \eqbi \kcode{t}];\emb{\kcode{y:=x}}{\kcode{z:=t+1}}$.
The bitest $[\kcode{x-1} \eqbi \kcode{t}]$ winnows execution pairs 
so choices made by $E_2$  match favorably the nondeterministic assignment made by $E_1$.
Condition (WO) ensures that all executions of $E_1$ are still covered.
The three conditions are proved using 
axioms to express semantics of the primitives, 
e.g., the bitest $[\kcode{x-1} \eqbi \kcode{t}]$ commutes with $\eml{\kcode{y:=x}}$.
We also use a condition 
% $1 \leq \emr{\kcode{t:=any}};[x-1 \eqbi t];\emr{\kcode{t:=any}}$ 
which expresses the left-totality of the bitest $[\kcode{x-1} \eqbi \kcode{t}]$ as discussed later in connection with rule \rn{enAss}.

\begin{example}[$\forall\exists$ path alignment]\label{eg:paths}
When considering $\forall\exists$ properties, sometimes the choices made by the witness determine which paths are taken in the program, rather than merely values taken for variables. 
Consider this example adapted from~\citet{BeutnerF22}.
\begin{lstlisting}
a1 := any; a2 := any; if (h > l) o := l + a1;             // $k_1$
                      else { x := a2; if(x > l) o := x;   // $k_2$
                                      else      o := l; } // $k_3$ 
\end{lstlisting}
Here we are interested in the possibilistic non-interference property
$C\sep C : \aespec{\mkEqBi{l}}{\mkEqBi{o}}$.
%\added{ While such properties are beyond the scope of this paper, we do consider $\forall\exists$ so the following is a $\forall^1\exists^1$ interpretation of their Fig.~1 example.}
%For possibilistic non-interference example, the pre-relation  $R = \mkEqBi{l}$ and the post-relation $S = \mkEqBi{o}$.
This example has multiple cases to consider depending on (i) how the inputs to an execution impact the conditional and (ii) how those choices may differ from one execution to another.
Consequently, for any path taken in the left program, our choices for the \kcode{any}s in the second may involve taking different paths than were taken in the first program.
The following is the KAT representation of the three paths and the whole program
($k=k_1 + k_2 + k_3$):
\[
\begin{array}{lcl}
k_1 & \eqdef & \kcode{a1:=any; a2:=any; [h>l]; o:=l+a1}\\
k_2 & \eqdef & \kcode{a1:=any; a2:=any; [h<=l]; x:=a2; [x>l]; o:=x}\\
k_3 & \eqdef & \kcode{a1:=any; a2:=any; [h<=l]; x:=a2; [x<=l]; o:=l}\\
\end{array}
\]
We will refer to the right program as $k'$, having primed variables.
We use 
the following witness
\[
\begin{array}{lcll}
W &\eqdef&  \emb{ \kcode{a1:=any; a2:=any;} }{ \kcode{a1':=any; a2':=any;} };\\
	& & (\;\; [\kcode{h>l; h'>l'; a1'=a1}]
	&  \;+\; [\kcode{h>l; h'<=l'; a2'>l'; a2'=l+a1}]\\
	& & \;+\; [\kcode{h<=l; h'>l'; a2>l; a1'=a2-l'}]
	&  \;+\; [\kcode{h<=l; h'>l'; a2<=l; a1'=0}]\\
	& & \;+\; [\kcode{h<=l; h'<=l'; a2>l; a2'>l'; a2'=a2}]
	&  \;+\; [\kcode{h<=l; h'<=l'; a2<=l; a2'<=l'}]) \;;\; \emb{c}{c'}
%	& & \emb{c}{c'}
\end{array}
\]
where $c$ and $c'$ are the remainders of $k$ and $k'$, respectively,
after the nondeterministic choices.
(So $c \equiv \kcode{([h>l]; o:=l+a1 + [h<=l]; x:=a2; ([x>l]; o:=x + [x<=l]; o:=l;))}$.) 
The witness $W$ comprises six cases covering \emph{all} preconditions of the input \kcode{l, h} and \kcode{l', h'} of the two programs. The witness can be rewritten into $W = W_1 + W_2 + \ldots + W_6$ where each $W_i$ corresponds to a case in $W$. For example, the witness $W_2$ corresponding to the precondition $\kcode{h} > \kcode{l} \wedge \kcode{h'} \leq \kcode{l'}$ can be simplified into
\[
\begin{array}{lcl}
	W_2 & \eqdef & \emb{ \kcode{a1:=any; a2:=any;} }{ \kcode{a1':=any; a2':=any;} };\\
	& & [\kcode{h>l; h'<=l'; a2'>l'; a2'=l+a1}]; \emb{c}{c'}\\
	& = & \emb{ \kcode{a1:=any; a2:=any;} }{ \kcode{a1':=any; a2':=any;} };\\
	& & [\kcode{h>l; h'<=l'; a2'>l'; a2'=l+a1}];\\
	& & \emb{\kcode{[h>l]; o:=l+a1}}{\kcode{[h'<=l']; x':=a2'; ([x'>l']; o':=x' + [x'<=l']; o':=l';)}}\\
	& = & \emb{ \kcode{a1:=any; a2:=any;} }{ \kcode{a1':=any; a2':=any;} };\\
	& & [\kcode{h>l; h'<=l'; a2'>l'; a2'=l+a1}];
	     \emb{\kcode{o:=l+a1}}{\kcode{x':=a2'; o':=x'}} \quad\text{(distrib, cancel)}
\end{array}
\]
In the above $W_2$, infeasible paths in the left and right program under the precondition $\kcode{h} > \kcode{l} \wedge \kcode{h'} \leq \kcode{l'}$ and the chooser $\kcode{a2'} > \kcode{l'}$ have been pruned out. Under this precondition, path $k_1$ in the left program can be aligned with path $k'_2$ and path $k'_3$  in the right program. However, the alignment between the left $k_1$ and the right $k'_3$ is invalid w.r.t the $\forall\exists$ property because it requires that the left \kcode{a1} must be always 0, which is infeasible since under the $\forall$ quantifier, we have to consider every execution of the left program. The condition $\kcode{a2'} > \kcode{l'}$ in $W_2$ then chooses the right $k'_2$ to align with the left $k_1$ and the condition $\kcode{a2'} = \kcode{l + a1}$ shows that there exists an execution under that alignment to achieve agreement on \kcode{o}. With that intuition, the (WC) condition for the witness $W_2$, 
and all the conditions of Theorem~\ref{thm:fsim} for the witness terms, are straightforward to prove.
\end{example}

\begin{example}[Backward simulation]\label{eg:bsimEx}
Consider the following, where $\kcode{x}$, $\kcode{t}$, $\kcode{s}$, and
$\kcode{z}$ range over the natural numbers:
\[\begin{array}{ll}
    C_1: & \kcode{while x>n do x:=x-1 od; t:=any+x; z:=x+t} \\
    C_2: & \kcode{s:=any; while x>n do x:=x-1 od; z:=x+s}
  \end{array}
\]
We want to show their possibilistic equivalence, which could be expressed as 
$C_1\sep C_2:\aespec{R}{S}$, where
$R \eqdef \kcode{x}\eqbi\kcode{x} \land \kcode{n}\eqbi\kcode{n}$ and
$S \eqdef \kcode{z}\eqbi\kcode{z}$.
To prove this it would be convenient to align the two loops, to enable use of 
simple relational invariants $\kcode{x}\eqbi \kcode{x}$ etc.  But then the nondeterministic assignment to \kcode{t}
is aligned far after the assignment to \kcode{s} that needs to match it.  
There are two well known ways to deal with such situations:
introduce a prophecy variable~\cite{AbadiLamport88} or (equivalently an auxiliary variable~\cite{MorganAux}) use backward simulation.  
We can prove the following:\footnote{A stronger postcondition is needed for this backwards
property, for similar reasons to what happens in incorrectness logic~\cite{OHearn2019}.} 
$C_1\sep C_2:\bespec{R}{T}$, where
$T \eqdef \kcode{x}\eqbi\kcode{x} \land \kcode{n}\eqbi\kcode{n}
\land \kcode{z}\eqbi\kcode{z} \land \kcode{t}\eqbi\kcode{s}$.
To prove this using the witness technique of Theorem~\ref{thm:bsim}, we choose
the witness to be:
\[ W \eqdef
  \emr{\kcode{s:=any}}; \emb{X}{X}^\kstar;
  \eml{\neg [\kcode{x>n}]}; \eml{\kcode{t:=any+x}}; \emb{\kcode{z:=x+t}}{\kcode{z:=x+s}}; T
\]
where $X \eqdef [\kcode{x>n}]; \kcode{x:=x-1}$.  Notice that
$W$ ends with the postrelation $T$ and we do not need to introduce additional bitests.
The three conditions the witness must satisfy are easily proved.
As usual we rely on axioms for the semantics of primitives.
Interestingly, these include ones about backward preservation of bitests, e.g.,
$\emb{X}{X}; [\kcode{x}\eqbi\kcode{x}] =
[\kcode{x}\eqbi\kcode{x}]; \emb{X}{X}; [\kcode{x}\eqbi\kcode{x}]$.
\end{example}

% DN changed C_1' a.k.a. C_1^\prime to C_3 to avoid prime on the left 
\begin{example}[Forward simulation and prophecy]\label{eg:fsimProphecyEx}
As a variation on Example~\ref{eg:bsimEx} we can prove $C_3\sep C_2:\aespec{R}{S}$ 
for a modified version of $C_1$ that uses variable \kcode{p} to ``prophesize'' the value for \kcode{t}.
\[ \begin{array}{ll}
    C_3: & \kcode{p:=any; while x>n do x:=x-1 od; t:=p+x; z:=x+t} \\
    \end{array}
\]
To show $C_3\sep C_2:\aespec{R}{S}$ using the witness technique
in Theorem~\ref{thm:fsim}, choose witness $W$ to be:
\[
  W \eqdef
  \emb{\kcode{p:=any}}{\kcode{s:=any}}; B;
  \emb{X}{X}^\kstar;
  \eml{\neg [\kcode{x>n}]}; \emb{\kcode{t:=p+x; z:=x+t}}{\kcode{z:=x+s}}
\]
where $X \eqdef [\kcode{x>n}]; \kcode{x:=x-1}$ and 
$B \eqdef [\kcode{p+min(x,n)}\eqbi s]$.
As in the previous forward simulation examples,
the witness aligns the two nondeterministic assignments together,
the loops in lockstep, and introduces a bitest $B$ that filters executions of
$C_2$ to only those that ensure agreement on $\kcode{z}$ upon
termination.
The three witness conditions for $W$ are straightforward to check.  

In the next section we introduce deductive rules
for forward simulation.  With these, the judgment about $C_3$ and $C_2$
is proved in a way that implicitly follows the alignment $W$.
The chooser bitest $B$ is introduced as a postcondition, by a rule for aligned nondeterministic assignments, and it is manipulated in intermediate assertions rather than being inlined like it is in $W$.  
\ifarxiv
(See Appendix~\ref{sec:fsudr}.)
\else
(See the appendix of the extended version~\cite{BiKATarxiv}.)
\fi
\end{example}

\subsection{Logics of Forward and Backward Simulation}\label{sec:forwardBack}

\subsubsection{Forward Simulation Logic}

\newcommand{\anyfig}{\mbox{any}}

\begin{figure*}[t]
\begin{small}
\begin{mathpar}

\inferrule*[left=eWhL]{
  P \imp (\rightF{e'} \imp \leftF{e}) \\
  c\sep c' : \aespec{P \land \leftF{e} \land \rightF{e'}}{P} \\
  c\sep \skipc : \aespec{P \land \leftF{e}}{P}
}{\whilec{e}{c} \Sep \whilec{e'}{c'} : \aespec{P}{P \land \neg\leftF{e} \land
\neg\rightF{e'}}
}

\inferrule[eHav]{ 
        \bone \leq \emr{\hav};\dot{R};\emr{\hav} 
}
{ \emb{\hav}{\hav} : \aespec{true}{R} } 

% ALERT: not using \kcode{any} in the following, owing to format mess

\inferrule[enAss]{ 
        \bone \leq \emr{y:=\anyfig};\dot{R};\emr{y:=\anyfig}
}
{  \emb{x:=\anyfig}{y:=\anyfig} : \aespec{true}{R} }

\inferrule[eDisj]{
  c\sep d : \aespec{P}{Q} \\
  c\sep d : \aespec{R}{Q} 
}{
  c\sep d : \aespec{P\lor R}{Q} \\
}

\end{mathpar}
\vspace*{-3ex}  
\end{small}
\caption{Selected rules for $\forall\exists$ forward simulation correctness.
\textbf{Please note:} there are also rules \rn{eConseq}, \rn{eSeq}, \rn{eIf}, \rn{eWh},
which look the same as \rn{rConseq}, \rn{dSeq}, \rn{dIf}, \rn{dWh}
but using $\aespec{-}{-}$.}
\label{fig:aeRHL}
\end{figure*}

Theorem~\ref{thm:fsim} gives a way to prove forward simulation judgments,
by direct reasoning in a BiKAT.  
There are also inference rules for forward simulation.
In fact several of the inference rules for $\forall\forall$ 
judgments (Fig.~\ref{fig:RHLselected}) are also sound for forward simulation.
Fig.~\ref{fig:aeRHL} gives some rules for the $\aespec{-}{-}$ judgment.

Consider \rn{eWh}, which is simply \rn{dWh} but for the $\aespec{-}{-}$ judgment.
Informally, \rn{eWh} is sound because any terminating execution of the left 
program can be matched by one that terminates on the right, owing to the 
side condition that says the loop tests agree.
Rule \rn{eWhL} has a similarly simple side condition that suffices to ensure relative termination: in terms of alignment, if the right loop can continue to iterate then so can the left, and their joint iterations can be aligned in lockstep.
An additional premise handles the situation where the left loop has more iterations than the right.

One can consider two other situations.  One is where the iterations can be aligned in lockstep,
but the right loop may need more iterations.  Consider this rule:
\[\inferrule*[left=eWhR]{
  P \imp (\neg \leftF{e} \lor \rightF{e'}) \land \rightF{f \geq 0} \\
  \skipc\sep c' : \aespec{P \land \rightF{e' \land f = n}}{P \land \rightF{f < n}} \\
  c\sep c' : \aespec{P \land \leftF{e} \land \rightF{e'}}{P}
}{\whilec{e}{c} \Sep \whilec{e'}{c'} : \aespec{P}{P \land \neg\leftF{e} \land
\neg\rightF{e'}}
}\]
It uses a variant expression $f$ to establish termination on the right side
(like in total correctness Hoare logic).  
A related but different idea is the general rule to infer $c\sep c' : \aespec{P}{R}$
from $c\sep c' : \rspec{P}{R}$ together with termination of $c'$
from states in the codomain of $P$. 
But KAT does not support direct expression of termination, and 
we refrain from formulating the requisite notations for a BiKAT encoding 
of these rules.

The other situation for loops is where lockstep alignment is not sufficient.
We conjecture that a rule similar to rule \rn{caWh} can be devised,
but that is beyond the scope of this paper.

To relate two nondeterministic assignments, this axiom is sound in relational and trace models:
\( \emb{x:=\kcode{any}}{y:=\kcode{any}} : \aespec{(\all{\Left{x}}{\some{\Right{y}}{R}})}{R} \).
It uses suggestive informal notation for quantification over the left and right states.
The precondition ensures that for any value assigned to $x$ there is some value for 
$y$ such that $R$ holds.
But in this paper we refrain from formalizing formulas for relations.
Instead we consider a rule for judgments of the form $ \emb{x:=\kcode{any}}{y:=\kcode{any}} : \aespec{true}{R} $.
This holds provided that, in any pair of states, for every value of $x$ there is some value for $y$ making $R$ true.  As a step towards an algebraic formulation for that condition, first consider the fully nondeterministic action $\hav$.

For $\emb{\hav}{\hav} : \aespec{true}{\dot{R}}$ to hold, $R$ must be a domain-total relation.  In terms of relations this can be expressed by the equation $R;\hav=\hav$ 
but we prefer to express the condition in terms of the bitest $\dot{R}$ for $R$.
The condition in rule \rn{eHav} (Fig.~\ref{fig:aeRHL}), i.e., BiKAT equation 
$\bone \leq \emr{\hav};\dot{R};\emr{\hav}$,
holds (in a relational model) just if the relation $R$ is domain-total. 

For two nondeterministic assignments to satisfy the judgment
$\emb{x:=\kcode{any}}{y:=\kcode{any}} : \aespec{true}{R}$, the condition that we wrote as 
$\all{\Left{x}}{\some{\Right{y}}{R}}$
should be valid.  This is equivalent 
to the condition 
$\bone \leq \emr{y:=\kcode{any}};\dot{R};\emr{y:=\kcode{any}}$
of rule \rn{enAss}.  
An informal reading is that for any pair of states,
there is some value for $y$ on the right that makes $R$ hold.
(The trailing assignment to $y$ can restore the initial value of $y$.)

\begin{lemma}\label{lem:fsim:rules}
In any BiKAT, and for any rule in Fig.~\ref{fig:aeRHL},
given an f-valid witnesses for the premises, 
there is an f-valid witness for the conclusion.
\end{lemma}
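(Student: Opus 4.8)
The plan is to argue by cases on the rules of Fig.~\ref{fig:aeRHL}, in each case building an explicit witness for the conclusion out of the given witnesses for the premises and then checking the three conditions (WC), (WO), (WU) of Theorem~\ref{thm:fsim}. Throughout I would use only BiKAT identities: homomorphism of the embeddings, LRC, idempotence and commutativity of bitests, and the facts that a top satisfies $\hav\cdot\hav=\hav$ and $\bone\leq\emr{\hav}$. A preliminary normalization simplifies everything: if $W_i$ is an f-valid witness for a premise with precondition $P_i$, then so is $\dot{P_i};W_i$, since (WC) and (WU) are already guarded by $\dot{P_i}$ and for (WO) one left-multiplies the inequality by $\dot{P_i}$. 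Hence I may assume each premise witness absorbs its precondition, $W_i=\dot{P_i};W_i$, which is exactly what lets the intermediate relations thread through compositions.

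For the axioms \rn{eHav} and \rn{enAss} there are no premises, so I just exhibit a witness directly. For \rn{eHav} I would take $W\defeq\emb{\hav}{\hav};\dot{R}$: then (WC) and (WU) are immediate from $\dot{R}\leq\bone$ and idempotence, while (WO) unfolds to $\eml{\hav}\leq\emb{\hav}{\hav};\dot{R};\emr{\hav}$, which is precisely the domain-totality expressed by the side condition $\bone\leq\emr{\hav};\dot{R};\emr{\hav}$; \rn{enAss} is identical with $y{:=}\anyfig$ in place of the right $\hav$. For \rn{eConseq} I reuse the same witness, turning $R\imp P$ and $Q\imp S$ into test inequalities $\dot{R}\leq\dot{P}$, $\dot{Q}\leq\dot{S}$ and inserting them at the ends of each condition. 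For \rn{eDisj} I take the sum $\dot{P};W_1+\dot{R};W_2$ and use $\dot{P\lor R}=\dot{P}+\dot{R}$ with disjunctivity of sum. For \rn{eSeq} I take $W\defeq W_1;W_2$: condition (WC) threads the intermediate relation because (WC) for $W_1$ gives the equality $\dot{P};W_1=\dot{P};W_1;\dot{R}$, after which (WC) for $W_2$ applies; (WU) chains the two underapproximation conditions and collapses repeated copies of $\eml{\hav}$ using LRC and $\hav\cdot\hav=\hav$; (WO) chains the two overapproximation conditions, reinserting $\dot{R}$ via the (WC) equality and commuting $\emr{\hav}$ past $\eml{d}$ by LRC. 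Rule \rn{eIf} follows the same arithmetic as the soundness proof of \rn{dIf} in Sect.~\ref{sec:RHL}: with witness $\emb{e}{e'};W_1+\emb{\neg e}{\neg e'};W_2$, the side-condition equations (\ref{eq:dWhSide}) annihilate the two mismatched-guard cross terms.

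The loop rules \rn{eWh} and \rn{eWhL} are the substantive part. For \rn{eWh} I would take $W\defeq(\emb{e}{e'};W_0)^{*};\emb{\neg e}{\neg e'}$, guarded by $\dot{P}$. Condition (WC) follows by the KAT invariance law from (WC) for the body witness $W_0$, since $\dot{P}$ is preserved by each aligned iteration. The delicate conditions are (WU) and (WO), which must be pushed through the Kleene star: for (WU) I rebuild $(e';c')^{*};\neg e'$ on the right by monotonicity of star together with (WU) for $W_0$ and homomorphism of $\emr{\_}$; for (WO) I must show every unrolling of the left loop $(e;c)^{*};\neg e$ is covered, which is a star-induction argument (the axiom $b+a\cdot c\leq c\imp a^{*}\cdot b\leq c$) feeding on (WO) for $W_0$, and it is here that the side condition $P\imp\eqbib{e}{e'}$ is essential, keeping the two loops' iteration counts aligned so that the left count matches the factors available in $W$.

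Rule \rn{eWhL} is the hardest case and I expect it to be the main obstacle, because the left loop may run strictly longer than the right. Here I would start from the expansion law (\ref{eq:expand}), which splits aligned iteration into a lockstep prefix plus a tail that is either left-only or right-only; the side condition $P\imp(\rightF{e'}\imp\leftF{e})$, i.e.\ $\dot{P};\emb{\neg e}{e'}=0$, discards the right-longer branch, while the remaining left-only tail is handled by the second premise's witness (for the $c\sep\skipc$ body) under a further star-induction. Assembling a witness that concatenates the joint-body witness $W_0$ with the left-only witness, and then verifying (WO) and (WU) across both stars, is the most intricate computation; everything else reduces to the routine equational manipulations described above. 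Combined with Theorem~\ref{thm:fsim}, the lemma yields soundness of the forward-simulation logic.
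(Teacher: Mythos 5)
Your proposal is correct and takes essentially the same route as the paper: a per-rule case analysis constructing the same witnesses (e.g.\ $\emb{\hav}{\hav};\dot{R}$ for \rn{eHav}, $Z;W$ for \rn{eSeq}, the guarded sums for \rn{eIf}/\rn{eDisj}, the guarded star $(\emb{e}{e'};W_0)^{*}$ for \rn{eWh}, and a lockstep star followed by a left-only star for \rn{eWhL}), each checked against (WC), (WO), (WU) by the same LRC/homomorphism/star-induction manipulations. The only cosmetic differences are your up-front normalization $W_i=\dot{P_i};W_i$ (the paper instead re-inserts intermediate tests via the (WC) inequalities) and your motivation of the \rn{eWhL} witness via the expansion law, which the paper builds directly.
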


\begin{theorem}\label{thm:fsim:rules}
The rules in Fig.~\ref{fig:aeRHL} are sound in any 
relational BiKAT over a KAT with top.
\end{theorem}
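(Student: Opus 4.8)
The plan is to obtain the theorem as a corollary of three results already in hand: the compositional witness Lemma~\ref{lem:fsim:rules}, witness completeness (Theorem~\ref{thm:sim:complete}), and forward witness soundness (Theorem~\ref{thm:fsim}). Fix a full relational model. Being full, it has a top element $\hav$ and its underlying KAT has a top, so it is a relational BiKAT over a KAT with top and a full relational BiKAT; hence the hypotheses of both Theorem~\ref{thm:fsim} and Theorem~\ref{thm:sim:complete} are satisfied. Soundness of a rule means that whenever all its forward-simulation premises are valid judgments and its side conditions hold, the conclusion is a valid judgment. So I would fix an arbitrary rule from Fig.~\ref{fig:aeRHL} and assume its premises hold, then produce a validity proof of the conclusion by manufacturing an f-valid witness for it.

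First I would convert premise validity into witnesses. For each premise of the form $c_i\sep d_i:\aespec{P_i}{Q_i}$, witness completeness (Theorem~\ref{thm:sim:complete}) yields an f-valid alignment witness $W_i$ present in the model; this is exactly where fullness is used, since completeness builds a specific witness relation and relies on the model containing it (and containing $\hav$). The non-judgment side conditions of the rule are carried along unchanged: the agreement conditions such as $P\imp\eqbib{e}{e'}$ in \rn{eWh}, the relative-progress condition $P\imp(\rightF{e'}\imp\leftF{e})$ in \rn{eWhL}, and the domain-totality conditions $\bone\leq\emr{\hav};\dot{R};\emr{\hav}$ and $\bone\leq\emr{y:=\anyfig};\dot{R};\emr{y:=\anyfig}$ of \rn{eHav} and \rn{enAss}.

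Second I would feed these witnesses to Lemma~\ref{lem:fsim:rules}, which holds in any BiKAT and is the heart of the construction: from the f-valid witnesses $W_i$ (together with the side conditions) it yields an f-valid witness $W$ for the conclusion. For the axiom rules \rn{eHav} and \rn{enAss}, which have no forward-simulation premises, the lemma supplies $W$ directly from the side condition alone (for instance, an appropriately post-constrained $\emb{\hav}{\hav}$, so that (WC)--(WU) follow from domain-totality). Finally, applying forward witness soundness (Theorem~\ref{thm:fsim}) to this f-valid $W$ gives precisely the conclusion judgment. Ranging over all rules in Fig.~\ref{fig:aeRHL} then proves the theorem.

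I expect the real difficulty to sit not in this packaging but in Lemma~\ref{lem:fsim:rules}, which must exhibit, rule by rule, an explicit conclusion-witness built from the premise-witnesses and verify the three equational conditions (WC), (WO), (WU). The delicate cases are the loop rule \rn{eWhL}, where the conclusion witness must iterate the premise witnesses under a common star and use the extra left-only premise together with the side condition $P\imp(\rightF{e'}\imp\leftF{e})$ to cover the situation in which the left loop runs longer while the right loop can still keep pace, and the rule \rn{eDisj}, where the two premise witnesses must be combined (by sum) in a way that keeps (WO) covering \emph{all} left executions. A secondary point to record is that the passage from premise validity to premise witnesses genuinely needs fullness: in a non-full model the witness relation of Theorem~\ref{thm:sim:complete} need not belong to the algebra, which is why the statement is correctly restricted to full relational models.
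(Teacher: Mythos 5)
Your proposal is correct and follows exactly the paper's own argument: premises $\Rightarrow$ f-valid witnesses via Theorem~\ref{thm:sim:complete} (using fullness), then Lemma~\ref{lem:fsim:rules} to assemble a conclusion witness, then Theorem~\ref{thm:fsim} to conclude. Your added remarks on where fullness is needed and where the real work lies (the per-rule constructions in Lemma~\ref{lem:fsim:rules}) are accurate elaborations, not deviations.
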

\begin{proof}
For each rule, if its premises are true then 
by Theorem~\ref{thm:sim:complete} there are f-valid witnesses for the premises.
So by Lemma~\ref{lem:fsim:rules} we obtain an f-valid witness for the conclusion,
so by Theorem~\ref{thm:fsim} the conclusion is true.
\end{proof}

We consider illustrative cases in the proof of Lemma~\ref{lem:fsim:rules}.

\paragraph{Proof of \rn{eHav}}

The rule has no premise judgment, only the antecedent 
$\bone \leq \emr{\hav};\dot{R};\emr{\hav}$.
To prove the conclusion we take witness $W$ to be $\emb{\hav}{\hav};\dot{R}$.
Using the BiKAT notation $\bone$ for the pre-relation $true$, 
the f-validity conditions are:
\\
\begin{tabular}{ll}
(WC) & $\bone;\emb{\hav}{\hav};\dot{R} \leq \bone; \emb{\hav}{\hav};\dot{R};\dot{R}$ \\
(WU) & $\bone;\emb{\hav}{\hav};\dot{R} \leq \emb{\hav}{\hav}$ \\
(WO) & $\bone;\eml{\hav} \leq \emb{\hav}{\hav};\dot{R};\emr{\hav}$
\end{tabular}
\\
We have (WC) by idempotence of the test $\dot{R}$,
and (WU) using $\dot{R}\leq\bone$.
The condition (WO), expressing existence, is proved 
using the antecedent condition for $\dot{R}$.
$\bone;\eml{\hav} 
= \bone;\eml{\hav};\bone 
\leq \bone;\eml{\hav};\emr{\hav};\dot{R};\emr{\hav}
= \emb{\hav}{\hav};\dot{R};\emr{\hav}
$.
For rule \rn{enAss} the proof is similar.
% OK but not important:
% The interesting case is (WO) which looks like
% $\bone; \eml{x:=\kcode{any}} = \bone; \eml{x:=\kcode{any}}; \bone
% \leq \bone; \eml{x:=\kcode{any}}; \emr{y:=\kcode{any}};\dot{R};\emr{y:=\kcode{any}}
% = \emb{x:=\kcode{any}}{y:=\kcode{any}};\dot{R};\emr{y:=\kcode{any}}
% \leq \emb{x:=\kcode{any}}{y:=\kcode{any}};\dot{R};\emr{\hav} $.

\paragraph{Proof of \rn{eSeq}}

Suppose for premise $c\sep c' : \aespec{P}{R}$ we have witness $Z$ and 
for $d\sep d' : \aespec{R}{Q}$ we have witness $W$, so the conditions are 

\begin{tabular}{llll}
(WCZ) & $\dot{P}; Z \leq \dot{P};Z; \dot{R}$ & 
(WCW) & $\dot{R}; W \leq \dot{R}; W; \dot{Q}$ \\

(WUZ) & $\dot{P}; Z \leq \emb{\hav}{c'}$ &
(WUW) & $\dot{R}; W \leq \emb{\hav}{d'}$ \\

(WOZ) & $\dot{P}; \eml{c} \leq Z ; \emr{\hav}$ &
(WOW) & $\dot{R}; \eml{d} \leq W ; \emr{\hav}$ 
\end{tabular}

\noindent
To prove $c;d \Sep c';d' : \aespec{P}{Q}$ we use $Z;W$ as witness.
\\
$\bullet$ (WC) To show $\dot{P};Z;W \leq \dot{P};Z;W;\dot{Q}$ we have
$ \dot{P};Z;W \leq \dot{P};Z;\dot{R};W \leq \dot{P};Z;\dot{R};W;\dot{Q} \leq \dot{P};Z;W;\dot{Q}$ using (WCZ), (WCW), and $\dot{R}\leq\bone$. %test below 1.
\\

\noindent $\bullet$ (WU) We have 
$\dot{P}; Z; W \leq 
\dot{P};Z;\dot{R};W \leq 
\emb{\hav}{c'};\emb{\hav}{d'} = 
\emb{\hav}{c';d'}$
using (WCZ), (WUZ), (WUW), embedding homomorphic, and idempotence of $\hav$.
\\
$\bullet$ (WO)
\(\begin{array}[t]{lll}
    & \dot{P};\eml{c;d} \\
=   & \dot{P};\eml{c};\eml{d} & \mbox{emb homo}\\
\leq& \dot{P};Z;\emr{\hav};\eml{d} & \mbox{(WOZ)} \\
=   & \dot{P};Z;\eml{d};\emr{\hav} & \mbox{(LRC)} \\
=   & \dot{P};Z;\dot{R};\eml{d};\emr{\hav} & \mbox{(WCZ)} \\
\leq& \dot{P};Z;W;\emr{\hav};\emr{\hav} & \mbox{(WOW)} \\
=   & \dot{P};Z;W;\emr{\hav} & \mbox{$\hav$ idem, emb homo}\\
\leq   & Z;W;\emr{\hav} & \mbox{$P\leq\bone$} %test below 1}
\end{array}\)

\subsubsection{Backward Simulation Logic}

The backward simulation judgment has a number of inference rules,
with some interesting differences from the $\forall\forall$ and forward simulation rules.
The rules are derivable in BiKAT.
\poplremoved{i.e., we have  results like Lemma~\ref{lem:fsim:rules} and
Theorem~\ref{thm:fsim:rules} but to save space we just sketch the highlights
and omit the formal statements.}

Hoare's assignment axiom based on weakest precondition, using substitution in the precondition,
is often called ``backwards''.  
The relational generalization works for $\forall\forall$ and forward simulation:
\begin{mathpar}
\inferrule[dAss]{}{ 
v:=e \sep v':=e' : \rspec{\subst{P}{v|v'}{e|e'}}{P}
}

\inferrule[eAss]{}{ 
v:=e \sep v':=e' : \aespec{\subst{P}{v|v'}{e|e'}}{P}
}
\end{mathpar}
Variables and substitution are not part of KAT/BiKAT but the rules are sound in models.
But a judgment of this form is unsound for backward simulation,
for the same reason as in incorrectness logic~\cite{OHearn2019}: 
with an arbitrary postcondition, there can be final states that are not in the image of the assignment.
The unary~\citet{Floyd67} axiom for assignment is 
$ v:=e : \spec{q}{\some{u}{\subst{q}{v}{u} \land v = \subst{e}{v}{u} }}$ 
which suggests the following:
\[ 
   v:=e \sep v':=e' \; : \; \bespec{P\; }{\; \some{u,u'}{\subst{P}{v|v'}{u|u'} 
                                   \land \leftF{v = \subst{e}{v}{u}} 
                                   \land \rightF{v' = \subst{e'}{v'}{u'}}}}
\]
Consider $\tau,\tau'$ that satisfy the postcondition
and let $\sigma$ be the left initial state.  Let $\hat{u},\hat{u}'$ be values
that witness the existential and observe that $\tau$ is $\sigma$ with
$v$ updated to $\hat{u}$.
Choose $\sigma'$ be a state such that $\tau'$ is $\sigma'$ with $v$ updated to $\hat{u'}$.
Conclude $\sigma,\sigma'$ satisfy the
precondition, and further, that executing $v':=e'$ in $\sigma'$ yields
$\tau'$; hence the judgment holds.

Unlike the simple law that can be used to express semantics of assignments
in a proof of \rn{eAss} for forward simulation, 
a BiKAT formulation of the Floyd rule would need a more complicated way to reason about assignment 
and postcondition formula.  
It can be done by choosing some type, say integers, for data, and then 
treating the existential as an integer-indexed sum of tests,
but we leave this to the reader.  

Another resemblance to incorrectness logic is that the consequence rule is reversed from the one for forward simulation 
and $\forall\forall$.
Suppose $W$ is a witness for
$c\sep d : \bespec{P}{Q}$,
and moreover $P\imp R$ and $S\imp Q$.
Then  $W$ is also a witness for 
$ c|d : \bespec{R}{S}$
To show (WCb) for the latter, we have
$W;\dot{S} \leq 
 W;\dot{Q} \leq
 \dot{P};W;\dot{Q} \leq
 \dot{R};W;\dot{Q}$
(using (WCd) for $W$, i.e., $W;Q \leq P;W;Q$).
And $W;\dot{S} \leq  \dot{R};W;\dot{Q}$
iff $W;\dot{S} \leq  \dot{R};W;\dot{S}$ using $S\imp Q$.

\begin{figure*}
\begin{small}
\begin{mathpar}

\popladded{
\inferrule*[left=bIf]{
  c\sep c' : \bespec{P\land \leftF{e}\land\rightF{e'}}{Q} \\
  d\sep d' : \bespec{P\land \neg\leftF{e}\land\neg\rightF{e'}}{Q} 
}{
  \ifc{e}{c}{d} \Sep \ifc{e'}{c'}{d'} : \bespec{P}{Q}
}
}

\inferrule*[left=bWh]{
  c\sep c' : \bespec{P\land \leftF{e}\land\rightF{e'}}{P} 
}{\whilec{e}{c} \Sep \whilec{e'}{c'} : \bespec{P}{P\land \neg\leftF{e}\land\neg\rightF{e'}}}

\popladded{
\inferrule*[left=bSeq]{
  c\sep c' : \bespec{P}{R} \\
  d\sep d' : \bespec{R}{Q}
}{
  c;d \Sep c';d' : \bespec{P}{Q}
}
}

\popladded{
\inferrule*[left=bnAss]{ 
   \bone \leq \emr{y:=\anyfig};\dot{R};\emr{y:=\anyfig}
}
{  \emb{x:=\anyfig}{y:=\anyfig} : \bespec{R}{true} }
}

\popladded{
\inferrule*[left=bConseq]{
  R\imp P \\
  c\sep d : \bespec{R}{S} \\
  Q \imp S 
}{
  c\sep d : \bespec{P}{Q} \\
}
}

\inferrule*[left=bDisj]{
  c\sep d : \bespec{P}{Q} \\
  c\sep d : \bespec{P}{R} 
}{
  c\sep d : \bespec{P}{Q\lor R} \\
}

\end{mathpar}
\end{small}
\vspace*{-3ex}
\caption{Selected rules for $\forall\exists$ backward simulation correctness.}
\label{fig:beRHL}
\end{figure*}

The rule for sequence looks just like \rn{dSeq} (and \rn{eSeq}),
and is proved by a calculation similar to the one proving \rn{eSeq}.
For conditional and loop, the pattern of rules \rn{dIf} and \rn{dWh} (also \rn{eIf} and \rn{eWh})
can be retained but the side conditions are not needed!
The conclusion of rule \rn{bWh} in Fig.~\ref{fig:beRHL}
says that given final states $(\tau,\tau')$ related by $P$ in which the loop
tests are false, and a terminating execution from some initial $\sigma$ with
$(\sigma,\sigma')$ related by $P$,
there is an execution from $\sigma'$ ending in $\tau'$.
Informally, the conclusion follows because we can repeatedly invoke the premise, starting from 
the last iteration, to obtain the requisite right execution.
If $\sigma_i$ is the state reached after the $i$th iteration on the left,
and $\tau_i$ the corresponding right state given by our induction hypothesis,
and there is a preceding iteration on the left from $\sigma_{i-1}$, 
the premise yields some $\tau_{i-1}$ with a matching right iteration,
and moreover $e'$ holds in $\tau_{i-1}$ so the loop does take this iteration.

\begin{theorem}\label{thm:bsim:rules}
The rules in Fig.~\ref{fig:beRHL} \poplremoved{(and appendix)} are sound in any 
relational BiKAT over a KAT with top.
\end{theorem}
The proof uses Theorem~\ref{thm:sim:complete}.  
Thorough investigation of loop rules for backward simulation is beyond the scope of this paper.

\subsection{On $\exists\forall$ and $\exists\exists$ Properties}\label{sec:exists}

We have looked at various forms of properties involving 2 executions, with pre/post-conditions playing a different roles. So far we have focused on some $\forall\forall$ properties (Sect.~\ref{sec:RHL}), and some $\forall\exists$ ones (Sects.~\ref{sec:simu}--\ref{sec:forwardBack}). Many of the relational properties can be expressed in these forms, but the classes that correspond to the duals of those properties are also interesting. Consider the property

\begin{equation}\label{eq:ea}
\begin{diagram}[h=4ex,w=2.4em]
        & \sigma  &\rMapsto^c & \tau &         &       &           &     &                                 & \tau \\
\BexistsX& \dRel^R &           &      & \BforallX&       &           &     &\mbox{\quad\Large$\imp$\quad}& \dRel_S \\
        & \sigma' &           &      &         &\sigma'&\rMapsto^d &\tau'&                                 & \tau' 
\end{diagram}
\end{equation}
that says $\exists \sigma, \sigma', \tau.\ \sigma R\sigma'\land \sigma c\tau\land \forall \tau' (\sigma' d \tau'\imp \tau S\tau')$.
%\timos{Give name to this property to better refer to it.}
This property is the dual of forward simulation with a negated postcondition, or in other words 
(\ref{eq:ea}) is equivalent to $\neg (c\sep d : \aespec{R}{\neg S})$. 
%% \[
%% \begin{array}{lll}
%% & \neg (\some{\sigma,\sigma',\tau}{ 
%%                \sigma R \sigma' \land \sigma x \tau \land 
%%                \all{\tau'}{ (\sigma' y \tau' \imp \tau S \tau') }} ) \\
%% \iff & \all{\sigma,\sigma',\tau}{ 
%%                \sigma R \sigma' \land \sigma x \tau \imp
%%                \neg \all{\tau'}{ (\sigma' y \tau' \imp \tau S \tau' ) }}  \\
%% \iff & \all{\sigma,\sigma',\tau}{ 
%%                \sigma R \sigma' \land \sigma x \tau \imp
%%                \some{\tau'}{ \sigma' y \tau' \land \neg( \tau S \tau' ) }}  \\
%% \iff & x|y : \aespec{R}{\neg S}
%% \end{array}
%% \]
Using Theorem~\ref{thm:sim:complete}, to prove this property, we can check whether the following holds: for any BiKAT term $Z$ 
at least one of the conditions (WO), (WU) or (WC) fails.

Although not many properties of the form $\exists\forall$ naturally occur in the literature on relational verification, they are nonetheless important, and not only as duals of the more frequently occurring $\forall\exists$ ones. The reason is that with the former, it suffices to find one single execution of the left program that captures the required behavior against as traces of the right one. Checking whether such an execution or trace exists is often hard, and a proof system of $\exists\forall$ properties will be of value. 
%% \timos{Remove the following?:}Consider the case where the cost associated with an execution of a program can be described by a tuple of values in possibly independent dimensions (such as memory resources and time ones). Checking whether there is an execution of a given program that uses minimal resources compared to any other execution is often helpful.

Arguing similarly we can use BiKAT reasoning to explore $\exists\exists$ properties on two executions. A prime example of such properties is ``definite  non-determinism'', or in other words the existence of an input, and two executions on that input that produce different output. Consider
$$\some{\sigma,\sigma',\tau,\tau'}{ \sigma R \sigma' \land \sigma c \tau \land \sigma' d \tau' \land \neg\tau S \tau'}$$
as a general form of $\exists\exists$ properties.
This formulation is the dual of the main 2-safety property we study in the previous sections (see Equation~(\ref{eq:allall})). Non-determinism can then be expressed by setting both programs $c$ and $d$ to be the same, and setting $R$ and $S$ to be relations expressing agreement on all variables.

\subsection{TriKAT}\label{sec:trikat}

One might hope that the existential quantifications of (\ref{eq:fsim}) and (\ref{eq:bsim}) 
could be expressed algebraically using BiKAT projection (Def.~\ref{def:bikatproj}).
Unfortunately, projection existentially quantifies both initial and final state
\popladded{of the second execution}
(see (\ref{eq:BiKATproj})).
So projection does not directly capture (\ref{eq:fsim}),
\popladded{where the second execution's initial state is universally quantified,
nor (\ref{eq:bsim}) where its final state is universally quantified.}
\popladded{We sketch a way to use projection that merits further investigation but is not used in the rest of the paper.}

Possibilistic noninterference can be expressed in HyperLTL, a temporal logic with explicit quantifiers that range over
the traces of a fixed program~\cite{ClarksonFKMRS14}.
The formula says that for all traces $\pi,\pi'$ with initial states related by $R$,
there is a trace $\pi''$ with the same initial state as $\pi'$, 
and $R$ relates the final states of $\pi$ and $\pi''$.
The reason three traces are needed is that two initial states are universally quantified in (\ref{eq:fsim}).  

To express $c\sep d:\aespec{R}{S}$ we consider three executions, 
using in place of (\ref{eq:fsim}) the following pattern. 
\[
\begin{diagram}[h=4ex,w=3em]
        & \sigma  &\rMapsto^c& \tau  &         &\sigma' &\rMapsto^\hav&\tau\\
\Bforall& \dRel^R &          &\dRel_{id}         & \Bexists&\dRel_{id}&            &\dRel_S\\
        & \sigma' &\rMapsto^\hav&\tau&         &\sigma' &\rMapsto^d   &\tau'\\
\end{diagram}
\]
We define a notion of TriKAT so these ingredients can be described in the form 
of the following diagram. 
It depicts a relation on state triples represented by the displayed TriKAT term 
using notation to be explained.
\begin{equation}\label{eq:fsimAlt}
\begin{diagram}[h=4ex,w=3em]
        & \sigma    &\rMapsto^c   & \tau     \\       
        & \dRel^R   &             &\dRel_{id} \\      
        & \sigma'   &\rMapsto^\hav&\tau       &
\quad \begin{array}{l}
      \quad\mbox{represented by TriKAT term } \\[.5ex] 
      \quad\triEmb{\dot{R}}{\dot{id}} ; \tricom{c}{\hav}{d} ; \triEmb{\dot{id}}{\dot{S}}
      \end{array}
\\
        & \dRel^{id} &            & \dRel_S   \\
        & \sigma'   &\rMapsto^d  & \tau'     \\
\end{diagram}
\end{equation}
The existential quantification will then be expressed by projection,
as we proceed to show.

One can easily generalize BiKAT with a three-argument embedding 
we will write as $\tricom{\_}{\_}{\_}$.  But we also need BiKAT elements to 
encode the relations $R,S,id$.  So we define a 
\emph{two}-argument embedding 
$\triEmb{\_}{\_}$ with the following meaning: if $A,B$ are BiKAT elements, 
thus denoting pairs of executions, then $\triEmb{A}{B}$ denotes triples of executions,
comprising pairs from $A$ and $B$ \poplchanged{where they}{that} agree on the middle one.
Specifically, for elements $A$ and $B$ of a relational BiKAT,
define $\triEmb{A}{B}$ to be this relation on $\Sigma\times\Sigma\times\Sigma$.
\[ (\sigma,\sigma',\sigma'') 
   \triEmb{A}{B}
   (\tau,\tau',\tau'') \eqdef (\sigma,\sigma') A (\tau,\tau') \land 
                              (\sigma',\sigma'') B (\tau',\tau'') \]
For the special case of BiKAT elements embedded from the underlying KAT,
define the abbreviation 
% $\tricom{a}{b}{c}$ by
\[ \tricom{a}{b}{c} \eqdef \triEmb{\emb{a}{b}}{\emb{b}{c}} \]
Observe that 
$(\sigma,\sigma',\sigma'') \tricom{a}{b}{c} (\tau,\tau',\tau'')$
iff $\sigma a \tau \land \sigma' b \tau' \land \sigma'' c \tau''$.
%(Another interesting form is $\triEmb{A}{\emb{\hav}{c}}$.)
Now one can check that the relation on state triples depicted by the diagram in (\ref{eq:fsimAlt})
is denoted by the term on the right in (\ref{eq:fsimAlt}).

Among the available projections we  need the one that projects the left two of three.
This can be considered as a projection from the ``TriKAT'' to the left underlying BiKAT.
For any relation $X$ on $\Sigma\times\Sigma\times\Sigma$, define the relation $\Lproj X$ on $\Sigma\times\Sigma$  
by 
\[ (\sigma,\sigma') \Lproj X (\tau,\tau') 
\quad\mbox{iff}\quad
\some{\sigma'',\tau''}{ (\sigma,\sigma',\sigma'')X(\tau,\tau',\tau'')}
\]

\begin{lemma}\label{lem:trikat}
In a relational model, $c\sep d:\aespec{R}{S}$, i.e., (\ref{eq:fsim}),
is equivalent to this BiKAT equation:
\begin{equation}\label{eq:bitri}
  \dot{R} ; \emb{c}{\hav} ; \dot{id} 
  \;\leq\;
  \Lproj ( \triEmb{\dot{R}}{\dot{id}} ; \tricom{c}{\hav}{d} ; \triEmb{\dot{id}}{\dot{S}}) 
\end{equation}
and $c\sep d:\bespec{R}{S}$, i.e., (\ref{eq:bsim}) is equivalent to 
\begin{equation}\label{eq:bitriBack}
\dot{id} ; \emb{c}{\hav} ; \dot{S} 
\;\leq\;
\Lproj(\triEmb{\dot{id}}{\dot{R}} ; \tricom{c}{\hav}{d} ; \triEmb{\dot{S}}{\dot{id}})
\end{equation}
\end{lemma}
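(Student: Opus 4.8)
The plan is to work entirely inside the relational model and reduce each of the two claimed inequalities to its pointwise meaning. I would unfold every operator that appears --- the sub-identity tests $\dot{R},\dot{S},\dot{id}$, the top relation $\hav$, the embedding $\emb{\_}{\_}$, the triple embeddings $\triEmb{\_}{\_}$ and $\tricom{\_}{\_}{\_}$, and the pair projection $\Lproj$ --- so that both sides of (\ref{eq:bitri}) and of (\ref{eq:bitriBack}) become explicit binary relations on $\Sigma\times\Sigma$, and then observe that the resulting set inclusion says exactly what (\ref{eq:fsim}) (resp.\ (\ref{eq:bsim})) says. The forward and backward cases are handled symmetrically.

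For the forward equation (\ref{eq:bitri}) I would first compute the left-hand side pointwise. Since $\hav$ is top, $\emb{c}{\hav}$ leaves the right state unconstrained, and the trailing test $\dot{id}$ forces the two final states to coincide; together with the leading $\dot{R}$ this gives that $\dot{R};\emb{c}{\hav};\dot{id}$ relates $(\sigma,\sigma')$ to $(\tau,\tau')$ exactly when $\sigma R\sigma'$, $\sigma c\tau$, and $\tau'=\tau$. Next I would expand the TriKAT composite $\triEmb{\dot{R}}{\dot{id}};\tricom{c}{\hav}{d};\triEmb{\dot{id}}{\dot{S}}$ on state triples: the leading test pins $\sigma'=\sigma''$ and demands $\sigma R\sigma'$; the middle term runs $c$ on the first coordinate and $d$ on the third while the $\hav$ slot leaves the middle coordinate free; the trailing test forces $\tau=\tau'$ and demands $\tau' S\tau''$. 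Applying $\Lproj$ existentially quantifies the third coordinates $\sigma''$ and $\tau''$, so the right-hand side relates $(\sigma,\sigma')$ to $(\tau,\tau)$ iff $\sigma R\sigma'$, $\sigma c\tau$, and $\some{\tau''}{\sigma' d\tau'' \land \tau S\tau''}$. Comparing the two, the left-hand relation is precisely the right-hand relation with its existential conjunct deleted, so the inclusion (\ref{eq:bitri}) holds iff for all $\sigma,\sigma',\tau$ with $\sigma R\sigma'$ and $\sigma c\tau$ there is $\tau''$ with $\sigma' d\tau''$ and $\tau S\tau''$ --- which is exactly (\ref{eq:fsim}).

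The backward equation (\ref{eq:bitriBack}) is dual and I would treat it identically. The left-hand side $\dot{id};\emb{c}{\hav};\dot{S}$ works out to relate $(\sigma,\sigma)$ to $(\tau,\tau')$ exactly when $\sigma c\tau$ and $\tau S\tau'$. Expanding $\triEmb{\dot{id}}{\dot{R}};\tricom{c}{\hav}{d};\triEmb{\dot{S}}{\dot{id}}$, the leading test forces $\sigma=\sigma'$ and demands $\sigma' R\sigma''$, the middle runs $c$ and $d$ on the outer coordinates, and the trailing test forces $\tau'=\tau''$ and demands $\tau S\tau'$; projecting with $\Lproj$ existentially binds the intermediate coordinate, giving that the right-hand side relates $(\sigma,\sigma)$ to $(\tau,\tau')$ iff $\sigma c\tau$, $\tau S\tau'$, and $\some{\sigma''}{\sigma R\sigma'' \land \sigma'' d\tau'}$. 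Again the left-hand relation is the right-hand one with the existential conjunct dropped, so the inclusion collapses to the pointwise backward condition (\ref{eq:bsim}).

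The only part requiring genuine care --- and hence the main obstacle --- is the bookkeeping of which of the six state components are pinned to one another by the sub-identity tests and which are existentially bound by $\Lproj$. One must verify that the equalities forced by $\triEmb{\dot{R}}{\dot{id}}$ and $\triEmb{\dot{id}}{\dot{S}}$ (namely $\sigma'=\sigma''$ and $\tau=\tau'$ in the forward case, and their analogues $\sigma=\sigma'$, $\tau'=\tau''$ in the backward case) line up so that, after projecting away the third coordinate, the surviving existential coincides \emph{exactly} with the witness demanded by the simulation definition, with the $\hav$ slot genuinely leaving the relevant side free. Once this alignment is confirmed, the reduction of the relation inclusion to a universally quantified implication is immediate, because the left-hand relation is precisely the restriction of the right-hand relation obtained by deleting its existential conjunct.
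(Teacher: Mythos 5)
Your proposal is correct and follows essentially the same route as the paper's own proof: unfold all the operators pointwise in the relational model, observe that the triple-embedding tests pin $\sigma'=\sigma''$, $\tau=\tau'$ (resp.\ $\sigma=\sigma'$, $\tau'=\tau''$) while $\Lproj$ existentially binds the third coordinates, and then collapse the inclusion via one-point reasoning to exactly (\ref{eq:fsim}) and (\ref{eq:bsim}). The only cosmetic quibble is calling the projected-away component the ``intermediate coordinate'' in the backward case---it is the third coordinate---but your actual computation is the right one.
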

The proofs are by unfolding definitions.
The whole story works also for trace models.
We hoped that (\ref{eq:bitri}) could be used directly to prove simulations,
but were unsuccessful.  The reader may check that in relational or trace models
of BiKAT, projection distributes only weakly over sequence, as
$\lproj(A;B)\leq \lproj A; \lproj B$, 
and 
$\Lproj(X;Y)\leq \Lproj X; \Lproj Y$, 
which is not helpful for proving  equations like (\ref{eq:bitri}). 
\poplchanged{We do use (\ref{eq:bitri}) and (\ref{eq:bitriBack}) to derive
a couple of the deductive rules for simulation
in Sect.~\ref{sec:forwardBack}.}{
Nonetheless, we can use 
(\ref{eq:bitri}) and (\ref{eq:bitriBack}) to derive some rules including \rn{eDisj}
in Fig.~\ref{fig:aeRHL}.
}

The operation $\tricom{-}{-}{-}$ used above does distribute 
homomorphically over sequence and the other operators.  
As a result, BiKAT generalizes straightforwardly to $n$-KAT, which 
encompasses $\forall^n$ properties of program $n$-tuples, as in Cartesian Hoare logic~\cite{SousaD2016}.  
However, it is an open question how to axiomatize projections and the bi-to-tri embedding $\triEmb{-}{-}$
in a way that is useful for $\forall\exists$, 
so we do not formally define TriKAT.

\section{Discussion}\label{sec:discuss}

We have described BiKAT, a theory for equational reasoning about alignment
for relational program verification.
In this section we consider implications for automated reasoning, connections with other work including \popladded{product programs and} deductive verification, and problems for future work.

\subsection{Automation of Alignment}

Recent years have seen advances in automation of relational verification, as summarized 
in Sect.~\ref{subsec:relwork}. 
To our knowledge, none of these techniques algebraically derive an alignment. As we developed the algebraic theory of BiKAT, along the way we found possible approaches for automation such as integrating with existing automated techniques for discovering alignments or relational invariants. 
We save automation for future work, but summarize some general approaches here.

\emph{Solving BiKAT equivalence queries.} A BiKAT is itself a KAT and we can, therefore exploit a wide range of KAT-based tools such as symbolic equality reasoning~\cite{Pous2015},
Coq tactics~\cite{BraibantP10},
abstract interpretation with KAT~\cite{AntonopoulosKL2019},
and methods for constructing and deciding equality in concrete KATs~\cite{BeckettCG22}. 
Existing KAT tools do not have a built-in way of representing our particular kinds of KATs. However, we can encode a BiKAT, minus LRC, as a KAT through suitable variable renaming. We did this manually for a simple, concrete BiKAT and confirmed that KMT~\cite{BeckettCG22} was able to verify some equivalence queries. Of course we interactively use LRC to obtain the desired alignment
and do not give KMT instances of the LRC axiom, owing to undecidability.

\emph{Constraint-based relational verification.} \citet{UnnoTK21} reduced $k$-safety and possibilisitic non-interference to a constraint-satisfaction problem. Although these reductions are relatively complete, the approach does not scale well, as it searches for possible alignments. One possible path forward is to use BiKAT reasoning to algebraically derive alignments at a coarse-grained level, and then employ a constraint-satisfaction problem to solve the fine-grained subproblems.

\emph{Semi-automation.} Automated solving (e.g. via KAT tools or constraint-satisfaction) could also be used as part of 
a larger semi-automated reasoning framework. Our Coq development already provides the 
basic BiKAT laws/lemmas and could be extended to include forward/backward simulation rules, which would then be used interactively on given problem to derive an alignment. Along the way, automated solvers could be integrated and used to discharge smaller semantic queries.

\popladded{
Numerous other works discuss automation of relational verification.
\citet{PickFG18} describe a technique that aligns conditional blocks (in addition to loops)
and exploits symmetries to reduce the verification burden.
\citet{FarzanV2019} describe an approach to hypersafety verification of unary 
programs by discovering representative executions of a product program, whose correctness proofs
are sufficient to prove the overall property.
\citet{UnnoTK21} work with transition systems and 
use a constraint-solving approach to automatically discover 
a ``scheduler''---a form of alignment described as a function that directs
which element of the $k$-tuple (product of transition systems) should take the next step.
\citet{Badihi20} describe ARDiff, using a combination of abstraction and refinement for automatically proving program equivalence.
\citet{Mordvinov2019} work in the context of CHCs, and infer relational invariants.
%They introduce rules for relating loops across $k$-tuples of programs, but cannot support more complicated alignments that relate different implementations across multiple programs. By contrast, the BiKAT equational theory is built atop a program equational theory (KAT)
%https://dl.acm.org/doi/10.1145/3498689     GaeherEtal 
}

\popladded{\subsection{Product Programs and Expressibility}}
\label{subsec:prod}

BiKAT serves as notation for alignment products which in turn represent alignments of executions.
We have already shown examples of some alignments in the literature, 
including the $2x^2$ example of \citet{ShemerGSV19} (our Example~\ref{eg:two})
and the array insertion example of Shemer \emph{et al.} (in Sect.~\ref{sec:examples}).
In this section we consider what alignments can be represented in BiKAT, compared with
product programs and other representations in the literature.
We also consider how adequacy is established in various works,
compared with our Theorem~\ref{thm:allall}.

In some works, products are literally programs (e.g.,~\cite{BartheDArgenioRezk,BartheCK-FM11,EilersMH18}).  In others, products are represented in some form of control-flow automata (e.g.,~\cite{ChurchillP0A19}) or transition system (e.g.,~\cite{ShemerGSV19}).  
In all cases, products represent alignments between corresponding points in execution pairs (or $k$-tuples), for reasoning based on assertions at the aligned points.  

Representation of products as programs has the advantages (and disadvantages) of syntactic representation.  
For products as programs, one approach to ensuring adequacy is developed by 
\citet{BartheCK-FM11,BartheCK16} 
who develop a ternary judgment that connects two programs to a third that represents 
an aligned product of them.  
Assert commands are used to ensure adequacy. For example, in the case of if-else,
the guard-agreement side condition of rule \rn{dIf} (in Fig.~\ref{fig:RHLselected})
can be added to the product program as an initial assertion.
If the assertion holds, the product is adequate.  
For $\ifc{e}{c}{d}$ and $\ifc{e'}{c'}{d'}$,
one might try to write their product as this BiKAT term:
$\eqbib{e}{e'};(\eml{e};\emb{c}{c'} + \neg\eml{e};\emb{d}{d'})$.
However, this treats the initial agreement as an assumption, whereas for adequacy it must be a consequence of the precondition.  
KAT can be extended with failures (FailKAT) in order to express assertions~\cite{Mamouras17}.

The approach of Barthe \emph{et al.} has the advantage of making a close connection with Hoare logic,
but the disadvantage of lacking means to leverage left-right-commutativity as such.
\citet{BanerjeeNN16,BNNN19a} handle products 
using custom syntax for what they call \emph{biprograms}.
The semantics of their bi-if is essentially like having the guard-agreement assertion.
Their bi-command form $(c|c')$ serves as a product with no designated intermediate alignment.
Relational judgments apply to biprograms and a verification problem is posed 
in the form $(c|c')$.  
There is an auxiliary relation on biprograms, called weaving,
that effectively performs left-right-commutings in a way that preserves adequacy.
Whereas Barthe \emph{et al.} reduce relational verification to unary Hoare logic,
Banerjee \emph{et al.} use a custom proof system for biprograms.
In both of these lines of work, adequacy is proved as a general result about the system.
By inspection of our examples and Theorems~\ref{thm:RHL} and~\ref{thm:caWh},
one can see that BiKAT can represent the alignments 
achievable in these systems, when they are adequate,
but (as noted above) cannot directly encode adequacy checks expressed as assertions.
We conjecture that the systems can be encoded in an extension of BiKAT
based on FailKAT~\cite{Mamouras17}.
Such an extension would also serve another purpose, 
namely to encode the fault-sensitive variation of $\forall\forall$ used in practical verification systems and logics including~\citet{Yang:tcs04,BanerjeeNN16,BNNN19a}.

As mentioned earlier, the tiling example is handled in~\citet{BanerjeeNN16} 
using a custom rewriting relation with KAT-like rules.
The probabilistic relational Hoare logic of \citet{BartheGHS17} 
has a similar rule, called structural equivalence (as well as a rule for conditionally aligned loops).  The logic's relational correctness judgment connects the related programs to a product program that witnesses a probabilistic coupling.

As we discuss in Sect.~\ref{sec:examples},
procedure calls can be treated as primitives in BiKAT,
which can thereby express alignment of calls for use with relational specs as hypotheses.
However, BiKAT has no mean to express patterns of alignments involving 
nested procedure calls as in the work of \citet{GodlinS08}.
Nor does BiKAT provide for directly expressing alignment defined by code overlay as in the ghost monitors of \citet{ClochardMP20}.
However, history-sensitive alignment can be expressed in BiKAT and other systems 
using ghost code.  

Many prior works use products based on the representation of programs, and products, as transition systems.  We sketch how BiKAT can express such products quite generally. 
A transition system can be presented as rules of the form $g\rightarrow a$ where the guard $g$ is a state condition, and $a$ is an assignment command (or basic block).  So the program can be written in Dijkstra's guarded command notation~\cite{AptOld3} as 
 $\textsf{do}\ g_0\rightarrow a_0\ \talloblong \ g_1\rightarrow a_1\ \talloblong\ \ldots\ \textsf{od}$.
Regardless of the form of the commands $a_i$, this has a simple representation in KAT, as  
$ (g_0;a_0 + g_1;a_1 ...)^*;\neg(\Sigma_i \, g_i) $.
For clarity in the following discussion we ignore the negated condition and simply write
$ (g_0;a_0 + g_1;a_1 ...)^* $.
So the following BiKAT term represents a product of two transition systems.
\[ \emb{ (g_0;a_0 + g_1;a_1 ...)^* }{ (g_0';a_0' + g_1';a_1' ...)^* } \]
Alignment products in the literature constrain executions of the underlying programs by some conditions $L,R,J$ on state-pairs, that designate whether to take a left-only step, right-only step, or joint step.  (More generally, which of $k$ copies, as in \citet{ShemerGSV19},
\citet{EilersMH18}.)
In a BiKAT, assuming the conditions $L,R,J$ are expressible as bitests, the alignment is expressible using finite sums as 
\begin{equation}\label{eq:alignA}
((\Sigma_{g,a,g',a'}\ J;\emb{g;a}{g';a'}) + (\Sigma_{g,a}\ L;\emb{g;a}{1}) + (\Sigma_{g',a'}\ R;\emb{1}{g';a'}))^* 
\end{equation}
In these sums, $g,a$ range over the guarded actions $g\rightarrow a$ of the left program and $g',a'$ range over the right. 

So much for expressing alignments.  What about proving adequacy?  One can
formulate general conditions under which a product is adequate.  Roughly, the
idea is that $L\lor R\lor J\lor TRM$ must be invariant, where $TRM$ stands for
``both sides terminated''.  \citet{ShemerGSV19} give an adequacy
result of this form. (Their term is ``fairness''.  There is no standard term; we
take ``adequacy'' from~\citet{NagasamudramN21}.)  
\citet{ChurchillP0A19} formulate adequacy as verification conditions involving
their alignment invariants.  In our setting, adequacy is proved equationally
(Theorem~\ref{thm:allall}), raising the question whether the term
(\ref{eq:alignA}) can be derived from $\emb{ (g_0;a_0 + g_1;a_1 ...)^* }{
(g_0';a_0' + g_1';a_1' ...)^* }$.  Invariance of $L\lor R\lor J\lor TRM$ is
analogous to the side condition of the proof rule \rn{caWh}, and the equality of
terms $\emb{ (g_0;a_0 + g_1;a_1 ...)^* }{ (g_0';a_0' + g_1';a_1' ...)^* }$ and
(\ref{eq:alignA}) is analogous to how the expansion law (\ref{eq:caWhexpand}) is
used to prove \rn{caWh}.  We conjecture that the general equality can be proved,
in *-continuous BiKATs, just as we have done for (\ref{eq:caWhexpand}).  

The $L,R,J$ form discussed above is very general. 
In implementations, the conditions $L,R,J$ tend to be restricted to constraints supported by an efficient solver, and the same restrictions would be applicable in uses of BiKAT. 
Ignoring such restrictions, that general form seems as expressive as BiKAT.  
We are not aware of patterns that can be expressed in BiKAT but not by products represented as transition systems.

\subsection{Other Related Work}
\label{subsec:relwork}

We have covered many of the most related works; we now mention a few others.

\popladded{
In recent work \citet{DosualdoFD2022} describe a logic for hyper-triple composition (LHC) based on weakest preconditions that can decompose a hypersafety proof along the boundary of hyper tuples,
offering ways of combining multiple $k$-safety proofs with differing $k$s.
In contrast with BiKAT, LHC is a calculus based on weakest pre-condition rather than an equational system,
and supports only $\forall\forall$ pre/post $k$-safety properties.
There may be a connection between LHC-style decomposition and our proposed work on TriKAT discussed in Sect.~\ref{sec:trikat}. Both permit ways to combine two relational proofs that share some common program terms into an overall proof by correlating the common terms. However, we leave this investigation to future work. 
}

\popladded{
\citet{BartheEGGKM2019} discuss relational verification in a first order predicate logic in which program variables are represented as functions $v(i,tr)$ over a time step $i$ and trace identified by $tr$. The authors' encoding can express highly non-local relationships between traces such as equating the value of $v$ at the beginning of one trace with the value of $v$ at the end of another trace. 
This approach does not involve deriving an alignment, but rather enables correlating of arbitrary computation steps.
The encoding in FO exploits quantifiers available in first-order provers.
Although, in principle, the traces could be quantified existentially, the authors only discuss 
$\forall\forall$ properties non-interference and sensitivity without mention of quantifier alternation over traces.
} 

\citet{MurrayUnderApproxRHL} introduces a relational incorrectness logic for imperative programs,
inspired by incorrectness logic of \citet{OHearn2019}.  Using relational semantics, the judgment relates $c$ to $d$ for spec $R,S$ iff
$\all{\tau,\tau'}{ \tau R \tau' \imp \some{\sigma,\sigma'}{\sigma R \sigma' \land \sigma c \tau \land \sigma' d \tau' }}$.  So the postcondition is an underapproximation of the reachable pairs.
In a BiKAT over a KAT with havoc, this can be expressed 
as $\emb{\hav}{\hav};\dot{S} \leq \emb{\hav}{\hav};\dot{R};\emb{c}{d}$,
generalizing O'Hearn's KAT formulation of incorrectness~\cite[Sect.~5.3]{OHearn2019}.
\citet{ZhangAG22} investigated KATs with top for (unary) incorrectness logic.

Although KAT equations under commutativity hypotheses are undecidable, there are recent positive results for other classes of hypotheses~\cite{DoumaneKPP19,PousRW21}.
Synchronous KAT~\cite{WagemakerBKR019} has models based on strings-of-sets 
which can be interpreted as multiple simultaneous actions;
this could perhaps be used to model the step-by-step alignments of~\citet{KovacsSF13}
and~\citet{BNNN19a}.

Apropos $\forall\forall$ properties, 
\citet{TerauchiA2005} 
introduce the term 2-safety and 
describe a type system-based alignment used for secure information flow. Their rules (e.g.\ their Fig.~8) can be formulated so that the alignments are expressed in BiKAT, leading to a more equational algebraic derivation strategy of the non-interference property.
\citet{SousaD2016} describe Cartesian Hoare Logic for reasoning about $k$-safety of individual programs by alignment without explicit representation of a product program.
\iffalse
\citet{EilersMH18} described modular product programs (MPPs), 
in which the procedure call boundary is the unit of modularity for relational verification. Their work creates a $k$-way lock-step product program and interprets relational specifications in the unary product program.
BiKAT subsumes some aspects of MPP, including native support 
for relational specifications as bitests 
(and unary specifications as left/right embedded tests)
that can be algebraically manipulated and a wider class of alignments beyond MPPs' lock-step alignment,
and can reason about pairs of programs rather than two copies of one program. 
On the other hand, MPPs are automated
and have full support for procedures with parameters, whereas BiKAT currently only supports procedure calls as fixed atomic actions and hypotheses. 
\fi
\citet{EilersMH18} describe a $k$-way lock-step product 
encoding for single programs, that facilitates the use of $k$-safety procedure specifications.

Apropos $\forall\exists$ properties, 
\citet{LamportS21} use TLA+ as a logic for deductive reasoning about
such properties in the setting of temporal logic.
\citet{ClochardMP20} manually encode alignment products as programs 
in the Why3 deductive verification tool, including resolution of nondeterminacy to prove $\forall\exists$ properties.
\citet{BartheCK13} define left products for proving forward simulation,
in a formulation based on control flow graphs.
\citet{hawblitzelklr13} use the forward simulation property
(calling it \emph{relative termination}), in translation validation using relational summaries
and verification conditions.
Example~\ref{eg:paths} was drawn from \citet{BeutnerF22} who discuss
a more generalized possibilistic non-interference $\forall^k\exists^l$.

% FUTURE
%% In reasoning about program equivalence (or refinement) it is often desirable 
%% to \emph{transitively compose} judgments, e.g., between intermediate representations in a compiler pipeline.
%% \eric{a proposed major trimming:}
%% \changed{
%% The basic $\forall\forall$ property does not transitively compose, owing to termination-insensitivity.
%% Forward simulation does compose transitively (sometimes called vertical composition):
%% from $b\sep c:\aespec{P}{Q}$ and $c\sep d:\aespec{R}{S}$ 
%% it follows that $b\sep d:\aespec{(P;R)}{(Q;S)}$.
%% Equivalence and indistinguishability are examples of relations that are reflexive and transitive;
%% for such an $R$, from $b\sep c:\aespec{R}{R}$ and $c\sep d:\aespec{R}{R}$ 
%% it follows that $b\sep d:\aespec{R}{R}$.
%% Hawblitzl \emph{et al.}~\cite{hawblitzelklr13} use the forward simulation property, dubbed \emph{relative termination}, 
%% for transitive composition of relational procedure summaries in translation validation
%% based on verification conditions. 
%% }{
%% While the basic $\forall\forall$ property does not transitively compose, forward simulation does compose. Hawblitzl \emph{et al.}~\cite{hawblitzelklr13} use transitivity and forward simulation for translation validation.
%% }

None of the preceding works provide inference rules for $\forall\exists$ judgments.
The Iris-based relational logic~\cite{FruminKB18,GaeherEtal} does so, 
in particular a judgment for contextual refinement, in forward simulation form.
The logic has complex features catering for higher order concurrent programs.  
Prophecy variables have been added~\cite{FruminKB20} so backward simulation 
reasoning is available in some form.
It is unclear whether the simple forward and backward rules applicable to sequential programs
can be extracted from this framework.
\citet{MaillardHRM20} develop a 
%general framework,
relational dependent type theory
that accounts for core RHL rules encompassing a range of computational effects.
It is not clear that the framework facilitates manipulation of intricate data-dependent 
alignments together with simple first-order relational assertions as used 
in automated relational verification.

\poplremoved{
Numerous other works discuss automation of relational verification.
Pick \emph{et al.}~\cite{PickFG18} describe a technique that aligns conditional blocks (in addition to loops)
and exploits symmetries to reduce the verification burden.
Farzan and Vandikas~\cite{FarzanV2019} describe an approach to hypersafety verification of unary 
programs by discovering representative executions of a product program, whose correctness proofs
are sufficient to prove the overall property.
Unno \emph{et al.}~\cite{UnnoTK21} work with transition systems and 
use a constraint-solving approach to automatically discover 
a ``scheduler''---a form of alignment described as a function that directs
which element of the $k$-tuple (product of transition systems) should take the next step.
Badihi \emph{et al.}~\citep{Badihi20} describe ARDiff, using a combination of abstraction and refinement for automatically proving program equivalence.
Mordvinov and Fedyukovich~\citep{Mordvinov2019} work in the context of CHCs, and infer relational invariants.
}

\subsection{Future Work}

BiKAT provides a foundation for several interesting directions for future work
in addition to questions raised in Sects.~\ref{sec:RHL} and~\ref{sec:beyond}.
As already detailed above, we are optimistic about the outlook for automation.

There are several open questions about completeness.  We have proved soundness for
a number of forward and backward simulation rules, but have not investigated their completeness.
Completeness in the sense of ``true implies provable'' is problematic 
owing to the possibility to reduce the problem to unary Hoare logic via the self-composition rule.
A more relevant notion is alignment completeness, which so far was formulated only for 2-safety~\cite{NagasamudramN21}.
One may think that BiKAT provides a notion of ``regular alignment'', which could 
provide a yardstick to evalute alignment completeness of other systems.
However, in an applied BiKAT the interpretation of tests, in combination with conditional
alignments like Eqn.~(\ref{eq:caWhexpand}), expresses more than regular patterns.

\poplchanged{As for completeness of BiKAT axioms with respect to models, 
we note that straightforward extensions 
of the guarded string model (used by Kozen and Smith for completeness of KAT~\cite{Kozen1996}) do not validate LRC.}{%
What about completeness of the BiKAT axioms with respect to models? 
For KAT, \citet{Kozen1996} obtain completeness for relational models from a completeness result for a language model called guarded strings.
A \emph{guarded string} model is given by sets of primitive actions 
and primitive tests. 
% (For some purposes one restricts to finite sets.)
An \emph{atom} is a boolean valuation of primitive tests.
The model is the trace model where $\Sigma$ is the set of atoms and all sequences are admissible. 
The guarded string model has a canonical interpretation of its primitive tests and actions.
Being a trace model, it gives rise to a trace BiKAT.
One can then interpret primitive bitests as relations on atoms.
However, the trace BiKAT for guarded strings does not determine a canonical interpretation of bitests; nor does it validate LRC.
Several problems remain open:
Is BiKAT, or BiKAT with projections, complete for relational models? 
Is there a class of BiKATs where the underlying KAT is a guarded string model,
for which BiKAT is complete? 
}

\poplchanged{
General study of algebras of programs has a long history~\cite{Hoare:laws}
and remains active~\cite{HofnerPS19}.  Because KAT is about control structure it seems
particularly suited to describing alignment.  For reasoning, however, other 
algebraic structures are relevant, e.g., we have highlighted the relevance of relation 
algebra, as also done in the Coq library of Damien Pous for KAT~\cite{PousKATlib}.
Such settings may be helpful for further exploration of $\forall\exists$ properties
along the lines of TriKAT,
and for relational properties of nonterminating programs.
One specific question is whether some formulation of TriKAT can be used to obtain 
an algebraic proof to generalize Theorem~\ref{thm:sim:complete}.
}{
General study of algebras of programs has a long history~\cite{Hoare:laws}
and remains active~\cite{HofnerPS19}.  Because KAT is about control structure it seems
particularly suited to describing alignment.  For reasoning, however, other 
algebraic structures are relevant, e.g., Concurrent Kleene Algebra~\cite{HoareSMSZ16}
has an operator that 
can model spatial separation and it would be interesting to explore some notion of 
spatial locality in connection with alignment and BiKAT.
We have highlighted the relevance of relation 
algebra, as also done in the Coq library of Damien Pous for KAT~\cite{PousKATlib}.
Such settings may be helpful for further exploration of $\forall\exists$ properties
along the lines of TriKAT,
and for relational properties of nonterminating programs
(e.g., using $\omega$-algebra~\cite{Cohen00}).     
One specific question is whether some formulation of TriKAT can be used to obtain 
an algebraic proof to generalize Theorem~\ref{thm:sim:complete}.
}

% Future - Brotherston et al, Commutativity rule. 

\begin{acks}
	\popladded{
The authors would like to thank Anindya Banerjee, Lennart Beringer and Michael Greenberg for helpful discussions and the anonymous reviewers for their valuable feedback.
}

\popladded{
Authors Antonopoulos, Koskinen, Le, and Naumann were supported in part by 
the \grantsponsor{GSONR}{Office of Naval Research}{https://www.nre.navy.mil/} under Grant No.~\grantnum{GSONR}{N00014-17-1-2787}. 
Antonopoulos, Koskinen and Le were supported in part by \grantsponsor{GSNSF}{NSF}{https://www.nsf.gov} award \grantnum{GSNSF}{CCF-2106845}. 
Antonopoulos was supported in part by \grantsponsor{GSNSF}{NSF}{https://www.nsf.gov} award \grantnum{GSNSF}{CCF-2131476}.
Naumann, Nagasamudram, and Ngo were supported in part by \grantsponsor{GSNSF}{NSF}{https://www.nsf.gov} award \grantnum{GSNSF}{CNS-1718713}.
}
\end{acks}

% \vfill
% \pagebreak

\bibliographystyle{ACM-Reference-Format}
\bibliography{bib}

%%% -*-BibTeX-*-
%%% Do NOT edit. File created by BibTeX with style
%%% ACM-Reference-Format-Journals [18-Jan-2012].

\begin{thebibliography}{73}

%%% ====================================================================
%%% NOTE TO THE USER: you can override these defaults by providing
%%% customized versions of any of these macros before the \bibliography
%%% command.  Each of them MUST provide its own final punctuation,
%%% except for \shownote{}, \showDOI{}, and \showURL{}.  The latter two
%%% do not use final punctuation, in order to avoid confusing it with
%%% the Web address.
%%%
%%% To suppress output of a particular field, define its macro to expand
%%% to an empty string, or better, \unskip, like this:
%%%
%%% \newcommand{\showDOI}[1]{\unskip}   % LaTeX syntax
%%%
%%% \def \showDOI #1{\unskip}           % plain TeX syntax
%%%
%%% ====================================================================

\ifx \showCODEN    \undefined \def \showCODEN     #1{\unskip}     \fi
\ifx \showDOI      \undefined \def \showDOI       #1{#1}\fi
\ifx \showISBNx    \undefined \def \showISBNx     #1{\unskip}     \fi
\ifx \showISBNxiii \undefined \def \showISBNxiii  #1{\unskip}     \fi
\ifx \showISSN     \undefined \def \showISSN      #1{\unskip}     \fi
\ifx \showLCCN     \undefined \def \showLCCN      #1{\unskip}     \fi
\ifx \shownote     \undefined \def \shownote      #1{#1}          \fi
\ifx \showarticletitle \undefined \def \showarticletitle #1{#1}   \fi
\ifx \showURL      \undefined \def \showURL       {\relax}        \fi
% The following commands are used for tagged output and should be
% invisible to TeX
\providecommand\bibfield[2]{#2}
\providecommand\bibinfo[2]{#2}
\providecommand\natexlab[1]{#1}
\providecommand\showeprint[2][]{arXiv:#2}

\bibitem[\protect\citeauthoryear{Abadi and Lamport}{Abadi and Lamport}{1988}]%
        {AbadiLamport88}
\bibfield{author}{\bibinfo{person}{Mart\'{i}n Abadi} {and}
  \bibinfo{person}{Leslie Lamport}.} \bibinfo{year}{1988}\natexlab{}.
\newblock \showarticletitle{The Existence of Refinement Mappings}. In
  \bibinfo{booktitle}{\emph{Proceedings of LICS}}.
\newblock


\bibitem[\protect\citeauthoryear{Antonopoulos, Koskinen, and Le}{Antonopoulos
  et~al\mbox{.}}{2019}]%
        {AntonopoulosKL2019}
\bibfield{author}{\bibinfo{person}{Timos Antonopoulos}, \bibinfo{person}{Eric
  Koskinen}, {and} \bibinfo{person}{Ton~Chanh Le}.}
  \bibinfo{year}{2019}\natexlab{}.
\newblock \showarticletitle{Specification and inference of trace refinement
  relations}.
\newblock \bibinfo{journal}{\emph{Proceedings of the ACM on Programming
  Languages}} \bibinfo{volume}{3}, \bibinfo{number}{OOPSLA}
  (\bibinfo{year}{2019}), \bibinfo{pages}{1--30}.
\newblock


\bibitem[\protect\citeauthoryear{Antonopoulos, Koskinen, Le, Nagasamudram,
  Naumann, and Ngo}{Antonopoulos et~al\mbox{.}}{2022a}]%
        {BiKATarxiv}
\bibfield{author}{\bibinfo{person}{Timos Antonopoulos}, \bibinfo{person}{Eric
  Koskinen}, \bibinfo{person}{Ton~Chanh Le}, \bibinfo{person}{Ramana
  Nagasamudram}, \bibinfo{person}{David~A. Naumann}, {and}
  \bibinfo{person}{Minh Ngo}.} \bibinfo{year}{2022}\natexlab{a}.
\newblock \showarticletitle{An algebra of alignment for relational verification
  (extended version)}.
\newblock \bibinfo{journal}{\emph{CoRR}}  \bibinfo{volume}{abs/2202.04278}
  (\bibinfo{year}{2022}).
\newblock
\showeprint[arXiv]{2202.04278}
\urldef\tempurl%
\url{https://arxiv.org/abs/2202.04278}
\showURL{%
\tempurl}


\bibitem[\protect\citeauthoryear{Antonopoulos, Koskinen, Le, Nagasamudram,
  Naumann, and Ngo}{Antonopoulos et~al\mbox{.}}{2022b}]%
        {BiKATv}
\bibfield{author}{\bibinfo{person}{Timos Antonopoulos}, \bibinfo{person}{Eric
  Koskinen}, \bibinfo{person}{Ton~Chanh Le}, \bibinfo{person}{Ramana
  Nagasamudram}, \bibinfo{person}{David~A. Naumann}, {and}
  \bibinfo{person}{Minh Ngo}.} \bibinfo{year}{2022}\natexlab{b}.
\newblock \bibinfo{booktitle}{\emph{{An algebra of alignment for relational
  verification (artifact)}}}.
\newblock
\urldef\tempurl%
\url{https://doi.org/10.5281/zenodo.7144067}
\showDOI{\tempurl}


\bibitem[\protect\citeauthoryear{Apt, de~Boer, and Olderog}{Apt
  et~al\mbox{.}}{2009}]%
        {AptOld3}
\bibfield{author}{\bibinfo{person}{Krzysztof~R. Apt}, \bibinfo{person}{Frank~S.
  de Boer}, {and} \bibinfo{person}{Ernst-R\"{u}diger Olderog}.}
  \bibinfo{year}{2009}\natexlab{}.
\newblock \bibinfo{booktitle}{\emph{Verification of Sequential and Concurrent
  Programs} (\bibinfo{edition}{3} ed.)}.
\newblock \bibinfo{publisher}{Springer}.
\newblock
\urldef\tempurl%
\url{https://doi.org/10.1007/978-1-84882-745-5}
\showDOI{\tempurl}


\bibitem[\protect\citeauthoryear{Badihi, Akinotcho, Li, and Rubin}{Badihi
  et~al\mbox{.}}{2020}]%
        {Badihi20}
\bibfield{author}{\bibinfo{person}{Sahar Badihi}, \bibinfo{person}{Faridah
  Akinotcho}, \bibinfo{person}{Yi Li}, {and} \bibinfo{person}{Julia Rubin}.}
  \bibinfo{year}{2020}\natexlab{}.
\newblock \showarticletitle{ARDiff: Scaling Program Equivalence Checking via
  Iterative Abstraction and Refinement of Common Code}. In
  \bibinfo{booktitle}{\emph{Joint Meeting on European Software Engineering
  Conference and Symposium on the Foundations of Software Engineering}}.
  \bibinfo{pages}{13–24}.
\newblock
\urldef\tempurl%
\url{https://doi.org/10.1145/3368089.3409757}
\showDOI{\tempurl}


\bibitem[\protect\citeauthoryear{Banerjee, Nagasamudram, Nikouei, and
  Naumann}{Banerjee et~al\mbox{.}}{2022}]%
        {BNNN19a}
\bibfield{author}{\bibinfo{person}{Anindya Banerjee}, \bibinfo{person}{Ramana
  Nagasamudram}, \bibinfo{person}{Mohammad Nikouei}, {and}
  \bibinfo{person}{David~A. Naumann}.} \bibinfo{year}{2022}\natexlab{}.
\newblock \showarticletitle{A Relational Program Logic with Data Abstraction
  and Dynamic Framing}.
\newblock \bibinfo{journal}{\emph{ACM Transactions on Programming Languages and
  Systems}} (\bibinfo{year}{2022}).
\newblock
\newblock
\shownote{Accepted for publication. Available as
  \url{http://arxiv.org/abs/1910.14560}}.


\bibitem[\protect\citeauthoryear{Banerjee, Naumann, and Nikouei}{Banerjee
  et~al\mbox{.}}{2016}]%
        {BanerjeeNN16}
\bibfield{author}{\bibinfo{person}{Anindya Banerjee}, \bibinfo{person}{David~A.
  Naumann}, {and} \bibinfo{person}{Mohammad Nikouei}.}
  \bibinfo{year}{2016}\natexlab{}.
\newblock \showarticletitle{Relational Logic with Framing and Hypotheses}. In
  \bibinfo{booktitle}{\emph{36th {IARCS} Annual Conference on Foundations of
  Software Technology and Theoretical Computer Science}}.
\newblock
\newblock
\shownote{Long version at \url{http://arxiv.org/abs/1611.08992}}.


\bibitem[\protect\citeauthoryear{Barthe, Crespo, and Kunz}{Barthe
  et~al\mbox{.}}{2011a}]%
        {BartheCK-FM11}
\bibfield{author}{\bibinfo{person}{Gilles Barthe}, \bibinfo{person}{Juan~Manuel
  Crespo}, {and} \bibinfo{person}{C{\'e}sar Kunz}.}
  \bibinfo{year}{2011}\natexlab{a}.
\newblock \showarticletitle{Relational Verification Using Product Programs}. In
  \bibinfo{booktitle}{\emph{Formal Methods}}.
\newblock


\bibitem[\protect\citeauthoryear{Barthe, Crespo, and Kunz}{Barthe
  et~al\mbox{.}}{2013}]%
        {BartheCK13}
\bibfield{author}{\bibinfo{person}{Gilles Barthe}, \bibinfo{person}{Juan~Manuel
  Crespo}, {and} \bibinfo{person}{C{\'{e}}sar Kunz}.}
  \bibinfo{year}{2013}\natexlab{}.
\newblock \showarticletitle{Beyond 2-Safety: Asymmetric Product Programs for
  Relational Program Verification}. In \bibinfo{booktitle}{\emph{Logical
  Foundations of Computer Science {(LFCS)}}} \emph{(\bibinfo{series}{LNCS},
  Vol.~\bibinfo{volume}{7734})}. \bibinfo{pages}{29--43}.
\newblock


\bibitem[\protect\citeauthoryear{Barthe, Crespo, and Kunz}{Barthe
  et~al\mbox{.}}{2016}]%
        {BartheCK16}
\bibfield{author}{\bibinfo{person}{Gilles Barthe}, \bibinfo{person}{Juan~Manuel
  Crespo}, {and} \bibinfo{person}{C{\'e}sar Kunz}.}
  \bibinfo{year}{2016}\natexlab{}.
\newblock \showarticletitle{Product Programs and Relational Program Logics}.
\newblock \bibinfo{journal}{\emph{J. Logical and Algebraic Methods in
  Programming}} \bibinfo{volume}{85}, \bibinfo{number}{5}
  (\bibinfo{year}{2016}), \bibinfo{pages}{847--859}.
\newblock


\bibitem[\protect\citeauthoryear{Barthe, D'Argenio, and Rezk}{Barthe
  et~al\mbox{.}}{2004}]%
        {BartheDArgenioRezk}
\bibfield{author}{\bibinfo{person}{Gilles Barthe}, \bibinfo{person}{Pedro~R.
  D'Argenio}, {and} \bibinfo{person}{Tamara Rezk}.}
  \bibinfo{year}{2004}\natexlab{}.
\newblock \showarticletitle{Secure Information Flow by Self-Composition}. In
  \bibinfo{booktitle}{\emph{{IEEE CSFW}}}.
\newblock
\newblock
\shownote{See extended version~\cite{BartheDR11}}.


\bibitem[\protect\citeauthoryear{Barthe, D'Argenio, and Rezk}{Barthe
  et~al\mbox{.}}{2011b}]%
        {BartheDR11}
\bibfield{author}{\bibinfo{person}{Gilles Barthe}, \bibinfo{person}{Pedro~R.
  D'Argenio}, {and} \bibinfo{person}{Tamara Rezk}.}
  \bibinfo{year}{2011}\natexlab{b}.
\newblock \showarticletitle{Secure information flow by self-composition}.
\newblock \bibinfo{journal}{\emph{Math. Struct. Comput. Sci.}}
  \bibinfo{volume}{21}, \bibinfo{number}{6} (\bibinfo{year}{2011}).
\newblock


\bibitem[\protect\citeauthoryear{Barthe, Eilers, Georgiou, Gleiss, Kov{\'a}cs,
  and Maffei}{Barthe et~al\mbox{.}}{2019}]%
        {BartheEGGKM2019}
\bibfield{author}{\bibinfo{person}{Gilles Barthe}, \bibinfo{person}{Renate
  Eilers}, \bibinfo{person}{Pamina Georgiou}, \bibinfo{person}{Bernhard
  Gleiss}, \bibinfo{person}{Laura Kov{\'a}cs}, {and} \bibinfo{person}{Matteo
  Maffei}.} \bibinfo{year}{2019}\natexlab{}.
\newblock \showarticletitle{Verifying relational properties using trace logic}.
  In \bibinfo{booktitle}{\emph{2019 Formal Methods in Computer Aided Design
  (FMCAD)}}. \bibinfo{pages}{170--178}.
\newblock


\bibitem[\protect\citeauthoryear{Barthe, Gr{\'{e}}goire, Hsu, and Strub}{Barthe
  et~al\mbox{.}}{2017}]%
        {BartheGHS17}
\bibfield{author}{\bibinfo{person}{Gilles Barthe}, \bibinfo{person}{Benjamin
  Gr{\'{e}}goire}, \bibinfo{person}{Justin Hsu}, {and}
  \bibinfo{person}{Pierre{-}Yves Strub}.} \bibinfo{year}{2017}\natexlab{}.
\newblock \showarticletitle{Coupling proofs are probabilistic product
  programs}. In \bibinfo{booktitle}{\emph{ACM Symposium on Principles of
  Programming Languages}}. \bibinfo{pages}{161--174}.
\newblock
\urldef\tempurl%
\url{https://doi.org/10.1145/3009837.3009896}
\showDOI{\tempurl}


\bibitem[\protect\citeauthoryear{Beckert and Ulbrich}{Beckert and
  Ulbrich}{2018}]%
        {BeckertU18short}
\bibfield{author}{\bibinfo{person}{Bernhard Beckert} {and}
  \bibinfo{person}{Mattias Ulbrich}.} \bibinfo{year}{2018}\natexlab{}.
\newblock \showarticletitle{Trends in relational program verification}.
\newblock In \bibinfo{booktitle}{\emph{Principled Software Development}}.
  \bibinfo{publisher}{Springer}, \bibinfo{pages}{41--58}.
\newblock


\bibitem[\protect\citeauthoryear{Benton}{Benton}{2004}]%
        {Benton:popl04}
\bibfield{author}{\bibinfo{person}{N. Benton}.}
  \bibinfo{year}{2004}\natexlab{}.
\newblock \showarticletitle{Simple Relational Correctness Proofs for Static
  Analyses and Program Transformations}. In \bibinfo{booktitle}{\emph{{POPL}}}.
  \bibinfo{pages}{14--25}.
\newblock


\bibitem[\protect\citeauthoryear{Beringer}{Beringer}{2011}]%
        {Beringer11}
\bibfield{author}{\bibinfo{person}{Lennart Beringer}.}
  \bibinfo{year}{2011}\natexlab{}.
\newblock \showarticletitle{Relational Decomposition}. In
  \bibinfo{booktitle}{\emph{Interactive Theorem Proving {(ITP)}}}
  \emph{(\bibinfo{series}{LNCS}, Vol.~\bibinfo{volume}{6898})}.
\newblock


\bibitem[\protect\citeauthoryear{Beutner and Finkbeiner}{Beutner and
  Finkbeiner}{2022}]%
        {BeutnerF22}
\bibfield{author}{\bibinfo{person}{Raven Beutner} {and} \bibinfo{person}{Bernd
  Finkbeiner}.} \bibinfo{year}{2022}\natexlab{}.
\newblock \showarticletitle{Software Verification of Hyperproperties Beyond
  k-Safety}. In \bibinfo{booktitle}{\emph{Computer Aided Verification}}.
  \bibinfo{pages}{341--362}.
\newblock
\urldef\tempurl%
\url{https://doi.org/10.1007/978-3-031-13185-1\_17}
\showDOI{\tempurl}


\bibitem[\protect\citeauthoryear{Blatter, Kosmatov, and Prevosto}{Blatter
  et~al\mbox{.}}{2022}]%
        {Blatter22}
\bibfield{author}{\bibinfo{person}{Lionel Blatter}, \bibinfo{person}{Nikolai
  Kosmatov}, {and} \bibinfo{person}{Pascale Prevosto, Virgileand Le~Gall}.}
  \bibinfo{year}{2022}\natexlab{}.
\newblock \showarticletitle{Certified Verification of Relational Properties}.
  In \bibinfo{booktitle}{\emph{Integrated Formal Methods}}.
  \bibinfo{pages}{86--105}.
\newblock
\showISBNx{978-3-031-07727-2}


\bibitem[\protect\citeauthoryear{Braibant and Pous}{Braibant and Pous}{2010}]%
        {BraibantP10}
\bibfield{author}{\bibinfo{person}{Thomas Braibant} {and}
  \bibinfo{person}{Damien Pous}.} \bibinfo{year}{2010}\natexlab{}.
\newblock \showarticletitle{An efficient {Coq} tactic for deciding {Kleene}
  algebras}. In \bibinfo{booktitle}{\emph{International Conference on
  Interactive Theorem Proving}}. \bibinfo{pages}{163--178}.
\newblock


\bibitem[\protect\citeauthoryear{Churchill, Padon, Sharma, and Aiken}{Churchill
  et~al\mbox{.}}{2019}]%
        {ChurchillP0A19}
\bibfield{author}{\bibinfo{person}{Berkeley~R. Churchill},
  \bibinfo{person}{Oded Padon}, \bibinfo{person}{Rahul Sharma}, {and}
  \bibinfo{person}{Alex Aiken}.} \bibinfo{year}{2019}\natexlab{}.
\newblock \showarticletitle{Semantic program alignment for equivalence
  checking}. In \bibinfo{booktitle}{\emph{{PLDI}}}.
\newblock


\bibitem[\protect\citeauthoryear{Clarkson, Finkbeiner, Koleini, Micinski, Rabe,
  and S{\'{a}}nchez}{Clarkson et~al\mbox{.}}{2014}]%
        {ClarksonFKMRS14}
\bibfield{author}{\bibinfo{person}{Michael~R. Clarkson}, \bibinfo{person}{Bernd
  Finkbeiner}, \bibinfo{person}{Masoud Koleini}, \bibinfo{person}{Kristopher~K.
  Micinski}, \bibinfo{person}{Markus~N. Rabe}, {and}
  \bibinfo{person}{C{\'{e}}sar S{\'{a}}nchez}.}
  \bibinfo{year}{2014}\natexlab{}.
\newblock \showarticletitle{Temporal Logics for Hyperproperties}. In
  \bibinfo{booktitle}{\emph{Principles of Security and Trust ({POST})}}
  \emph{(\bibinfo{series}{LNCS}, Vol.~\bibinfo{volume}{8414})}.
  \bibinfo{pages}{265--284}.
\newblock


\bibitem[\protect\citeauthoryear{Clarkson and Schneider}{Clarkson and
  Schneider}{2010}]%
        {ClarksonSchneiderHyper10}
\bibfield{author}{\bibinfo{person}{Michael~R. Clarkson} {and}
  \bibinfo{person}{Fred~B. Schneider}.} \bibinfo{year}{2010}\natexlab{}.
\newblock \showarticletitle{Hyperproperties}.
\newblock \bibinfo{journal}{\emph{Journal of Computer Security}}
  \bibinfo{volume}{18}, \bibinfo{number}{6} (\bibinfo{year}{2010}),
  \bibinfo{pages}{1157--1210}.
\newblock


\bibitem[\protect\citeauthoryear{Clochard, March{\'{e}}, and
  Paskevich}{Clochard et~al\mbox{.}}{2020}]%
        {ClochardMP20}
\bibfield{author}{\bibinfo{person}{Martin Clochard}, \bibinfo{person}{Claude
  March{\'{e}}}, {and} \bibinfo{person}{Andrei Paskevich}.}
  \bibinfo{year}{2020}\natexlab{}.
\newblock \showarticletitle{Deductive Verification with Ghost Monitors}.
\newblock \bibinfo{journal}{\emph{Proc. {ACM} Program. Lang.}}
  \bibinfo{volume}{4}, \bibinfo{number}{{POPL}} (\bibinfo{year}{2020}).
\newblock


\bibitem[\protect\citeauthoryear{Cohen}{Cohen}{2000}]%
        {Cohen00}
\bibfield{author}{\bibinfo{person}{Ernie Cohen}.}
  \bibinfo{year}{2000}\natexlab{}.
\newblock \showarticletitle{Separation and Reduction}. In
  \bibinfo{booktitle}{\emph{Mathematics of Program Construction}}
  \emph{(\bibinfo{series}{LNCS}, Vol.~\bibinfo{volume}{1837})}.
  \bibinfo{pages}{45--59}.
\newblock
\urldef\tempurl%
\url{https://doi.org/10.1007/10722010\_4}
\showDOI{\tempurl}


\bibitem[\protect\citeauthoryear{de~Roever, de~Boer, Hannemann, Hooman,
  Lakhnech, Poel, and Zwiers}{de~Roever et~al\mbox{.}}{2001}]%
        {Concur:deRoever}
\bibfield{author}{\bibinfo{person}{Willem-Paul de Roever},
  \bibinfo{person}{Frank de Boer}, \bibinfo{person}{Ulrich Hannemann},
  \bibinfo{person}{Jozef Hooman}, \bibinfo{person}{Yassine Lakhnech},
  \bibinfo{person}{Mannes Poel}, {and} \bibinfo{person}{Job Zwiers}.}
  \bibinfo{year}{2001}\natexlab{}.
\newblock \bibinfo{booktitle}{\emph{Concurrency Verification: Introduction to
  Compositional and Noncompositional Methods}}.
\newblock \bibinfo{publisher}{Cambridge University}.
\newblock


\bibitem[\protect\citeauthoryear{de~Roever and Engelhardt}{de~Roever and
  Engelhardt}{1998}]%
        {deRdataref}
\bibfield{author}{\bibinfo{person}{Willem-Paul de Roever} {and}
  \bibinfo{person}{Kai Engelhardt}.} \bibinfo{year}{1998}\natexlab{}.
\newblock \bibinfo{booktitle}{\emph{Data Refinement: Model-Oriented Proof
  Methods and their Comparison}}.
\newblock \bibinfo{publisher}{Cambridge University Press}.
\newblock


\bibitem[\protect\citeauthoryear{Doumane, Kuperberg, Pous, and Pradic}{Doumane
  et~al\mbox{.}}{2019}]%
        {DoumaneKPP19}
\bibfield{author}{\bibinfo{person}{Amina Doumane}, \bibinfo{person}{Denis
  Kuperberg}, \bibinfo{person}{Damien Pous}, {and} \bibinfo{person}{Pierre
  Pradic}.} \bibinfo{year}{2019}\natexlab{}.
\newblock \showarticletitle{Kleene Algebra with Hypotheses}. In
  \bibinfo{booktitle}{\emph{Foundations of Software Science and Computation
  Structures ({FOSSACS})}} \emph{(\bibinfo{series}{LNCS},
  Vol.~\bibinfo{volume}{11425})}. \bibinfo{pages}{207--223}.
\newblock


\bibitem[\protect\citeauthoryear{D’Osualdo, Farzan, and Dreyer}{D’Osualdo
  et~al\mbox{.}}{2022}]%
        {DosualdoFD2022}
\bibfield{author}{\bibinfo{person}{Emanuele D’Osualdo},
  \bibinfo{person}{Azadeh Farzan}, {and} \bibinfo{person}{Derek Dreyer}.}
  \bibinfo{year}{2022}\natexlab{}.
\newblock \showarticletitle{Proving Hypersafety Compositionally}.
\newblock \bibinfo{journal}{\emph{Proc. ACM Program. Lang.}}
  \bibinfo{volume}{6}, \bibinfo{number}{OOPSLA2}, Article
  \bibinfo{articleno}{135} (\bibinfo{year}{2022}),
  \bibinfo{numpages}{26}~pages.
\newblock
\urldef\tempurl%
\url{https://doi.org/10.1145/3563298}
\showDOI{\tempurl}


\bibitem[\protect\citeauthoryear{Eilers, M{\"{u}}ller, and Hitz}{Eilers
  et~al\mbox{.}}{2018}]%
        {EilersMH18}
\bibfield{author}{\bibinfo{person}{Marco Eilers}, \bibinfo{person}{Peter
  M{\"{u}}ller}, {and} \bibinfo{person}{Samuel Hitz}.}
  \bibinfo{year}{2018}\natexlab{}.
\newblock \showarticletitle{Modular Product Programs}. In
  \bibinfo{booktitle}{\emph{European Symposium on Programming}}.
\newblock


\bibitem[\protect\citeauthoryear{Farzan and Vandikas}{Farzan and
  Vandikas}{2019}]%
        {FarzanV2019}
\bibfield{author}{\bibinfo{person}{Azadeh Farzan} {and}
  \bibinfo{person}{Anthony Vandikas}.} \bibinfo{year}{2019}\natexlab{}.
\newblock \showarticletitle{Automated hypersafety verification}. In
  \bibinfo{booktitle}{\emph{Computer Aided Verification}}.
  \bibinfo{pages}{200--218}.
\newblock


\bibitem[\protect\citeauthoryear{Floyd}{Floyd}{1967}]%
        {Floyd67}
\bibfield{author}{\bibinfo{person}{Robert Floyd}.}
  \bibinfo{year}{1967}\natexlab{}.
\newblock \showarticletitle{Assigning Meaning to Programs}. In
  \bibinfo{booktitle}{\emph{Symp. on Applied Math. 19, Math. Aspects of Comp.
  Sci.}} \bibinfo{publisher}{Amer. Math. Soc.}, \bibinfo{pages}{19--32}.
\newblock


\bibitem[\protect\citeauthoryear{Francez}{Francez}{1983}]%
        {Francez83}
\bibfield{author}{\bibinfo{person}{Nissim Francez}.}
  \bibinfo{year}{1983}\natexlab{}.
\newblock \showarticletitle{Product Properties and Their Direct Verification}.
\newblock \bibinfo{journal}{\emph{Acta Informatica}}  \bibinfo{volume}{20}
  (\bibinfo{year}{1983}), \bibinfo{pages}{329--344}.
\newblock


\bibitem[\protect\citeauthoryear{Freyd and Scedrov}{Freyd and Scedrov}{1990}]%
        {Freyd:Scedrov}
\bibfield{author}{\bibinfo{person}{Peter~J. Freyd} {and} \bibinfo{person}{Andre
  Scedrov}.} \bibinfo{year}{1990}\natexlab{}.
\newblock \bibinfo{booktitle}{\emph{Categories, Allegories}}.
\newblock \bibinfo{publisher}{North-Holland}.
\newblock


\bibitem[\protect\citeauthoryear{Frumin, Krebbers, and Birkedal}{Frumin
  et~al\mbox{.}}{2018}]%
        {FruminKB18}
\bibfield{author}{\bibinfo{person}{Dan Frumin}, \bibinfo{person}{Robbert
  Krebbers}, {and} \bibinfo{person}{Lars Birkedal}.}
  \bibinfo{year}{2018}\natexlab{}.
\newblock \showarticletitle{{ReLoC}: {A} Mechanised Relational Logic for
  Fine-Grained Concurrency}. In \bibinfo{booktitle}{\emph{IEEE Symp. on Logic
  in Computer Science}}. \bibinfo{pages}{442--451}.
\newblock


\bibitem[\protect\citeauthoryear{Frumin, Krebbers, and Birkedal}{Frumin
  et~al\mbox{.}}{2020}]%
        {FruminKB20}
\bibfield{author}{\bibinfo{person}{Dan Frumin}, \bibinfo{person}{Robbert
  Krebbers}, {and} \bibinfo{person}{Lars Birkedal}.}
  \bibinfo{year}{2020}\natexlab{}.
\newblock \showarticletitle{{ReLoC} Reloaded: {A} Mechanized Relational Logic
  for Fine-Grained Concurrency and Logical Atomicity}.
\newblock \bibinfo{journal}{\emph{CoRR}}  \bibinfo{volume}{abs/2006.13635}
  (\bibinfo{year}{2020}).
\newblock
\showeprint[arXiv]{2006.13635}
\urldef\tempurl%
\url{https://arxiv.org/abs/2006.13635}
\showURL{%
\tempurl}


\bibitem[\protect\citeauthoryear{Godlin and Strichman}{Godlin and
  Strichman}{2008}]%
        {GodlinS08}
\bibfield{author}{\bibinfo{person}{Benny Godlin} {and} \bibinfo{person}{Ofer
  Strichman}.} \bibinfo{year}{2008}\natexlab{}.
\newblock \showarticletitle{Inference rules for proving the equivalence of
  recursive procedures}.
\newblock \bibinfo{journal}{\emph{Acta Inf.}} \bibinfo{volume}{45},
  \bibinfo{number}{6} (\bibinfo{year}{2008}), \bibinfo{pages}{403--439}.
\newblock


\bibitem[\protect\citeauthoryear{Goyal, Azeem, Madhukar, and Venkatesh}{Goyal
  et~al\mbox{.}}{2021}]%
        {Goyal2021}
\bibfield{author}{\bibinfo{person}{Manish Goyal}, \bibinfo{person}{Muqsit
  Azeem}, \bibinfo{person}{Kumar Madhukar}, {and} \bibinfo{person}{R.
  Venkatesh}.} \bibinfo{year}{2021}\natexlab{}.
\newblock \bibinfo{title}{Direct Construction of Program Alignment Automata for
  Equivalence Checking}.
\newblock
\newblock
\urldef\tempurl%
\url{https://doi.org/10.48550/ARXIV.2109.01864}
\showDOI{\tempurl}


\bibitem[\protect\citeauthoryear{Greenberg, Beckett, and Campbell}{Greenberg
  et~al\mbox{.}}{2022}]%
        {BeckettCG22}
\bibfield{author}{\bibinfo{person}{Michael Greenberg}, \bibinfo{person}{Ryan
  Beckett}, {and} \bibinfo{person}{Eric~Hayden Campbell}.}
  \bibinfo{year}{2022}\natexlab{}.
\newblock \showarticletitle{Kleene algebra modulo theories: a framework for
  concrete KATs}. In \bibinfo{booktitle}{\emph{{PLDI}}}.
  \bibinfo{pages}{594--608}.
\newblock
\urldef\tempurl%
\url{https://doi.org/10.1145/3519939.3523722}
\showDOI{\tempurl}


\bibitem[\protect\citeauthoryear{Gäher, Sammler, Spies, Jung, Dang, Krebbers,
  Kang, and Dreyer}{Gäher et~al\mbox{.}}{2022}]%
        {GaeherEtal}
\bibfield{author}{\bibinfo{person}{Lennard Gäher}, \bibinfo{person}{Michael
  Sammler}, \bibinfo{person}{Simon Spies}, \bibinfo{person}{Ralf Jung},
  \bibinfo{person}{Hoang-Hai Dang}, \bibinfo{person}{Robbert Krebbers},
  \bibinfo{person}{Jeehoon Kang}, {and} \bibinfo{person}{Derek Dreyer}.}
  \bibinfo{year}{2022}\natexlab{}.
\newblock \showarticletitle{Simuliris: a separation logic framework for
  verifying concurrent program optimizations}.
\newblock \bibinfo{journal}{\emph{Proc. {ACM} Program. Lang.}}
  \bibinfo{volume}{6}, \bibinfo{number}{{POPL}} (\bibinfo{year}{2022}).
\newblock


\bibitem[\protect\citeauthoryear{Hawblitzel, Kawaguchi, Lahiri, and
  Reb{\^{e}}lo}{Hawblitzel et~al\mbox{.}}{2013}]%
        {hawblitzelklr13}
\bibfield{author}{\bibinfo{person}{Chris Hawblitzel}, \bibinfo{person}{Ming
  Kawaguchi}, \bibinfo{person}{Shuvendu~K. Lahiri}, {and}
  \bibinfo{person}{Henrique Reb{\^{e}}lo}.} \bibinfo{year}{2013}\natexlab{}.
\newblock \showarticletitle{Towards Modularly Comparing Programs Using
  Automated Theorem Provers}. In \bibinfo{booktitle}{\emph{{CADE}}}.
  \bibinfo{pages}{282--299}.
\newblock


\bibitem[\protect\citeauthoryear{Hoare, Hayes, Jifeng, Morgan, Roscoe, Sanders,
  Sorensen, Spivey, and Sufrin}{Hoare et~al\mbox{.}}{1987}]%
        {Hoare:laws}
\bibfield{author}{\bibinfo{person}{C.~A.~R. Hoare}, \bibinfo{person}{I.J.
  Hayes}, \bibinfo{person}{He Jifeng}, \bibinfo{person}{C.C. Morgan},
  \bibinfo{person}{A.W. Roscoe}, \bibinfo{person}{J.W. Sanders},
  \bibinfo{person}{I.H. Sorensen}, \bibinfo{person}{J.M. Spivey}, {and}
  \bibinfo{person}{B.A. Sufrin}.} \bibinfo{year}{1987}\natexlab{}.
\newblock \showarticletitle{Laws of Programming}.
\newblock \bibinfo{journal}{\emph{Commun. ACM}}  \bibinfo{volume}{30}
  (\bibinfo{year}{1987}), \bibinfo{pages}{672--686,770}.
\newblock


\bibitem[\protect\citeauthoryear{Hoare, van Staden, M{\"{o}}ller, Struth, and
  Zhu}{Hoare et~al\mbox{.}}{2016}]%
        {HoareSMSZ16}
\bibfield{author}{\bibinfo{person}{Tony Hoare}, \bibinfo{person}{Stephan van
  Staden}, \bibinfo{person}{Bernhard M{\"{o}}ller}, \bibinfo{person}{Georg
  Struth}, {and} \bibinfo{person}{Huibiao Zhu}.}
  \bibinfo{year}{2016}\natexlab{}.
\newblock \showarticletitle{Developments in concurrent {Kleene} algebra}.
\newblock \bibinfo{journal}{\emph{J. Log. Algebraic Methods Program.}}
  \bibinfo{volume}{85}, \bibinfo{number}{4} (\bibinfo{year}{2016}),
  \bibinfo{pages}{617--636}.
\newblock
\urldef\tempurl%
\url{https://doi.org/10.1016/j.jlamp.2015.09.012}
\showDOI{\tempurl}


\bibitem[\protect\citeauthoryear{H{\"{o}}fner, Pous, and Struth}{H{\"{o}}fner
  et~al\mbox{.}}{2019}]%
        {HofnerPS19}
\bibfield{author}{\bibinfo{person}{Peter H{\"{o}}fner}, \bibinfo{person}{Damien
  Pous}, {and} \bibinfo{person}{Georg Struth}.}
  \bibinfo{year}{2019}\natexlab{}.
\newblock \showarticletitle{Relational and algebraic methods in computer
  science}.
\newblock \bibinfo{journal}{\emph{J. Log. Algebraic Methods Program.}}
  \bibinfo{volume}{106} (\bibinfo{year}{2019}), \bibinfo{pages}{198--199}.
\newblock
\urldef\tempurl%
\url{https://doi.org/10.1016/j.jlamp.2019.05.005}
\showDOI{\tempurl}


\bibitem[\protect\citeauthoryear{Kov{\'a}cs, Seidl, and Finkbeiner}{Kov{\'a}cs
  et~al\mbox{.}}{2013}]%
        {KovacsSF13}
\bibfield{author}{\bibinfo{person}{M{\'a}t{\'e} Kov{\'a}cs},
  \bibinfo{person}{Helmut Seidl}, {and} \bibinfo{person}{Bernd Finkbeiner}.}
  \bibinfo{year}{2013}\natexlab{}.
\newblock \showarticletitle{Relational abstract interpretation for the
  verification of 2-hypersafety properties}. In \bibinfo{booktitle}{\emph{ACM
  Computer and Communications Security}}.
\newblock


\bibitem[\protect\citeauthoryear{Kozen}{Kozen}{1997}]%
        {Kozen1997}
\bibfield{author}{\bibinfo{person}{Dexter Kozen}.}
  \bibinfo{year}{1997}\natexlab{}.
\newblock \showarticletitle{Kleene algebra with tests}.
\newblock \bibinfo{journal}{\emph{ACM Transactions on Programming Languages and
  Systems}} \bibinfo{volume}{19}, \bibinfo{number}{3} (\bibinfo{year}{1997}),
  \bibinfo{pages}{427--443}.
\newblock


\bibitem[\protect\citeauthoryear{Kozen}{Kozen}{2000}]%
        {Kozen00}
\bibfield{author}{\bibinfo{person}{Dexter Kozen}.}
  \bibinfo{year}{2000}\natexlab{}.
\newblock \showarticletitle{On {H}oare logic and {K}leene algebra with tests}.
\newblock \bibinfo{journal}{\emph{{ACM} Trans. Comput. Log.}}
  \bibinfo{volume}{1}, \bibinfo{number}{1} (\bibinfo{year}{2000}),
  \bibinfo{pages}{60--76}.
\newblock


\bibitem[\protect\citeauthoryear{Kozen}{Kozen}{2003}]%
        {Kozen2003}
\bibfield{author}{\bibinfo{person}{Dexter Kozen}.}
  \bibinfo{year}{2003}\natexlab{}.
\newblock \bibinfo{booktitle}{\emph{Kleene algebra with tests and the static
  analysis of programs}}.
\newblock \bibinfo{type}{{T}echnical {R}eport}. \bibinfo{institution}{Cornell
  University}.
\newblock


\bibitem[\protect\citeauthoryear{Kozen}{Kozen}{2004}]%
        {Kozen04}
\bibfield{author}{\bibinfo{person}{Dexter Kozen}.}
  \bibinfo{year}{2004}\natexlab{}.
\newblock \showarticletitle{Some results in dynamic model theory}.
\newblock \bibinfo{journal}{\emph{Sci. Comput. Program.}} \bibinfo{volume}{51},
  \bibinfo{number}{1-2} (\bibinfo{year}{2004}), \bibinfo{pages}{3--22}.
\newblock


\bibitem[\protect\citeauthoryear{Kozen and Smith}{Kozen and Smith}{1996}]%
        {Kozen1996}
\bibfield{author}{\bibinfo{person}{Dexter Kozen} {and}
  \bibinfo{person}{Frederick Smith}.} \bibinfo{year}{1996}\natexlab{}.
\newblock \showarticletitle{Kleene algebra with tests: Completeness and
  decidability}. In \bibinfo{booktitle}{\emph{International Workshop on
  Computer Science Logic}}. Springer, \bibinfo{pages}{244--259}.
\newblock


\bibitem[\protect\citeauthoryear{Lamport and Schneider}{Lamport and
  Schneider}{2021}]%
        {LamportS21}
\bibfield{author}{\bibinfo{person}{Leslie Lamport} {and}
  \bibinfo{person}{Fred~B. Schneider}.} \bibinfo{year}{2021}\natexlab{}.
\newblock \showarticletitle{Verifying Hyperproperties With {TLA}}. In
  \bibinfo{booktitle}{\emph{IEEE Computer Security Foundations Symposium
  {(CSF)}}}. \bibinfo{pages}{1--16}.
\newblock


\bibitem[\protect\citeauthoryear{Maillard, Hrit\c{c}u, Rivas, and
  Muylder}{Maillard et~al\mbox{.}}{2020}]%
        {MaillardHRM20}
\bibfield{author}{\bibinfo{person}{Kenji Maillard},
  \bibinfo{person}{C\u{a}t\u{a}lin Hrit\c{c}u}, \bibinfo{person}{Exequiel
  Rivas}, {and} \bibinfo{person}{Antoine~Van Muylder}.}
  \bibinfo{year}{2020}\natexlab{}.
\newblock \showarticletitle{The Next 700 Relational Program Logics}.
\newblock \bibinfo{journal}{\emph{Proc. {ACM} Program. Lang.}}
  \bibinfo{volume}{4}, \bibinfo{number}{{POPL}} (\bibinfo{year}{2020}),
  \bibinfo{pages}{4:1--4:33}.
\newblock


\bibitem[\protect\citeauthoryear{Mamouras}{Mamouras}{2017}]%
        {Mamouras17}
\bibfield{author}{\bibinfo{person}{Konstantinos Mamouras}.}
  \bibinfo{year}{2017}\natexlab{}.
\newblock \showarticletitle{Equational Theories of Abnormal Termination Based
  on {Kleene} Algebra}. In \bibinfo{booktitle}{\emph{{FoSSaCS}}}.
  \bibinfo{pages}{88--105}.
\newblock


\bibitem[\protect\citeauthoryear{Mordvinov and Fedyukovich}{Mordvinov and
  Fedyukovich}{2019}]%
        {Mordvinov2019}
\bibfield{author}{\bibinfo{person}{Dmitry Mordvinov} {and}
  \bibinfo{person}{Grigory Fedyukovich}.} \bibinfo{year}{2019}\natexlab{}.
\newblock \showarticletitle{Property Directed Inference of Relational
  Invariants}. In \bibinfo{booktitle}{\emph{Formal Methods in Computer Aided
  Design}}. \bibinfo{pages}{152--160}.
\newblock
\urldef\tempurl%
\url{https://doi.org/10.23919/FMCAD.2019.8894274}
\showDOI{\tempurl}


\bibitem[\protect\citeauthoryear{Morgan}{Morgan}{1988}]%
        {MorganAux}
\bibfield{author}{\bibinfo{person}{Carroll Morgan}.}
  \bibinfo{year}{1988}\natexlab{}.
\newblock \showarticletitle{Auxiliary Variables in Data Refinement}.
\newblock \bibinfo{journal}{\emph{Inform. Process. Lett.}}
  \bibinfo{volume}{29}, \bibinfo{number}{6} (\bibinfo{year}{1988}),
  \bibinfo{pages}{293--296}.
\newblock


\bibitem[\protect\citeauthoryear{Murray}{Murray}{2020}]%
        {MurrayUnderApproxRHL}
\bibfield{author}{\bibinfo{person}{Toby Murray}.}
  \bibinfo{year}{2020}\natexlab{}.
\newblock \showarticletitle{An Under-Approximate Relational Logic: Heralding
  Logics of Insecurity, Incorrect Implementation {\&} More}.
\newblock \bibinfo{journal}{\emph{CoRR}}  \bibinfo{volume}{abs/2003.04791}
  (\bibinfo{year}{2020}).
\newblock
\showeprint[arXiv]{2003.04791}
\urldef\tempurl%
\url{https://arxiv.org/abs/2003.04791}
\showURL{%
\tempurl}


\bibitem[\protect\citeauthoryear{Nagasamudram and Naumann}{Nagasamudram and
  Naumann}{2021}]%
        {NagasamudramN21}
\bibfield{author}{\bibinfo{person}{Ramana Nagasamudram} {and}
  \bibinfo{person}{David~A. Naumann}.} \bibinfo{year}{2021}\natexlab{}.
\newblock \showarticletitle{Alignment Completeness for Relational {Hoare}
  Logics}. In \bibinfo{booktitle}{\emph{IEEE Symp. on Logic in Computer
  Science}}. \bibinfo{pages}{1--13}.
\newblock
\newblock
\shownote{Extended version at \url{https://arxiv.org/abs/2101.11730}.}.


\bibitem[\protect\citeauthoryear{Naumann}{Naumann}{2020}]%
        {NaumannISOLA20}
\bibfield{author}{\bibinfo{person}{David~A. Naumann}.}
  \bibinfo{year}{2020}\natexlab{}.
\newblock \showarticletitle{Thirty-Seven Years of Relational Hoare Logic:
  Remarks on Its Principles and History}. In \bibinfo{booktitle}{\emph{9th
  International Symposium On Leveraging Applications of Formal Methods,
  Verification and Validation}}. \bibinfo{pages}{93--116}.
\newblock
\urldef\tempurl%
\url{https://doi.org/10.1007/978-3-030-61470-6\_7}
\showDOI{\tempurl}
\newblock
\shownote{Extended version at \url{https://arxiv.org/abs/2007.06421}.}.


\bibitem[\protect\citeauthoryear{O'Hearn}{O'Hearn}{2019}]%
        {OHearn2019}
\bibfield{author}{\bibinfo{person}{Peter~W O'Hearn}.}
  \bibinfo{year}{2019}\natexlab{}.
\newblock \showarticletitle{Incorrectness logic}.
\newblock \bibinfo{journal}{\emph{Proceedings of the ACM on Programming
  Languages}} \bibinfo{volume}{4}, \bibinfo{number}{POPL}
  (\bibinfo{year}{2019}), \bibinfo{pages}{1--32}.
\newblock


\bibitem[\protect\citeauthoryear{Pick, Fedyukovich, and Gupta}{Pick
  et~al\mbox{.}}{2018}]%
        {PickFG18}
\bibfield{author}{\bibinfo{person}{Lauren Pick}, \bibinfo{person}{Grigory
  Fedyukovich}, {and} \bibinfo{person}{Aarti Gupta}.}
  \bibinfo{year}{2018}\natexlab{}.
\newblock \showarticletitle{Exploiting Synchrony and Symmetry in Relational
  Verification}. In \bibinfo{booktitle}{\emph{Computer Aided Verification}}.
  \bibinfo{pages}{164--182}.
\newblock


\bibitem[\protect\citeauthoryear{Pous}{Pous}{[n.\,d.]}]%
        {PousKATlib}
\bibfield{author}{\bibinfo{person}{Damien Pous}.}
  \bibinfo{year}{[n.\,d.]}\natexlab{}.
\newblock \bibinfo{title}{Relation Algebra and {KAT} in {Coq}}.
\newblock
\newblock
\urldef\tempurl%
\url{http://perso.ens-lyon.fr/damien.pous/ra/}
\showURL{%
\tempurl}
\newblock
\shownote{Coq library, accessed July 2022}.


\bibitem[\protect\citeauthoryear{Pous}{Pous}{2015}]%
        {Pous2015}
\bibfield{author}{\bibinfo{person}{Damien Pous}.}
  \bibinfo{year}{2015}\natexlab{}.
\newblock \showarticletitle{Symbolic algorithms for language equivalence and
  Kleene algebra with tests}. In \bibinfo{booktitle}{\emph{ACM Symposium on
  Principles of Programming Languages}}. \bibinfo{pages}{357--368}.
\newblock


\bibitem[\protect\citeauthoryear{Pous, Rot, and Wagemaker}{Pous
  et~al\mbox{.}}{2021}]%
        {PousRW21}
\bibfield{author}{\bibinfo{person}{Damien Pous}, \bibinfo{person}{Jurriaan
  Rot}, {and} \bibinfo{person}{Jana Wagemaker}.}
  \bibinfo{year}{2021}\natexlab{}.
\newblock \showarticletitle{On Tools for Completeness of Kleene Algebra with
  Hypotheses}. In \bibinfo{booktitle}{\emph{Relational and Algebraic Methods in
  Computer Science ({RAMiCS})}} \emph{(\bibinfo{series}{LNCS},
  Vol.~\bibinfo{volume}{13027})}. \bibinfo{pages}{378--395}.
\newblock


\bibitem[\protect\citeauthoryear{Sabelfeld and Myers}{Sabelfeld and
  Myers}{2003}]%
        {Sabelfeld:Myers:JSAC}
\bibfield{author}{\bibinfo{person}{Andrei Sabelfeld} {and}
  \bibinfo{person}{Andrew~C. Myers}.} \bibinfo{year}{2003}\natexlab{}.
\newblock \showarticletitle{Language-Based Information-Flow Security}.
\newblock \bibinfo{journal}{\emph{IEEE J. Selected Areas in Communications}}
  \bibinfo{volume}{21}, \bibinfo{number}{1} (\bibinfo{date}{Jan.}
  \bibinfo{year}{2003}), \bibinfo{pages}{5--19}.
\newblock


\bibitem[\protect\citeauthoryear{Sharma, Schkufza, Churchill, and Aiken}{Sharma
  et~al\mbox{.}}{2013}]%
        {Sharma2013}
\bibfield{author}{\bibinfo{person}{Rahul Sharma}, \bibinfo{person}{Eric
  Schkufza}, \bibinfo{person}{Berkeley Churchill}, {and} \bibinfo{person}{Alex
  Aiken}.} \bibinfo{year}{2013}\natexlab{}.
\newblock \showarticletitle{Data-Driven Equivalence Checking}. In
  \bibinfo{booktitle}{\emph{ACM Conf. on Object-Oriented Programming Languages,
  Systems, and Applications}}. \bibinfo{pages}{391–406}.
\newblock
\urldef\tempurl%
\url{https://doi.org/10.1145/2509136.2509509}
\showDOI{\tempurl}


\bibitem[\protect\citeauthoryear{Shemer, Gurfinkel, Shoham, and Vizel}{Shemer
  et~al\mbox{.}}{2019}]%
        {ShemerGSV19}
\bibfield{author}{\bibinfo{person}{Ron Shemer}, \bibinfo{person}{Arie
  Gurfinkel}, \bibinfo{person}{Sharon Shoham}, {and} \bibinfo{person}{Yakir
  Vizel}.} \bibinfo{year}{2019}\natexlab{}.
\newblock \showarticletitle{Property directed self composition}. In
  \bibinfo{booktitle}{\emph{International Conference on Computer Aided
  Verification}}. Springer, \bibinfo{pages}{161--179}.
\newblock


\bibitem[\protect\citeauthoryear{Sousa and Dillig}{Sousa and Dillig}{2016}]%
        {SousaD2016}
\bibfield{author}{\bibinfo{person}{Marcelo Sousa} {and} \bibinfo{person}{Isil
  Dillig}.} \bibinfo{year}{2016}\natexlab{}.
\newblock \showarticletitle{Cartesian {Hoare} logic for verifying k-safety
  properties}. In \bibinfo{booktitle}{\emph{Proceedings of the 37th ACM SIGPLAN
  Conference on Programming Language Design and Implementation}}.
  \bibinfo{pages}{57--69}.
\newblock


\bibitem[\protect\citeauthoryear{Terauchi and Aiken}{Terauchi and
  Aiken}{2005}]%
        {TerauchiA2005}
\bibfield{author}{\bibinfo{person}{Tachio Terauchi} {and} \bibinfo{person}{Alex
  Aiken}.} \bibinfo{year}{2005}\natexlab{}.
\newblock \showarticletitle{Secure information flow as a safety problem}. In
  \bibinfo{booktitle}{\emph{International Static Analysis Symposium}}.
  Springer, \bibinfo{pages}{352--367}.
\newblock


\bibitem[\protect\citeauthoryear{Unno, Terauchi, and Koskinen}{Unno
  et~al\mbox{.}}{2021}]%
        {UnnoTK21}
\bibfield{author}{\bibinfo{person}{Hiroshi Unno}, \bibinfo{person}{Tachio
  Terauchi}, {and} \bibinfo{person}{Eric Koskinen}.}
  \bibinfo{year}{2021}\natexlab{}.
\newblock \showarticletitle{Constraint-Based Relational Verification}. In
  \bibinfo{booktitle}{\emph{Computer Aided Verification}}.
  \bibinfo{pages}{742--766}.
\newblock


\bibitem[\protect\citeauthoryear{Wagemaker, Bonsangue, Kapp{\'{e}}, Rot, and
  Silva}{Wagemaker et~al\mbox{.}}{2019}]%
        {WagemakerBKR019}
\bibfield{author}{\bibinfo{person}{Jana Wagemaker},
  \bibinfo{person}{Marcello~M. Bonsangue}, \bibinfo{person}{Tobias
  Kapp{\'{e}}}, \bibinfo{person}{Jurriaan Rot}, {and}
  \bibinfo{person}{Alexandra Silva}.} \bibinfo{year}{2019}\natexlab{}.
\newblock \showarticletitle{Completeness and Incompleteness of Synchronous
  {Kleene} Algebra}. In \bibinfo{booktitle}{\emph{Mathematics of Program
  Construction}} \emph{(\bibinfo{series}{LNCS}, Vol.~\bibinfo{volume}{11825})}.
  \bibinfo{pages}{385--413}.
\newblock


\bibitem[\protect\citeauthoryear{Yang}{Yang}{2007}]%
        {Yang:tcs04}
\bibfield{author}{\bibinfo{person}{Hongseok Yang}.}
  \bibinfo{year}{2007}\natexlab{}.
\newblock \showarticletitle{Relational Separation Logic}.
\newblock \bibinfo{journal}{\emph{Theo. Comp. Sci.}}  \bibinfo{volume}{375}
  (\bibinfo{year}{2007}).
\newblock


\bibitem[\protect\citeauthoryear{Zhang, de~Amorim, and Gaboardi}{Zhang
  et~al\mbox{.}}{2022}]%
        {ZhangAG22}
\bibfield{author}{\bibinfo{person}{Cheng Zhang},
  \bibinfo{person}{Arthur~Azevedo de Amorim}, {and} \bibinfo{person}{Marco
  Gaboardi}.} \bibinfo{year}{2022}\natexlab{}.
\newblock \showarticletitle{On incorrectness logic and {Kleene} algebra with
  top and tests}.
\newblock \bibinfo{journal}{\emph{Proc. {ACM} Program. Lang.}}
  \bibinfo{volume}{6}, \bibinfo{number}{{POPL}} (\bibinfo{year}{2022}),
  \bibinfo{pages}{1--30}.
\newblock
\urldef\tempurl%
\url{https://doi.org/10.1145/3498690}
\showDOI{\tempurl}


\end{thebibliography}

\ifarxiv
\else
\end{document} % END HERE TO OMIT APPENDIX
\fi

\vfill
\pagebreak
%\onecolumn % this will be needed in the two-column submission
\section{Appendix}\label{sec:appendix}

\subsection{Some KAT Lemmas}

This section proves some results that hold in any KAT, 
including useful consequences of commutativity assumptions.
We begin with a few well known and easily proved facts for which proofs are omitted.

\begin{lemma}[invariance]\label{lem:invar}
Let $p$ be a test, then 
$p;x \leq p;x;p \imp p; x^* \leq p;x^*;p$.
\end{lemma}
%% Proof:
%% By the induction axiom $b+xa\leq x \imp b a^* \leq x$,
%% the consequent $p; x^* \leq p;x^*;p$ follows from 
%% $p + p;x^*;p;x \leq p; x^*; p$.  This in turn follows from the antecedent 
%% because 
%% \[\begin{array}{lll}
%%     & p + p;x^*;p;x \\
%% \leq& p + p;x^*;p;x;p & \mbox{assumption} \\
%%  =  & p;1;p + p;x^*;p;x;p & \mbox{$p$ test} \\
%%  =  & p;(1 + x^*;p;x);p & \mbox{distrib} \\
%% \leq& p;(1 + x^*;x);p & \mbox{test $p\leq 1$, mono} \\
%% =   & p;x^*;p & \mbox{expansion law} 
                   %%      \end{array}\]

\begin{lemma}[backward invariance]\label{lem:binvarEq}
Let $p$ be a test, then
$x;p = p;x;p \imp x^*;p = p;x^*;p$.
\end{lemma}

\begin{lemma}\label{lem:binvar}
Let $p$ be a test, then 
$x;p \leq p;x;p \imp x^*;p \leq p;(x;p)^*;p$.
\end{lemma}
%% Proof:
%% By the induction axiom $b+ax\leq x \imp a^* b \leq x$,
%% and join property
%% the consequent $x^* p \leq p (x p)^* p$  follows from 
%% $p \leq p (x p)^* p$ and 
%% $x p (x p)^* p \leq p (x p)^* p$.
%% We have $p \leq p (x p)^* p$ by star unfolding and tests idempotent.
%% For the other, we have 
%% $x p (x p)^* p 
%% \leq p x p (x p)^* p 
%% \leq p (x p)^* p$
%% using assumption 
%% $x;p \leq p;x;p$
%% and $(x p)(x p)^* \leq (x p)^*$.

\begin{lemma}\label{lem:L2}
If $x;y \leq x;y;x$ then $x;y^\kstar \leq x;(y;x)^\kstar$.
\end{lemma}
%% NOTE: We had used this in the form 
%% $a;x = a;x;a \imp a;x^{\kstar} =  a;(x;a)^{\kstar}$
%% but I don't think the reverse inequality holds even with stronger assumption $a;x=a;x;a$.
%% And we don't need the stronger one.
%% Proof:
%% \[\begin{array}{lll}
%%       & x y^\kstar \leq x (y x)^\kstar \\
%% \impby& x + x (y x)^\kstar y \leq x (y x)^\kstar & \mbox{*-induction} \\
%% \impby& x (y x)^\kstar y \leq x (y x)^\kstar & \mbox{join, and $x \leq x (y x)^\kstar$}\\
%% \iff  & x y + x (y x)^\kstar y x y \leq x (y x)^\kstar & \mbox{*-unfold} \\
%% \impby& x y x + x (y x)^\kstar y x y x \leq x (y x)^\kstar & \mbox{use $x y \leq x y x$ twice}\\
%% \end{array}\]
%% The last line follows by *-unfold.

Kozen~\cite{Kozen1997} uses the following in the case that $w$ is a test.
\begin{lemma}\label{lem:L1}
If $x;w=w;x$ and $w;w=w$ then
$x^*;w = (x;w)^*;w$. 
\end{lemma}
%% Proof: We have 
%% $(x;w)^*;w \leq x^*;w$ by tests below 1. 
%% By induction rule $b+ay\leq y \imp a^*b\leq y$,
%% the reverse inequality 
%% $x^*;w \leq (x ;w)^*;w $
%% follows from 
%% $w + x;(x;w)^*;w \leq (x;w)^*;w$
%% and we have
%% \[\begin{array}{lll}
%%    & w + x (x w)^* w \\
%% = & w + x (wx )^* w &\mbox{commute assumption $xw=wx$}\\
%% = & w + (x w)^*x w &\mbox{sliding}\\
%% = & w + (x w)^*x w w &\mbox{idempotence assumption $w w = w$}\\
%% = & (1 + (x w)^* x w) w &\mbox{distrib}\\
%% = & (x w)^*w &\mbox{star unfold}
%% \end{array}\]

\begin{lemma}\label{lem:havocish}
If $1\leq z$ and $z;z \leq z$ then $x\leq y;z \iff x;z \leq y;z$
\end{lemma}
%% Proof by mutual implication.
%% \[\begin{array}{lll}
%%      & x \leq y;z \\
%% \imp & x;z \leq y;z;z &\mbox{monot}\\
%% \imp & x;z \leq y;z &\mbox{assumption $z;z\leq z$}\\
%% \imp & x \leq y;z &\mbox{assumption $1\leq z$, transitivity}
%% \end{array}
%% \]

\paragraph{KAT and Commutativity}
There are several useful laws for iteration involving two terms that commute.
These hold in any KAT.   

\begin{lemma}\label{lem:item-over-star}
$w;z = z;w \imp w^*;z = z;w^*$
\end{lemma}
Proof by mutual inclusion.  
For $w^*;z \leq z;w^*$ we have
\[\begin{array}{lll}
       &  w^*;z \leq z;w^* \\
\impby & z + w;z;w^* \leq z;w^* & \mbox{ind $b+ax\leq x \imp a^*b\leq x$}\\
\iff   & z \leq z;w^* \land w;z;w^* \leq z;w^* & \mbox{join} \\
\iff   & z \leq z;w^* \land z;w;w^* \leq z;w^* & \mbox{$w;z=z;w$} 
\end{array}\]
The two conjuncts hold by the star laws 
$1\leq x^*$ and $x;x^*\leq x^*$ 
and monotonicity of $;$.
The reverse inclusion is proved similarly, using the other induction rule
$b+xa\leq x \imp b a^* \leq x$.  

\begin{lemma}\label{lem:star-over-plus}
$w;z = z;w \imp (w+z)^* = w^*;z^*$.
\end{lemma}
Proof by mutual inclusion.
One inclusion does not require the commutativity assumption:
\[\begin{array}{lll}
     &  w^*;z^* \\
\leq & (w+z)^*;(w+z)^* & \mbox{$x\leq x+y$, star mono} \\
=    & (w+z)^* & \mbox{star law $x^*;x^* = x^*$}
\end{array}\]
For the reverse we have:
\[\begin{array}{lll}
     &  (w+z)^* \leq w^*;z^* \\
\iff &  1 + (w+z);(w+z)^* \leq w^*;z^* &\mbox{star unfold}\\
\iff &  1 \leq w^*;z^* \land (w+z);(w+z)^* \leq w^*;z^* &\mbox{$+$ join} \\
\iff &  (w+z);(w+z)^* \leq w^*;z^* &\mbox{using $1\leq x^*$} \\
\impby & w + z + w^*;z^*;(w+z) \leq w^*;z^* & \mbox{star induct}\\ %. $b+xa\leq x \imp b a^* \leq x$}\\  
\iff & w + z + w^*;z^*;w + w^*;z^*;z \leq w^*;z^* & \mbox{distrib}\\  
\iff & w  \leq w^*;z^* \land 
       z \leq w^*;z^* \land 
       w^*;z^*;w \leq w^*;z^* \land 
       w^*;z^*;z \leq w^*;z^* &\mbox{$+$ join}\\ 
\iff & w  \leq w^*;z^* \land 
       z \leq w^*;z^* \land 
       w^*;w;z^* \leq w^*;z^* \land 
       w^*;z^*;z \leq w^*;z^* & \mbox{Lemma~\ref{lem:item-over-star}}
\end{array}\]
The left two conjuncts follow using $x\leq x^*$ and $1\leq x^*$.
The right two conjuncts follow using $x^*;x\leq x^*$.

\begin{lemma}\label{lem:star-comm}
$w;z = z;w \imp w^*;z^* = z^*;w^*$
\end{lemma}
Proof: $w^*;z^* = (w+z)^* = (z+w)^* = z^*;w^*$
using Lemma~\ref{lem:star-over-plus} twice.

\begin{lemma}\label{lem:star-seq-comm}
$ w;z = z;w \imp (w;z)^* \leq w^*;z^* $
\end{lemma}
Proof:
\[\begin{array}{lll}
      & (w;z)^* \leq w^*;z^*  \\
\iff  & 1 \leq w^*;z^* \land w;z;(w;z)^* \leq w^*;z^* & \mbox{star unfold, $+$ join} \\
\iff  & w;z;(w;z)^* \leq w^*;z^* & \mbox{using $1\leq x^*$} \\
\impby& w;z + w^*;z^*;w;z \leq w^*;z^* &\mbox{ind $b + xa \leq x \imp ba^*\leq x$}\\
\iff  & w^*;z^*;w;z \leq w^*;z^* &\mbox{$+$ join, $x\leq x^*$}\\
\iff  & w^*;z^*;z;w \leq w^*;z^* &\mbox{$w;z=z;w$}\\
\impby& w^*;z^*;w \leq w^*;z^* &\mbox{$x^*x\leq x^*$}\\
\impby& w^*;w;z^* \leq w^*;z^* & \mbox{Lemma~\ref{lem:item-over-star}} \\
\impby& w^*;z^* \leq w^*;z^*     & \mbox{$x x^*\leq x^*$}
\end{array}\]
As usual the hints do not mention uses of transitivity and monotonicity.

%The next lemma is essentially the Francez product (also used in Cartesian Hoare Logic)

\begin{lemma}\label{lem:seq-star-comm}
$ w;z = z;w \imp w^*;z^* = (w;z)^* ; (w^* + z^*) $ 
\end{lemma}
Proof by mutual inclusion.
First,
\[\begin{array}{lll}
     & (w;z)^* ; (w^* + z^*) \\
\leq & w^*; z^*; (w^* + z^*) & \mbox{Lemma~\ref{lem:star-seq-comm}} \\
\leq & w^*; z^*; w^*; z^* & \mbox{$x^*+y^*\leq x^*;y^*$ by KAT} \\
=    & w^*; w^*; z^*; z^* & \mbox{Lemma~\ref{lem:star-comm}}\\ 
=    & w^*; z^* & \mbox{$x^*;x^*=x^*$ by KAT} 
\end{array}\]
For the reverse inclusion, 
\[\begin{array}{lll}
      & w^*;z^* \leq (w;z)^* ; (w^* + z^*) \\ 
\iff  & (w+z)^* \leq (w;z)^* ; (w^* + z^*) & \mbox{Lemma~\ref{lem:star-over-plus}}\\ 

\iff  & 1 \leq (w;z)^* ; (w^* + z^*) \land (w+z)^*;(w+z) \leq (w;z)^* ; (w^* + z^*) 
      & \mbox{star expansion, $+$ join}\\
\iff  & (w+z)^*;(w+z) \leq (w;z)^* ; (w^* + z^*) 
      & \mbox{using $1\leq x^*$}\\
\impby& w + z + (w+z);(w;z)^*;(w^* + z^*) \leq (w;z)^* ; (w^* + z^*) & \mbox{ind $b+ax\leq x \imp a^*b\leq x$}\\
\end{array}\]
By join property of $+$, the last line is equivalent to the conjunction of
$w + z \leq (w;z)^* ; (w^* + z^*)$, which follows easily using $x\leq x^*$,
and this: 
\[ (w+z);(w;z)^*;(w^* + z^*) \leq (w;z)^* ; (w^* + z^*) \]
Using distributivity on the left side, and the join property, 
the displayed inequality is equivalent to the conjunction of
four conditions: \\
$w;(w;z)^*;w^* \leq (w;z)^* ; (w^* + z^*)$,\\
$z;(w;z)^*;w^* \leq (w;z)^* ; (w^* + z^*)$,\\
$w;(w;z)^*;z^* \leq (w;z)^* ; (w^* + z^*)$, and \\
$z;(w;z)^*;z^* \leq (w;z)^* ; (w^* + z^*)$. \\
For the first condition:
\[\begin{array}{lll}
  & w;(w;z)^*;w^*  \\ 
= & w;(z;w)^*;w^* & \mbox{assumption $w;z=z;w$}\\ 
= & (w;z)^*;w;w^* & \mbox{sliding $x(yx)^*=(xy)^*x$}\\ 
\leq & (w;z)^*;w^* & \mbox{$x;x^*\leq x^*$}\\ 
\leq & (w;z)^*;(w^*+z^*) & \mbox{$x\leq x+y$, mono}
\end{array}\]
For the second condition,
\[\begin{array}{lll}
  & z;(w;z)^*;w^*  \\ 
= & (z;w)^*;z;w^* & \mbox{sliding $x(yx)^*=(xy)^*x$}\\ 
= & (z;w)^*;z;(1+w;w^*) & \mbox{star unfold}\\ 
= & (z;w)^*;z+(z;w)^*;z;w;w^* & \mbox{distrib}\\ 
\leq & (z;w)^*;z^*+(z;w)^*;z;w;w^* & \mbox{$x\leq x^*$}\\ 
\leq & (z;w)^*;z^*+(z;w)^*;w^* & \mbox{$x^*;x\leq x^*$}\\ 
= & (z;w)^*;(z^*+w^*) & \mbox{distrib}\\ 
= & (w;z)^*;(w^*+z^*) & \mbox{assumption $w;z = z;w$}\\ 
\end{array}\]
The other two conditions are symmetric with these two.

Here's another expansion lemma that can be used to relate two loops.

\begin{lemma}\label{lem:seq-star-expansion}
$ w;z = z;w \imp w^*;z^* = (w;z + w + z)^*$
\end{lemma}
Proof by mutual inclusion.
We have $w^*;z^* \leq (w;z + w + z)^*$ 
because $w^*;z^* = (w+z)^* \leq  (w;z + w + z)^* $
by Lemma~\ref{lem:star-over-plus} and then $(y+)$ increasing, star mono.

For the reverse, we have
\[\begin{array}{lll}
    & (w;z + w + z)^* \leq w^*;z^* \\
\iff& 1 + (w;z + w + z)^*;(w;z+w+z) \leq w^*;z^* &\mbox{star expansion}\\
\iff& (w;z + w + z)^*;(w;z+w+z) \leq w^*;z^* &\mbox{using $+$ join and $1\leq x^*$}\\
\impby& w;z + w + z + (w;z + w + z);w^*;z^* \leq w^*;z^* &\mbox{ind $b+ax\leq x\imp a^*b\leq x$}
\end{array}\]
The last line can be decomposed, using join, to easy conditions
$w;z \leq w^*;z^*$, $w\leq w^*;z^*$, and $z \leq w^*;z^*$,
together with this:
\[ (w;z + w + z);w^*;z^* \leq w^*;z^* \]
which by distribution and join property is equivalent to the
three conjuncts 
\[ 
w;z;w^*;z^* \leq w^*;z^* \land 
w;w^*;z^* \leq w^*;z^* \land
z;w^*;z^* \leq w^*;z^* 
\]
The first holds because
\[\begin{array}{lll}
     & w;z;w^*;z^* \\
=    & z;w;w^*;z^* & \mbox{assumption $w;z=z;w$} \\
\leq & z;w^*;z^* & \mbox{$xx^*\leq x^*$} \\
=    & w^*;z;z^* & \mbox{Lemma~\ref{lem:item-over-star}} \\
\leq & w^*;z^* & \mbox{$xx^*\leq x^*$} 
\end{array}\]
The second and third are similar.

\subsection{BiKAT Theorems}

This section proves results that hold in any BiKAT.

\begin{lemma}%\label{lem:WOalt}
In a BiKAT we have $X;\eml{y} \leq Z;\emr{\hav}$
iff $X;\emb{y}{\hav} \leq Z;\emr{\hav}$.
\end{lemma}
Proof: By definition $X;\emb{y}{\hav}= X;\eml{y};\emr{\hav}$,
and now we can apply Lemma~\ref{lem:havocish} 
because $\emr{\hav}$ is idempotent and above 1.

\begin{lemma}%\label{lem:eq:expand}
Equation (\ref{eq:expand}) holds, i.e.,
for any $c,c'$ and tests $e,e'$ in the underlying KAT:
\[
\emb{(e;c)^*}{(e';c')^*};\emb{\neg e}{\neg e'} 
 = \emb{e;c}{e';c'}^*;(\emb{e;c}{\neg e'}^* + \emb{\neg e}{e';c'}^*) ;\emb{\neg e}{\neg e'}
\] 
\end{lemma}
Proof by mutual inclusion.
For readability we omit the embedding notations, i.e., $c$ stands for $\eml{c}$, $c'$ for $\emr{c'}$, etc.  The sequence operation is written variously by juxtaposition, semicolon, and bar $|$.
(Bar because $\emb{x}{y}$ is shorthand for the sequence $\eml{x};\emr{y}$.)

For $RHS \leq LHS$ we have
\[\begin{array}{lll}
    & (ec|e'c')^* ; ( (ec|\neg e')^* + (\neg e|e'c')^* ) (\neg e|\neg e') \\ 
\leq& (ec)^* ; (e'c')^* ; ( (ec|\neg e')^* + (\neg e|e'c')^* ) (\neg e|\neg e')  &\mbox{Lemma~\ref{lem:seq-star-comm}}\\
\leq& (ec|)^*; (|e'c')^*; (\neg e|\neg e') &\mbox{1 below star} 
\end{array}\]

For $LHS \leq RHS$ we have
\begin{small}
\[\begin{array}{lll}
    & (ec|)^*; (|e'c')^* ; (\neg e|\neg e') \leq (ec|e'c')^* ; ( (ec|\neg e')^* + (\neg e|e'c')^* ) ; (\neg e|\neg e') \\
\iff& \quad \mbox{using Lemma~\ref{lem:star-over-plus}} \\
    &( (ec|) + (|e'c') )^*; (\neg e|\neg e') \leq (ec|e'c')^* ; ( (ec|\neg e')^* + (\neg e|e'c')^* ) ; (\neg e|\neg e') \\
\impby& \quad \mbox{by ind $b + ax \leq x \imp a^*b\leq x$} \\
    & (\neg e|\neg e') + ( (ec|) + (|e'c') ); (ec|e'c')^* ; ( (ec|\neg e')^* + (\neg e|e'c')^* ) ; (\neg e|\neg e') \\
    & \leq (ec|e'c')^* ; ( (ec|\neg e')^* + (\neg e|e'c')^* ) ; (\neg e|\neg e') \\
\iff& \quad \mbox{using distrib leftmost; $+$ join} \\
    & (\neg e|\neg e') \leq (ec|e'c')^* ; ( (ec|\neg e')^* + (\neg e|e'c')^* ) ; (\neg e|\neg e') \\
    & \land \;
     ( (ec|) + (|e'c') ); (ec|e'c')^* ; ( (ec|\neg e')^* + (\neg e|e'c')^* ) ; (\neg e|\neg e') \\
& \quad \leq (ec|e'c')^* ; ( (ec|\neg e')^* + (\neg e|e'c')^* ) ; (\neg e|\neg e')
\\
\iff& \quad \mbox{first conjunct holds using $1\leq x^*$} \\
    & ( (ec|) + (|e'c') ); (ec|e'c')^* ; ( (ec|\neg e')^* + (\neg e|e'c')^* ) ; (\neg e|\neg e') \\
& \quad  \leq (ec|e'c')^* ; ( (ec|\neg e')^* + (\neg e|e'c')^* ) ; (\neg e|\neg e')
\end{array}\]
\end{small}
By distributivity the last line is equivalent to these four conditions:
\[\begin{array}{ll}
(A) & (ec|) ; (ec|e'c')^* ;  (ec|\neg e')^* ; (\neg e|\neg e') \leq  (ec|e'c')^* ; ( (ec|\neg e')^* + (\neg e|e'c')^* ) ; (\neg e|\neg e') \\
(B) & (|e'c') ; (ec|e'c')^* ; (ec|\neg e')^* ; (\neg e|\neg e') \leq  (ec|e'c')^* ; ( (ec|\neg e')^* + (\neg e|e'c')^* ) ; (\neg e|\neg e') \\
(C) & (ec|) ; (ec|e'c')^* ; (\neg e|e'c')^* ; (\neg e|\neg e') \leq  (ec|e'c')^* ; ( (ec|\neg e')^* + (\neg e|e'c')^* ) ; (\neg e|\neg e') \\ 
(D) & (|e'c') ; (ec|e'c')^* ; (\neg e|e'c')^* ; (\neg e|\neg e') \leq  (ec|e'c')^* ; ( (ec|\neg e')^* + (\neg e|e'c')^* ) ; (\neg e|\neg e')
\end{array}\]
We prove $(A)$ and $(B)$; then $(C)$ and $(D)$ follow for reasons of symmetry.

For $(A)$:
\[\begin{array}{lll}
    & (ec|) ; (ec;e'c')^* ;  (ec|\neg e')^* ; (\neg e|\neg e') \\
=   & (ec|) ; (e'c';ec)^* ;  (ec|\neg e')^* ; (\neg e|\neg e')  & \mbox{LRC (and def emb)} \\
=   & (ec;e'c')^*; ec ;  (ec|\neg e')^* ; (\neg e|\neg e') &\mbox{star sliding} \\
=   & (ec;e'c')^*; ec ; (\neg e';ec)^* ; (\neg e|\neg e') &\mbox{LRC} \\
=   & (ec;e'c')^*; \neg e';ec ; (\neg e';ec)^*  ; (\neg e|\neg e') &\mbox{test $e'$ idem, commutes with $ec$,  Lemma~\ref{lem:item-over-star}}\\
\leq & (ec;e'c')^*; (\neg e';ec)^*  ; (\neg e|\neg e') & \mbox{using $x;x^* \leq x^*$}\\
\leq & (ec;e'c')^*; ((\neg e';ec)^* + (\neg e|e'c')^*) ; (\neg e|\neg e') &\mbox{$+$ mono} 
\end{array}\]

For (B):
\begin{small}
\[\begin{array}{lll}
    & (|e'c') ; (ec|e'c')^* ; (ec|\neg e')^* ; (\neg e|\neg e') \\
=   & (e'c';ec)^*;(e'c') ; (ec|\neg e')^* ; (\neg e|\neg e') & \mbox{star sliding} \\
=   & (e'c'ec)^*;(e'c') ; (1+(ec|\neg e')(ec|\neg e')^*) ; (\neg e|\neg e') & \mbox{star expand} \\
=   & (e'c'ec)^*;(e'c') ; (\neg e|\neg e') + (e'c'ec)^*;(e'c') ; (ec|\neg e');(ec|\neg e')^* ; (\neg e|\neg e') &\mbox{distrib} \\
\leq& (e'c'ec)^*;(e'c')^* ; (\neg e|\neg e') + (e'c'ec)^*;(e'c') ; (ec|\neg e');(ec|\neg e')^* ; (\neg e|\neg e') &\mbox{using $x\leq x^*$} \\
=   & (e'c'ec)^*;(e'c')^* ; (\neg e|\neg e') + (e'c'ec)^*;(e'c'ec);\neg e' ; (ec|\neg e')^* ; (\neg e|\neg e') &\mbox{emb def, assoc} \\
\leq& (e'c'ec)^*;(e'c')^* ; (\neg e|\neg e') + (e'c'ec)^*; \neg e'; (ec|\neg e')^* ; (\neg e|\neg e') &\mbox{$x^*x\leq x$ for $x:=e'c'ec$}\\
=   & (e'c'ec)^*;(e'c')^* ; (\neg e|\neg e') + (e'c'ec)^*; (ec|\neg e')^* ; \neg e'; (\neg e|\neg e') &\mbox{sliding, $\neg e'$ comm.\ $ec$}\\
=   & (e'c'ec)^*;(e'c')^* ; (\neg e|\neg e') + (e'c'ec)^*; (ec|\neg e')^* ; (\neg e|\neg e') &\mbox{tests idempotent} \\
=   & (e'c'ec)^*; ((e'c')^* ; (\neg e|\neg e') + (ec|\neg e')^* ; (\neg e|\neg e')) &\mbox{distrib} \\
=   & (ece'c')^*; ((e'c')^* ; (\neg e|\neg e') + (ec|\neg e')^* ; (\neg e|\neg e')) &\mbox{LRC: $ece'c' = e'c'ec$}\\
=   & (ece'c')^*; ((ec|\neg e')^* ; (\neg e|\neg e') + (e'c')^* ; (\neg e|\neg e')) &\mbox{$+$ commut}\\
=   & (ece'c')^*; ((ec|\neg e')^* ; (\neg e|\neg e') + (e'c')^*; \neg e ; (\neg e|\neg e')) &\mbox{tests idempotent} \\
\leq& (ece'c')^*; ((ec|\neg e')^* ; (\neg e|\neg e') + (\neg e|e'c')^*; (\neg e|\neg e')) &\mbox{Lemma~\ref{lem:L1}} \\
=   & (ece'c')^*; ((ec|\neg e')^* + (\neg e|e'c')^*); (\neg e|\neg e')) &\mbox{distrib} \\
=   & (ec|e'c')^*; ((ec|\neg e')^* + (\neg e|e'c')^*); (\neg e|\neg e')) &\mbox{def} 
\end{array}\]
\end{small}

\begin{theoremApp}{thm:undecidability}
It is undecidable whether a given identity holds in all $*$-continuous BiKATs.
\end{theoremApp}

We present the undecidability argument from~\cite{Kozen1996} adapted for BiKATs.

\begin{proof}
  Let $I$ be an instance of the Post Correspondence Problem (PCP) over some
  alphabet $\{p,q\}$. In other words let $I$ be $k$ pairs of strings
  $x_i,y_i\in\{p,q\}^+$.  We define two BiKAT expressions $S$ and $T$ over unary
  primitives $\{p,q\}$.  (So they will appear in the forms $\eml{p}$, $\eml{q}$,
  $\emr{p}$ and $\emr{q}$, noting that for any string $s=s_1\cdots s_n$ over
  $\{p,q\}$, the BiKAT axioms imply $\eml{s} = \eml{s_1};\ldots;\eml{s_n}$ and
  $\emr{s} = \emr{s_1};\ldots;\emr{s_n}$.)

Let $S=(\emb{x_1}{y_1}+\ldots+\emb{x_k}{y_k})^*$.
Let $T$ be
$$(\emb{p}{p}+\emb{q}{q})^*;\big((\eml{p}+ \eml{q})^++(\emr{p}+\emr{q})^++(\emb{p}{q}+\emb{q}{p});(\eml{p}+\eml{q}+\emr{p}+\emr{q})^*\big).$$
We will show that the equation $S\leq T$ holds in all $*$-continuous BiKATs if and only if $I$ has no solution. 

Suppose first that $I$ has no solution, and let $\alpha=\alpha_1\cdots\alpha_n$ be any sequence from $\{1,\ldots,k\}^n$, which by assumption it cannot be a solution. Then $\emb{x_{\alpha_1}\cdots x_{\alpha_n}}{y_{\alpha_1}\cdots y_{\alpha_n}}$ is equivalent to an element of the form $\emb{z}{z}\emb{p}{q}w$ or $\emb{z}{z}\emb{q}{p}w$ for $w$ an arbitrary element, or of the form $\emb{z}{z}w$ for $w$ a non-null element of only left embedded or only right embedded symbols. There cannot be any other elements since $\alpha$ is not a solution. By the axioms for BiKATs, all such elements can be shown to be less than or equal to $T$. For any $*$-continuous BiKAT, $S$ is the supremum of all such elements of the form $\emb{x_{\alpha_1}\cdots x_{\alpha_n}}{y_{\alpha_1}\cdots y_{\alpha_n}}$ and therefore $S\leq T$.

% \begin{verbatim}
% 	Underlying KAT:
%
% 	A: wp({p,q}*)
% 	B: { {ε}, \emptyset }
%
% 	0 = \emptyset
% 	1 = {ε}
% 	dot : concatenation
% 	+ : union
% 	star : iterated dot
% 	p,q as expressions are interpreted as singleton sets
%
%
% 	D = {p_L,q_L}* \cup {p_R,q_R}*
%
% 	BiKAT: wp(D \times D)
%
% 	embLeft(S) : union of f_L(x) for x in S
% 				 f_L: {p,q}* -> wp(D\times D)
% 				 inductively defined:
%
% 	f_L(ε) = {(ε, ε)}
%
% 	f_L(p) = right hand side of <p|
% 	f_L(q) = right hand side of <q|
%
% 	f_L(x y) = f_L(x);f_L(y)
%
% 	Similar for embRight
%
% 	BiKAT:
% 		M(<u|)
%
% \end{verbatim}

For the other direction, suppose then that $\alpha=\alpha_1\cdots\alpha_n$ is a solution of $I$, meaning that $x_{\alpha_1}\cdots x_{\alpha_n}=y_{\alpha_1}\cdots y_{\alpha_n}$. Denote these strings by $z$. It suffices to show there is a $*$-continuous BiKAT for which the equation $S\leq T$ does not hold.

First consider the KAT $K \eqdef (\wp(\{p,q\}^*),\{\emptyset, \{\epsilon\}\},\cup,\kdot,\kstar,\kneg,\{\epsilon\},\emptyset)$, where $\kdot$ is the operation of element-wise concatenation on two sets of strings, and $\kstar$ is the iterated application of element-wise concatenation. We can define an interpretation $I$ of KAT expressions over $\{p,q\}$ by letting $I(p)=\{p\}$ (respectively $I(q)=\{q\}$). We will define a BiKAT with actions being binary relations on the set of strings $\{p_L,q_L\}^*\cup\{p_R,q_R\}^*$ where $0$, $1$, dot, sum and star are interpreted in the usual way as in relation models.

Given any element $a$ of $K$, let $\eml{a}$ be defined as the union of all $f_L(x)$ for $x\in a$, where $f_L$ is defined inductively as follows (where $\epsilon$ is the empty string):

\begin{displaymath}
	\begin{array}{rlr}
		f_L(\epsilon) = & \{(x, x)\mid x\in\{p_L,q_L\}^*\cup\{p_R,q_R\}^*\}\\
		f_L(s\cdot u) = & f_L(s);\{(x_L,x_L u_L)\mid x_L\in \{p_L,q_L\}^*\}\cup\{(u_R x_R,x_R)\mid x_R\in \{p_R,q_R\}^*\} & \text{for }u\in\{p,q\}.\\
	\end{array}
\end{displaymath}
The embedding $\emr{a}$ is defined analogously, but using the helper function $f_R$:
\begin{displaymath}
	\begin{array}{rlr}
		f_R(\epsilon) = & \{(x, x)\mid x\in\{p_L,q_L\}^*\cup\{p_R,q_R\}^*\}\\
		f_R(s\cdot u) = & f_L(s);\{(x_R,x_R u_R)\mid x_R\in \{p_R,q_R\}^*\}\cup\{(u_L x_L,x_L)\mid x_L\in \{p_L,q_L\}^*\} & \text{for }u\in\{p,q\}.\\
	\end{array}
\end{displaymath}
As a reminder, $;$ above is the usual relational composition. It is straightforward to show that the left and right embeddings as defined are KAT homomorphisms.

\bigskip % avoid hard coded lengths \vskip10pt
\noindent
{\it Claim 1:} For any KAT expression $e$ over $\{p,q\}$ we have:
\begin{displaymath}
	\begin{array}{rl}
		I(\eml{e}) =& \{(x_L,x_L y_L)\mid x_L\in \{p_L,q_L\}^*\}\cup\{(y_R x_R,x_R)\mid x_R\in \{p_R,q_R\}^*\}\\
		I(\emr{e}) =& \{(x_R,x_R y_R)\mid x_R\in \{p_R,q_R\}^*\}\cup\{(y_L x_L,x_L)\mid x_L\in \{p_L,q_L\}^*\}.\\
	\end{array}
\end{displaymath}
where $y_L$ is the ``left'' version of some $y$ in $I(e)$ and $y_R$ is the ``right'' version of some $y$ in $I(e)$.

\bigskip
\noindent
{\it Claim 2:} For any KAT expression $e$ over $\{p,q\}$ we have:
\[ I(\eml{e})\cup I(\emr{e}) \subseteq
   (\{p_L,q_L\}^*\times\{p_L,q_L\}^*)\cup(\{p_R,q_R\}^*\times\{p_R,q_R\}^*)
\]

\medskip
To show that the BiKAT is well-defined it remains to show LRC, and in particular that $\emb{w}{v} = \emb{v}{w}$, where $w$ and $v$ are arbitrary elements from the KAT $K$
(and we omit writing $I$).  
Let $(x_1,x_2)$ be in $\emb{w}{v}$ and recall $\emb{w}{v}= \eml{w};\emr{v}$ by definition.
Then there is $x$ such that $(x_1,x)\in\eml{w}$ and $(x,x_2)\in\emr{v}$. Then, by Claim 1 and properties of relational composition, either $x_1,x_2$ and $x$ are all ``left'' strings in $\{p_L,q_L\}^*$, or they are all ``right'' strings in $\{p_R,q_R\}^*$. Without loss of generality, suppose they are both ``left'' strings. But then, by Claim 1, the strings have to be of the form $x_1 = a_L\cdot s_L$, $x = a_L\cdot s_L\cdot b_L$ and $x_2 = s_L\cdot b_L$.
It remains to show that $(x_1,x_2)\in\emr{v};\eml{w}$. It suffices to show $(x_1,s_L)\in \emr{v}$ and $(s_L,x_2)\in\emr{w}$, which follows from Claim 1.

% But then, $(b_Ry_R,y_R)\in \eml{w}$ and $(y_R,y_Ra_R)\in \emr{v}$.

Let the BiKAT also contain the bitest $E$, interpreted as the identity relation on the empty string $\epsilon$. 
It can be shown that $E;\emb{p}{p}= E;\emb{q}{q} = E$, which implies that $E;(\emb{p}{p}+\emb{q}{q})^*=E$, and $E;\emb{z}{z};E = E$. Since $\emb{z}{z} = \emb{x_{\alpha_1}\cdots x_{\alpha_n}}{y_{\alpha_1}\cdots y_{\alpha_n}}\leq S$ it follows that $E\leq E;S;E$ and therefore $E;S;E\neq 0$.

It also follows that $E;\emb{p}{q}= E;\emb{q}{p}=0$ and $E;(\eml{p}+\eml{q})^+;E= E;(\emr{p}+\emr{q})^+;E=0$. Therefore
\begin{small}
\begin{displaymath}
  \begin{array}{rl}
    & E;T;E \\
          = & E;(\emb{p}{p}+\emb{q}{q})^*;((\eml{p}+\eml{q})^+ + (\emr{p}+\emr{q})^+ + (\emb{p}{q}+\emb{q}{p});
              (\eml{p}+\eml{q}+\emr{p}+\emr{q})^*);E \\
          = & E;((\eml{p}+\eml{q})^+ + (\emr{p}+\emr{q})^+ + (\emb{p}{q}+\emb{q}{p});(\eml{p}+\eml{q}+\emr{p}+\emb{q})^*)\\
		=& E;(\eml{p}+\eml{q})^+;E+E;(\emr{p}+\emr{q})^+;E + E;(\emb{p}{q}+\emb{q}{p});(\eml{p}+\eml{q}+\emr{p}+\emr{q})^*;E\\
		= & 0
	\end{array}
\end{displaymath}
\end{small}
Since $E;S;E\neq 0$ and $E;T;E=0$, it cannot be the case that $S\leq T$.
\end{proof}

\subsection{Details for Sect.~\ref{sec:examples}}

Following on from the discussion in Sect.~\ref{sec:examples}
of Example~\ref{eg:three} from Sect.~\ref{sec:overview}:
In Fig.~\ref{fig:inv_loop_tiling}, we show the proof of the invariant $\bdots{\mathcal{I}}$ over the BiKAT alignment of the two inner loops. After using the BiKAT expansion law (\ref{eq:expand}) to lockstep align the loop bodies and deriving sub-proofs over these alignments, we can easily prove the validity of the sub-proofs. Similarly, we align the two outer loops and prove their invariant $\bdots{\mathcal{J}}$.
\begin{figure*}[t]
  \begin{footnotesize}
    \[ 
    \begin{array}{ll}
      & \bdots{\mathcal{I}} \ksemi \emb{(\kcode{[x<N*M]} \ksemi \kcode{[x\%M!=0]} \ksemi b_1)^\kstar \ksemi \kNeg{\kcode{[x<N*M]} \ksemi \kcode{[x\%M!=0]}}}
      {(\kcode{[j<M]} \ksemi b_2)^\kstar \ksemi \kNeg{\kcode{[j<M]}}} \ksemi \kneg \bdots{\mathcal{I}} = \kzero \\
      
      \impby & 
      \bdots{\mathcal{I}} \ksemi 
      \emb{\kcode{[x<N*M]} \ksemi \kcode{[x\%M!=0]} \ksemi b_1}
      {\kcode{[j<M]} \ksemi b_2}^\kstar \ksemi
      (\emb{\kcode{[x<N*M]} \ksemi \kcode{[x\%M!=0]} \ksemi b_1}{\kNeg{\kcode{[j<M]}}}^\kstar \kplus 
      \emb{\kNeg{\kcode{[x<N*M]} \ksemi \kcode{[x\%M!=0]}}}{\kcode{[j<M]}; b_2}^\kstar) \ksemi \\  
      & \qquad 
      \emb{\kNeg{\kcode{[x<N*M]} \ksemi \kcode{[x\%M!=0]}}}{\kNeg{\kcode{[j<M]}}} 
      \ksemi \kneg \bdots{\mathcal{I}} = \kzero \\
      
      \impby &
      \left\{
      \begin{array}{l}
        \bdots{\mathcal{I}} \ksemi \emb{\kcode{[x<N*M]} \ksemi \kcode{[x\%M!=0]} \ksemi b_1}
        {\kcode{[j<M]} \ksemi b_2} \ksemi \kneg \bdots{\mathcal{I}} = \kzero
        \wedge
        \bdots{\mathcal{I}} \ksemi \emb{\kcode{[x<N*M]} \ksemi \kcode{[x\%M!=0]} \ksemi b_1}{\kNeg{\kcode{[j<M]}}} \ksemi \kneg \bdots{\mathcal{I}} = \kzero \,\wedge\\ 
        \bdots{\mathcal{I}} \ksemi \emb{\kNeg{\kcode{[x<N*M]} \ksemi \kcode{[x\%M!=0]}}}{\kcode{[j<M]} \ksemi b_2} \ksemi \kneg \bdots{\mathcal{I}} = \kzero
        \wedge
        \bdots{\mathcal{I}} \ksemi \emb{\kNeg{\kcode{[x<N*M]} \ksemi \kcode{[x\%M!=0]}}}{\kNeg{\kcode{[j<M]}}} \ksemi \kneg \bdots{\mathcal{I}} = \kzero 
        \end{array}\right.\\
    \end{array}
    \]
  \end{footnotesize}
  \vspace*{-3ex}
  \caption{BiKAT reasoning for the loop tiling example}
  \label{fig:inv_loop_tiling}
\end{figure*}

\subsection{Details for Sect.~\ref{sec:RHL}}

\subsubsection{Proof of Theorem~\ref{thm:RHL}}

\begin{figure*}[t]

\begin{mathpar}

\inferrule*[left=dIf4]{
  c\sep c' : \rspec{P\land \leftF{e}\land\rightF{e'}}{Q} \\
  d\sep d' : \rspec{P\land \neg\leftF{e}\land\neg\rightF{e'}}{Q} \\
  c\sep d' : \rspec{P\land \leftF{e}\land\neg\rightF{e'}}{Q} \\
  d\sep c' : \rspec{P\land \neg\leftF{e}\land\rightF{e'}}{Q} 
}{
  \ifc{e}{c}{d} \Sep \ifc{e'}{c'}{d'} : \rspec{P}{Q}
}

\inferrule[dSkip]{}{
\skipc \sep \skipc : \rspec{P}{P}
}

\inferrule[AssSkip]{}{
v:=e \sep \skipc  : \rspec{\subst{P}{v|}{e|}}{P}
}

\inferrule*[left=IfSkip]{
  c\sep \skipc : \rspec{P \land \leftF{e}}{Q} \\
  d\sep \skipc : \rspec{P \land \leftF{\neg e}}{Q} 
}{
  \ifc{e}{c}{d} \Sep \skipc : \rspec{P}{Q}
}

\inferrule*[left=WhSk]{
  c\sep \skipc : \rspec{P\land \leftF{e}}{P} 
}{ 
  \whilec{e}{c} \Sep \skipc : \rspec{P}{P\land \leftF{\neg e}}
}

\inferrule*[left=rConj]{
  c\sep d : \rspec{P}{Q} \\
  c\sep d : \rspec{P}{R} 
}{
  c\sep d : \rspec{P}{Q\land R} \\
}

\inferrule[falsePre]{}{    
c\sep c' : \rspec{False}{P}
}

\end{mathpar}

\caption{Some additional $\forall\forall$ rules}
\label{fig:RHLadditional}
\end{figure*}

In this section we prove some of the $\forall\forall$ RHL rules (see Fig.~\ref{fig:RHLselected}).  
Fig.~\ref{fig:RHLadditional} gives some additional RHL rules that are also provable in any BiKAT.

\bigskip

For \rn{dSkip}: the judgment $\skipc \sep \skipc : \rspec{P}{P}$ is encoded as
$P; \emb{1}{1} = P; \emb{1}{1}; P$ which holds because $\emb{1}{1} = \bone$ by 
the homomorphism property of the embeddings.
Strictly: $\emb{1}{1} = \eml{1};\emr{1} = \bone;\bone = \bone$.

\bigskip

For \rn{rConseq}:
$c\sep d : \rspec{P}{Q}$ is encoded as
$P;\emb{c}{d}\leq P;\emb{c}{d};Q$
and the implications as $R\leq P$ and $Q\leq S$.
These immediately yield 
$R;\emb{c}{d}\leq R;\emb{c}{d};S$
by transitivity.

\bigskip

For \rn{rDisj}:
The premises 
$  c\sep d : \rspec{P}{Q}$ and 
$  c\sep d : \rspec{R}{Q}$
are encoded as 
$P;\emb{c}{d}\leq P;\emb{c}{d};Q$
and 
$R;\emb{c}{d}\leq R;\emb{c}{d};Q$.
The conclusion $ c\sep d : \rspec{P\lor R}{Q}$ is encoded as 
$(P+R);\emb{c}{d}\leq (P+R);\emb{c}{d};Q$, and it is proved by
\[\begin{array}{lll}
     & (P+R);\emb{c}{d} \\
=    & P;\emb{c}{d} + R;\emb{c}{d} & \mbox{distrib} \\
\leq & P;\emb{c}{d};Q + R;\emb{c}{d};Q & \mbox{premises} \\
=    & (P+R);\emb{c}{d};Q & \mbox{distrib} 
\end{array}\]
Note that the proof is independent of the form of $\emb{c}{d}$.
The reasoning embodied by this rule is sound for any BiKAT term.  

\bigskip

For \rn{rConj}:
The premises 
$  c\sep d : \rspec{P}{Q}$ and 
$  c\sep d : \rspec{P}{R}$
are encoded as 
$P;\emb{c}{d}\leq P;\emb{c}{d};Q$
and 
$P;\emb{c}{d}\leq P;\emb{c}{d};R$.
The conclusion $ c\sep d : \rspec{P}{Q\land R}$ is encoded as 
$P;\emb{c}{d}\leq P;\emb{c}{d};Q;R$, and it is proved by
$ P;\emb{c}{d} 
\leq  P;\emb{c}{d};R 
\leq P;\emb{c}{d};Q;R $
using the second premise and then the first.

As in the case of \rn{rDisj}, the proof is independent of the form of $\emb{c}{d}$.
The reasoning embodied by this rule is sound for any BiKAT term.  

\bigskip

For \rn{dSeq}:  
\[\begin{array}{lll}
  & P; \emb{c;d}{c';d'} \\
=  & P; \emb{c}{c'}; \emb{d}{d'} & \mbox{embed homo, i.e., LRC} \\
\leq  & P; \emb{c}{c'}; R; \emb{d}{d'} & \mbox{premise} \\ 
\leq  & P; \emb{c}{c'}; R; \emb{d}{d'}; Q & \mbox{premise} \\ 
\leq  & P; \emb{c}{c'}; \emb{d}{d'}; Q &\mbox{test $R\leq 1$} \\
=     & P; \emb{c;d}{c';d'};Q &\mbox{emb homo}
\end{array}\]
The premises are used in the form
$P;\emb{c}{c'}\leq P;\emb{c}{c'};R$ and 
$R;\emb{d}{d'}\leq R;\emb{d}{d'};Q$
and those steps use monotonicity of sequence.

\bigskip

For \rn{SeqSk}, 
suppose $P;\emb{c}{1}\leq P;\emb{c}{1};R$
and $R;\emb{d}{1}\leq R;\emb{d}{1};Q$.
Then
$P;\emb{c;d}{1}
=
P;\emb{c}{1};\emb{d}{1}
\leq
P;\emb{c}{1};R;\emb{d}{1}
\leq
P;\emb{c}{1};R;\emb{d}{1};Q
\leq 
P;\emb{c}{1};\emb{d}{1};Q$
using homomorphism, the premises, and test below 1.

\bigskip

For \rn{WhSkip}, the premise has the form
$P;\eml{e};\emb{c}{1}\leq P;\eml{e};\emb{c}{1};P$
which can be rewritten to 
$P;\eml{e;c}\leq P;\eml{e;c};P$.
The conclusion follows by
\[\begin{array}{lll}
   & P;\emb{(e;c)^*;\neg e}{1} \\
=  & P;\eml{e;c}^*;\neg\eml{e} &\mbox{emb def and homo}\\
\leq& P;\eml{e;c}^*;P;\neg\eml{e} &\mbox{premise, invariance lem}\\
=  & P;\emb{(e;c)^*;\neg e}{1};P;\neg\eml{e}&\mbox{emb def and homo}
\end{array}\]

\subsubsection{Proof of Theorem~\ref{thm:caWh}}

\paragraph{Proof of Expansion Law (\ref{eq:caWhexpand})}

We prove that in any *-continuous model, the following holds,
where $e,e'$ are tests in the underlying model and $Q,R$ are bitests.
\begin{equation}\label{eq:caWhexpandCopy}
\begin{array}{l}
\eml{e;c}^*;\eml{\neg e} ; \emr{e';c'}^*;\emr{\neg e'} = d^* ; \emb{\neg e}{\neg e'} \\
\mbox{ where } d = Q;\eml{e;c} + R;\emr{e';c'} + \neg Q \neg R;\emb{e;c}{e';c'} 
+ \neg Q;\emb{e;c}{\neg e'} + \neg R;\emb{\neg e}{e';c'} 
\end{array}
\end{equation}
Note that (\ref{eq:caWhexpandCopy}) is the same as 
(\ref{eq:caWhexpand}); it is re-stated here for convenience and to define abbreviation $d$ for the right side loop body.
The proof is by mutual inclusion.  

One direction holds in any biKAT, namely 
$d^*   \emb{\neg e}{\neg e'} \leq \eml{e c}^* \eml{\neg e}   \emr{e' c'}^* \emr{\neg e'}$.
(We omit semicolon to save space.)
In its proof we use that tests are below 1, and in particular the consequence that
$\emb{e c}{\neg e'}\leq  \eml{e c}$ and $\emb{\neg e}{e' c'} \leq \emr{e' c'}$.
\[
\begin{array}{lll}
     &( Q \eml{e c} + R \emr{e' c'} + \neg Q \neg R \emb{e c}{e' c'} 
		+ \neg Q \emb{e c}{\neg e'} + \neg R \emb{\neg e}{e' c'} )^*   \emb{\neg e}{\neg e'}\\
\leq & ( \eml{e c} + \emr{e' c'} + \emb{e c}{e' c'} 
		+ \emb{e c}{\neg e'} + \emb{\neg e}{e' c'} )^*   \emb{\neg e}{\neg e'}
     %&\mbox{tests below 1}
\\
=  & ( \eml{e c} + \emr{e' c'} + \emb{e c}{e' c'}  )^*   \emb{\neg e}{\neg e'}
     %&\mbox{$\emb{e c}{\neg e'}\leq  \eml{e c}$, $\emb{\neg e}{e' c'} \leq \emr{e' c'}$} 
\\
  = & \eml{e c}^*  \emr{e' c'}^*   \emb{\neg e}{\neg e'}
    & \mbox{Lemma~\ref{lem:seq-star-expansion} }
	\end{array}
\]
To prove the reverse inclusion, 
$\eml{e c}^* \eml{\neg e}   \emr{e' c'}^* \emr{\neg e'} \leq d^*   \emb{\neg e}{\neg e'}$,
we first commute $\eml{\neg e}$ with $\emr{e' c'}^*$ 
(by LRC and Lemma~\ref{lem:item-over-star}),
to get the equivalent inequality
\[ \eml{e c}^* \emr{e' c'}^* \emb{\neg e}{\neg e'} \leq d^*   \emb{\neg e}{\neg e'} \]
Now rewrite the left side using *-continuity:
\[ 
(sup : n,m\in\mathbb{N} : \eml{e c}^n \emr{e' c'}^m \emb{\neg e}{\neg e'}) \leq d^*   \emb{\neg e}{\neg e'}
\] 
By the sup property this is equivalent to 
\[ 
\all{n, m\in\mathbb{N}}{
   \eml{e c}^n  \emr{e' c'}^m \emb{\neg e}{\neg e'} \leq d^*   \emb{\neg e}{\neg e'} }
\] 
We prove this by induction on $n+m$.  

For the base case,  $n=0=m$, we have
\( \eml{e c}^0 \emr{e' c'}^0 \emb{\neg e}{\neg e'} 
= 
\emb{\neg e}{\neg e'}
\leq d^*   \emb{\neg e}{\neg e'}
\) 
using the definition of iterate 0 as $\kone$, and $1\leq d^*$.

For the induction step we consider separately the cases where $n$ or $m$ is zero.
\begin{list}{}{}
\item[\underline{Case} $n>0$ and $m=0$:]
\[
\begin{array}{lll}
    & \eml{e c}^n \emr{e' c'}^0 \emb{\neg e}{\neg e'} \\
=   & \eml{e c} \eml{e c}^{n-1} \emb{\neg e}{\neg e'} & \mbox{def iterate}\\
=   &   Q \eml{e c} \eml{e c}^{n-1} \emb{\neg e}{\neg e'} 
      + \kNeg{Q} \eml{e c} \eml{e c}^{n-1} \emb{\neg e}{\neg e'} &\mbox{KAT} \\
=   &   Q \eml{e c} \eml{e c}^{n-1} \emb{\neg e}{\neg e'} 
      + \kNeg{Q} \emb{e c}{\neg e'} \eml{e c}^{n-1} \emb{\neg e}{\neg e'} &\mbox{$\emr{\neg e'}$ idempotent, LRC} \\
\leq&   Q \eml{e c} d^* \emb{\neg e}{\neg e'} 
      + \kNeg{Q} \emb{e c}{\neg e'} d^* \emb{\neg e}{\neg e'} &\mbox{induction, twice} \\
\leq&   d d^* \emb{\neg e}{\neg e'} + d d^* \emb{\neg e}{\neg e'}
    &\mbox{ $Q \eml{e c} \leq d$ and $\kNeg{Q} \emb{e c}{\neg e'} \leq d$} \\
\leq&   d^* \emb{\neg e}{\neg e'} + d^* \emb{\neg e}{\neg e'}  &\mbox{$d d^*\leq d^*$} 
      \end{array}
\]

\item[\underline{Case} $n=0$ and $m>0$:]
Symmetric with the preceding case $n>0$ and $m=0$,
using $R\emr{e' c'} \leq d$ and $\kNeg{R} \emb{\neg e}{e' c'} \leq d$.

\item[\underline{Case} $n>0$ and $m>0$:]
By distributivity and boolean algebra, we rewrite
$\eml{e c}^n \emr{e' c'}^m \emb{\neg e}{\neg e'} $
to the sum
\[  \begin{array}{ll}
  & \kNeg{Q} \, \kNeg{R}\eml{e c}^n \emr{e' c'}^m \emb{\neg e}{\neg e'} \\
+ & Q\kNeg{R}\eml{e c}^n  \emr{e' c'}^m \emb{\neg e}{\neg e'}  \\
+ & \kNeg{Q} R \eml{e c}^n \emr{e' c'}^m \emb{\neg e}{\neg e'} \\
+ & Q R \eml{e c}^n \emr{e' c'}^m \emb{\neg e}{\neg e'} 
    \end{array}
\]
We show each term of the sum is below $d^* \emb{\neg e}{\neg e'}$.

\begin{itemize}
\item
\[  \begin{array}{lll}
    & \kNeg{Q} \, \kNeg{R}\eml{e c}^n \emr{e' c'}^m \emb{\neg e}{\neg e'} \\
=   & \kNeg{Q} \, \kNeg{R}\emb{e c}{e' c'} \eml{e c}^{n-1} \emr{e' c'}^{m-1} \emb{\neg e}{\neg e'} 
    & \mbox{def iterate, LRC} \\
\leq& \kNeg{Q} \, \kNeg{R}\emb{e c}{e' c'} d^* \emb{\neg e}{\neg e'} 
    & \mbox{induction} \\
\leq& d d^* \emb{\neg e}{\neg e'}  & \mbox{$\kNeg{Q} \, \kNeg{R}\emb{e c}{e' c'} \leq d$} \\
\leq& d^* \emb{\neg e}{\neg e'}  & \mbox{$dd^* \leq d^*$}
    \end{array}
\]

\item
\[  \begin{array}{lll}
     & Q \kNeg{R}\eml{e c}^n \emr{e' c'}^m \emb{\neg e}{\neg e'} \\
=    & Q \kNeg{R} \eml{e c} \eml{e c}^{n-1} \emr{e' c'}^m \emb{\neg e}{\neg e'} &\mbox{def iterate, LRC}\\
\leq & Q \kNeg{R} \eml{e c} d^*  \emb{\neg e}{\neg e'} &\mbox{induction} \\
\leq & Q \eml{e c} d^*  \emb{\neg e}{\neg e'} &\mbox{test $\kNeg{R} \leq 1$} \\
\leq & d d^*  \emb{\neg e}{\neg e'} &\mbox{$Q \eml{e c} \leq d$} \\
\leq & d^*  \emb{\neg e}{\neg e'} &\mbox{$dd^* \leq d^*$}
\end{array}
\]

\item The argument for 
$ \kNeg{Q} R \eml{e c}^n \emr{e' c'}^m \emb{\neg e}{\neg e'}$ is symmetric with the preceding bullet,
using that $R\emr{e' c'}\leq d$

\item
\[  \begin{array}{lll}
     & Q R\eml{e c}^n \emr{e' c'}^m \emb{\neg e}{\neg e'} \\
=    & Q R \eml{e c} \eml{e c}^{n-1} \emr{e' c'}^m \emb{\neg e}{\neg e'} &\mbox{def iterate, LRC}\\
\leq & Q R \eml{e c} d^*  \emb{\neg e}{\neg e'} &\mbox{induction} \\
\leq & Q \eml{e c} d^*  \emb{\neg e}{\neg e'} &\mbox{test $R\leq 1$} \\
\leq & d d^*  \emb{\neg e}{\neg e'} &\mbox{$Q \eml{e c} \leq d$} \\
\leq & d^*  \emb{\neg e}{\neg e'} &\mbox{$dd^* \leq d^*$}
\end{array}
\]
For this case there is an alternate proof using using $R\emr{e' c'}\leq d$ instead of $Q \eml{e c} \leq d$.
\end{itemize}

\end{list}

\bigskip

\paragraph{Proof of Rule \rn{caWh}}

By calculation, we show that 
\rn{caWh} is sound in any BiKAT that satisfies (\ref{eq:caWhexpandCopy}).

The premises of rule \rn{caWh} are:
\[\begin{array}{rcl}
P;\leftF{e};\rightF{e'};\neg Q;\neg R; \emb{c}{c'} &\leq&
P;\leftF{e};\rightF{e'};\neg Q;\neg R; \emb{c}{c'}; P \\
P; Q; \leftF{e}; \eml{c} &\leq& P; Q; \leftF{e}; \eml{c}; P\\
P; R; \rightF{e'}; \emr{c'} &\leq& P; R; \rightF{e'}; \emr{c'}; P
\end{array}\]
The side condition is $P \leq \eqbib{e}{e'} + Q;\leftF{e} + R;\rightF{e'}$,
which implies 
\begin{equation}\label{eq:caWhSide}
P; \neg Q; \eml{e}; \neg\emr{e'} = 0
\quad\mbox{and}\quad
   P; \neg R; \neg\eml{e}; \emr{e'} = 0
\end{equation}
Observe that the premises imply $P$ is preserved by the loop body:
\[\begin{array}{lll}
 & P( Q\eml{ec} + R\emr{e' c'} + \neg Q \neg R \emb{ec}{e'c'} 
      + \neg Q \emb{e c}{\neg e'} + \neg R \emb{\neg e}{e'c'} ) \\
=  &\quad\mbox{distrib} \\
& P Q\eml{ec} + PR\emr{e' c'} + P \neg Q \neg R \emb{ec}{e'c'} 
      + P \neg Q \emb{e c}{\neg e'} + P\neg R \emb{\neg e}{e'c'} \\
= &\quad\mbox{side condition (\ref{eq:caWhSide})} \\
& P Q\eml{ec} + PR\emr{e' c'} + P \neg Q \neg R \emb{ec}{e'c'}  \\
\leq &\quad\mbox{premises, emb homo} \\
& P Q\eml{ec} P + PR\emr{e' c'} P + P \neg Q \neg R \emb{ec}{e'c'} P  \\
= &\quad\mbox{side condition (\ref{eq:caWhSide})} \\
& P Q\eml{ec} P + PR\emr{e' c'} P + P \neg Q \neg R \emb{ec}{e'c'} P 
      + P \neg Q \emb{e c}{\neg e'} P + P\neg R \emb{\neg e}{e'c'} P \\
= &\quad\mbox{distrib} \\
& P (Q\eml{ec}  + R\emr{e' c'}  +  \neg Q \neg R \emb{ec}{e'c'}  
      +  \neg Q \emb{e c}{\neg e'}  + \neg R \emb{\neg e}{e'c'} ) P \\
\end{array}\]
The conclusion of \rn{caWh} follows by
\[\begin{array}{lll}
  & P \eml{ec}^* \eml{\neg e}  \emr{e' c'}^* \emr{\neg e'} \\
= &     \quad\mbox{by (\ref{eq:caWhexpandCopy})} \\
  & P( Q\eml{ec} + R\emr{e' c'} + \neg Q \neg R \emb{ec}{e'c'} 
      + \neg Q \emb{e c}{\neg e'} + \neg R \emb{\neg e}{e'c'} )^*  \emb{\neg e}{\neg e'} \\
\leq &  \quad\mbox{invariance, observation above} \\
  & P( Q\eml{ec} + R\emr{e' c'} + \neg Q \neg R \emb{ec}{e'c'} 
      + \neg Q \emb{e c}{\neg e'} + \neg R \emb{\neg e}{e'c'} )^*  P  \emb{\neg e}{\neg e'} \\
= &     \quad\mbox{tests idem, tests commute} \\
  & P( Q\eml{ec} + R\emr{e' c'} + \neg Q \neg R \emb{ec}{e'c'} 
      + \neg Q \emb{e c}{\neg e'} + \neg R \emb{\neg e}{e'c'} )^*  \emb{\neg e}{\neg e'} P  \emb{\neg e}{\neg e'}  \\
= &     \quad\mbox{by (\ref{eq:caWhexpandCopy})} \\
  & P (\eml{ec}^* \eml{\neg e}  \emr{e' c'}^* \emr{\neg e'})  P  \emb{\neg e}{\neg e'} 
\end{array}\]

\subsection{Details for Sect.~\ref{sec:beyond}}

\subsubsection{Details for Sect.~\ref{sec:biKATsim}}

\paragraph{Proof of Theorem~\ref{thm:sim:complete}, Backward Simulation Case.}

Suppose $c\sep d:\bespec{R}{S}$ holds.  
Let witness $W$ be $\dot{R};\emb{c}{d};\dot{S}$.
We show it satisfies the conditions to be b-valid.
\begin{itemize}
\item
We have (WCb) because  $W;\dot{S} = \dot{R};W;\dot{S}$ 
by definition of $W$ and idempotence of the bitest $\dot{R}$.
\item
To show (WOb), i.e., $\eml{c};\dot{S} \leq \emr{\hav};W$,
suppose $(\sigma,\sigma')\eml{c};\dot{S}(\tau,\tau')$.
Then $\sigma c \tau$ and $\sigma'=\tau'$ and $\tau S\tau'$.
Now by $c\sep d:\bespec{R}{S}$ 
there is $\sigma''$ with $\sigma'' d \tau'$ and $\sigma R \sigma''$.
So $(\sigma,\sigma'')W(\tau,\tau')$ and thus 
$(\sigma,\sigma')\emr{\hav};W(\tau,\tau')$.
\item
To show (WUb), i.e.,  $W;\dot{S} \leq \emb{\hav}{d}$,
suppose $(\sigma,\sigma')W;\dot{S}(\tau,\tau')$.
Then by definitions $\sigma R \sigma'$, $\sigma c \tau$, $\sigma' d \tau'$, and $\tau S \tau'$.
Since $\sigma \hav \tau$, we have
$(\sigma,\sigma')\emb{\hav}{d}(\tau,\tau')$.
\end{itemize}
This concludes the proof.

Apropos Footnote~\ref{fn:sim:complete}, if the model is full then it contains 
elements that satisfy the b-validity conditions and are in some sense minimal
(whereas the expressible witness $\dot{R};\emb{c}{d};\dot{S}$ used above is in some sense maximal).
To elaborate, suppose $c\sep d:\bespec{R}{S}$ holds.  
Define the predicate
$\mathcal{P}(\sigma,\tau,\tau') \eqdef \sigma c \tau S \tau'$
and the set 
$\mathcal{X}(\sigma,\tau,\tau') \eqdef \{ \sigma' \mid \sigma R \sigma' d \tau' \}$.
For any $(\sigma,\tau,\tau')$ that satisfy $\mathcal{P}$ we have
$\mathcal{X}(\sigma,\tau,\tau')$ nonempty, owing to $c|d:\bespec{R}{S}$. 
So define 
$\mathcal{Y}(\sigma,\tau,\tau')$ to be a chosen element of $\mathcal{X}(\sigma,\tau,\tau')$ if $\mathcal{P}(\sigma,\tau,\tau')$, and undefined otherwise.
Define 
\( W \eqdef \{((\sigma,\sigma'),(\tau,\tau')) \mid 
                   \mathcal{P}(\sigma,\tau,\tau') \land 
                   \sigma' = \mathcal{Y}(\sigma,\tau,\tau') \} 
\).
By fullness, $W$ is in the BiKAT.
For b-validity we have (WCb) because $W = \dot{R} ;  W ; \dot{S}$,

For (WOb), suppose $(\sigma,\sigma')\eml{c};\dot{S}(\tau,\tau')$.
Then $\sigma c \tau$ and $\sigma'=\tau'$ and $\tau S\tau'$.
So $\mathcal{P}(\sigma,\tau,\tau')$ so there's $\rho$ with
$\sigma R \rho d \tau'$ and $(\sigma,\rho)W(\tau,\tau')$
Since $\sigma'\hav\rho$ we have 
$(\sigma,\sigma')\emr{\hav}(\tau,\tau')$.

For (WUb), suppose $(\sigma,\sigma')W;\dot{S}(\tau,\tau')$.
So $(\sigma,\sigma')W(\tau,\tau')$ and $\tau S \tau'$.  
By definition of $W$ we have $\sigma c \tau S \tau'$ 
and $\sigma R \sigma' d \tau'$.  
Since $\sigma \hav \tau$ we have
$(\sigma,\sigma')\emb{\hav}{d}(\tau,\tau')$.

\begin{figure*}

\begin{mathpar}
	
	\inferrule*[left=DisjWC]{
		R; W_1 \leq R; W_1; S \\
		R; W_2 \leq R; W_2; S
	}{
		R; (W_1 + W_2) \leq R; (W_1 + W_2); S
	}

	\inferrule*[left=DisjWU]{
		R; W_1 \leq \emb{\hav}{d_1} \\
		R; W_2 \leq \emb{\hav}{d_2}
	}{
		R; (W_1 + W_2) \leq \emb{\hav}{d_1 + d_2}
	}

	\inferrule*[left=DisjWO]{
		R; \eml{c_1} \leq W_1; \emr{\hav} \\
		R; \eml{c_2} \leq W_2; \emr{\hav}
	}{
		R; \eml{c_1 + c_2} \leq (W_1 + W_2); \emr{\hav} \\
	}
\end{mathpar}

\caption{Disjunction lemmas for use with forward simulation witness technique}\label{fig:disjLem}
\end{figure*}

\paragraph{Disjunctive Forward Witness Soundness Rules.} To support case analysis, we also have additional forward witness soundness rules with disjunctions.
Fig.~\ref{fig:disjLem} presents three lemmas for disjunctive decomposition of witnesses for forward simulation.

\rn{DisjWC} holds because
\(\begin{array}[t]{lll}
	& R; (W_1 + W_2) & \\
	= & R; W_1 + R; W_2 & \mbox{distrib} \\
	\leq & R; W_1; S + R; W_2; S & \mbox{assumptions}\\
	= & R; (W_1 + W_2); S
\end{array}\)

\rn{DisjWU} holds because
\(\begin{array}[t]{lll}
	& R; (W_1 + W_2) & \\
	= & R; W_1 + R; W_2 & \mbox{distrib} \\
	\leq & \emb{\hav}{d_1} + \emb{\hav}{d_2} & \mbox{assumptions}\\
	= & \emb{\hav}{d_1 + d_2} & \mbox{emb homo}\\
\end{array}\)

\rn{DisjWO} holds because
\(\begin{array}[t]{lll}
	& R; \eml{c_1 + c_2} & \\
	= & R;  \eml{c_1} + R;  \eml{c_2} & \mbox{assumption, emb homo} \\
	\leq & W_1; \emr{\hav} + W_2; \emr{\hav} & \mbox{assumptions}\\
	= & (W_1 + W_2); \emr{\hav}
\end{array}\)

\paragraph{Proof of Example~\ref{eg:four} in Sect.~\ref{sec:overview}}

The programs:
\begin{lstlisting}
  (c1)  x := any; y := x
  (c2)  t := any; z := t + 1
\end{lstlisting}
Goal: forward simulation judgment $c_1\sep c_2 : \aespec{true}{\eqbib{y}{z}}$.\\
Choose as witness,
$W\eqdef \emb{\kcode{x:=any}}{\kcode{t:=any}}; [x-1 \eqbi t];
\emb{\kcode{y:=x}}{\kcode{z:=t+1}}$.
\\
To show:
\begin{list}{}{}
\item[(WC)] $W \leq W; [\eqbib{y}{z}]$
\item[(WU)] $W \leq \emb{\hav}{\kcode{t:=any}; \kcode{z:=t+1}}$
\item[(WO)] $\eml{\kcode{x:=any}; \kcode{y:=x}} \leq W; \emr{\hav}$
\end{list}

\noindent $\bullet$ (WC)
\(\begin{array}[t]{lll}
    & \emb{\kcode{x:=any}}{\kcode{t:=any}}; [x-1 \eqbi t];\emb{\kcode{y:=x}}{\kcode{z:=t+1}}\\
  = & \emb{\kcode{x:=any}}{\kcode{t:=any}}; [x \eqbi t+1];\emb{\kcode{y:=x}}{\kcode{z:=t+1}}
    & \mbox{equiv test} \\    
  = & \emb{\kcode{x:=any}}{\kcode{t:=any}}; [x \eqbi t+1];\emb{\kcode{y:=x}}{\kcode{z:=t+1}}
      [y \eqbi z]
  \end{array}\)
\\
The last step is by the assignment axiom
$wp(\emb{\kcode{y:=x}}{\kcode{z:=t+1}})(y\eqbi z) = (x \eqbi t+1)$
in the KAT form
$(x \eqbi t+1); \emb{\kcode{y:=x}}{\kcode{z:=t+1}} = 
(x \eqbi t+1); \emb{\kcode{y:=x}}{\kcode{z:=t+1}}; (y\eqbi z)$.
\\
$\bullet$ (WU)
\(\begin{array}[t]{lll}
     & \emb{\kcode{x:=any}}{\kcode{t:=any}}; [x-1 \eqbi t];\emb{\kcode{y:=x}}{\kcode{z:=t+1}}\\
\leq & \emb{\kcode{x:=any}}{\kcode{t:=any}}; \emb{\kcode{y:=x}}{\kcode{z:=t+1}} &
       \mbox{tests below 1}\\
=    & \emb{\kcode{x:=any}; \kcode{y:=x}}{\kcode{t:=any}; \kcode{z:=t+1}} & \mbox{emb homo}\\
\leq & \emb{\hav}{\kcode{t:=any}; \kcode{z:=t+1}} & \mbox{$\hav$ is top, emb mono}
  \end{array}\)
\\

$\bullet$ (WO)
We use these axioms for the primitives:
\begin{itemize}
\item[(a)] $1 \leq \emr{\kcode{t:=any}};[x-1 \eqbi t];\emr{\kcode{t:=any}}$ (says that $[x-1\eqbi t]$ is left total, cf.~rule \rn{enAss})
\item[(b)] $\eml{\kcode{y:=x}};[x-1 \eqbi t] = [x-1 \eqbi t]; \eml{\kcode{y:=x}}$ (disjoint variables)
\item[(c)] $\kcode{z:=t+1};\hav = \hav$ (assignments are total)
\end{itemize}
We have (WO) because 
\[\begin{array}[t]{lll}
     & \eml{\kcode{x:=any};\kcode{y:=x}} \\
=    & \eml{\kcode{x:=any}};\eml{\kcode{y:=x}} &\mbox{emb homo} \\
\leq & \eml{\kcode{x:=any}};\eml{\kcode{y:=x}};\emr{\kcode{t:=any}};[x-1 \eqbi t];\emr{\kcode{t:=any}} &\mbox{(a)} \\
=    & \emb{\kcode{x:=any}}{\kcode{t:=any}};\eml{\kcode{y:=x}};[x-1 \eqbi t];\emr{\kcode{t:=any}} &\mbox{LRC} \\
=    & \emb{\kcode{x:=any}}{\kcode{t:=any}};[x-1 \eqbi t];\eml{\kcode{y:=x}};\emr{\kcode{t:=any}} &\mbox{(b)} \\
\leq & \emb{\kcode{x:=any}}{\kcode{t:=any}};[x-1 \eqbi t];\eml{\kcode{y:=x}};\emr{\hav} &\mbox{$\hav$ top, embed mono} \\
=    & \emb{\kcode{x:=any}}{\kcode{t:=any}};[x-1 \eqbi t];\emb{\kcode{y:=x}}{\kcode{z:=t+1}};\emr{\hav} &\mbox{(c)}
\end{array}\]

\subsubsection{Proofs for Lemma~\ref{lem:fsim:rules}, i.e., Forward Simulation Rules
in Fig.~\ref{fig:aeRHL}}

\paragraph{Proof of \rn{eAss}}
For alignment witness take $W:= \emb{v:=e}{v':=e'}$ itself.
Have (WC) in accord with the $\forall\forall$ assignment rule dAss.
For (WU), we have 
$\subst{P}{v|v'}{e|e'}; \emb{v:=e}{v':=e'} 
\leq \emb{v:=e}{v':=e'} 
\leq \emb{\hav}{v':=e'}$
by test below 1 and $\hav$ above all.
For (WO):
\[ 
\subst{P}{v|v'}{e|e'}; \emb{v:=e}{1} 
\leq \emb{v:=e}{1} 
\leq \emb{v:=e}{\hav} 
= \emb{v:=e}{v':=e';\hav} 
= \emb{v:=e}{v':=e'};\emb{1}{\hav}
\]
using homomorphism in the last step and
the law $\hav = v':=e';\hav$.
This law holds in usual operational and denotational semantics.

\paragraph{Proof of \rn{enAss}}

Let witness $W$ to be $\emb{x:=\kcode{any}}{y:=\kcode{any}};\dot{R}$. 

$\bullet$ (WC) $\bone; \emb{x:=\kcode{any}}{y:=\kcode{any}};\dot{R} \leq \bone; \emb{x:=\kcode{any}}{y:=\kcode{any}};\dot{R};\dot{R}$
holds  by idempotence of tests.

$\bullet$ (WU) $\bone ; \emb{x:=\kcode{any}}{y:=\kcode{any}}; \dot{R} \leq \emb{\hav}{y:=\kcode{any}}$
holds by $\bone$ identity, $\hav$ top, $\eml{-}$ monotonic, and $\dot{R}\leq \bone$.

$\bullet$ (WO) is $\bone; \eml{x:=\kcode{any}} \leq \emb{x:=\kcode{any}}{y:=\kcode{any}};\dot{R};\emr{\hav}$
and we have 
$\bone; \eml{x:=\kcode{any}} 
= \bone; \eml{x:=\kcode{any}}; \bone
\leq \bone; \eml{x:=\kcode{any}}; \emr{y:=\kcode{any}};\dot{R};\emr{y:=\kcode{any}}
= \emb{x:=\kcode{any}}{y:=\kcode{any}};\dot{R};\emr{y:=\kcode{any}}
\leq 
\emb{x:=\kcode{any}}{y:=\kcode{any}};\dot{R};\emr{\hav}$
using the antecedent of the rule in the second step.

\paragraph{Proof of \rn{eConseq}}

Given witness $W$ for 
$c\sep d : \aespec{P}{Q}$,
and conditions $R\imp P$ and $Q\imp S$,
we have that $W$ is also a witness for 
$ c|d : \aespec{R}{S}$.
To show (WC) for the latter, we have
$\dot{R}; W \leq 
 \dot{P}; W \leq
 \dot{P}; W; \dot{Q} \leq
 \dot{P}; W; \dot{S}$
using (WC) for $W$.
And $\dot{R}; W \leq \dot{P}; W; \dot{S}$
iff $\dot{R}; W \leq \dot{R}; W; \dot{S}$
by KAT fact $px\leq y \iff px\leq py$ and 
$\dot{R}\cdot\dot{P} = \dot{R}$ from $R\imp P$.
The proofs of (WO) and (WU) are even simpler.

\paragraph{Proof of \rn{eIf}}
Assume the side condition $P \imp \eqbib{e}{e'}$ 
and premises 
$c\sep c' : \aespec{P\land \leftF{e}\land\rightF{e'}}{Q}$ and 
$d\sep d' : \aespec{P\land \neg\leftF{e}\land\neg\rightF{e'}}{Q}$.
The encoding of the conclusion is about
$\emb{e; c + \neg e;d}{e'; c' + \neg e'; d'}$.
The side condition yields $P = P;(\emb{e}{e'}+\emb{\neg e}{\neg e'})$.

By the theorem, there is some witness $Z$ for the first premise and $W$ for the second,
such that

(WCZ)
$P;\emb{e}{e'}; Z \leq P;\emb{e}{e'}; Z; Q$ 

(WUZ)
$P;\emb{e}{e'}; Z \leq \emb{\hav}{c'}$ 

(WOZ)
$P;\emb{e}{e'}; \emb{c}{1} \leq Z ; \emb{1}{\hav}$

(WCW)
$P;\emb{\neg e}{\neg e'}; W \leq P;\emb{\neg e}{\neg e'}; W; Q$ 

(WUW)
$P;\emb{\neg e}{\neg e'}; W \leq \emb{\hav}{d'}$ 

(WOW)
$P;\emb{\neg e}{\neg e'}; \emb{d}{1} \leq W ; \emb{1}{\hav}$ 

To show the conclusion we take as witness $X := \emb{e}{e'};Z + \emb{\neg e}{\neg e'};W$
and prove each of the conditions. 

\noindent $\bullet$ (WC) is 
$P;(\emb{e}{e'};Z + \emb{\neg e}{\neg e'}; W)
\leq  
P; (\emb{e}{e'};Z + \emb{\neg e}{\neg e'}; W); Q$ 
and we have
\[\begin{array}{lll}
  & P; 
  (\emb{e}{e'};Z + \emb{\neg e}{\neg e'};W) \\
= & P; (\emb{e}{e'} + \emb{\neg e}{\neg e'});
  (\emb{e}{e'};Z + \emb{\neg e}{\neg e'};W) 
  & \mbox{side condition} \\
= &   P; \emb{e}{e'};Z 
    + P; \emb{\neg e}{\neg e'};W & \mbox{distribute, cancel, absorb} \\
\leq &   P; \emb{e}{e'};Z; Q
    + P; \emb{\neg e}{\neg e'};W; Q & \mbox{(WCZ), (WCW)} \\
= &  P; (\emb{e}{e'};Z + \emb{\neg e}{\neg e'};W) ; Q &\mbox{symmetric steps}
\end{array}\]

\noindent $\bullet$ (WU) is $P;(\emb{e}{e'};Z + \emb{\neg e}{\neg e'}; W) \leq \emb{\hav}{e';c' + \neg e';d'}$
and we have
\[\begin{array}{lll}
    & P;(\emb{e}{e'};Z + \emb{\neg e}{\neg e'}; W) \leq \emb{\hav}{e';c' + \neg e';d'} \\
\iff& P;\emb{e}{e'};Z + P\emb{\neg e}{\neg e'}; W \leq \emb{\hav}{e';c' + \neg e';d'} &\mbox{distrib}\\ 
\iff& P;\emb{e}{e'};Z + P\emb{\neg e}{\neg e'}; W \leq \emb{\hav}{e';c'} + \emb{\hav}{\neg e';d'}&\mbox{emb 
homo}\\ 
\impby&      P;\emb{e}{e'};Z \leq \emb{\hav}{e';c'} 
       \land P\emb{\neg e}{\neg e'}; W \leq \emb{\hav}{\neg e';d'}&\mbox{$+$ mono} 
\end{array}\]
Note that (WUZ) is equivalent to 
$P;\emb{e}{e'}; Z \leq \emr{e'};\emb{\hav}{c'}$ (by $px\leq y\iff px\leq py$)
and then by LRC we get $\emr{e'};\emb{\hav}{c'} = \emb{\hav}{e';c'}$,
which yields the left conjunct above.
For the right conjunct, use (WUW) similarly.

\noindent $\bullet$ (WO) is 
$P;\emb{e;c + \neg e; d}{1} 
\leq 
(\emb{e}{e'};Z + \emb{\neg e}{\neg e'}; W);\emb{1}{\hav}$
and we have
\[\begin{array}{lll}
    & P;\emb{e;c + \neg e; d}{1} 
      \leq 
      (\emb{e}{e'};Z + \emb{\neg e}{\neg e'}; W);\emb{1}{\hav} \\
\iff& \mbox{side condition}\\
    & \emb{e}{e'}+\emb{\neg e}{\neg e'});P;\emb{e;c + \neg e; d}{1} 
      \leq 
      (\emb{e}{e'};Z + \emb{\neg e}{\neg e'}; W);\emb{1}{\hav} \\
\impby&  \mbox{distrib, cancel, $+$ mono} \\
      & \emb{e}{e'};P;\emb{e;c}{1} \leq \emb{e}{e'};Z;\emb{1}{\hav}
   \land\emb{\neg e}{\neg e'};P;\emb{\neg e;d}{1} \leq \emb{\neg e}{\neg e'};W;\emb{1}{\hav} \\
\iff & \mbox{test property $px \leq y \iff px \leq py$} \\
     & \emb{e}{e'};P;\emb{e;c}{1} \leq Z;\emb{1}{\hav}
   \land\emb{\neg e}{\neg e'};P;\emb{\neg e;d}{1} \leq W;\emb{1}{\hav}
\end{array}\]
The latter conjuncts are equivalent to (WOZ) and (WOW), using emb homo on the left.

\paragraph{Proof of \rn{eWh}}
Suppose we have $P \imp \eqbib{e}{e'}$ 
and $c\sep c' : \aespec{P\land \leftF{e}\land\rightF{e'}}{P}$.
Suppose we have alignment witness $W$ for this judgment, thus 

(WCW) $P;\emb{e}{e'}; W \leq P;\emb{e}{e'}; W; P$

(WUW) $P;\emb{e}{e'}; W \leq \emb{\hav}{c'}$ 

(WOW) $P;\emb{e}{e'}; \emb{c}{1} \leq W ; \emb{1}{\hav}$ 

To show $\whilec{e}{c} \Sep \whilec{e'}{c'} : \aespec{P}{P\land \neg\leftF{e}\land\neg\rightF{e'}}$,
take as alignment witness $Z := (\emb{e}{e'};W)^*;\eml{\neg e}$.

\noindent $\bullet$ (WC) is 
$ P; ( \emb{e}{e'}; W )^{\kstar}; \eml{\neg e}
\leq P; ( \emb{e}{e'}; W )^{\kstar}; \eml{\neg e} P \emb{\neg e}{\neg e'}
=  P; ( \emb{e}{e'}; W )^{\kstar}; \emb{\neg e}{\neg e'}$
and we have
\[\begin{array}{lll}
    & P; ( \emb{e}{e'}; W )^{\kstar}; \eml{\neg e} \\
=   & P; ( \emb{e}{e'}; W )^{\kstar}; P; \eml{\neg e} & \mbox{(WCW) and invariance} \\
=   & P; ( \emb{e}{e'}; W )^{\kstar}; P; \emb{\neg e}{\neg e} &
     \mbox{by side condition, $P \imp e \eqbi e'$}
\end{array}\]

\noindent $\bullet$ (WO) is
$P; \eml{( e;c )^{\kstar}; \neg e} \leq ( \emb{e}{e'}; W )^{\kstar}; \eml{\neg e}; \emr{\hav}$.  To establish this fact, we start by calculating
\[\begin{array}{lll}
    & P; \eml{( e;c )^{\kstar}; \neg e} \leq ( \emb{e}{e'}; W )^{\kstar}; \eml{\neg e}; \eml{\hav} \\
\iff & P; \eml{e; c}^{\kstar}; \eml{\neg e} \leq ( \emb{e}{e'}; W )^{\kstar}; \emr{\hav}; \eml{\neg e}
     & \mbox{emb homo and LRC} \\
\impby & P; \eml{e; c}^{\kstar} \leq ( \emb{e}{e'}; W )^{\kstar}; \emr{\hav}
     & \mbox{monotonicity} \\
\iff & P; \eml{e;c}^{\kstar} \leq P; ( \emb{e}{e'}; W )^{\kstar}; \emr{\hav}
     & \mbox{using $a;x \leq y \iff a;x \leq a;y$} \\
\impby & P + P; ( \emb{e}{e'}; W )^{\kstar}; \emr{\hav}; \eml{e;c} \leq P; ( \emb{e}{e'}; W )^{\kstar}; \emr{\hav} & \mbox{induction} \\
\impby & P; ( \emb{e}{e'}; W )^{\kstar}; \emr{\hav}; \eml{e;c} \leq P; ( \emb{e}{e'}; W )^{\kstar}; \emr{\hav} & \mbox{join, and $P \leq \mathrm{RHS}$} \\
\iff  & P; ( \emb{e}{e'}; W )^{\kstar}; \emb{e;c}{\hav} \leq P; ( \emb{e}{e'}; W )^{\kstar}; \emr{\hav} & \mbox{rewrite LHS}
\end{array}\]
Now consider the LHS.  We have:
\[\begin{array}{lll}
     &  P; ( \emb{e}{e'}; W )^{\kstar}; \emb{e;c}{\hav}  & \\
   = &  P; ( \emb{e}{e'}; W )^{\kstar}; P; \emb{e;c}{\hav} & \mbox{(WCW) and invariance} \\
   = &  P; ( \emb{e}{e'}; W )^{\kstar}; P; \emb{e;c}{e';\hav}  & \mbox{side condition, $P \imp e \eqbi e'$} \\
   = &  P; ( \emb{e}{e'}; W )^{\kstar}; P; \emb{e}{e'}; P; \emb{e;c}{e'} \emr{\hav} & \mbox{duplicate tests, emb homo} \\
\leq &  P; ( \emb{e}{e'}; W )^{\kstar}; P; \emb{e}{e'}; W; \emr{\hav} & \mbox{(WOW)} \\
\leq &  P; ( \emb{e}{e'}; W )^{\kstar}; \emb{e}{e'}; W; \emr{\hav}  & \mbox{drop test $P$} \\
\leq &  P; ( \emb{e}{e'}; W )^{\kstar}; \emr{\hav}  & \mbox{using $x^{\kstar};x\leq x^{\kstar}$}
  \end{array}\]
This is equal to the RHS and we are done.

\noindent $\bullet$ (WU) is $P; ( \emb{e}{e'}; W )^{\kstar}; \eml{\neg e} \leq \emb{\hav}{( e';c' )^{\kstar}; \neg e'}$.  We show
\[\begin{array}{lll}
       & P; ( \emb{e}{e'}; W )^{\kstar}; \eml{\neg e} \leq \emb{\hav}{( e';c' )^{\kstar}; \neg e'} \\ 
  \iff & \mbox{using lemma $p;x = p;x;p \imp p;x^{\kstar} = p;(x;p)^{\kstar}$ and (WCW)} \\
       & P; ( \emb{e}{e'}; W; P )^{\kstar}; \eml{\neg e} \leq \emb{\hav}{( e';c' )^{\kstar}; \neg e'} \\ 
  \iff & \mbox{sliding} \\
       & ( P; \emb{e}{e'}; W )^{\kstar}; P; \eml{\neg e} \leq \emb{\hav}{( e';c' )^{\kstar}; \neg e'} \\
 \iff  & \mbox{side condition, $P \imp e \eqbi e'$} \\
       & ( P; \emb{e}{e'}; W )^{\kstar}; P; \emb{\neg e}{\neg e'} \leq \emb{\hav}{( e';c' )^{\kstar}; \neg e'} \\
\impby & \mbox{induction, $x^{\kstar};y \leq z \impby y + x;z \leq z$} \\
       & P;\emb{\neg e}{\neg e'} + P; \emb{e}{e'}; W; \emb{\hav}{( e';c' )^{\kstar}; \neg e'} \leq  \emb{\hav}{( e';c' )^{\kstar}; \neg e'} \\
\end{array}\]

Since $P;\emb{\neg e}{\neg e'}$ is less than the RHS, by join, it suffices to show:
\[\begin{array}{lll}
 & P; \emb{e}{e'}; W; \emb{\hav}{( e';c' )^{\kstar}; \neg e} \leq  \emb{\hav}{( e';c' )^{\kstar}; \neg e'} \\
\iff & P; \emb{e}{e'}; \emr{e'}; W; \emb{\hav}{( e';c' )^{\kstar}; \neg e'} \leq  \emb{\hav}{( e';c' )^{\kstar}; \neg e'} & \mbox{dup test $\emr{e'}$} \\
\iff & \emr{e'}; P; \emb{e}{e'}; W; \emb{\hav}{( e';c' )^{\kstar}; \neg e'} \leq  \emb{\hav}{( e';c' )^{\kstar}; \neg e'} & \mbox{commute tests} \\
\impby & \emr{e'}; \emb{\hav}{c'}; \emb{\hav}{( e';c' )^{\kstar}; \neg e'} \leq  \emb{\hav}{( e';c' )^{\kstar}; \neg e'} & \mbox{using (WUW)} \\
\iff &  \emb{\hav}{e';c'}; \emb{\hav}{( e';c' )^{\kstar}; \neg e'} \leq  \emb{\hav}{( e';c' )^{\kstar}; \neg e'} & \mbox{emb homo} \\
\impby & \emr{e';c';( e';c' )^{\kstar}; \neg e'} \leq \emr{( e';c' )^{\kstar}; \neg e'} & \mbox{$\hav$ is top}
\end{array}\]
But this follows from the fact that $x;x^{\kstar}\leq x^{\kstar}$.

\paragraph{Proof of \rn{eWhL}}

Suppose we have $P \imp (\rightF{e'} \imp \leftF{e})$,
$c\sep c' : \aespec{P\land \leftF{e}\land\rightF{e'}}{P}$, and
$c\sep \skipc : \aespec{P\land \leftF{e}}{P}$.  From judgment for $c\sep c'$
obtain alignment witness $B$ and from judgment for $c\sep\skipc$ obtain
witness $L$.  We have,

(WCB) $P; \emb{e}{e'}; B \leq P; \emb{e}{e'}; B; P$

(WCL) $P; \emb{e}{1}; L \leq P; \emb{e}{1}; L; P$

(WUB) $P; \emb{e}{e'}; B \leq \emb{\hav}{c'}$

(WUL) $P; \emb{e}{1}; L \leq \emb{\hav}{1}$

(WOB) $P; \emb{e}{e'}; \emb{c}{1} \leq B; \emb{1}{\hav}$

(WOL) $P; \emb{e}{1}; \emb{c}{1} \leq L; \emb{1}{\hav}$

To show
$\whilec{e}{c} \Sep \whilec{e'}{c'} : \aespec{P}{P\land
  \neg\leftF{e}\land\neg\rightF{e'}}$ we take as alignment witness
$W := (\emb{e}{e'};B)^*;(\emb{e}{1}L)^*;\emb{\neg e}{1}$.

We have to show:

(WC) $P;W \leq P;W;P;\emb{\neg e}{\neg e'}$

(WU) $P;W \leq \emb{\hav}{(e'c')^*\neg e'}$

(WO) $P;\emb{(ec)^*\neg e}{1} \leq W; \emb{1}{\hav}$

\noindent $\bullet$ For (WC), we have:
\[
\begin{array}[]{lll}
     & P; ( \emb{e}{e'}; B )^{\kstar}; ( \eml{e}; L )^{\kstar}; \eml{\neg e}  & \\
\leq & P; ( \emb{e}{e'}; B )^{\kstar}; P; ( \eml{e}; L )^{\kstar}; \eml{\neg e}
     & \mbox{using (WCB) and invariance} \\
\leq & P; ( \emb{e}{e'}; B )^{\kstar}; P; ( \eml{e}; L )^{\kstar}; P; \eml{\neg e}
     & \mbox{using (WCL) and invariance} \\
  =  & P; ( \emb{e}{e'}; B )^{\kstar}; P; ( \eml{e}; L )^{\kstar}; P; \emb{\neg e}{\neg e'}
     & \mbox{using side condition $P \imp e \eqbi e'$} \\
\leq & P; ( \emb{e}{e'}; B )^{\kstar}; ( \eml{e}; L )^{\kstar}; P; \emb{\neg e}{\neg e'}
     & \mbox{drop test $P$}
\end{array} 
\]

\noindent $\bullet$ For (WU), we have to show $P; ( \emb{e}{e'} B )^{\kstar}; ( \eml{e}; L )^{\kstar}; \eml{\neg e} \leq \emb{\hav}{( e';c' )^{\kstar};\neg e'}$.  We calculate as follows,
\[
\begin{array}{lll}
     & P; ( \emb{e}{e'} B )^{\kstar}; ( \eml{e}; L )^{\kstar}; \eml{\neg e} \leq \emb{\hav}{( e';c' )^{\kstar};\neg e'} \\ 
\impby& \mbox{(WCB) with Lemma~\ref{lem:L2}}\\
     & P; ( \emb{e}{e'} B; P )^{\kstar}; ( \eml{e}; L )^{\kstar}; \eml{\neg e} \leq \emb{\hav}{( e';c' )^{\kstar};\neg e'} \\
\iff & \mbox{sliding} \\
     & ( P; \emb{e}{e'} B; )^{\kstar}; P ; ( \eml{e}; L )^{\kstar}; \eml{\neg e} \leq \emb{\hav}{( e';c' )^{\kstar};\neg e'} \\
\iff & \mbox{(WCL) and invariance} \\
     & ( P; \emb{e}{e'} B; )^{\kstar}; P ; ( \eml{e}; L )^{\kstar}; P; \eml{\neg e} \leq \emb{\hav}{( e';c' )^{\kstar};\neg e'} \\
\iff & \mbox{by side condition $P \imp e \eqbi e'$, whence $P\eml{\neg e} = P\emb{\neg e}{\neg e'}$} \\
     & ( P; \emb{e}{e'} B; )^{\kstar}; P ; ( \eml{e}; L )^{\kstar}; P; \emb{\neg e}{\neg e'} \leq \emb{\hav}{( e';c' )^{\kstar};\neg e'} \\
\impby &  \mbox{induction, $x^{\kstar};y\leq z \impby y + x;z\leq z$} \\ 
       & P ( \eml{e}; L )^{\kstar}; P; \emb{\neg e}{\neg e'} + P; \emb{e}{e'}; B; \emb{\hav}{( e';c' )^{\kstar}; \neg e'} 
\leq \emb{\hav}{( e';c' )^{\kstar};\neg e'}
\end{array}
\]

By join, it suffices to check two conditions:
\[\begin{array}{lll}
      & P; ( \eml{e}; L )^{\kstar}; P; \emb{\neg e}{\neg e'} \leq \emb{\hav}{( e';c' )^{\kstar}; \neg e'}  \\
\impby& \mbox{(WCL) using Lemma~\ref{lem:L2}} \\
      & P; ( \eml{e}; L; P )^{\kstar}; P; \emb{\neg e}{\neg e'} \leq \emb{\hav}{( e';c' )^{\kstar}; \neg e'} \\
\iff  & \mbox{sliding and test idem}\\
      & ( P; \eml{e}; L; )^{\kstar}; P; \emb{\neg e}{\neg e'} \leq \emb{\hav}{( e';c' )^{\kstar}; \neg e'} \\
\impby& \mbox{induction}\\
      & P; \emb{\neg e}{\neg e'} + P; \eml{e}; L; \emb{\hav}{( e';c' )^{\kstar}; \neg e'} \leq \emb{\hav}{( e';c' )^{\kstar}; \neg e'} \\
\impby& \mbox{join along with $P; \emb{\neg e}{\neg e'} \leq \mathrm{RHS}$}\\
      & P; \eml{e}; L; \emb{\hav}{( e';c' )^{\kstar}; \neg e'} \leq \emb{\hav}{( e';c' )^{\kstar}; \neg e'} \\
\impby& \mbox{(WUL)} \\
      & \eml{\hav}; \emb{\hav}{( e';c' )^{\kstar}; \neg e'} \leq \emb{\hav}{( e';c' )^{\kstar}; \neg e'} 
  \end{array}\]
This follows from $\hav$ being top and reflexivity.  For the other case, we have:
\[\begin{array}{lll}
    & P; \emb{e}{e'}; B; \emb{\hav}{( e';c' )^{\kstar}; \neg e'} \leq
   \emb{\hav}{( e';c' )^{\kstar};\neg e'} \\
\iff & \emr{e'}; P; \emb{e}{e'}; B; \emb{\hav}{( e';c' )^{\kstar}; \neg e'} \leq
   \emb{\hav}{( e';c' )^{\kstar};\neg e'} & \mbox{tests idem, tests commute}\\
\impby & \emr{e'}; \emb{\hav}{c'}; \emb{\hav}{( e';c' )^{\kstar}; \neg e'} \leq
   \emb{\hav}{( e';c' )^{\kstar};\neg e'} & \mbox{(WUB)}\\
\iff & \emb{\hav}{e';c'}; \emb{\hav}{( e';c' )^{\kstar}; \neg e'} \leq
   \emb{\hav}{( e';c' )^{\kstar};\neg e'} & \mbox{LRC, emb homo}\\
\impby & e';c'; ( e';c' )^{\kstar};\neg e' \leq ( e';c' )^{\kstar};\neg e'
  \end{array}
\]
And the last fact follows from $x; x^{\kstar}\leq x^{\kstar}$.

\noindent $\bullet$ Finally, for (WO) we have to show $P;\emb{( e;c )^{\kstar};\neg e}{1} \leq ( \emb{e}{e'}; B )^{\kstar}; ( \eml{e}; L )^{\kstar}; \emb{\neg e}{\hav}$.  We argue
\[\begin{array}{lll}
\iff  & \mbox{by $a;x\leq y \iff a;x\leq a;y$} \\ 
      & P;\emb{( e;c )^{\kstar};\neg e}{1} \leq P; ( \emb{e}{e'}; B )^{\kstar}; ( \eml{e}; L )^{\kstar}; \emb{\neg e}{\hav} \\
\iff  & \mbox{emb homo and LRC} \\
      & P; \eml{e;c}^{\kstar}; \eml{\neg e} \leq P; ( \emb{e}{e'}; B )^{\kstar}; ( \eml{e}; L )^{\kstar}; \emr{\hav}; \eml{\neg e}  \\
\impby& \mbox{monotonicity} \\
& P; \eml{e;c}^{\kstar} \leq P; ( \emb{e}{e'}; B )^{\kstar}; ( \eml{e}; L )^{\kstar}; \emr{\hav} \\
\impby& \mbox{induction}\\
& P + P; ( \emb{e}{e'}; B )^{\kstar}; ( \eml{e}; L )^{\kstar}; \emr{\hav}; \eml{e;c} \leq P; ( \emb{e}{e'}; B )^{\kstar}; ( \eml{e}; L )^{\kstar}; \emr{\hav} \\
\impby& \mbox{by join and $P \leq \mathrm{RHS}$}\\
      & P; ( \emb{e}{e'}; B )^{\kstar}; ( \eml{e}; L )^{\kstar}; \emr{\hav}; \eml{e;c} \leq P; ( \emb{e}{e'}; B )^{\kstar}; ( \eml{e}; L )^{\kstar}; \emr{\hav} \\
\iff  & \mbox{$\hav$ idem, emb homo}\\
      & P; ( \emb{e}{e'}; B )^{\kstar}; ( \eml{e}; L )^{\kstar}; \emr{\hav}; \eml{e;c} \leq P; ( \emb{e}{e'}; B )^{\kstar}; ( \eml{e}; L )^{\kstar}; \emr{\hav}; \emr{\hav} \\
\iff  & \mbox{LRC}\\
      & P; ( \emb{e}{e'}; B )^{\kstar}; ( \eml{e}; L )^{\kstar}; \eml{e;c}; \emr{\hav} \leq P; ( \emb{e}{e'}; B )^{\kstar}; ( \eml{e}; L )^{\kstar}; \emr{\hav}; \emr{\hav} \\
\impby& \mbox{monotonicity}\\
& P; ( \emb{e}{e'}; B )^{\kstar}; ( \eml{e}; L )^{\kstar}; \eml{e;c} \leq P; ( \emb{e}{e'}; B )^{\kstar}; ( \eml{e}; L )^{\kstar}; \emr{\hav}
  \end{array}\]

Now, we show
\[\begin{array}{lll}
  & P; ( \emb{e}{e'}; B )^{\kstar}; ( \eml{e}; L )^{\kstar}; \eml{e;c} \\

\leq & P; ( \emb{e}{e'}; B )^{\kstar}; P; ( \eml{e}; L )^{\kstar}; \eml{e;c} & \mbox{using (WCB) and invariance}\\

\leq & P; ( \emb{e}{e'}; B )^{\kstar}; P; ( \eml{e}; L )^{\kstar}; P; \eml{e;c} & \mbox{using (WCL) and invariance}\\

\leq & P; ( \emb{e}{e'}; B )^{\kstar}; ( \eml{e}; L )^{\kstar}; P; \eml{e;c} & \mbox{dropping test $P$ in the middle}\\

= & P; ( \emb{e}{e'}; B )^{\kstar}; ( \eml{e}; L )^{\kstar}; \eml{e}; P; \eml{e;c} & \mbox{duplicating $\eml{e}$ and commuting}\\

\leq & P; ( \emb{e}{e'}; B )^{\kstar}; ( \eml{e}; L )^{\kstar}; \eml{e}; L; \emr{\hav} & \mbox{using (WOL)}\\

\leq & P; ( \emb{e}{e'}; B )^{\kstar}; ( \eml{e}; L )^{\kstar} \emr{\hav} & \mbox{since $x^{\kstar}; x \leq x^{\kstar}$}
  \end{array}\]
And now we are done since this is equal to the RHS.

\paragraph{Proof of \rn{eDisj}}

We give two proofs.  

First, here is a proof directly in terms of the TriKAT formulation (\ref{eq:bitri}).
We use that $\Lproj$ and sequence distribute over $+$, as does the embedding:
\begin{equation}\label{eq:triEmbDisj}
\triEmb{A+B}{C} = \triEmb{A}{C} + \triEmb{B}{C} \qquad\mbox{and}\qquad
\triEmb{A}{B+C} = \triEmb{A}{B} + \triEmb{B}{C} 
\end{equation}
These facts are easily proved for any relational model (not necessarily full).
These conditions are candidates for an axiomatization of TriKAT.

The premises are $  c\sep d : \aespec{P}{Q} $ and $  c\sep d : \aespec{R}{Q} $.
Here are the premises expressed in the form given by (\ref{eq:bitri}):
\[ 
\begin{array}{l}
  \dot{P} ; \emb{c}{\hav} ; \dot{id} 
  \;\leq\;
  \Lproj ( \triEmb{\dot{P}}{\dot{id}} ; \tricom{c}{\hav}{d} ; \triEmb{\dot{id}}{\dot{Q}}) 
\\ 
  \dot{R} ; \emb{c}{\hav} ; \dot{id} 
  \;\leq\;
  \Lproj ( \triEmb{\dot{R}}{\dot{id}} ; \tricom{c}{\hav}{d} ; \triEmb{\dot{id}}{\dot{Q}}) 
  \end{array}
\]
The conclusion $ c\sep d : \aespec{P\lor R}{Q} $ is expressed as 
\[ 
  (\dot{P}+\dot{R}) ; \emb{c}{\hav} ; \dot{id} 
  \;\leq\;
  \Lproj ( \triEmb{(\dot{P}+\dot{R})}{\dot{id}} ; \tricom{c}{\hav}{d} ; \triEmb{\dot{id}}{\dot{Q}}) 
\]
By disjunctivity, LHS equals 
\[ 
  \dot{P} ; \emb{c}{\hav} ; \dot{id} +   \dot{R} ; \emb{c}{\hav} ; \dot{id} 
\] 
and by disjunctivity, including the left equation of (\ref{eq:triEmbDisj}),
RHS equals 
\[ 
  \Lproj ( \triEmb{\dot{P}}{\dot{id}} ; \tricom{c}{\hav}{d} ; \triEmb{\dot{id}}{\dot{Q}}) 
+ \Lproj ( \triEmb{\dot{R}}{\dot{id}} ; \tricom{c}{\hav}{d} ; \triEmb{\dot{id}}{\dot{Q}}) 
\]
Now LHS$\leq$RHS follows directly from the premises by monotonicity of $+$.

Second, we give a proof of \rn{eDisj} using witnesses.  Let $W$ and $X$ be witnesses of 
the first and second premise, so we have

\begin{tabular}{llll}
(WCW) & $\dot{P}; W \leq \dot{P};W; \dot{Q}$ & 
(WCX) & $\dot{R}; X \leq \dot{R};X; \dot{Q}$ \\

(WUW) & $\dot{P}; W \leq \emb{\hav}{d}$ &
(WUX) & $\dot{R}; X \leq \emb{\hav}{d}$ \\

(WOW) & $\dot{P}; \eml{c} \leq W ; \emr{\hav}$ &
(WOX) & $\dot{R}; \eml{c} \leq X ; \emr{\hav}$ 
\end{tabular}
\\
Let $Z \eqdef \dot{P};W + \dot{R};X$.  We show that $Z$ satisfies the witness conditions for the conclusion.

\noindent$\bullet$ (WC) 
\[\begin{array}{lll}
    & (\dot{P}+\dot{R}) ; (\dot{P};W + \dot{R};X) \\
=   & \dot{P};W + \dot{P};\dot{R};X + \dot{R};\dot{P};W + \dot{R};X &\mbox{distrib, tests idem} \\
\leq& \dot{P};W;Q + \dot{P};\dot{R};X;Q + \dot{R};\dot{P};W;Q + \dot{R};X;Q & \mbox{(WCW) and (WCX)}\\
=  & (\dot{P}+\dot{R}) ; (\dot{P};W + \dot{R};X); \dot{Q} &\mbox{distrib, tests idem}
\end{array}\]

\noindent$\bullet$ (WU) 
The obligation $(\dot{P}+\dot{R}) ; (\dot{P};W + \dot{R};X) \leq \emb{\hav}{d} $
is equivalent, by distributivity and tests idempotent, to 
\[  \dot{P};W + \dot{P};\dot{R};X + \dot{R};\dot{P};W + \dot{R};X \leq \emb{\hav}{d}  \]
By the join property of $+$ this is equivalent to four inequalities,
all consequences of (WUW) or (WUX), for example 
$\dot{P};\dot{R};X \leq \dot{R};X \leq \emb{\hav}{d}$ using (WUX).

\noindent$\bullet$ (WO) 
By distribution, the obligation
$(\dot{P}+\dot{R}) ; \eml{c} \leq (\dot{P};W + \dot{R};X);\emr{\hav}$
is equivalent to 
\[ \dot{P};\eml{c} +\dot{R};\eml{c} \leq \dot{P};W;\emr{\hav} + \dot{R};X;\emr{\hav} \]
which follows from (WOW) and (WOX) by monotonicity of $+$.

\subsubsection{Proofs for Backward Simulation Rules}

Fig.~\ref{fig:beRHL} gives some backward simulation rules.
% only one is currently proved in the body

\paragraph{Sequential Composition \rn{bSeq}}

The rule \rn{bSeq} looks the same as \rn{dSeq} and \rn{eSeq} except that it is about the
backwards judgment.

Suppose for premise $c\sep c' : \bespec{P}{R}$ we have witness $Z$ and 
for $d\sep d' : \bespec{R}{Q}$ we have witness $W$, so the conditions are 

(WCbZ)
$Z;R \leq P;Z; R$ 

(WUbZ)
$Z;R \leq \emb{\hav}{c'}$ 

(WObZ)
$\emb{c}{1}; R \leq \emb{1}{\hav};Z$

(WCbW)
$W;Q \leq R; W; Q$ 

(WUbW)
$ W;Q \leq \emb{\hav}{d'}$ 

(WObW)
$\emb{d}{1};Q \leq \emb{1}{\hav};W$ 

To prove the conclusion $c;d \Sep c';d' : \aespec{P}{Q}$ we use $Z;W$ as witness, 
showing it satisfies the three conditions.

\noindent $\bullet$ (WCb) To show $Z;W;Q \leq P;Z;W;Q$ we have
$ Z;W;Q \leq Z;R;W;Q \leq P;Z;R;W;Q \leq P;Z;W;Q$ using (WCbZ), (WCbW), and test below 1.

\noindent $\bullet$ (WUb)
\[\begin{array}{lll}
    & Z; W; Q \leq \emb{\hav}{c;d'} \\
\iff& Z;R;W;Q \leq \emb{\hav}{c;d'} & \mbox{using (WCbW) in equality form} \\
\iff& Z;R;W;Q \leq \emb{\hav}{c};\emb{\hav}{d'} & \mbox{$\hav$ idem, emb homo} \\
\end{array}\]
and the last line follows from (WUbZ) and (WUbW) using monotonicity of sequence.

\noindent $\bullet$ (WOb)  
\[\begin{array}{lll}
    & \emb{c;d}{1};Q \\
=   & \emb{c}{1}; \emb{d}{1}; Q;Q & \mbox{emb homo, test idem} \\
\leq& \emb{c}{1}; \emb{1}{\hav};W;Q & \mbox{(WObW)} \\
=   & \emb{c}{1}; \emb{1}{\hav};R;W;Q & \mbox{(WCbW) as equality} \\
=   & \emb{1}{\hav}; \emb{c}{1};R;W;Q & \mbox{LRC} \\
\leq& \emb{1}{\hav}; \emb{1}{\hav};Z;W;Q & \mbox{(WObZ)} \\
\leq& \emb{1}{\hav}; Z;W & \mbox{$\hav$ idem, emb homo, test below 1}
\end{array}\] 

\paragraph{Proof of Loop Rule \rn{bWh}} 

Suppose we have witness $W$ for the premise
$c\sep c' : \bespec{P\land \leftF{e}\land\rightF{e'}}{P} $, with
\[ \begin{array}{lll}
(WCbW)&  W; P \leq P; \emb{e}{e'}; W; P  &\mbox{in brief: } W P \leq P e e' W P \\
(WUbW)&  W; P \leq \emb{\hav}{c'}        &\mbox{in brief: } W P \leq \eml{\hav} c' \\
(WObW)&  \eml{c}; P \leq \emr{\hav}; W; P &\mbox{in brief: } c P \leq \emr{\hav} W P
\end{array}\]

For loop witness we choose $Z := ( \emb{e}{e'}; W; P )^{\kstar}; \emb{\neg e}{\neg e'}$.

\noindent $\bullet$ (WCb) for conclusion is
$ ( \emb{e}{e'}; W; P )^{\kstar}; \emb{\neg e}{\neg e'}; P
\leq P; ( \emb{e}{e'}; W; P )^{\kstar}; \emb{\neg e}{\neg e'}; P$.
We have,
\[\begin{array}{lll}

     & ( \emb{e}{e'}; W; P )^{\kstar}; \emb{\neg e}{\neg e'}; P
       \leq P; ( \emb{e}{e'}; W; P )^{\kstar};
               \emb{\neg e}{\neg e'}; P \\

\iff & ( \emb{e}{e'}; W; P )^{\kstar}; P; \emb{\neg e}{\neg e'}
       \leq P; ( \emb{e}{e'}; W; P )^{\kstar};
               P; \emb{\neg e}{\neg e'}
     & \mbox{tests commute}\\

\impby & ( \emb{e}{e'}; W; P )^{\kstar}; P
       \leq P; ( \emb{e}{e'}; W; P )^{\kstar}; P
  \end{array}\]
which holds by backwards invariance Lemma~\ref{lem:binvar} using
$\emb{e}{e'}; W; P \leq P; \emb{e}{e'}; W; P$ which follows from (WCbW).

\noindent $\bullet$ (WUb) for conclusion is
$ ( \emb{e}{e'}; W; P )^{\kstar}; \emb{\neg e}{\neg e'}; P \leq 
  \emb{\hav}{( e';c' )^{\kstar}; \neg e'}$.  We have,
\[
\begin{array}{lll}
& ( \emb{e}{e'}; W; P )^{\kstar}; \emb{\neg e}{\neg e'}; P \leq 
  \emb{\hav}{( e';c' )^{\kstar}; \neg e'} \\

\impby & 
         \emb{\neg e}{\neg e'}; P + \emb{e}{e'}; W; P;
         \emb{\hav}{( e';c' )^{\kstar}; \neg e'} \leq
            \emb{\hav}{( e';c' )^{\kstar}; \neg e'} & 
         \mbox{induction}
\end{array}
\]
Using join, for the first conjunct
$\emb{\neg e}{\neg e'}; P \leq \emb{\hav}{( e';c' )^{\kstar}; \neg
  e'}$ holds using 1 below star and $\hav$ is top.  For the second conjunct, we have
\[
\begin{array}{lll}
& \emb{e}{e'}; W; P; \emb{\hav}{( e';c' )^{\kstar}; \neg e'} \\

\leq & \emb{e}{e'}; \emb{\hav}{c'}; \emb{\hav}{( e';c' )^{\kstar}; \neg e'} & \mbox{using WUbW} \\

=    &  \eml{e}; \eml{\hav}; \emr{e';c'}; \emr{( e';c' )^{\kstar}; \neg e'} & \mbox{LRC, $\hav$ idempotent} \\

\leq & \eml{e}; \eml{\hav}; \emr{( e';c' )^{\kstar}; \neg e'}
     & \mbox{using $x;x^{\star}\leq x^{\kstar}$} \\

\leq & \eml{\hav}; \emr{( e';c' )^{\kstar}; \neg e'}
     & \mbox{test below 1}
\end{array}
\]

\noindent $\bullet$ (WOb) for the conclusion is
$\eml{( e;c )^{\kstar}; \neg e}; \emb{\neg e}{\neg e'}; P \leq \emr{\hav}; ( \emb{e}{e'}; W; P )^{\kstar}; \emb{e}{e'}; P$.
We have,
\[\begin{array}{lll}
       & \eml{( e;c )^{\kstar}; \neg e}; \emb{\neg e}{\neg e'}; P \leq \emr{\hav}; ( \emb{e}{e'}; W; P )^{\kstar}; \emb{e}{e'}; P \\
\impby & \mbox{induction, $a^{\kstar};b\leq x$ from $b+a;x\leq x$} \\
       & \emb{\neg e}{\neg e'}; P + \eml{e;c}; \emr{\hav}; ( \emb{e}{e'}; W; P )^{\kstar}; \emb{\neg e}{\neg e'}; P \leq \emr{\hav}; ( \emb{e}{e'}; W; P )^{\kstar}; \emb{e}{e'}; P 
\end{array}\]
Using join and the fact that tests are below 1, star is above 1, and $\hav$
top, it suffices to show the second conjunct is less than or equal to the RHS.
\[\begin{array}{lll}
& \eml{e;c}; \emr{\hav}; ( \emb{e}{e'}; W; P )^{\kstar}; \emb{\neg e}{\neg e'}; P & \\

= & \eml{e;c}; \emr{\hav}; ( \emb{e}{e'}; W; P )^{\kstar}; P; \emb{\neg e}{\neg e'} & \mbox{tests commute} \\

= & \eml{e;c}; \emr{\hav}; P; ( \emb{e}{e'}; W; P )^{\kstar}; P; \emb{\neg e}{\neg e'} & \mbox{backward invariance} \\

= & \emr{\hav}; \eml{e;c}; P; ( \emb{e}{e'}; W; P )^{\kstar}; P; \emb{\neg e}{\neg e'} & \mbox{LRC} \\

\leq & \mbox{using (WObW) in equivalent form $\eml{c};P \leq \emr{\hav};W;P$} \\
     & \emr{\hav}; \eml{e}; \emr{\hav}; W; P; ( \emb{e}{e'}; W; P )^{\kstar}; P; \emb{\neg e}{\neg e'}
\\ 

= & \emr{\hav}; \eml{e}; \emr{\hav}; P; \emb{e}{e'}; W; P; ( \emb{e}{e'}; W; P )^{\kstar}; P; \emb{\neg e}{\neg e'} & \mbox{using (WCbW) as an equality} \\

= & \emr{\hav}; P; \emb{e}{e'}; W; P; ( \emb{e}{e'}; W; P )^{\kstar}; P; \emb{\neg e}{\neg e'} & \mbox{LRC, tests and $\hav$ idem} \\

\leq & \emr{\hav}; P; ( \emb{e}{e'}; W; P )^{\kstar}; P; \emb{\neg e}{\neg e'} & \mbox{star fold} \\

\leq & \emr{\hav}; ( \emb{e}{e'}; W; P )^{\kstar}; P; \emb{\neg e}{\neg e'} & \mbox{test below 1}
\end{array}\]

\bigskip

\paragraph{Proof of \rn{bIf}}

The rule is like \rn{dIf} and \rn{eIf} except no side condition is needed.

Suppose the premises 
$c\sep c' : \bespec{P\land \leftF{e}\land\rightF{e'}}{Q}$ and 
$d\sep d' : \bespec{P\land \neg\leftF{e}\land\neg\rightF{e'}}{Q}$
are witnessed by $Z$ and $W$,  so we have

(WCbZ)
$Z;Q \leq P;\emb{e}{e'}; Z; Q$ 

(WUbZ)
$Z;Q \leq \emb{\hav}{c'}$ 

(WObZ)
$\eml{c};Q \leq \emr{\hav};Z$ 

(WCbW)
$W;Q \leq P;\emb{\neg e}{\neg e'}; W; Q$ 

(WUbW)
$W;Q \leq \emb{\hav}{d'}$ 

(WObW)
$\eml{d};Q \leq \emr{\hav};W$ 

As witness we choose $\emb{e}{e'};Z + \emb{\neg e}{\neg e'};W$.

\noindent $\bullet$ To prove (WCb) for the conclusion:
\[\begin{array}{lll} 
    & (\emb{e}{e'} Z + \emb{\neg e}{\neg e'} W) Q \\
=   & \emb{e}{e'} Z Q + \emb{\neg e}{\neg e'} W Q &\mbox{distrib} \\
\leq& P \emb{e}{e'} Z Q + P \emb{\neg e}{\neg e'} W Q &\mbox{(WCbZ), (WCbW), tests idem, $+$ mono} \\

= & P (\emb{e}{e'} Z + \emb{\neg e}{\neg e'} W) Q &\mbox{distrib} 
\end{array}\]

\noindent $\bullet$ To prove (WUb) for the conclusion:
\[\begin{array}{lll}  
    & (\emb{e}{e'} Z + \emb{\neg e}{\neg e'} W) Q \leq \emb{\hav}{e' c' + \neg e' d'} \\
\iff& \emb{e}{e'} Z Q + \emb{\neg e}{\neg e'} W Q \leq \emb{\hav}{e' c'} + \emb{\hav}{\neg e' d'} &\mbox{distrib, emb homo}\\
\iff& \emb{e}{e'} Z Q + \emb{\neg e}{\neg e'} W Q \leq \emr{e'} \emb{\hav}{c'} + \emr{\neg e'} \emb{\hav}{d'} &\mbox{emb homo, LRC}\\
\impby& \emb{e}{e'} \emb{\hav}{c'} + \emb{\neg e}{\neg e'} \emb{\hav}{d'} \leq \emr{e'} \emb{\hav}{c'} + \emr{\neg e'} \emb{\hav}{d'} 
&\mbox{using (WUbZ), (WUbW)}
\end{array}\]
which holds by tests below 1 and emb homo: $\emb{e}{e'}\leq\emr{e'}$ and $\emb{\neg e}{\neg e'}\leq\eml{e}$.

\noindent $\bullet$ To prove (WOb) for the conclusion,
we use (WObZ) and (WObW) in the equivalent forms 
$\eml{c};Q \leq \emr{\hav};Z;Q$ 
and 
$\eml{d};Q \leq \emr{\hav};W;Q$ 
(by $xq\leq y \iff xq \leq yq$).
 \[\begin{array}{lll}  
     & \eml{ec + \neg ed};Q \leq \emr{\hav};(\emb{e}{e'};Z + \emb{\neg e}{\neg e'};W) \\
\iff &\mbox{distrib, emb homo}\\
& \eml{e};\eml{c};Q + \eml{\neg e};\eml{d};Q \leq \emr{\hav};P;\emb{e}{e'};Z + \emr{\hav};P;\emb{\neg e}{\neg e'};W \\
\impby &\mbox{(WObZ), (WObW)}\\
& \eml{e};\emr{\hav};Z;Q + \eml{\neg e};\emr{\hav};W;Q \leq \emr{\hav};\emb{e}{e'};Z + \emr{\hav};\emb{\neg e}{\neg e'};W \\
\iff &\mbox{(LRC)}\\
& \emr{\hav};\eml{e};Z;Q + \emr{\hav};\eml{\neg e};W;Q \leq \emr{\hav};\emb{e}{e'};Z + \emr{\hav};\emb{\neg e}{\neg e'};W \\
\iff &\mbox{(WCbZ),(WCbW)} \\
& \emr{\hav};\emb{e}{e'};Z;Q + \emr{\hav};\emb{\neg e}{\neg e'};W;Q \leq \emr{\hav};\emb{e}{e'};Z + \emr{\hav};\emb{\neg e}{\neg e'};W 
\end{array}\]
The last step uses (WCbZ) and (WCbW) in equality form.
It holds by test below 1.

\paragraph{Proof of \rn{bDisj}}

This can be proved by an argument very similar to the proof of \rn{eDisj},
using (\ref{eq:bitriBack}) and the right equation of (\ref{eq:triEmbDisj}).

\paragraph{Proof of \rn{bnAss}}
This works very similarly to the forward rule.
Let witness $W$ to be $\dot{R};\emb{x:=\kcode{any}}{y:=\kcode{any}}$ and recall $true$ is encoded as $\bone$.

$\bullet$ (WCb) 
$\dot{R};\emb{x:=\kcode{any}}{y:=\kcode{any}};\bone 
\leq \dot{R}; \dot{R}; \emb{x:=\kcode{any}}{y:=\kcode{any}};\bone$
holds  by idempotence of tests.

$\bullet$ (WUb) 
$\dot{R};\emb{x:=\kcode{any}}{y:=\kcode{any}}; \bone
\leq 
\emb{\hav}{y:=\kcode{any}}$
holds by $\bone$ identity, $\hav$ top, $\eml{-}$ monotonic, and $\dot{R}\leq \bone$.

$\bullet$ (WOb) is $\eml{x:=\kcode{any}};\bone \leq
\emr{\hav};\dot{R};\emb{x:=\kcode{any}}{y:=\kcode{any}}$ which holds because
\[\begin{array}{lll}
    & \eml{x:=\kcode{any}};\bone \\
=   & \bone; \eml{x:=\kcode{any}} \\
\leq& \emr{y:=\kcode{any}};\dot{R};\emr{y:=\kcode{any}}; \eml{x:=\kcode{any}} &\mbox{using antecedent  $\bone \leq \emr{y:=\kcode{any}};\dot{R};\emr{y:=\kcode{any}}$}\\
=   & \emr{y:=\kcode{any}};\dot{R};\emb{x:=\kcode{any}}{y:=\kcode{any}} \\
\leq& \emr{\hav};\dot{R};\emb{x:=\kcode{any}}{y:=\kcode{any}} 
\end{array}\]

\subsubsection{Details for Sect.~\ref{sec:trikat}}

\paragraph{Proof of Lemma~\ref{lem:trikat}}
Consider the inequality 
\[
  \dot{R} ; \emb{c}{\hav} ; \dot{id} 
  \;\leq\;
  \Lproj ( \triEmb{\dot{R}}{\dot{id}} ; \tricom{c}{\hav}{d} ; \triEmb{\dot{id}}{\dot{S}}) 
\]
Observe that by definitions, we have (for any $\sigma,\sigma',\sigma'',\tau,\tau',\tau''$)
\[ (\sigma,\sigma',\sigma'')(\triEmb{\dot{R}}{\dot{id}} ; \tricom{c}{\hav}{d} ; \triEmb{\dot{id}}{\dot{S}})(\tau,\tau',\tau'') \]
iff $\sigma R\sigma' \land \sigma'=\sigma'' \land \sigma c \tau \land 
\sigma'' d \tau'' \land \tau=\tau' \land \tau' S \tau''$. 
\\
Also $(\sigma,\sigma')(\dot{R} ; \emb{c}{\hav} ; \dot{id})(\tau,\tau')$
iff $\sigma R \sigma' \land \sigma c \tau \land \tau=\tau'$.  
So the displayed inequality says 
\[ \all{\sigma,\sigma',\tau,\tau'}{
   \sigma R \sigma' \land \sigma c \tau \land \tau=\tau'
  \imp\some{\sigma'',\tau''}{
  \sigma R\sigma' \land \sigma'=\sigma'' \land \sigma c \tau \land 
\sigma'' d \tau'' \land \tau=\tau' \land \tau' S \tau'' }}\] 
By the one-point rule of predicate calculus, for $\tau'$, this is equivalent to 
\[ \all{\sigma,\sigma',\tau}{
   \sigma R \sigma' \land \sigma c \tau 
  \imp\some{\sigma'',\tau''}{
  \sigma R\sigma' \land \sigma'=\sigma'' \land \sigma c \tau \land 
\sigma'' d \tau'' \land \tau S \tau'' }}\] 
Using again the one-point rule, for $\sigma''$, this is equivalent to 
\[ \all{\sigma,\sigma',\tau}{
   \sigma R \sigma' \land \sigma c \tau 
  \imp\some{\tau''}{
  \sigma R\sigma' \land \sigma c \tau \land 
\sigma' d \tau'' \land \tau S \tau'' }}\] 
By predicate calculus this is equivalent to 
\[ \all{\sigma,\sigma',\tau}{
   \sigma R \sigma' \land \sigma c \tau 
  \imp\some{\tau''}{\sigma' d \tau'' \land \tau S \tau'' }}\] 
which is the definition of $c\sep d:\aespec{R}{S}$ 
(except the identifier $\tau'$ is used in (\ref{eq:fsim}) rather than $\tau''$).

The proof that (\ref{eq:bsim}) is equivalent to (\ref{eq:bitriBack}) is similar.

\subsection{Additional Examples for Sect.~\ref{sec:beyond}}

% Note \nn used below is defined in the Examples of algebraic alignment section.
\newcommand\nx{\kcode{x}}
\newcommand\nz{\kcode{z}}
\newcommand\nt{\kcode{t}}

This section considers additional examples for $\forall\exists$ properties.

\subsubsection{Forward Simulation Example~\ref{eg:fsimProphecyEx} Using Witness Technique}

Consider the following two programs, where $\nx$, $\nt$, $\kcode{s}$, and
$\nz$ range over natural numbers:
\[\begin{array}{ll}
    C_1: & \kcode{while x>n do x:=x-1 od; t:=any+x; z:=x+t} \\
    C_2: & \kcode{s:=any; while x>n do x:=x-1 od; z:=x+s}
  \end{array}
\]
We want to show
$C_1\sep C_2:\aespec{R}{S}$, where
$R \eqdef \left(\nx\eqbi\nx \land \nn\eqbi\nn \right)$ and
$S \eqdef \left( \nz\eqbi\nz \right)$.
To do so effectively, we prophesize the nondeterministic assignment to $\nt$
in $C_1$.  Rewriting:
\[\begin{array}{ll}
    C_3: & \kcode{p:=any; while x>n do x:=x-1 od; t:=p+x; z:=x+t} \\
    C_2': & \kcode{s:=any; while x>n do x:=x-1 od; z:=x+s}
  \end{array}
\]
To show
$C_3\sep C_2':\aespec{R}{S}$,
we use the witness technique described in Theorem~\ref{thm:fsim}.  Choose witness
$W$ to be:
\[
  W \eqdef
  \emb{\kcode{p:=any}}{\kcode{s:=any}}; B;
  \emb{X}{X}^\kstar;
  \eml{\neg e}; \emb{q}{q'}
\]
where $e \eqdef [\kcode{x>n}]$, $X \eqdef \left(e; \kcode{x:=x-1}\right)$,
$B \eqdef [\kcode{p+min(x,n)}\eqbi s]$,
$q \eqdef \kcode{t:=p+x; z:=x+t}$, and $q' \eqdef \kcode{z:=x+s}$.
There are three conditions to check.
\begin{list}{}{}
  \item[(WC)] $[\nx\eqbi\nx \land \nn\eqbi\nn]; W
               \leq [\nx\eqbi\nx \land \nn\eqbi\nn]; W; [\nz\eqbi\nz]$
  \item[(WU)] $[\nx\eqbi\nx \land \nn\eqbi\nn]; W \leq \emb{\hav}{C_2'}$
  \item[(WO)] $[\nx\eqbi\nx \land \nn\eqbi\nn]; \eml{C_3}
               \leq W; \emr{\hav}$
\end{list}
Note that the witness does not contain $\emr{\neg e}$.  This is a minor
technicality that helps shorten the proof of (WO), allowing us to avoid steps
that cancel $\emr{\neg e}$ on the right hand side.  The ommision is justified
by the following observation we take as axiom:  $R;\emb{\neg e}{\neg e} = R;\eml{\neg e}$.
To prove the above inequalities, we rely on the following axioms:
\begin{enumerate}
\item[(a)] $R; B$ is preserved by $\emb{X}{X}$, i.e., $R;B;\emb{X}{X} = R;B;\emb{X}{X};R;B$.
\item[(b)] $R$ is preserved by $\emb{X}{X}$.
\item[(c)] $R; \eml{\neg e} = R; \emb{\neg e}{\neg e}$ and $R; \eml{e} = R; \emb{e}{e}$
\item[(d)] $R; B; \emb{\neg e}{\neg e} \leq [\nx\eqbi\nx];\eml{\neg e};[\kcode{p+x}\eqbi\kcode{s}]$.
\item[(e)] $1 \leq \emr{\kcode{s:=any}}; B; \emr{\kcode{s:=any}}$ which
expresses left-totality of $B$.
% \item[(e)] $\eml{\kcode{p:=any}} \leq \emb{\kcode{p:=any}}{\kcode{s:=any}};B;\emr{\hav}$.
\item[(f)] $R$ is preserved by $\emb{\kcode{p:=any}}{\kcode{s:=any}}$, i.e.,
  $R; \emb{\kcode{p:=any}}{\kcode{s:=any}} = R; \emb{\kcode{p:=any}}{\kcode{s:=any}}; R$.
\item[(g)] $[\nx\eqbi\nx];[\kcode{p+x}\eqbi\kcode{s}];\emb{\kcode{t:=p+x; z:=x+t}}{\kcode{z:=x+s}}
  \leq \emb{\kcode{t:=p+x; z:=x+t}}{\kcode{z:=x+s}};[\nz\eqbi\nz]$.
\item[(h)] $\kcode{x:=e}; \hav = \hav$, for any assignment $\kcode{x:=e}$.
\end{enumerate}

\noindent $\bullet$ To prove (WC) start by considering the LHS,

\(\begin{array}[t]{lll}
      & R; \emb{\kcode{p:=any}}{\kcode{s:=any}}; B;
        \emb{X}{X}^\kstar;
        \eml{\neg e}; \emb{q}{q'} \\
    = & R; \emb{\kcode{p:=any}}{\kcode{s:=any}}; R; B;
        \emb{X}{X}^\kstar; \eml{\neg e};
        \emb{q}{q'}
      & \mbox{using (f)} \\
    = & R; \emb{\kcode{p:=any}}{\kcode{s:=any}}; R; B;
        \emb{X}{X}^\kstar; R; B; \eml{\neg e};
        \emb{q}{q'}
      & \mbox{using (a) and lemma~\ref{lem:invar}} \\
    = & R; \emb{\kcode{p:=any}}{\kcode{s:=any}}; R; B;
        \emb{X}{X}^\kstar; R; B; \emb{\neg e}{\neg e};
        \emb{q}{q'}
      & \mbox{using (c)} \\
 \leq & R; \emb{\kcode{p:=any}}{\kcode{s:=any}}; R; B;
        \emb{X}{X}^\kstar; \eml{\neg e};
        [\nx\eqbi\nx]; [\kcode{p+x}\eqbi\kcode{s}];
        \emb{q}{q'}
      & \mbox{using (d)} \\
 \leq & R; \emb{\kcode{p:=any}}{\kcode{s:=any}}; R; B;
        \emb{X}{X}^\kstar; \eml{\neg e};
        \emb{q}{q'}; [\nz\eqbi\nz]
      & \mbox{using (g)} \\
 \leq & R; \emb{\kcode{p:=any}}{\kcode{s:=any}}; B;
        \emb{X}{X}^\kstar; \eml{\neg e};
        \emb{q}{q'}; [\nz\eqbi\nz]
      & \mbox{$R$ below $1$} \\
    = & R; W; S
  \end{array} \)

\noindent $\bullet$ To prove (WU) we have to show
$[\nx\eqbi\nx \land \nn\eqbi\nn]; W \leq \emb{\hav}{C_2'}$.
We start by proving a general lemma about $\hav$: for any $c$ and $d$,
$$\emb{c}{d}^\kstar;\eml{\hav} \leq \emr{d}^\kstar; \eml{\hav}$$
By induction and join, it suffices to show: \( \eml{\hav} \leq \emr{d}^\kstar; \eml{\hav} \)
and \( \emb{c}{d}; \emr{d}^\kstar; \eml{\hav} \leq  \emr{d}^\kstar; \eml{\hav} \).
The former follows from the fact that $1 \leq x^\kstar$ for
any $x$.  The latter:
\[\begin{array}{lll}
       & \emb{c}{d}; \emr{d}^\kstar; \eml{\hav} \leq  \emr{d}^\kstar; \eml{\hav} \\
  \iff & \emr{d}; \eml{c}; \emr{d}^\kstar; \eml{\hav} \leq  \emr{d}^\kstar; \eml{\hav} & \mbox{LRC} \\
  \iff & \emr{d}; \emr{d}^\kstar; \eml{c}; \eml{\hav} \leq \emr{d}^\kstar; \eml{\hav}
       & \mbox{by lemma~\ref{lem:item-over-star}} \\
\impby & \emr{d}^\kstar; \eml{c}; \eml{\hav} \leq \emr{d}^\kstar; \eml{\hav}
       & \mbox{emb homo, $d; d^\kstar \leq d^\kstar$} \\
\impby & \emr{d}^\kstar; \eml{\hav} \leq \emr{d}^\kstar; \eml{\hav}
       & \mbox{$\hav$ is top}
  \end{array}
\]
and this follows from reflexivity.  For (WU), we calculate starting from the LHS,
\[\begin{array}{lll}
      & R; \emb{\kcode{p:=any}}{\kcode{s:=any}}; B;
        \emb{X}{X}^\kstar;
        \eml{\neg e}; \emb{q}{q'} \\
    = & R; \emb{\kcode{p:=any}}{\kcode{s:=any}}; R; B;
        \emb{X}{X}^\kstar; R; B;
        \emb{\neg e}{\neg e}; \emb{q}{q'}
      & \mbox{using (f), (a), (c), and lemma~\ref{lem:invar}} \\
 \leq & \emb{\kcode{p:=any}}{\kcode{s:=any}};
        \emb{X}{X}^\kstar;
        \emb{\neg e}{\neg e}; \emb{q}{q'}
      & \mbox{$R$ and $B$ below 1} \\
 \leq & \emb{\hav}{\kcode{s:=any}};
        \emb{X}{X}^\kstar;
        \emb{\neg e}{\neg e}; \emb{\hav}{q'}
      & \mbox{$\hav$ is top} \\
 \leq & \emb{\hav}{\kcode{s:=any}};
        \emb{X}{X}^\kstar;
        \eml{\hav}; \emr{\neg e; q'}
      & \mbox{$\hav$ is top, LRC} \\
 \leq & \emb{\hav}{\kcode{s:=any}};
        \emr{X}^\kstar;
        \eml{\hav}; \emr{\neg e; q'}
      & \mbox{using lemma above} \\
    = & \eml{\hav}; \emb{\hav}{\kcode{s:=any}}; \emr{X}^\kstar;
        \emr{\neg e; q'}
      & \mbox{LRC and lemma~\ref{lem:item-over-star}} \\
    = & \emb{\hav}{\kcode{s:=any}}; \emr{X}^\kstar;
        \emr{\neg e; q'}
      & \mbox{$\hav = \hav;\hav$} \\
    = & \emb{\hav}{C_2'}
      & \mbox{emb homo and defs}
  \end{array}
\]

\noindent $\bullet$ Finally for (WO) we have to
show $[\nx\eqbi\nx \land \nn\eqbi\nn]; \eml{C_3} \leq W; \emr{\hav}$.
We first prove the following lemma:
\[ R;\eml{X}^\kstar \leq R;\emb{X}{X}^\kstar;\emr{\hav} \]
This follows from induction and join, provided
\( R \leq R;\emb{X}{X}^\kstar;\emr{\hav} \) and
\( R;\emb{X}{X}^\kstar;\emb{X}{\hav} \leq R;\emb{X}{X}^\kstar;\emr{\hav} \).
The first inequality is immediate.  For the second, we calculate,
\[\begin{array}{lll}
      & R;\emb{X}{X}^\kstar;\emb{X}{\hav} \\
    = & R;\emb{X}{X}^\kstar;R;\emb{X}{\hav} & \mbox{using (b) and lemma~\ref{lem:invar}} \\
    = & R;\emb{X}{X}^\kstar;R;\emb{[\nx>\nn];\kcode{x:=x-1}}{\hav} & \mbox{def of $X$} \\
    = & R;\emb{X}{X}^\kstar;R;\emb{[\nx>\nn];\kcode{x:=x-1}}{[\nx>\nn];\hav} & \mbox{using (c)} \\
    = & R;\emb{X}{X}^\kstar;R;\emb{[\nx>\nn];\kcode{x:=x-1}}{[\nx>\nn];\kcode{x:=x-1};\hav}
      & \mbox{using (h)} \\
    = & R;\emb{X}{X}^\kstar;R;\emb{[\nx>\nn];\kcode{x:=x-1}}{[\nx>\nn];\kcode{x:=x-1}}\emr{\hav}
      & \mbox{emb homo} \\
    = & R;\emb{X}{X}^\kstar;R;\emb{X}{X};\emr{\hav}
      & \mbox{def of $X$} \\
 \leq & R;\emb{X}{X}^\kstar;\emb{X}{X};\emr{\hav}
      & \mbox{$R$ below $1$} \\
 \leq & R;\emb{X}{X}^\kstar;\emr{\hav}
      & \mbox{since $x^\kstar;x\leq x^\kstar$ for any $x$}
  \end{array}
\]
Now for (WO), we start from the LHS,
\[\begin{array}{lll}
      & R; \eml{C_3}  \\
    = & R; \emb{\kcode{p:=any}}{1}; \eml{X}^\kstar; \eml{\neg e}; \eml{q}
      & \mbox{unfolding $C_3$ and emb homo} \\
 \leq & R; \emb{\kcode{p:=any}}{\kcode{s:=any}};B;\emr{\kcode{s:=any}};\eml{X}^\kstar; \eml{\neg e}; \eml{q}
      & \mbox{using (e)} \\
 \leq & R; \emb{\kcode{p:=any}}{\kcode{s:=any}};B;\emr{\hav};\eml{X}^\kstar; \eml{\neg e}; \eml{q}
      & \mbox{$\hav$ is top} \\
    = & R; \emb{\kcode{p:=any}}{\kcode{s:=any}};B;\eml{X}^\kstar;\emr{\hav};\eml{\neg e}; \eml{q}
      & \mbox{using lemma~\ref{lem:item-over-star}} \\
    = & R; \emb{\kcode{p:=any}}{\kcode{s:=any}}; B; R; \eml{X}^\kstar;\emr{\hav};\eml{\neg e}; \eml{q}
      & \mbox{using (f) and commuting $R$,$B$} \\
 \leq & R; \emb{\kcode{p:=any}}{\kcode{s:=any}}; B; R; \emb{X}{X}^\kstar; \emr{\hav}; \emr{\hav};\eml{\neg e}; \eml{q}
      & \mbox{using lemma above} \\
 \leq & R; \emb{\kcode{p:=any}}{\kcode{s:=any}}; B; R; \emb{X}{X}^\kstar; \emb{\neg e; q}{\hav}
      & \mbox{LRC, emb homo, $\hav;\hav = \hav$} \\
    = & R; \emb{\kcode{p:=any}}{\kcode{s:=any}}; B; R; \emb{X}{X}^\kstar; \emb{\neg e; q}{q'; \hav}
      & \mbox{using (h) which implies $q';\hav = \hav$} \\
    = & R; \emb{\kcode{p:=any}}{\kcode{s:=any}}; B; R; \emb{X}{X}^\kstar; \eml{\neg e}; \emb{q}{q'}; \emr{\hav}
      & \mbox{LRC, emb homo} \\
 \leq & \emb{\kcode{p:=any}}{\kcode{s:=any}}; B; \emb{X}{X}^\kstar; \eml{\neg e}; \emb{q}{q'}; \emr{\hav}
      & \mbox{$R$ below 1} \\
    = & W; \emr{\hav}
  \end{array}
\]

\subsubsection{Forward Simulation Using Deductive Rules}\label{sec:fsudr}
Example.~\ref{eg:fsimProphecyEx} describes using the witness technique from
Theorem~\ref{thm:fsim} to prove $C_3\sep C_2' : \aespec{\kcode{x}
\eqbi \kcode{x} \land \kcode{n} \eqbi \kcode{n}}{\kcode{z} \eqbi \kcode{z}}$.
An alternative is to directly apply the forward simulation proof rules
described in Fig.~\ref{fig:aeRHL}.  Start by deriving an alignment of
$C_3$ and
$C_2'$:
\[
  \emb{\kcode{p:=any}}{\kcode{s:=any}};
  \emb{\kcode{[x>n]}; \kcode{x:=x-1}}{\kcode{[x>n]}; \kcode{x:=x-1}}^\kstar;
  \emb{\neg\kcode{[x>n]}}{\neg\kcode{[x>n]}};
  \eml{\kcode{t:=p+x}}; \emb{\kcode{z:=x+t}}{\kcode{z:=x+s}},
\]
and establishing the following triples:
\begin{enumerate}
\item[(1)] $\emb{\kcode{p:=any}}{\kcode{s:=any}} : \aespec{R}{B \land R}$,
  where $B \eqdef [\kcode{p+min(x,n)}\eqbi\kcode{s}]$
  and $R \eqdef [\kcode{x}\eqbi\kcode{x} \land \kcode{n}\eqbi\kcode{n}]$.

  From \rn{enAss} conclude
  $\emb{\kcode{p:=any}}{\kcode{s:=any}} : \aespec{\textrm{true}}{B}$.  The use
  of this rule requires proving $B$ is left total, which it is.  Then use the
  fact that $\emb{\kcode{p:=any}}{\kcode{s:=any}}$ does not modify $\kcode{x}$
  or $\kcode{n}$ and so preserves $R$.
\item[(2)] $\emb{\kcode{[x>n]}; \kcode{x:=x-1}}{\kcode{[x>n]}; \kcode{x:=x-1}}^\kstar;
  \emb{\neg\kcode{[x>n]}}{\neg\kcode{[x>n]}} : \aespec{B \land R}{B \land R
    \land \neg\eml{[\kcode{x>n}]} \land \neg\emr{[\kcode{x>n}]}}$.

  Follows from an application of \rn{eWh}.  We have to show:
  \begin{enumerate}
  \item [(2.1)] $B \land R \imp \kcode{[x>n]} \eqbi \kcode{[x>n]}$.

    This holds because $R$ implies agreement on $\kcode{x}$ and $\kcode{n}$.
  \item [(2.2)] $\emb{\kcode{x:=x-1}}{\kcode{x:=x-1}} :
    \aespec{B \land R \land \emb{\kcode{[x>n]}}{\kcode{[x>n]}}}{B \land R}$.

    This follows from \rn{eAss} and \rn{eConseq}.  Note that $R$ is clearly preserved by
    $\emb{\kcode{x:=x-1}}{\kcode{x:=x-1}}$.  To reason that $B$ is preserved,
    we rely on the fact that under assumption $\eml{\kcode{[x>n]}}$, the
    bitest $[\kcode{p+min(x,n)}\eqbi \kcode{s}]$ is equivalent to the bitest
    $[\kcode{p+n}\eqbi \kcode{s}]$ which is preserved by the loop body.
  \end{enumerate}
\item[(3)] $\eml{\kcode{t:=p+x}} :
  \aespec{\kcode{p+x}\eqbi\kcode{s} \land \kcode{x}\eqbi\kcode{x}}
         {\kcode{t}\eqbi\kcode{s} \land \kcode{x}\eqbi\kcode{x}}$.

   This follows from \rn{eAss} and the fact that $\kcode{t:=p+x}$ does not modify $x$.
\item[(4)] $\emb{\kcode{z:=x+t}}{\kcode{z:=x+s}} :
  \aespec{\kcode{t}\eqbi\kcode{s} \land \kcode{x}\eqbi\kcode{x}}{\kcode{z}\eqbi\kcode{z}}$.

  This follows from \rn{eAss} and \rn{eConseq}.
\end{enumerate}

Finally conclude $C_3\sep C_2' : \aespec{R}{S}$ by using
\rn{eSeq} and \rn{eConseq} and the triples described above.  One key step is
the use of \rn{eConseq} when sequencing triples (2) and (3) above: we must
show that
$B \land R \land \neg\eml{[\kcode{x>n}]} \land \neg\emr{[\kcode{x>n}]}$
implies $\kcode{p+x} \eqbi \kcode{s}$.  This follows from observing that
$\kcode{p+min(x,N)} \eqbi s \land \neg\eml{[\kcode{x>n}]}$ implies
$\kcode{p+x} \eqbi \kcode{s}$.

\subsubsection{Backward Simulation Example~\ref{eg:bsimEx}}

We want to show $C_1 \sep C_2 : \bespec{R}{S}$ 
where $R \eqdef \nx\eqbi\nx \land \nn\eqbi\nn$ and
$S \eqdef \nx\eqbi\nx \land \nn\eqbi\nn \land \kcode{t}\eqbi\kcode{s} \land \kcode{z}\eqbi\kcode{z}$.
We use the the witness technique described in
Theorem~\ref{thm:bsim}.  Take as witness:
\[ W \eqdef
  \emr{\kcode{s:=any}}; \emb{X}{X}^\kstar;
  \eml{\neg e}; \eml{\kcode{t:=any+x}}; \emb{\kcode{z:=x+t}}{\kcode{z:=x+s}}; S
\]
where $e \eqdef [\kcode{x>n}]$, and $X \eqdef e; \kcode{x:=x-1}$.
Notice that $W$ ends with the postrelation $S$.  We have three inequalities to prove,
\begin{list}{}{}
\item[(WCb)] $W; S \leq R; W; S$
\item[(WOb)] $\eml{C_1}; S \leq \emr{\hav}; W$
\item[(WUb)] $W; S \leq \emb{\hav}{C_2}$.
\end{list}

We take the following as axioms:
\begin{enumerate}
\item[(a)] $\hav; \kcode{x:=e} = \hav; [\kcode{x}=\kcode{e}]$ for any
  primitive assignment $\kcode{x:=e}$ where $\kcode{e}$ does not depend on
  $\kcode{x}$,
\item[(b)] $\kcode{x:=e} = \kcode{x:=e}; [\kcode{x}=\kcode{e}]$ where
  $\kcode{e}$ does not depend on $\kcode{x}$,
% \item[(c)] $\eml{\kcode{x:=x-1}}; R \leq \emr{\hav};\emb{\kcode{x:=x-1}}{\kcode{x:=x-1}};R$,
\item[(c)] $\hav; \kcode{x:=x-1} = \hav$, which expresses the fact that $\kcode{x:=x-1}$ is range total,
\item[(d)] $\eml{[\kcode{z}=\kcode{x+t}]}; S = \emb{[\kcode{z}=\kcode{x+t}]}{[\kcode{z}=\kcode{x+s}]}; S$,
\item[(e)] $R$ commutes with $\emb{\kcode{z:=x+t}}{\kcode{z:=x+s}}$ since $R$ does not depend on $z$,
\item[(f)] $R$ commutes with $\eml{\kcode{t:=any+x}}$ since $R$ does not depend on $t$,
\item[(g)] $R$ commutes with $\kcode{s:=any}$ since $R$ does not depend on $s$,
\item[(h)] $\hav; \kcode{s:=any} = \hav$,
% \item[(i)] $R$ is backwards preserved by $\emb{X}{X}$ (i.e., $\emb{X}{X};R = R;\emb{X}{X};R$),
\item[(i)] $R$ is backwards preserved by $\emb{\kcode{x:=x-1}}{\kcode{x:=x-1}}$
  (i.e., $ \emb{\kcode{x:=x-1}}{\kcode{x:=x-1}};R = R;\emb{\kcode{x:=x-1}}{\kcode{x:=x-1}};R$),
\item[(j)] $R; \eml{e} = R; \emb{e}{e}$, and $R; \eml{\neg e} = R; \emb{\neg e}{\neg e}$.
\end{enumerate}

We also note the following consequences of the axioms above:
\begin{enumerate}
\item[(k)] $\eml{\kcode{x:=x-1}}; R \leq \emr{\hav};
\emb{\kcode{x:=x-1}}{\kcode{x:=x-1}}; R$, which follows from (c).
\item[(l)] $R$ is backwards preserved by $\emb{X}{X}$, which follows from (i).
\item[(m)] $\eml{\kcode{t:=any+x}};\emb{\kcode{z:=x+t}}{\kcode{z:=x+s}}; S
  = R; \eml{\kcode{t:=any+x}};\emb{\kcode{z:=x+t}}{\kcode{z:=x+s}}; S$,
  which follows from (e), (f), and the fact that $S$ contains $R$.
\end{enumerate}

\noindent $\bullet$ For (WCb), we have to show $W;S \leq R;W;S$ which is:
\[\emr{\kcode{s:=any}}; \emb{X}{X}^\kstar; \eml{\neg e};
  \eml{\kcode{t:=any+x}}; \emb{\kcode{z:=x+t}}{\kcode{z:=x+s}}; S \leq
  R; W; S \].  Consider the left hand side.  We have,
\[\begin{array}{lll}
      & \emr{\kcode{s:=any}}; \emb{X}{X}^\kstar; \eml{\neg e};
        \eml{\kcode{t:=any+x}}; \emb{\kcode{z:=x+t}}{\kcode{z:=x+s}}; S \\
    = & \emr{\kcode{s:=any}}; \emb{X}{X}^\kstar; \eml{\neg e}; R;
        \eml{\kcode{t:=any+x}}; \emb{\kcode{z:=x+t}}{\kcode{z:=x+s}}; S
      & \mbox{using (m)} \\
    = & \emr{\kcode{s:=any}}; \emb{X}{X}^\kstar; R; \eml{\neg e};
        \eml{\kcode{t:=any+x}}; \emb{\kcode{z:=x+t}}{\kcode{z:=x+s}}; S
      & \mbox{commute tests} \\
    = & \emr{\kcode{s:=any}}; R; \emb{X}{X}^\kstar; R; \eml{\neg e};
        \eml{\kcode{t:=any+x}}; \emb{\kcode{z:=x+t}}{\kcode{z:=x+s}}; S
      & \mbox{using (l), lemma~\ref{lem:binvarEq}} \\
    = & R; \emr{\kcode{s:=any}}; \emb{X}{X}^\kstar; R; \eml{\neg e};
        \eml{\kcode{t:=any+x}}; \emb{\kcode{z:=x+t}}{\kcode{z:=x+s}}; S
      & \mbox{using (g)} \\
 \leq & R; \emr{\kcode{s:=any}}; \emb{X}{X}^\kstar; \eml{\neg e};
        \eml{\kcode{t:=any+x}}; \emb{\kcode{z:=x+t}}{\kcode{z:=x+s}}; S
      & \mbox{$R$ below 1} \\
    = & R; W; S
      & \mbox{def of $W$}
  \end{array} \]

\noindent $\bullet$ For (WUb), we have to show $W;S \leq \emb{\hav}{C_2}$ which is,
\[\emr{\kcode{s:=any}}; \emb{X}{X}^\kstar;
  \eml{\neg e}; \eml{\kcode{t:=any+x}}; \emb{\kcode{z:=x+t}}{\kcode{z:=x+s}}; S \leq
  \eml{\hav}; \emr{\kcode{s:=any}}; \emr{X}^\kstar; \emr{\neg e}; \emr{\kcode{z:=x+s}}
\]
We start by proving a lemma:
$\emb{X}{X}^\kstar \leq \eml{\hav};\emr{X}^\kstar$.  By (left) induction and
join it suffices to show, $1 \leq \eml{\hav};\emr{X}^\kstar$
and $\emb{X}{X};\eml{\hav};\emr{X}^\kstar \leq \eml{\hav};\emr{X}^\kstar$.
The first inequality is immediate.  For the second, consider the left hand
side.  We have:
\[\begin{array}{lll}
      & \emb{X}{X};\eml{\hav};\emr{X}^\kstar \\
    = & \emb{e}{e};\emb{\kcode{x:=x-1}}{\kcode{x:=x-1}};\eml{\hav};\emr{X}^\kstar
      & \mbox{def of $X$} \\
 \leq & \emb{e}{e};\emb{\hav}{\kcode{x:=x-1}};\eml{\hav};\emr{X}^\kstar
      & \mbox{$\hav$ is top} \\
 \leq & \emr{e};\emb{\hav}{\kcode{x:=x-1}};\eml{\hav};\emr{X}^\kstar
      & \mbox{$e$ below 1} \\
    = & \eml{\hav;\hav}; \emr{e;\kcode{x:=x-1}};\emr{X}^\kstar
      & \mbox{LRC, emb homo} \\
    = & \eml{\hav}; \emr{e;\kcode{x:=x-1}};\emr{X}^\kstar
      & \mbox{$\hav;\hav = \hav$} \\
 \leq & \eml{\hav}; \emr{X}^\kstar
      & \mbox{since $p;p^\kstar \leq p^\kstar$ for any $p$}
  \end{array}\]
Now for (WUb), we start from the left hand side and show,
\[\begin{array}{lll}
      & \emr{\kcode{s:=any}}; \emb{X}{X}^\kstar;
        \eml{\neg e}; \eml{\kcode{t:=any+x}}; \emb{\kcode{z:=x+t}}{\kcode{z:=x+s}}; S \\
    = & \emr{\kcode{s:=any}}; \emb{X}{X}^\kstar;
        R; \eml{\neg e} \eml{\kcode{t:=any+x}}; \emb{\kcode{z:=x+t}}{\kcode{z:=x+s}}; S
      & \mbox{using (m) and commuting tests} \\
    = & \emr{\kcode{s:=any}}; \emb{X}{X}^\kstar;
        R; \emb{\neg e}{\neg e}; \eml{\kcode{t:=any+x}}; \emb{\kcode{z:=x+t}}{\kcode{z:=x+s}}; S
      & \mbox{using (j)} \\
 \leq & \emr{\kcode{s:=any}}; \emb{X}{X}^\kstar;
        \emr{\neg e}; \emb{\kcode{t:=any+x}; \kcode{z:=x+t}}{\kcode{z:=x+s}};
      & \mbox{tests below 1, LRC, emb homo} \\
 \leq & \emr{\kcode{s:=any}}; \emb{X}{X}^\kstar;
        \emr{\neg e}; \emb{\hav}{\kcode{z:=x+s}}
      & \mbox{$\hav$ is top} \\
 \leq & \emr{\kcode{s:=any}}; \eml{\hav}; \emr{X}^\kstar;
        \emr{\neg e}; \emb{\hav}{\kcode{z:=x+s}}
      & \mbox{using lemma above} \\
 \leq & \eml{\hav; \hav}; \emr{\kcode{s:=any}}; \emr{X}^\kstar;
        \emr{\neg e}; \emr{\kcode{z:=x+s}}
      & \mbox{LRC, emb homo, LRC over star} \\
    = & \eml{\hav}; \emr{\kcode{s:=any}}; \emr{X}^\kstar;
        \emr{\neg e}; \emr{\kcode{z:=x+s}}
      & \mbox{$\hav;\hav = \hav$}
  \end{array} \]

\noindent $\bullet$ Finally for (WOb), we have to show $\eml{C_1};S \leq \emr{\hav};W$ which is,
\[ \eml{X}^\kstar; \eml{\neg e}; \eml{\kcode{t:=any+x}; \kcode{z:=x+t}}; S \leq
  \emr{\hav}; \emr{\kcode{s:=any}}; \emb{X}{X}^\kstar; \eml{\neg e};
  \eml{\kcode{t:=any+x}}; \emb{\kcode{z:=x+t}}{\kcode{z:=x+s}}; S
\]
By (left) induction and join there are two cases to consider:
\begin{enumerate}
\item[(n)] $\eml{\neg e}; \eml{\kcode{t:=any+x}; \kcode{z:=x+t}}; S \leq \emr{\hav}; W$, and
\item[(o)] $\eml{X}; \emr{\hav}; W \leq \emr{\hav}; W$.
\end{enumerate}

For (n) we start with the right hand side:
\[\begin{array}{lll}
      & \emr{\hav}; \emr{\kcode{s:=any}}; \emb{X}{X}^\kstar; \eml{\neg e};
        \eml{\kcode{t:=any+x}};\emb{\kcode{z:=x+t}}{\kcode{z:=x+s}};S \\
    = & \emr{\hav}; \emb{X}{X}^\kstar; \eml{\neg e};
        \eml{\kcode{t:=any+x}};\emb{\kcode{z:=x+t}}{\kcode{z:=x+s}};S
      & \mbox{using (h)} \\
\geq  & \emr{\hav};\eml{\neg e}; \eml{\kcode{t:=any+x}}; \emb{\kcode{z:=x+t}}{\kcode{z:=x+s}}; S
      & \mbox{since $1 \leq x^\kstar$} \\
    = & \eml{\neg e}; \emb{\kcode{t:=any+x};\kcode{z:=x+t}}{\hav; \kcode{z:=x+s}}; S
      & \mbox{LRC, emb homo} \\
    = & \eml{\neg e}; \emb{\kcode{t:=any+x};\kcode{z:=x+t}}{\hav; [\kcode{z}=\kcode{x+s}]}; S
      & \mbox{using (a)} \\
    = & \eml{\neg e}; \emb{\kcode{t:=any+x};\kcode{z:=x+t};[\kcode{z}=\kcode{x+t}]}{\hav; [\kcode{z}=\kcode{x+s}]}; S
      & \mbox{using (b)} \\
    = & \eml{\neg e}; \emb{\kcode{t:=any+x};\kcode{z:=x+t}}{\hav};
        \emb{[\kcode{z}=\kcode{x+t}]}{[\kcode{z}=\kcode{x+s}]}; S
      & \mbox{LRC, emb homo} \\
\geq  & \eml{\neg e}; \emb{\kcode{t:=any+x};\kcode{z:=x+t}}{\hav};
        \eml{[\kcode{z}=\kcode{x+t}]}; S
      & \mbox{using (d)} \\
    = & \eml{\neg e}; \emb{\kcode{t:=any+x}}{\hav};
        \eml{\kcode{z:=x+t};[\kcode{z}=\kcode{x+t}]}; S
      & \mbox{LRC, emb homo} \\
    = & \eml{\neg e}; \emb{\kcode{t:=any+x}}{\hav};
        \eml{\kcode{z:=x+t}}; S
      & \mbox{using (b) in reverse} \\
\geq  & \eml{\neg e}; \eml{\kcode{t:=any+x};\kcode{z:=x+t}}; S
      & \mbox{$\hav$ is top, emb homo}
  \end{array}\]

Now for (o), (writing $q$ for $\kcode{z:=x+t}$, $q'$ for $\kcode{z:=x+s}$,
$s$ for $\kcode{s:=any}$, and $t$ for $\kcode{t:=any+x}$)
\[\begin{array}{lll}
     & \emb{X}{\hav;s}; \emb{X}{X}^\kstar; \eml{\neg e; t};
       \emb{q}{q'};S
       \leq
       \emr{\hav; s}; \emb{X}{X}^\kstar; \eml{\neg e; t};
       \emb{q}{q'};S \\
\iff & \emb{X}{\hav}; \emb{X}{X}^\kstar; \eml{\neg e; t};
       \emb{q}{q'};S
       \leq
       \emr{\hav}; \emb{X}{X}^\kstar; \eml{\neg e; t};
       \emb{q}{q'};S
     & \mbox{using (h)} \\
\iff & \emb{X}{\hav}; \emb{X}{X}^\kstar; \eml{\neg e}; R; \eml{t};
       \emb{q}{q'};S
       \leq
       \emr{\hav}; \emb{X}{X}^\kstar; \eml{\neg e; t};
       \emb{q}{q'};S
     & \mbox{using (m)} \\
\iff & \emb{X}{\hav}; \emb{X}{X}^\kstar; R; \eml{\neg e; t};
       \emb{q}{q'};S
       \leq
       \emr{\hav}; \emb{X}{X}^\kstar; \eml{\neg e; t};
       \emb{q}{q'};S
     & \mbox{commute tests} \\
\impby & \emb{X}{\hav};\emb{X}{X}^\kstar;R \leq \emr{\hav};\emb{X}{X}^\kstar
     & \mbox{monotonicity}
  \end{array}\]

For the left hand side, we show:
\[\begin{array}{lll}
       & \emb{X}{\hav};\emb{X}{X}^\kstar;R \\
    =  & \emb{e;\kcode{x:=x-1}}{\hav};\emb{X}{X}^\kstar; R
       & \mbox{def of $X$} \\
    =  & \emb{e;\kcode{x:=x-1}}{\hav}; R; \emb{X}{X}^\kstar; R
       & \mbox{using (l), lemma~\ref{lem:binvarEq}} \\
    =  & \emb{e}{\hav};\eml{\kcode{x:=x-1}};R; \emb{X}{X}^\kstar;R
       & \mbox{LRC, emb homo} \\
 \leq  & \emb{e}{\hav}; \emr{\hav}; \emb{\kcode{x:=x-1}}{\kcode{x:=x-1}}; R; \emb{X}{X}^\kstar;R
       & \mbox{using (k)} \\
    =  & \emb{e}{\hav}; \emb{\kcode{x:=x-1}}{\kcode{x:=x-1}}; R; \emb{X}{X}^\kstar;R
       & \mbox{using $\hav;\hav = hav$} \\
    =  & \emb{e}{\hav}; R; \emb{\kcode{x:=x-1}}{\kcode{x:=x-1}}; R; \emb{X}{X}^\kstar;R
       & \mbox{using (i)} \\
    =  & \emr{\hav}; R; \emb{e; \kcode{x:=x-1}}{\kcode{x:=x-1}}; R; \emb{X}{X}^\kstar;R
       & \mbox{LRC, emb homo, tests commute} \\
    =  & \emr{\hav}; R; \emb{e; \kcode{x:=x-1}}{e; \kcode{x:=x-1}}; R; \emb{X}{X}^\kstar;R
       & \mbox{using (j), emb homo} \\
 \leq  & \emr{\hav}; \emb{e; \kcode{x:=x-1}}{e; \kcode{x:=x-1}}; \emb{X}{X}^\kstar
       & \mbox{$R$ below 1} \\
 \leq  & \emr{\hav}; \emb{X}{X}^\kstar
       & \mbox{lemma $p;p^\kstar \leq p^\star$ for any $p$}
  \end{array}\]

% \ramana{Start with the following candidate alignment:}
% \[ \begin{array}{ll}
%   & \emb{\kcode{y:=x}}{\kcode{y:=x}};
%     \emb{\kcode{z:=24}}{\kcode{z:=16}};
%     \emb{\kcode{w:=0}}{\kcode{w:=0}}; \\
%   & \quad \Big( \eml{[\kcode{w\%2}\neq\kcode{0}]};
%            \eml{( [\kcode{w\%2=0}]; \kcode{z:=z*y}; \kcode{y:=y-1}  + [\kcode{w\%2}\neq\kcode{0}]  ); \kcode{w:=w+1}} \\
%   & \quad + \emr{[\kcode{w\%3}\neq\kcode{0}]};
%            \emr{( [\kcode{w\%3=0}]; \kcode{z:=z*2}; \kcode{y:=y-1}  + [\kcode{w\%3}\neq\kcode{0}]  ); \kcode{w:=w+1}} \\
%   & \quad + \emb{[\kcode{w\%2=0}]}{[\kcode{w\%3=0}]};
%     \emb{ [\kcode{w\%2=0}]; \kcode{z:=z*y}; \kcode{y:=y-1} + [\kcode{w\%2}\neq\kcode{0}]  }
%         { [\kcode{w\%3=0}]; \kcode{z:=z*2}; \kcode{y:=y-1} + [\kcode{w\%3}\neq\kcode{0}]  }; \\
%  & \quad  \emb{\kcode{w:=w+1}}{\kcode{w:=w+1}}
%     \Big)^\kstar; \\
%   & \quad \emb{\neg [\kcode{y>4}]}{\neg [\kcode{y>4}]}
%   \end{array} \]

% Alternatives (TODO)
% 1.  C1|C2 forward simulation without prophecy.  Point being we can align the
% two nondet assignments and need not align the loops.  Will require a history
% variable.
% 2.  C2|C1 forward simulation with the same spec.
% 3.  C1|C2 backward simulation with the same spec.

\subsubsection{Example~\ref{eg:paths}, $\forall\exists$ Path Alignment.}

We first prove the correctness of each witness $W_i$ under the pre-relation $R = \mkEqBi{l}$ and the post-relation $S = \mkEqBi{o}$ then prove the correctness of the overall witness $W$ under the same pre- and post-relation with the disjunction rules.	

We only show the correctness proof of $W_2$ as an example. The correctness proofs of the other $W_i$ is similar. Recall that $W_2$ corresponds to the precondition $\kcode{h} > \kcode{l} \wedge \kcode{h'} \leq \kcode{l'}$ with the chooser $\kcode{a2'} > \kcode{l'} \wedge \kcode{a2'} = \kcode{l} + \kcode{a1}$ on the nondeterministic value in the right program. 

\noindent$\bullet$ (WC) $R; W_2 \leq R; W_2; S$

\[\begin{array}[t]{lll}
	& R; W_2 & \\
= & [\kcode{l} = \kcode{l'}]; \emb{ \kcode{a1:=any; a2:=any} }{ \kcode{a1':=any; a2':=any} }; &\\
   & [\kcode{h>l; h'<=l'; a2'>l'; a2'=l+a1}]; &\\
   & \emb{\kcode{o:=l+a1}}{\kcode{x':=a2'; o':=x'}}\\
= &  [\kcode{l} = \kcode{l'}]; \emb{ \kcode{a1:=any; a2:=any} }{ \kcode{a1':=any; a2':=any} }; &\\
   & [\kcode{l=l'; h>l; h'<=l'; a2'>l'; a2'=l+a1}]; & $R is preserved$ \\ 
   & \emb{\kcode{o:=l+a1}}{\kcode{x':=a2'; o':=x')}} &\\
\leq &  [\kcode{l} = \kcode{l'}]; \emb{ \kcode{a1:=any; a2:=any} }{ \kcode{a1':=any; a2':=any} }; &\\
& [\kcode{l=l'; h>l; h'<=l'; a2'>l'; a2'=l+a1}]; & \\ 
& \emb{\kcode{o:=l+a1}}{\kcode{x':=a2'; o':=x')}}[\kcode{o} = \kcode{o'}]\\
= & R; W_2; S
\end{array}\]
Using rule \rn{DisjWC} in Fig.~\ref{fig:disjLem} we can conclude the (WC) proof for the witness $W$ from the (WC) proof of each $W_i$.

\noindent$\bullet$ (WU) $R; W_2 \leq \emb{\hav}{k'_2}$

\[\begin{array}[t]{lll}
   & R; W_2 &\\
= & [\kcode{l} = \kcode{l'}]; \emb{ \kcode{a1:=any; a2:=any} }{ \kcode{a1':=any; a2':=any} }; &\\
   & [\kcode{h>l; h'<=l'; a2'>l'; a2'=l+a1}]; &\\
   & \emb{\kcode{o:=l+a1}}{\kcode{x':=a2'; o':=x'}}\\
\leq &  \emb{ \kcode{a1:=any; a2:=any} }{ \kcode{a1':=any; a2':=any} }; &  \mbox{tests below 1}\\
       & \emb{\kcode{[h>l]; o:=l+a1}}{\kcode{[h'<=l'; a2'>l']; x':=a2'; o':=x'}} &\\
= & \multicolumn{2}{l}{\emb{\kcode{a1:=any; a2:=any; [h>l]; o:=l+a1}}{\kcode{a1:=any; a2:=any; [h'<=l'; a2'>l']; x':=a2'; o':=x'}}}\\
\leq & \emb{\hav}{\kcode{a1:=any; a2:=any; [h'<=l']; x':=a2'; [x'>l']; o':=x'}} & \\
   & &  \mbox{emb homo}\\
= & \emb{\hav}{k'_2} &\\
%\leq & \emb{\hav}{k'_1} + \emb{\hav}{k'_2} + \emb{\hav}{k'_3} & \\
%= & \emb{\hav}{k'_1 + k'_2 + k'_3} & \mbox{emb homo}\\
%= & \emb{\hav}{k'}

\end{array}\]
Similarly, we have the (WU) proofs for the other witnesses
\[\begin{array}[t]{lll}
	R; W_1 \leq \emb{\hav}{k'_1} &
	R; W_3 \leq \emb{\hav}{k'_1} &
	R; W_4 \leq \emb{\hav}{k'_1}\\
	R; W_5 \leq \emb{\hav}{k'_2} & 
	R; W_6 \leq \emb{\hav}{k'_3} &\\
\end{array}\]
From the (WU) proof of each $W_i$, applying rule \rn{DisjWU} we have 
$R; (W_1 + W_2 + W_3 + W_4 + W_5 + W_6) \leq \emb{\hav}{k'_1 + k'_2 + k'_3}$ or equivalently $R; W \leq \emb{\hav}{k'}$ .

\noindent$\bullet$ (WO) 
\[\begin{array}[t]{lll}
  & R; \eml{k_1} \leq W_2; \emr{\hav} & \\
\impby & \eml{k_1} \leq \emb{ \kcode{a1:=any; a2:=any} }{ \kcode{a1':=any; a2':=any} }; & \mbox{by } R \leq 1\\
      & \qquad \qquad [\kcode{h>l; h'<=l'; a2'>l'; a2'=l+a1}]; &\\
      &  \qquad \qquad \emb{\kcode{o:=l+a1}}{\kcode{x':=a2'; o':=x'}}; \emr{\hav} &\\
\impby & \eml{k_1} \leq \eml{\kcode{a1:=any; a2:=any; [h > l]; o:=l+a1}}; & \\
     & \qquad\qquad \emr{ \kcode{a1':=any; a2':=any;} }; \kcode{[h'<=l'; a2'>l'; a2'=l+a1]};\\ 
     & \qquad \qquad \emr{\kcode{x':=a2'; o':=x'}}; \emr{\hav} & \mbox{emb homo, LRC}\\
%\iff & \multicolumn{2}{l}{\eml{k_1} \leq \eml{k_1}; \emr{ \kcode{a1':=any; a2':=any;} }; \kcode{[h'<=l'; a2'>l'; a2'=l+a1]}; \emr{\hav}}\\
\iff & \eml{k_1} \leq \eml{k_1}; \emr{ \kcode{a1':=any; a2':=any;} }; \kcode{[h'<=l'; a2'>l'; a2'=l+a1]}; \emr{\hav} \\
\impby & 1 \leq \emr{ \kcode{a1':=any; a2':=any;} }; \kcode{[h'<=l'; a2'>l'; a2'=l+a1]}; \emr{\hav} & \rn{EnAss}
\end{array}
\]
%
%% \[\begin{array}[t]{lll}
%%   & R; \eml{k_1} \leq W_2; \emr{\hav} & \\
%% \impby & \eml{k_1} \leq \emb{ \kcode{a1:=any; a2:=any} }{ \kcode{a1':=any; a2':=any} }; & R \leq 1\\
%%       & \qquad \qquad [\kcode{h>l; h'<=l'; a2'>l'; a2'=l+a1}]; &\\
%%       &  \qquad \qquad \emb{\kcode{o:=l+a1}}{\kcode{x':=a2'; o':=x'}}; \emr{\hav} &\\
%% \impby & \eml{k_1} \leq \eml{\kcode{a1:=any; a2:=any; [h > l]; o:=l+a1}}; & \\
%%      & \multicolumn{2}{l}{\qquad \qquad \emr{ \kcode{a1':=any; a2':=any;} }; \kcode{[h'<=l'; a2'>l'; a2'=l+a1]};}\\ 
%%      & \qquad \qquad \emr{\kcode{x':=a2'; o':=x'}}; \emr{\hav} & \mbox{emb homo, LRC}\\
%% \iff & \multicolumn{2}{l}{\eml{k_1} \leq \eml{k_1}; \emr{ \kcode{a1':=any; a2':=any;} }; \kcode{[h'<=l'; a2'>l'; a2'=l+a1]}; \emr{\hav}}\\
%% \impby & \multicolumn{2}{l}{1 \leq \emr{ \kcode{a1':=any; a2':=any;} }; \kcode{[h'<=l'; a2'>l'; a2'=l+a1]}; \emr{\hav}} \\
%% & & \rn{EnAss}
%% \end{array}
%% \]
%
Similarly, we have:
\[\begin{array}[t]{lll}
	R; \eml{k_1} \leq W_1; \emr{\hav} &
	R; \eml{k_2} \leq W_3; \emr{\hav} &
	R; \eml{k_3} \leq W_4; \emr{\hav}\\
	R; \eml{k_2} \leq W_5; \emr{\hav} &
	R; \eml{k_3} \leq W_6; \emr{\hav} &
\end{array}\]

Applying \rn{DisjWO} on those proofs, we have $R; \eml{k_1 + k_2 + k_3} \leq (W_1 + W_2 + W_3 + W_4 + W_5 + W_6); \emr{\hav}$ or equivalently $R; \eml{k} \leq W; \emr{\hav}$.

\subsection{Additional Alignment Examples}%\label{apx:lit-examples}

This section considers additional examples from the literature, for $\forall\forall$ properties, showing how their alignments can be expressed in BiKAT.

\begin{example}[Lock-step alignment for strength reduction]
Sharma~\etal\citep{Sharma2013} describe methods for inferring equivalence between loops by observing and correlating executions. The following exampleis re-produced from their Figure 1.
%\begin{wrapfigure}[6]{r}[34pt]{3.9in}
\begin{center}
\begin{tabular}{ll}
\begin{minipage}{2.5in}
\begin{lstlisting}
int f(int x, int n){ int k = 0;
  for (i=0; i!=n; ++i){
    x += k*5; k += 1;
    if (i >= 5) k += 3;
  }
  return x; }
\end{lstlisting}
\end{minipage}
&
\begin{minipage}{2.5in}
\begin{lstlisting}
int ff(int x, int n){ int k = 0;
  for (i=0; i!=n; ++i){
    x += k; k += 5;
    if (i >= 5) k += 15;
  }
  return x; }
\end{lstlisting}
\end{minipage}
\end{tabular}
\end{center}
%\end{wrapfigure}
This strength-reduction optimization replaces multiplications with additions, improving performance. Their algorithm discovers a relational invariant between the aligned loops that 
\lstinline|i*5| in \lstinline|f| is equal to \lstinline|i<<2+i| in \lstinline|ff|.
This lock-step alignment can be expressed in BiKAT, as follows:
\[
\begin{array}{ll}
\mkEqBi{x,n,i,k};\\
(\embRepeat{ \kcode{i!=n} };
[5\kcode{i} \eqbi \kcode{i<<2}+\kcode{i}];
\emb{\kcode{f} \text{ loop body}}{\kcode{ff} \text{ loop body}};
)^\kstar ; \embRepeat{ \kcode{i=n} }; \\
{}[5\kcode{i} \eqbi \kcode{i<<2}+\kcode{i}]; \mkEqBi{x}
\end{array}\]
Here we have, using BiKAT structural rules instead of control-flow graphs, algebraically derived a proof of the
$\forall\forall$ property of the original program, with help from the Sharma~\etal~technique for inferring the bitest relational loop invariant
$
[5\kcode{i} \eqbi \kcode{i<<2}+\kcode{i}]
$.
\end{example}

\begin{example}[Sequential alignment and memory actions]
BiKAT is parametric on the alphabet of actions and so it is possible to work with other actions such as memory references such as assignment \kcode{*x=v} and dereference \kcode{v=*x}.
Blatter\etal\citep{Blatter22} describe a certified relational verification approach that directly constructs verification conditions and avoids self-composition. 
They provide an example comparing two implementations of swapping values in memory:
\[\begin{array}{llllll}
C_1:&\kcode{x3=*x1; *x1=*x2; *x2 = x3;}&\hspace{0.5in}&
C_2:&\kcode{*x1=*x1 + *x2; *x2 = *x1-*x2; *x1=*x1-*x2;}\\
\end{array}\]
Above $C_1$ uses a temporary variable \kcode{x3}, whereas $C_2$ uses addition/subtraction. In this case a sequential alignment is sufficient, with an intermediate relation that relates \kcode{*x1} on the left with \kcode{*x2} on the right and vice-versa:
\[\begin{array}{l}
\mkEqBi{*x1,*x2};
\eml{ C_1 };
[\kcode{*x1} \eqbi \kcode{*x2}
 \wedge \kcode{*x2} \eqbi \kcode{*x1}];
\emr{ C_2 };
\mkEqBi{*x1,*x2}
\end{array}\]
\end{example}

\begin{example}[Using relational specifications]%\label{eg:mpp}
As seen in the previous example, BiKAT permits one to use specifications such as procedure pre/post conditions as hypotheses in the logic. The previous example used a unary specification via left/right embedding, but relational specifications can also be directly via bitests and equations of the form
$P;\emb{m()}{m()};\bneg Q = 0$.

Eilers~\emph{et al.}~\citep{EilersMH18} describe modular product programs (MPP) which has a method for supporting procedure-level modularity as part of a reduction to reduce $k$-safety tasks to problems for unary solvers such as Viper. A comparison of approaches is given in Section~\ref{sec:discuss};
in brief, while MPP is automated and has full interprocedural support, it is limited to $k$-safety of single programs, limited to lock-step alignment, and does not provide an algebraic way to justify the relationship with the original program.  

By contrast, BiKAT can directly incorporate  relational specifications and can express a wider range of alignment, including the lock-step alignments used in MPPs. 
For example, the relational specification for 
the program in their Fig.~1 can be written in BiKAT 
along with calls as
$\mkEqBi{person};\emb{ \kcode{is\_female(person)} }{ \kcode{is\_female(person)} };\bneg\mkEqBi{res}=\bzero$,
and the unary specification is written
$
[\kcode{person}>0];
\kcode{is\_female(person)};
\neg \textrm{true} = 0
$, which can be embedded on the left or on the right. We can then give an algebraically justified BiKAT alignment for their Fig.~1 \kcode{main} procedure as follows:
\[\begin{array}{l}
( \embRepeat{ \kcode{i<people} }\\
\;\;\;\;\;\;\mkEqBi{i,count,people}\\
\;\;\;\;\;\;\embRepeat{ \kcode{current:=people[i]} }\\
\;\;\;\;\;\;\mkEqBi{current}
\embRepeat{ \kcode{f:=is\_female(current)} }
\mkEqBi{f}\\
\;\;\;\;\;\;\embRepeat{ \kcode{count := count + f} }\\
\;\;\;\;\;\;\embRepeat{ \kcode{i := i + 1} }
)^\bstar\\
\embRepeat{ \kcode{i>=people} }
\mkEqBi{count}
\end{array}\]
Above is the lock-step alignment used by MPP, except without the activation variables (see Section~\ref{sec:discuss} for a discussion on activation variables). The loops are lock-step aligned, the relational loop invariant is used, the relational specification of \kcode{is\_female} is employed, and the loop invariant implies the final bitest, which entails that the property holds.

\end{example}

\end{document}
\endinput
